\tikzset{
	-Latex,auto,node distance =1 cm and 1 cm,semithick,
	state/.style ={ellipse, draw, minimum width = 0.7 cm},
	point/.style = {circle, draw, inner sep=0.04cm,fill,node contents={}},
	bidirected/.style={Latex-Latex,dashed},
	el/.style = {inner sep=2pt, align=left, sloped}
}
\newcommand{\blind}{1}
\begin{document}

\newtheorem{prop}{Proposition}
\newtheorem{assum}{Assumption}
\newtheorem{theorem}{Theorem}
\newtheorem{lemma}{Lemma}

\crefname{equation}{}{}
\crefname{theorem}{Theorem}{Theorems}
\crefname{lemma}{Lemma}{Lemmas}

\def\spacingset#1{\renewcommand{\baselinestretch}%
{#1}\small\normalsize} \spacingset{1}

\def\red{\color{red}}
\def\bmX{\bm{X}}
\def\bmZ{\bm{Z}}
\def\bmZrest{\bm{Z}_{-j}}
\def\bmz{\bm{z}}
\def\bmg{g}
\def\Ting{\color{red} Ting: }
\def\bbeta{{ \beta}}
\newcommand\E{\mathbb{E}}
\newcommand{\Ex}[1]{\mathbb{E}\!\left[#1\right]}
\def\cov{\mbox{Cov}}
\def\var{\mbox{Var}}
\def\red{\color{red}}
\def\zero{0}
\def\O{\bm O}
\def\bfalpha{\bm \alpha}
\def\bfbeta{ \beta}
\def\bfgamma{\bm \gamma}
\def\bfGamma{\bm \Gamma}
\def\bfeta{\bm \eta}
\def\bmx{\bm x}
\definecolor{dblue}{HTML}{0072B2}

\def\bmV{\bm{V}}
\def\bmQ{\bm Q}

%%%%%%%%%%%%%%%%%%%%%%%%%%%%%%%%%%%%%%%%%%%%%%%%%%%%%%%%%%%%%%%%%%%%%%%%%%%%%%

\if1\blind
{
	 \begin{center} 
	\spacingset{1.5} 	{\LARGE\bf  GENIUS-MAWII: For Robust Mendelian Randomization with Many Weak Invalid Instruments} \\ \bigskip \bigskip
		\spacingset{1} 
		{\large  Ting Ye$ ^1 $, Zhonghua Liu$ ^2 $, Baoluo Sun$ ^3 $, and  Eric Tchetgen Tchetgen$ ^4 $} \\ \bigskip
	{  $ ^1 $Department of Biostatistics, University of Washington, Seattle, Washington, U.S.A. \\
	$ ^2 $Department of Biostatistics, Columbia University, New York City, New York, U.S.A. \\
	$ ^3 $Department of Statistics and Data Science, National University of Singapore, Singapore\\
%	$ ^4 $CUNY Graduate School of Public Health and Health Policy\\
%	$ ^5 $School of Public Health, Li Ka Shing Faculty of Medicine,  University of Hong Kong \\
	 $ ^4 $Department of Statistics and Data Science, The Wharton School, University of Pennsylvania, Philadelphia, Pennsylvania, U.S.A. \\
}
	\end{center}
} \fi

\if0\blind
{
  \bigskip
  \bigskip
  \bigskip
  \begin{center}\spacingset{1.5} 
    {\LARGE\bf GENIUS-MAWII: Identification and Estimation Using Heteroscedasticity in Mendelian Randomization with Many Weak Invalid Instruments}
\end{center}
  \medskip
} \fi

\bigskip
\begin{abstract}
Mendelian randomization (MR) addresses causal questions by using genetic variants as instrumental variables. We propose a new MR method, GENIUS-MAWII, which simultaneously addresses the two salient challenges in MR: many weak instruments and widespread horizontal pleiotropy. Similar to MR-GENIUS, we use heteroscedasticity of the exposure to identify the treatment effect. We derive influence functions of the treatment effect, and then we construct a continuous updating estimator and establish its asymptotic properties under a many weak invalid instruments asymptotic regime by developing novel semiparametric theory. We also provide a measure of weak identification, an overidentification test, and a graphical diagnostic tool. 
\end{abstract}

\noindent%
{\it Keywords:}  Causal inference; exclusion restriction; heteroscedastic errors; instrumental variables;
many weak moments; pleiotropy
\vfill

\newpage
\spacingset{1.5} % DON'T change the spacing!

\section{Introduction}
\label{sec: intro}

\vspace{-3mm}

\subsection{Challenges in Mendelian randomization}
Mendelian randomization (MR) is a method of using genetic variants -- typically single nucleotide polymorphisms (SNPs) -- as instrumental variables (IVs) to infer the causal effect of  a modifiable exposure on an outcome in the presence of unmeasured confounding \citep{Davey-Smith:2003aa, Smith:2004aa, Lawlor:2008aa, Davey-Smith:2014aa, Burgess:2015ab, Burgess:2015ad,  Zheng:2017aa}. As a powerful tool to disentangle causal relationship from complex environmental confounding, MR has become a popular method for establishing high-quality causal evidence based on observational data \citep{ Pingault:2018aa, Adam:2019aa}. 

For reliable causal inference using MR, genetic variants must be valid IVs that satisfy  three key assumptions \citep{Angrist2001, Baiocchi:2014aa, Hernan-Robins}: (i) (relevance) they are associated with the exposure; (ii) (independence) they are independent of any unmeasured confounder of the exposure-outcome relationship; (iii) (exclusion restriction) they affect the outcome exclusively through the exposure. The first assumption (relevance) is usually satisfied by selecting SNPs that are significantly associated with the exposure. A well-established challenge one is often faced with in MR, is the possibility that individual SNPs are only weakly associated with the exposure, resulting in weak IV bias \citep{Stock:2002aa, Burgess:2011aa, Burgess:2011ab} and extreme sensitivity to minor violations of the other two assumptions  \citep{Small:2008aa,  Wang:2018aa}. The second assumption (independence) is plausible within the framework of parent-offspring studies because of the random assortments of  genes from parents to offspring. The independence assumption also approximately holds in population data such as the UK Biobank as individuals share much common ancestry \citep{smith2020mendel}.   Among the three core IV assumptions, the exclusion restriction assumption is the most disputable, as emerging evidence has suggested that pleiotropy -- a phenomenon in which a genetic variant may affect multiple phenotypic traits  \citep{ Solovieff:2013aa, Verbanck:2018aa} -- is widespread. In fact, studies have identified hundreds of genetic variants from genome-wide association studies (GWASs) that are associated with multiple traits \citep{Sivakumaran:2011aa,  Parkes:2013aa, Gratten:2016aa, Pickrell:2016aa, Grassmann:2017aa, Webb:2017aa}. For example, a variant (rs2075650 in the APOE locus) is found to be significantly associated with several traits and diseases, including Body Mass Index (BMI), Alzheimer's disease, C-reactive protein,  high-density lipoprotein cholesterol, low-density lipoprotein cholesterol, plasma triglycerides, waist circumference, hip circumference and waist/hip ratio  \citep{Verbanck:2018aa}. Hence, using this variant to study the effect of BMI on systolic blood pressure (SBP) will likely violate the exclusion restriction assumption because the variant may affect SBP via other traits outside of the pathway of BMI. Failure to account for such \emph{horizontal pleiotropy} (i.e., SNPs having direct effects on the outcome) can lead to spurious findings.

\vspace{-3mm}
\subsection{Prior work}
In this article, we focus on the two salient challenges in MR: many weak IVs and widespread horizontal pleiotropy. These two challenges rarely act alone but rather interact with each other, because  weak IVs can amplify bias from pleiotropy  \citep{Small:2008aa}, and later in Sections \ref{sec: prelim}-\ref{sec: identification} we will see that the proposed method accounts for pleiotropy by exploiting heteroscedasticity and thus may be  more susceptible to weak IV bias than usual \citep{Lewbel:2012aa}. Therefore, it is important to address these two challenges simultaneously for desirable practical performance.

The issue of weak IV has been  extensively studied in econometrics \citep{Staiger:1997aa,  Chao:2005aa, Hansen:2008aa, Newey:2009aa, Stock:2002aa}. Typically, in linear models, an IV is considered weak if the first-stage F statistic is below 10 \citep{Stock:2002aa}.  
Recent papers by \cite{Zhao:2019aa, zhao2018statistical,  wang2019weakinstrument} and \cite{ye2021debiased} also develop methods that are robust to weak IVs in two-sample summary-data MR under an assumption of no systematic exclusion restriction violation. A common message from many of these works is that having many weak IVs can greatly circumvent  the difficulty from each individual IV being only weakly associated with the exposure and can improve estimation accuracy.

There has also been a rapidly growing development of statistical methods to address widespread horizontal pleiotropy, which mostly fall into the following two strands. The first strand of methods assumes that a certain proportion of candidate IVs are valid. For example,  \cite{Han:2008aa, Kang:2016aa, Bowden:2016aa} and \cite{Windmeijer:2019aa} propose methods that can recover the causal effect provided less than 50\% of IVs are invalid. \cite{Hartwig:2017aa, Guo:2018aa, guo2021post} and \cite{windmeijer2021confidence} develop methods based on the plurality rule,  assuming that the number of valid IVs is larger than any number of invalid IVs sharing the same ratio estimator limit. Other proposals in this first strand include  \cite{qi2019mendelian} and \cite{Verbanck:2018aa}. Clearly, none of these methods apply to the situation when pleiotropy is pervasive.  

The second strand of work allows for all the IVs to be pleiotropic but effectively restricts the effects of IVs on the exposure and outcome. Within the second strand,  it is also helpful to distinguish between two types of horizontal pleiotropy: uncorrelated pleiotropy, also known as the  instrument strength independent of direct effect (InSIDE) assumption \citep{Bowden:2015aa}, which says that the direct effects of the IVs on the outcome are uncorrelated with their effects on the exposure, and correlated pleiotropy, which says that the direct effects of the IVs on the outcome are correlated with their effects on the exposure. Of the two types, uncorrelated pleiotropy is easier to deal with, based on which  multiple methods have been developed, including \cite{Kolesar:2015aa} and \cite{Bowden:2015aa} for direct effects with  nonzero mean (directional horizontal pleiotropy), and  \cite{Zhao:2019aa, zhao2018statistical} and {\cite{ye2021debiased}} for direct effects with zero mean (balanced horizontal pleiotropy). Correlated  pleiotropy is more challenging.   \cite{morrison2020ng}  and {\cite{wang2021causal}}  allow a small proportion of genetic variants to exhibit correlated pleiotropy arising from one or several well-understood pleiotropic pathways.  \cite{Tchetgen2019_GENIUS} and \cite{sun2022selective} tackle this challenge from a different perspective; without assuming a certain structure underlying the correlated pleiotropy, the identification extends a novel strategy proposed in \cite{Lewbel:2012aa, Lewbel:2018aa} that exploits heteroscedasticity of the exposure variable. The details are reviewed in Section \ref{sec: prelim}. Other proposals in this second strand include  \cite{Burgess:2015ac}, \cite{Spiller:2019aa}  and {\cite{liu2020mendelian}}. 

\vspace{-3mm}
\subsection{Our contributions} 
\label{subsec: contribution}
In this work, we propose a new MR method, GENIUS-MAWII, that simultaneously addresses many weak IVs and widespread horizontal pleiotropy. We deal with widespread horizontal pleiotropy by leveraging heteroscedasticity of the exposure, and we account for   many weak IVs by establishing the consistency and asymptotic normality of  the continuous updating estimator (CUE) obtained from using the derived influence functions as moment conditions under many weak moment asymptotics. We also provide GENIUS-MAWII with a measure of weak identification, an overidentification test, and a graphical diagnostic tool.  We demonstrate in simulations and a real example using UK Biobank  the clear advantages of  GENIUS-MAWII in the presence of directional or correlated horizontal pleiotropy compared to other methods.

 Furthermore, our work makes important advances in the theory of generalized method of moments (GMM) involving unknown nuisance parameters under many weak moment conditions, which to our knowledge has not been studied in the literature.   This is a challenging task due to two main reasons. First, with the number of moment conditions growing to infinity, the number of nuisance parameters also grows to infinity. Second, the many weak moment asymptotics, which is well suited for MR studies with a large number of SNPs, is fundamentally different from the classical asymptotics (with a fixed number of ``strong'' moment conditions).  Importantly,  under the classical asymptotics,  it is well known that utilizing the influence function which belongs to the ortho-complement of the nuisance tangent space and estimating the nuisance parameters at a fast enough rate ensure us that the impact of  estimating the nuisance parameters is negligible; this is the key insight that drives many other successful applications of using the influence function to handle nuisance parameters    \citep{Newey1994, Ackerberg2014aa, ning2017, robins2017hoif, Chernozhukov2018ddml, bravo2020}.	To our surprise, we find that this appealing property does not hold under many weak moment asymptotics  in general, but still holds for GENIUS-MAWII because its moment conditions are  linear in the parameter of interest  (see Theorem \ref{theo: GMM}). In addition, our proof handles infinite-dimensional nuisance parameters.

The rest of the article proceeds as follows. In Section \ref{sec: prelim}, we introduce the invalid IV model and review the GENIUS identification strategy.  In Section \ref{sec: identification}, we derive  the class of influence functions which are shown to be multiply robust, and the efficient influence function. In Section \ref{sec: semi}, we consider estimation and inference of the treatment effect.  In Section \ref{sec: measure}, we provide a measure of weak identification, an overidentification test, and a graphical diagnostic tool. The article is concluded with simulations in Section \ref{sec: simu}, a real data application in Section \ref{sec: real}, and more discussion in Section \ref{sec: disc}. All technical proofs are in the supplementary materials. The \textsf{R} code for the proposed methods is publicly available at \texttt{https://github.com/tye27/mr.genius}.

\vspace{-5mm}
\section{Review of the GENIUS identification strategy} 
\label{sec: prelim}
Suppose that we observe an independent and identically distributed sample $ (\O_1, \dots, \O_n) $ with $ \O= ( \bmZ, \bmX, A, Y) $, where $\bmZ=(Z_1, \dots, Z_m)^T$ is a column vector including $m$ SNPs, each taking on values from the set $\{0,1,2\}$  which represents the number of minor alleles,  $ \bmX $ is a vector of observed covariates which can be empty when there are no observed covariates, $A$ and $ Y $ are continuous exposure and outcome variables. We emphasize that the $ m $ SNPs $ Z_1, \dots, Z_m $ are not required to be independent, i.e., we allow the $ m $ SNPs to be in linkage disequilibrium. We are interested in the causal effect of $ A $ on $ Y $, denoted by $ \beta_0 $, in the presence of unmeasured confounders $ U $.

When there are no observed covariates, we consider the following structural equations:
\begin{align}
	& E(Y\mid A,  U, \bmZ)=\beta_0A+ \alpha(\bmZ) +\xi_y(U), \label{eq: out model}\\
	& E(A\mid  U, \bmZ)= \gamma(\bmZ)+ \xi_a(U), \label{eq: exp model}
\end{align}
where $\alpha, \gamma,  \xi_y, \xi_a$ are unspecified functions, and $ \bmZ\perp U  $. In particular, $ \alpha(\bmZ) $ encodes the direct effect of $ \bmZ $ on $ Y $, and $\alpha(\bmZ)\neq 0 $ indicates that the exclusion restriction assumption is violated.  \cite{Lewbel:2012aa} also considers models (\ref{eq: out model})-(\ref{eq: exp model}).  \cite{Kolesar:2015aa} and \cite{Bowden:2015aa} consider the special case with $ \alpha(\bmZ)=\sum_{j=1}^{m} \alpha_j Z_j $ and $ \gamma(\bmZ)= \sum_{j=1}^{m} \gamma_j Z_j $, and assume that $ \alpha_j, \gamma_j, j=1, \dots, m $ are  random effects satisfying $ \alpha_j\perp \gamma_j $ (commonly referred to as the InSIDE assumption or uncorrelated pleiotropy), which is likely violated when there are SNPs affecting the exposure and outcome through common pathways \citep{morrison2020ng}. In contrast, we make no such restrictions.  Furthermore, as reviewed in Section \ref{sec: intro}, many existing MR methods, including \cite{Kang:2016aa, Bowden:2016aa, Hartwig:2017aa, Guo:2018aa, Windmeijer:2019aa} and \cite{guo2021post},  rely on the assumption that pleiotropy only sparsely involves  a small proportion of SNPs, whereas we allow every SNP to be pleiotropic.

Assume (\ref{eq: out model})-(\ref{eq: exp model}) and $ \bmZ\perp U $, it is shown in \cite{Tchetgen2019_GENIUS} that $ \beta_0 $ is the unique solution to 
\begin{align}
	E\{ (\bmZ- E(\bmZ))R_A (Y- \beta A)\}=0, \label{eq: continuous y}
\end{align}
provided that $E\{(\bmZ- E(\bmZ)) R_A A\}\neq \zero $, where $ R_A=A- E(A\mid \bmZ) $ is the conditionally centered exposure. Equation (\ref{eq: continuous y})  provides an identification formula for $ \beta_0 $ in the presence of unmeasured confounding by leveraging possibly invalid IVs. This identification strategy is named ``G-Estimation under No
	Interaction with Unmeasured Selection'' (GENIUS) in \cite{Tchetgen2019_GENIUS}.

We elaborate the key of identification in \eqref{eq: continuous y}. With $ \bmZ $ being potentially invalid IVs that have a direct effect on the outcome, the usual IV-based identification formula no longer holds because $ E\{ (\bmZ-E(\bmZ)) (Y-\beta_0 A)\}=E\{ (\bmZ-E(\bmZ))  \alpha(\bmZ)\} \neq \zero $.   In fact, when having a direct effect on the outcome,  the invalid IVs $ \bmZ $ are nothing more than observed confounders that are independent of $ U $. If the effect of $ \bmZ $ on the outcome is not modified by $ U $, as is the case under  \eqref{eq: out model}, then $  (\bmZ- E(\bmZ))  c(U) $ for any function $ c(\cdot) $ satisfying $ E(c(U)) = 0$ can be conceptualized as ``valid IVs'' satisfying $E\{  (\bmZ- E(\bmZ))  c(U)  (Y- \beta_0A) \} = 0$ because 
$  (\bmZ- E(\bmZ))  c(U)    $ are uncorrelated with any function of $ U $ and any function of $  \bm Z$, and do not have a direct effect on the outcome.  The conceptualized valid IVs $  (\bmZ- E(\bmZ))  c(U)  $  are infeasible as $ U $ is unobserved, but under \eqref{eq: exp model} a noisy version of which can be constructed as an additive interaction between conditionally centered exposure and centered IVs $ (\bmZ- E(\bmZ)) R_A$  and is used as the feasible ``valid IVs''.

There are three comments about the above intuition. First, the idea of using gene-environment interactions as valid IVs also appears in \cite{Spiller:2019aa}, but unlike \cite{Spiller:2019aa}, the  gene-environment interactions used in GENIUS  can be unobserved. Second, there are interesting tradeoffs between GENIUS and the two-stage least squares (2SLS), which is widely-used when $ \bmZ $ are valid IVs. On the one hand, when $ \bmZ $ are valid IVs, 2SLS imposes no assumption on the exposure model whereas GENIUS does. On the other hand, when $ \bmZ $ has a direct effect on the outcome, 2SLS fails while GENIUS can still identify the treatment effect of interest $ \beta_0 $. Moreover, with $ \bmZ $ having a direct effect on the outcome,  even if there is an interaction between $ \bmZ $ and $ U $ in the exposure model \eqref{eq: exp model}, its magnitude is usually small compared to the main effects of $ \bmZ $ and $ U $, then the bias of GENIUS is also relatively small. Finally, the key condition encoded by   \eqref{eq: out model}-\eqref{eq: exp model}, i.e., the effects of $ \bmZ $ on the exposure and outcome not being modified by $ U $ is stronger than needed and can be relaxed to some extent (see Section 3.1 of the supplementary materials).  We can also circumvent this restriction by collecting information about the part of $ U $ that interacts with SNPs and adjust for them as part of the observed covariates. This will be discussed further in Section \ref{sec: identification}.

When $ m=1 $ (i.e., one SNP),   $ \beta_0 $ identified via (\ref{eq: continuous y})  can be rewritten as a Wald ratio
\begin{align}
	\beta_0= \frac{E\{ (\bmZ-E(\bmZ)) R_A Y\}}{E\{ (\bmZ- E(\bmZ))R_A A\}}, \nonumber
\end{align}
where the numerator is the effect of $ (\bmZ-E(\bmZ)) R_A$ on $ Y $, the denominator is the effect of $ (\bmZ-E(\bmZ)) R_A$ on $ A $, and $ \beta_0 $ is simply the ratio. When $ m>1 $ (i.e., multiple SNPs), $ \beta_0 $ is over-identified. Moreover, identification using (\ref{eq: continuous y}) requires that  $ E\{ (\bmZ- E(\bmZ)) R_AA \} \neq \zero$, which is analogous to the relevance assumption in the IV literature, except here we conceptualize $ (\bmZ- E(\bmZ)) R_A  $ as the valid IVs. Specifically, since $ R_A= A- E(A\mid\bmZ) $, simple calculations reveal that $ E\{ (\bmZ- E(\bmZ)) R_AA \}=E\{ (\bmZ- E(\bmZ)) R_A^2\}= \cov (\bmZ, R_A^2)= \cov (\bmZ, \var(A\mid\bmZ))$, which means that identification using  (\ref{eq: continuous y}) requires $ A $ being heteroscedastic, i.e., $ \var(A\mid\bmZ) $ depends on  at least some $ \bmZ $.  We remark that the condition  $ E\{ (\bmZ- E(\bmZ)) R_AA \} \neq \zero$ can be empirically checked since $ \cov (\bmZ, R_A^2) $ can be estimated by the sample covariance between $ \bmZ $ and the squared residuals from fitting a linear regression of $ A $ on $ \bmZ $.  One can also apply tests for heteroscedasticity such as the tests in  \cite{KOENKER1981} and \cite{White:1980heterosce}. {Heteroscedasticity can be due to gene-environment interactions \citep{pare2010use}; see \cite{wang2019genotype} and \cite{sulc2020quantification} for some recent discoveries.
}

%It is also critical to understand the interpretation of $ \beta_0 $ defined by (\ref{eq: continuous y}) when some assumptions are violated. First, when the constant treatment effect assumption does not hold, and suppose that the treatment effect can be coded as $ \beta_0(U) $ in place of $ \beta_0 $ in (\ref{eq: out model}), then equation \eqref{eq: Wald} equals $ E\{ \beta_0(U)\} $ when both the exposure $ A $ and the IV $ Z $ are binary. No such nice interpretation exists for continuous $A$.  Second, when there are $ Z$-$U$ interactions in the exposure model (\ref{eq: exp model}), for example, genetic effects on body mass index (BMI) can differ between individuals depending on lifestyle or environment, $ \beta_0 $ may have a bias. However, if the magnitude of interactions is small compared with the main effects of $ \bmZ $ and $ U $, then the bias in $ \beta_0 $ is also relatively minor. Lastly, even when $ Z $ and $ U $ are not independent, $ \beta_0 $ can still identify the average treatment effect provided that $ \cov(\xi_a(U), \xi_y(U)\mid\bmZ)$ is almost surely a constant \citep{Tchetgen2019_GENIUS}.

\vspace{-5mm}

\section{Semiparametric theory}
\label{sec: identification}
The identification result in Section \ref{sec: prelim} can be easily extended when there is an observed covariate vector $ \bmX $. Consider the following structural equations:
\begin{align}
	& E(Y\mid A,  U, \bmZ, \bmX)=\beta_0A+ \alpha(\bmZ, \bmX)   +\xi_y(U,\bmX),\label{eq: out model X}\\
	& E(A\mid U, \bmZ, \bmX)= \gamma(\bmZ, \bmX)+ \xi_a(U, \bmX),\label{eq: exp model X}
\end{align}
where $ \alpha, \gamma,  \xi_y, \xi_a$ are unspecified functions and $  \bmZ\perp U \mid  \bmX\!$. Then, as shown in Section 3.2 of the  supplementary materials,  $ \beta_0  $ is the unique solution to
\begin{align}
	E\{ (\bmZ- E(\bmZ\mid\bmX)) R_A (Y- \beta A)\}=0, \label{eq: continuous y, X}
\end{align}
provided that $ E\{ (\bmZ- E(\bmZ\mid\bmX)) R_AA \} =E[\cov \{  \bmZ, \var (A\mid\bmZ, \bmX)\mid\bmX \}] \neq \zero$, 	where  $ R_A=A- E(A\mid\bmZ,\bmX) $ is the conditionally centered exposure. Hence,  identification by \eqref{eq: continuous y, X} requires that $ \var(A\mid\bmZ, \bmX) $ depends on some $ \bmZ $. Note that the GENIUS identification strategy can be extended to binary exposure and/or binary outcome that follow semiparametric log-linear models; see Section 6 of the supplementary materials for details.

{Comparing structural equations (\ref{eq: out model X})-(\ref{eq: exp model X}) and $ \bmZ\perp U \mid\bmX $ with their unconditional counterparts, we see that to satisfy these assumptions, $\bm X$ should include covariates that  (i) are correlated with $\bm Z $; (ii)  modify the effect of $\bm Z$ on the outcome (which should be rare as $\bm Z$ should primarily influence the exposure); (iii) are confounders of the exposure-outcome relationship and modify the effect of $\bm Z$ on the exposure. Another interesting type of covariates is those that do not affect the outcome but modify the effect of  $\bm Z$ on the exposure. Adjusting for these covariates can weaken heteroscedasticity and thus weaken identification, and may even make the exposure effect unidentifiable if conditioning on all such covariates. However, identification of exposure effect can still be achieved if there is residual latent heterogeneity in the effect of $\bm Z$ on $A$ within all levels of $\bm X$. A diagram of how to choose $\bmX$ is in Section 1.1 of the supplementary materials.
}

We derive the class of influence functions and the efficient influence function \citep{Bickel:1993}  under the sole observed data restriction $  E(R_A (Y- \beta_0 A )\mid \bmZ,\bmX)=E(R_A (Y- \beta_0 A )\mid \bmX) $ implied by structural equations (\ref{eq: out model X})-(\ref{eq: exp model X}) and $ \bmZ\perp U\mid \bmX $. 
\begin{theorem} \label{theoX}
	(a) Under the conditional moment restriction $  E(R_A (Y- \beta_0 A )\mid \bmZ,\bmX)=E(R_A (Y- \beta_0 A )\mid \bmX) $,	 let $ h(\bm Z, \bm X) $ be any scalar-valued function, the class of influence functions of $ \beta_0 $ is 
	\begin{align}
	& \big\{h(\bmZ, \bmX)- E(h(\bmZ, \bmX)\mid\bmX)\big\} \{\Delta- E(\Delta\mid \bmX )\}, \label{eq: general IF}
	\end{align}
	where $ \Delta = R_A R_Y- \beta R_A^2 $,  $ R_A= A- E(A\mid \bmZ, \bmX), $ and  $ R_Y= Y- E(Y\mid \bmZ, \bmX) $. \\
	(b) The efficient influence function of $ \beta_0 $ is  obtained with $ 	h(\bmZ, \bmX) = C (\bmX)^T\bar{\bmZ},  $ where 
	\begin{align*}
		C (\bmX)&=\left\{  E\left[ (\bar\bmZ-E(\bar\bmZ\mid \bmX))(\bar\bmZ-E(\bar\bmZ\mid \bmX))^T(\Delta- E(\Delta\mid \bmX))^2  \mid \bmX\right]\right\}^{-1} \\
		&\qquad \qquad \qquad \qquad \qquad \qquad E\left\{ (\bar\bmZ-E(\bar\bmZ\mid \bmX)) (R_A^2- E(R_A^2\mid \bmX))\mid \bmX\right\},
	\end{align*}
	and 
  $ \bar{\bmZ}  $ is a column vector of all the dummy variables for the joint levels defined by $ \bmZ $.
\end{theorem}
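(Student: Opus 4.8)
The plan is to treat the displayed conditional moment restriction as the \emph{sole} constraint defining a semiparametric model for $\O=(\bmZ,\bmX,A,Y)$, and to derive the orthocomplement of the nuisance tangent space together with the efficient influence function by projection. First I would rewrite the restriction in a tractable residual form. Writing $Y-\beta_0 A=\delta_Y-\beta_0\delta_A+\{E(Y\mid\bmZ,\bmX)-\beta_0 E(A\mid\bmZ,\bmX)\}$ and using $E(\delta_A\mid\bmZ,\bmX)=0$, the cross term has conditional mean zero given $(\bmZ,\bmX)$, so
\begin{align*}
E\{\delta_A(Y-\beta_0 A)\mid\bmZ,\bmX\}=E\{\delta_A\delta_Y-\beta_0\delta_A^2\mid\bmZ,\bmX\}=E(\Delta_0\mid\bmZ,\bmX),
\end{align*}
with $\Delta_0:=\delta_A\delta_Y-\beta_0\delta_A^2$. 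Hence the restriction is equivalent to $E(\Delta_0\mid\bmZ,\bmX)=E(\Delta_0\mid\bmX)$, i.e.\ to $E(\epsilon\mid\bmZ,\bmX)=0$ for the residual $\epsilon:=\Delta_0-E(\Delta_0\mid\bmX)$. A direct conditioning argument then shows each candidate in \eqref{eq: general IF} is an unbiased estimating function: $E[\{h-E(h\mid\bmX)\}\epsilon\mid\bmX]=E[\{h-E(h\mid\bmX)\}E(\Delta_0\mid\bmZ,\bmX)\mid\bmX]=E(\Delta_0\mid\bmX)\,E\{h-E(h\mid\bmX)\mid\bmX\}=0$, using the restriction and then the centering. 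This establishes that \eqref{eq: general IF} supplies valid, mean-zero influence-function directions for every $h$.

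The substantive step is to prove that \eqref{eq: general IF} is \emph{complete}, i.e.\ exhausts the orthocomplement $\Lambda^\perp$ of the nuisance tangent space. I would factor the likelihood as $p(\bmX)\,p(\bmZ\mid\bmX)\,p(A\mid\bmZ,\bmX)\,p(Y\mid A,\bmZ,\bmX)$, so that any regular parametric submodel has score decomposing into four mutually orthogonal pieces $s_1(\bmX)$, $s_2(\bmZ,\bmX)$ with $E(s_2\mid\bmX)=0$, $s_3(A,\bmZ,\bmX)$ with $E(s_3\mid\bmZ,\bmX)=0$, and $s_4$ with $E(s_4\mid A,\bmZ,\bmX)=0$. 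The nuisance tangent space consists of those scores that preserve $E(\epsilon\mid\bmZ,\bmX)=0$ along the path. The delicate point---and the main obstacle---is that $\epsilon$ depends on the infinite-dimensional nuisances $E(A\mid\bmZ,\bmX)$, $E(Y\mid\bmZ,\bmX)$ and $E(\Delta_0\mid\bmX)$ through $\delta_A$ and $\delta_Y$; in particular $\delta_A$ enters $\Delta_0$ quadratically, so perturbing $p(A\mid\bmZ,\bmX)$ both contributes a score $s_3$ and shifts the moment function. I would track these induced shifts and argue that the $\bmX$-measurable multiples $c(\bmX)\epsilon$ of the residual lie in $\Lambda$---reflecting that only \emph{within-$\bmX$-stratum} variation of $\bmZ$ carries information about $\beta_0$, consistent with the $\E[\cov\{\bmZ,\var(A\mid\bmZ,\bmX)\mid\bmX\}]$ relevance condition. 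Projecting any element of $\Lambda^\perp$ then leaves exactly the centered form $\{h(\bmZ,\bmX)-E(h\mid\bmX)\}\epsilon$. Since rescaling $h$ rescales both the function and its $\beta$-derivative proportionally, this family represents the influence functions up to the standard normalization, giving part (a).

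For part (b) I would obtain the efficient influence function by minimizing the asymptotic variance over the class, equivalently by projecting the score onto the tangent space. Because $\partial_\beta\Delta=-\delta_A^2$, the normalization constant for a direction $\tilde h:=h-E(h\mid\bmX)$ is $E\{\tilde h\,(\delta_A^2-E(\delta_A^2\mid\bmX))\}$, so the task reduces to minimizing $E(\tilde h^2\epsilon^2)$ subject to $E\{\tilde h\,(\delta_A^2-E(\delta_A^2\mid\bmX))\}=1$, a conditional (given $\bmX$) weighted-least-squares / optimal-instrument problem with weight $\epsilon^2=(\Delta-E(\Delta\mid\bmX))^2$. Here the discreteness of $\bmZ$ is essential: the dummy vector $\bar{\bmZ}$ saturates all functions of $(\bmZ,\bmX)$ given $\bmX$, so the unrestricted optimal instrument is attained within the linear span of $\bar{\bmZ}$. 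Writing $\tilde h=C(\bmX)^T\{\bar{\bmZ}-E(\bar{\bmZ}\mid\bmX)\}$ and solving the conditional normal equations yields exactly the stated $C(\bmX)$, whose inverted factor is the conditional second moment of the weighted, centered instrument and whose remaining factor comes from $\partial_\beta\Delta=-\delta_A^2$. A Cauchy--Schwarz argument confirms this is the variance-minimizing choice, so $h(\bmZ,\bmX)=C(\bmX)^T\bar{\bmZ}$ delivers the efficient influence function. I expect the tangent-space bookkeeping in the second paragraph---precisely isolating which perturbations preserve the restriction and verifying the $\bmX$-measurable directions are nuisance---to be the hardest part; the reductions in the first and third paragraphs are comparatively routine once the residual reformulation is in place.
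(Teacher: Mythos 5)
Your reductions and your part (b) are essentially sound. The residual reformulation (the restriction is equivalent to $E(\Delta_0\mid\bmZ,\bmX)=E(\Delta_0\mid\bmX)$), the mean-zero check, the saturation argument that discreteness of $\bmZ$ lets $\bar{\bmZ}$ span all functions $h(\bmZ,\bmX)$ given $\bmX$, and the constrained minimization of $E(\tilde h^2\epsilon^2)$ subject to $E\{\tilde h(\delta_A^2-E(\delta_A^2\mid\bmX))\}=1$ reproduce, in optimal-instrument language, exactly the paper's derivation of $C^{opt}$ from the normal equations $E[U(C)U(C^{opt})]=-E[\partial U(C)/\partial\beta]$ for all $C(\bmX)$; the sign from $\partial_\beta\Delta=-\delta_A^2$ cancels as in the paper, and the two computations coincide.

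The genuine gap is in part (a). Your first-paragraph computation establishes only that each centered function in \eqref{eq: general IF} is a \emph{valid mean-zero estimating function}; the content of the theorem is that each is an \emph{influence function}, i.e., a pathwise gradient satisfying $\partial\beta(\theta)/\partial\theta\rvert_{\theta=0}=E\{g^{IF}(\O;h)S(\O)\}$ along every regular submodel, which is exactly the statement that first-order perturbations of the nuisances $E(A\mid\bmZ,\bmX)$, $E(h\mid\bmX)$ and $E(\Delta\mid\bmX)$ contribute nothing. You correctly identify this as ``the delicate point'' and then assert rather than prove it: the claims that $c(\bmX)\epsilon$ lies in the nuisance tangent space and that projection of $\Lambda^{\perp}$ ``leaves exactly the centered form'' are precisely the statements requiring computation, and your proposal supplies none. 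The paper closes this constructively in the Newey (1994) style: differentiating $E_{\theta}[(Y-\beta(\theta)A)\delta_A(\theta)\delta^h_Z(\theta)]=0$ under the integral produces, besides the score term $A_1$, two nuisance terms --- $A_2$ from perturbing $p(A\mid\bmZ,\bmX)$ through $\delta_A$ (nontrivial because $\delta_A$ enters $\Delta$ quadratically, so $E_{\theta}\{AS_{\theta}(A\mid\bmZ,\bmX)\mid\bmZ,\bmX\}$ must be recentered using the identity $E_{\theta}[c(\bmZ,\bmX)S_{\theta}(A\mid\bmZ,\bmX)]=0$), and $A_3$ from perturbing $p(\bmZ\mid\bmX)$ through $\delta^h_Z$ --- and shows each rewrites as $E_{\theta}[\{\cdot\}S_{\theta}(\O)]$, so the centering $-E(\Delta\mid\bmX)$ emerges as the exact correction, with normalization $\mathcal{L}_h^{-1}$, $\mathcal{L}_h=E\{\var(A\mid\bmZ,\bmX)\,\delta^h_Z\}$, whose non-singularity is the heteroscedasticity condition you mention. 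Without this step (or an actually executed tangent-space characterization), your part (a) proves validity but not the influence-function property, and your part (b) then minimizes variance over a class not yet shown to be the class of influence functions.
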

The proof is given in the supplementary materials. Theorem \ref{theoX} includes the results without observed covariates as a special case by setting $\bmX$ to be empty.
 
Identification using the influence function (\ref{eq: general IF}) is in fact multiply robust. As shown in the supplementary materials,  the influence function (\ref{eq: general IF}) evaluated at $ \beta= \beta_0 $ has expectation zero  when  either one of the following three sets of the models is correctly specified: $ \{E(A\mid\bmZ, \bmX),  E(h(\bmZ, \bmX)\mid\bmX)\}$, $ \{E(Y\mid\bmZ, \bmX),  E(h(\bmZ, \bmX)\mid\bmX)\}$,  or $ \{E(A\mid\bmZ, \bmX),  E( R_A ( R_Y- \beta R_A) \mid \bmX)\}$. Therefore, in classical settings, multiply robust estimation and  inference about $ \beta_0 $ is straightforward via the classical GMM results \citep{Hansen:1982aa}.

{In principle, we can also leverage the scalar-valued influence function \eqref{eq: general IF} in Theorem \ref{theoX} as a moment condition under classical asymptotics. The optimal combination of SNPs $C(\boldsymbol{X})^T\bar{\boldsymbol{Z}} $ can generally be estimated in a first stage to improve efficiency \citep{chamberlain1987asymptotic,newey1990efficient}. In practice, however, this can pose computational difficulties due to the need to estimate high-dimensional conditional covariance matrices, a challenge that is also raised in \cite{stephens2014locally}. In fact, a poorly estimated optimal index $C(\boldsymbol{X})^T\bar{\boldsymbol{Z}} $ may lead to the unintended consequence of efficiency loss relative to a fix but arbitrary choice of $h(\bm Z, \bm X)$  in finite samples.   On the other hand, using influence function \eqref{eq: general IF} with a predefined \( h(\bm Z, \bm X) \), such as a sum-score of the components of \( \bm Z \), may also be inefficient. 

For these reasons, we follow the GMM approach of \cite{Newey:2009aa} under many weak moment asymptotics  and consider a $m$-dimensional vector of moment conditions
\begin{align}
  g^{IF}(\bm O; \beta, \bm \eta_0) = \{ \bm Z - E( \bm Z \mid \bm X) \}  \{\Delta- E(\Delta\mid \bm X )\},   \label{eq: IF}
\end{align}
where  $ \Delta$ is defined in Theorem \ref{theoX}, $ \bm \eta  $ denotes the vector of nuisance parameters, and $ \bm \eta_0 $ its true value. With $ \beta_0  $ being over-identified by  \eqref{eq: IF}, it can be estimated using GMM methods, which is known to be semiparametric efficient in the absence of weak IV in models with a finite number of moment restrictions.  Moreover, as will be discussed in Section \ref{subsec: overidentification}, compared to using a scalar \( h(\bm Z, \bm X) \),  using a vector of moment conditions in \eqref{eq: IF} also offers the additional benefit of providing an overidentification test. 
}

\vspace{-5mm}

\section{Estimation and inference  with many invalid IVs}
\label{sec: semi}

We introduce some additional notations. Let 
\begin{align}
	& \bmg_i(\bbeta, \bfeta)= \bmg^{IF}(\O_i; \bbeta, \bfeta) ,  \qquad \hat{\bmg}(\bbeta, \bfeta)=\frac{1}{n} \sum_{i=1}^{n} \bmg_i (\bbeta, \bfeta), \qquad \bmg_i=\bmg_i(\bbeta_0, \bfeta_0), \nonumber\\
	& \hat{\Omega}(\bbeta, \bfeta)=  \frac{1}{n} \sum_{i=1}^{n} \bmg_i(\bbeta,\bfeta) \bmg_i(\bbeta, \bfeta)^T, \qquad \Omega(\bbeta, \bfeta) = E[ \bmg_i(\bbeta, \bfeta) \bmg_i(\bbeta, \bfeta)^T] , \qquad \Omega= \Omega(\bbeta_0, \bfeta_0),\nonumber\\
	& G_i( \bfeta)= \frac{\partial \bmg_i(\bbeta, \bfeta)}{\partial \bbeta},\quad  \hat G(\bfeta)= \frac1n \sum_{i=1}^{n}G_i(\bfeta), \quad G_i= G_i( \bfeta_0), \quad G( \bfeta)= E\left[ G_i( \bfeta)\right], \quad G= G(\bfeta_0). \nonumber 
\end{align}
Note that $ G_i(\bfeta) $ does not depend on $ \beta $ because $ g_i(\beta, \bfeta) $ is linear in $ \beta $. This largely simplifies the problem. 

As always, asymptotic theory is useful if it provides a good approximation to finite-sample performance in applications. In MR with a large number of SNPs while each individual SNP is only weakly related to the exposure,  many weak moment asymptotics is well-suited and provides an improved approximation to finite sample behavior of invalid IV robust  inference than the classical asymptotics with a fixed number of ``strong'' moment conditions as $ n $ goes to infinity (see simulations in Section \ref{sec: simu}). Now we are ready to give the formal characterization of the many weak moment asymptotics.

\begin{assum}[many weak moment asymptotics] \label{assump: many weak moments}
	There are scalars $ \! \mu_n^2, c, \! c'>0$ such that 
	$$ 
	\mu_n^{2} c\leq  n G^T \Omega^{-1}  G \leq 	\mu_n^{2} c'.
	$$
	Then, $\mu_n^2\rightarrow \infty $ as $n\to \infty$ and $ m/\mu_n^2 $ is bounded for all $n$. 
\end{assum}
Assumption \ref{assump: many weak moments} provides an improved approximation when the many moment conditions are weak.  When $ \mu_n=\! \sqrt{n} $ and $m$ is finite, it agrees with the classical asymptotics (a finite number of ``strong'' moment conditions). More discussion on $ \mu_n $ is in Section 1.3 of the supplementary materials.

The many weak moment asymptotics is fundamentally different from the classical asymptotics.   Analogous to the weak IV bias arising from linear models, it has also been recognized that many weak moment conditions can make the usual GMM inference inaccurate \citep{Stock:2002aa}. For example,  \cite{Newey:2009aa} find that the two-step GMM is biased and has non-normal asymptotic distribution, while estimators in the generalized empirical likelihood (GEL) family \citep{Smith:1997aa, Parente:2014aa} are consistent and asymptotically normal but have larger asymptotic variance than usual.  Furthermore,  as outlined in Section \ref{subsec: contribution}, the many weak moment asymptotics poses several technical difficulties on dealing with nuisance parameters, which to our knowledge has not been addressed in the literature. 
%  One solution is to stack the moment conditions for the parameterized nuisance parameters together with the moment conditions in (\ref{eq: IF})  so that the nuisance parameters and the parameter of interest $ \beta_0 $ can be estimated simultaneously using GMM, and the inference of $\beta_0$ is straightforward.  However, this strategy is computationally intensive as it introduces a large number (polynomial in $m$) of additional parameters. The second solution is to use the efficient influence function in Theorem \ref{theoX}(b), which is one-dimensional and thus  avoids the computation burden in  using the GMM. Still, this strategy is not ideal because achieving  semiparametric local efficiency requires specifying the form of correlation that can be difficult in practice \citep{Stephens:2014aa}. 
For the rest of this section, we develop novel semiparametric theory to handle unknown nuisance functions under many weak moment conditions. These theoretical developments enable fast and stable estimation and inference about $\beta_0$.

For estimation purposes, we will assume linear SNP 
\citep{Sun2021}, exposure, and outcome models in Assumption \ref{assump: linear}. 
\begin{assum} (nuisance parameters)\label{assump: linear}
	Suppose that $ \bmZ, \bmX  $ are bounded, $ \bmX\in \mathbb{R}^{d_x} $ with $ d_x<\infty $,  
	\begin{align*}
		&E(Z_j \mid \bmX= \bmx) = \bmx^T \bm \pi_{j0}, ~ j=1,\dots, m, \\
		&E(A\mid \bmZ= \bmz, \bmX= \bmx) =(\bmx^T, \bmz^T) \bm \mu_0, \\
		&E(Y\mid \bmZ= \bmz, \bmX= \bmx) =(\bmx^T, \bmz^T) \bm \lambda_0, \\
		& E(R_AR_Y \mid \bmX= \bmx)=\omega_0(\bm x; \bm\mu_0, \bm\lambda_0)   , \\
		& E(R_A^2\mid \bmX= \bmx)= \theta_0(\bm x; \bm \mu_0 ) , 
	\end{align*}
	where $ \omega_0 $ and $  \theta_0 $ are unspecified functions, {$\bm \pi_{j0}, j=1,\dots, m, \bm \mu_0, \bm \lambda_0$ are unknown parameters},
 and the first component of  $ \bmX $ is  1 representing the intercept term. 
\end{assum}
Under Assumption \ref{assump: linear},  $ \bfeta= (\bm \pi_1^T,\dots, \bm \pi_m^T, \bm\mu^T,  \bm \lambda^T, \omega(\bm x; \bm\mu, \bm\lambda)  , \theta(\bm x; \bm \mu ) )^T$ collects all the nuisance parameters, $ \bfeta_0 $ is the true value of $ \bfeta $.  Write the estimator of $ \bm\bfeta_0 $ as $ \hat{\bm\bfeta} $, which includes the least squares estimators  $\hat{\bm \pi}_{1}, \dots, \hat{\bm \pi}_{m}, \hat{\bm \mu}, \hat{\bm \lambda}$ {from fitting the linear models in Assumption \ref{assump: linear},}  and the kernel estimators with plug-in estimated parameters  $ \hat \omega(\bm x; \hat{\bm\mu}, \hat{\bm\lambda}), \hat \theta(\bm x; \hat{\bm \mu})  $; {see Section 1.4 of the Supplement for the details of kernel estimators.} We choose the more flexible kernel estimators for  $ \omega_0(\bm x; \bm\mu, \bm\lambda) $ and $\theta_0(\bm x; \bm \mu )$ to avoid modeling the second moment terms. Alternatively, one can assume that   $E(R_A R_Y\mid \bm X)$ and $E(R_A^2 \mid  \bm X)$ follow parametric models, for example,	linear models that include a full set of quadratic terms of $\bm X$ or saturated models when $\bmX$ consists of only categorical variables. Then all the nuisance parameter estimators $\hat \bfeta$ can be obtained from the least squares estimation. Either way,  {the estimated nuisance parameters $\hat \bfeta$ converge to their true values under the assumed conditions}.  A special case is  when not adjusting for covariates, i.e., $ \bmX $ only includes the intercept term, $\omega(\bm x; \bm\mu, \bm\lambda)   $ and $ \theta(\bm x; \bm \mu ) $  become two one-dimensional parameters, and their estimators degenerate to simple averages.

 We focus on the continuous updating estimator (CUE) -- a member of the  GEL family -- in this article, because its objective function has an explicit form and its empirical performance is similar to the other estimators in the GEL family \citep{Newey:2004aa}. We propose the following GENIUS estimator that leverages MAny Weak Invalid IVs (GENIUS-MAWII), which is obtained using the influence function  $ g^{IF} (\O; \bbeta, \bfeta) $ defined in \eqref{eq: IF} with a  plug-in nuisance parameter  estimator $ \hat{\bfeta} $ (defined above),
\begin{align}
	\hat{\bbeta}=\arg\min_{\beta\in B} \hat{Q} (\bbeta, \hat{\bfeta}), \qquad \hat{Q}(\bbeta, \hat{\bfeta})=\hat{\bmg}(\bbeta, \hat{\bfeta})^T \hat{\Omega} (\bbeta, \hat{\bfeta})^{-1} \hat{\bmg}(\bbeta, \hat{\bfeta})/2, \label{eq: cue}
\end{align}
where $B$ is a compact set of parameter values, {chosen to confidently encompass the true value $\beta_0$}. Note that the form of the CUE is similar to the familiar two-step GMM estimator, except that the objective function is simultaneously minimized over $ \bbeta $ in the optimal weighting matrix $ \hat{\Omega} (\bbeta, \hat \bfeta) $. This is key in eliminating the many weak moment bias of two-step GMM estimator \citep{Newey:2009aa}.

The following theorem establishes the asymptotic properties of $ \hat{\beta} $ defined in (\ref{eq: cue}).

\begin{theorem} \label{theo: GMM}
	Under structural equations (\ref{eq: out model X})-(\ref{eq: exp model X}) and $ \bmZ\perp U\mid \bmX $, Assumptions \ref{assump: many weak moments}-\ref{assump: linear} and regularity conditions {stated in Assumptions 3-6} in Section 4.1 of the supplementary materials, $ m^2/n\rightarrow 0 $ as $n\to \infty$, $ \hat{\beta} $ in (\ref{eq: cue}) is consistent, i.e., $ \hat{\beta}\xrightarrow{p} \beta_0 $ as $n\to \infty$. If additionally $ m^3/n\rightarrow0 $ holds as $n\to \infty$, then $ \hat\beta $ is asymptotically normal, i.e., as $ n\rightarrow\infty $, 
	\begin{align}
		\frac{\mu_n(\hat{\beta}-\beta_{0})}{ \sqrt{ \frac{1}{ n \mu_n^{-2}  G^T\Omega^{-1} G}+  \frac{ \mu_n^{-2}E[U_i^T\Omega^{-1} U_i]}{[ n \mu_n^{-2} G^T\Omega^{-1} G]^2}  }} \xrightarrow{d} N\left(0, 1\right), \label{eq: CUE normality}
	\end{align}
	where $U_i=G_i-G-\{\Omega^{-1} E(\bmg_i G_i^T)\} ^T \bmg_i $ is the population residual from least squares regression of $G_i-G$ on $\bmg_i$. 
\end{theorem}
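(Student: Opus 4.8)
The plan is to adapt the continuous-updating analysis under many weak moments of \cite{Newey:2009aa} to our setting, preceded by a reduction that removes the estimated nuisance parameter $\hat{\bfeta}$. The structural fact that drives everything is that $\bmg_i(\bbeta,\bfeta)$ is affine in $\bbeta$, so $G_i(\bfeta)$ is free of $\bbeta$ and the expansion
\begin{align}
	\bmg_i(\bbeta, \bfeta) = \bmg_i(\bbeta_0, \bfeta) + G_i(\bfeta)(\bbeta- \bbeta_0) \nonumber
\end{align}
is exact, with no remainder. Consequently $\hat{\bmg}(\bbeta,\bfeta)$, and hence the whole objective $\hat Q(\bbeta,\bfeta)$, is at most quadratic in $\bbeta$, and this is the feature that (as flagged in Section \ref{subsec: contribution}) lets Neyman orthogonality of the influence function suffice for handling nuisance estimation even in this nonstandard regime.

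First I would neutralize $\hat{\bfeta}$. Under Assumption \ref{assump: linear} the least-squares estimators $\hat{\bm\pi}_j,\hat{\bm\mu},\hat{\bm\lambda}$ are root-$n$ consistent coordinatewise and the kernel estimators $\hat\omega,\hat\theta$ converge at standard nonparametric rates. The multiple robustness established after Theorem \ref{theoX} means that $\E[\bmg_i(\bbeta_0,\bfeta)]$ is first-order insensitive to perturbing $\bfeta$ along the fitted models; combining this orthogonality with the exact affine expansion and a second-order Taylor argument, I would show that, in the $\Omega^{-1}$-weighted norms entering $\hat Q$,
\begin{align}
	\hat{\bmg}(\bbeta,\hat{\bfeta}) = \hat{\bmg}(\bbeta,\bfeta_0) + r_{1n}, \qquad \hat G(\hat{\bfeta}) = \hat G(\bfeta_0) + r_{2n}, \qquad \hat\Omega(\bbeta,\hat{\bfeta}) = \hat\Omega(\bbeta,\bfeta_0) + r_{3n}, \nonumber
\end{align}
where $r_{1n},r_{2n}$ are of smaller order than $\mu_n/n$ and $r_{3n}$ is negligible relative to $G^T\Omega^{-1}G\asymp\mu_n^2/n$. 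Because both the number of linear coefficients and the number of moments diverge, the delicate point is to track these errors coordinatewise and aggregate them through $\Omega^{-1}$, using $m^2/n\to0$ (and $m^3/n\to0$ for normality) to absorb the cross terms. After this step the problem is the many-weak-moment CUE problem with $\bfeta_0$ known.

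For consistency I would use the exact quadratic form of $\hat Q(\bbeta,\bfeta_0)$ together with concentration bounds (valid under $m^2/n\to0$) giving $\hat\Omega(\bbeta)^{-1}=\Omega^{-1}(1+o_p(1))$ and $\hat G=G(1+o_p(1))$ on the relevant subspace. Writing $t=\bbeta-\bbeta_0$, the objective is, to leading order, the scalar quadratic $\tfrac12[\hat{\bmg}(\bbeta_0)^T\Omega^{-1}\hat{\bmg}(\bbeta_0) + 2t\,\hat G^T\Omega^{-1}\hat{\bmg}(\bbeta_0) + t^2\,\hat G^T\Omega^{-1}\hat G]$, whose curvature $\hat G^T\Omega^{-1}\hat G\asymp\mu_n^2/n$ dominates its linear coefficient $\hat G^T\Omega^{-1}\hat{\bmg}(\bbeta_0)=O_p(\mu_n/n)$. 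Hence the minimizer lies within $O_p(\mu_n^{-1})$ of $\bbeta_0$, and since $\mu_n\to\infty$ this delivers $\hat\beta\xrightarrow{p}\beta_0$.

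For asymptotic normality I would differentiate the CUE objective; the self-normalization produces the first-order condition $0=\hat D(\hat\beta)^T\hat\Omega(\hat\beta)^{-1}\hat{\bmg}(\hat\beta)$ with adjusted Jacobian $\hat D(\bbeta)=\hat G - \big[\tfrac1n\sum_i G_i\bmg_i(\bbeta)^T\big]\hat\Omega(\bbeta)^{-1}\hat{\bmg}(\bbeta)$, so that $\hat D(\bbeta_0)=G+\tfrac1n\sum_i U_i+o_p$, with $U_i$ exactly the residual in the statement. Using the exact expansion of $\hat{\bmg}(\hat\beta)$ and replacing $\hat D,\hat\Omega$ by their limits gives
\begin{align}
	\hat\beta-\beta_0 = -\frac{\hat D^T\Omega^{-1}\hat{\bmg}(\beta_0)}{G^T\Omega^{-1}G}(1+o_p(1)), \nonumber
\end{align}
and the numerator decomposes into a ``score'' part $\tfrac1n\sum_i G^T\Omega^{-1}\bmg_i$, with variance $G^T\Omega^{-1}G/n\asymp\mu_n^2/n^2$, and a degenerate second-order part $\tfrac1{n^2}\sum_{i\neq j}U_i^T\Omega^{-1}\bmg_j$, which by the orthogonality $\E[\bmg_iU_i^T]=0$ has mean zero and variance $E[U_i^T\Omega^{-1}U_i]/n^2\asymp m/n^2$; dividing by the square root of the summed variances reproduces exactly the studentization in \eqref{eq: CUE normality}. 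I would establish joint normality of the two parts via a single martingale-difference CLT, verify that their cross-covariance vanishes, and show that the remainders — from $\hat\Omega(\hat\beta)-\Omega$, from $\tfrac1n\sum G_i\bmg_i^T-\E$, from the diagonal $i=j$ term, and from the nuisance plug-in — are $o_p$ of the standard deviation. The main obstacle is exactly the twofold ``surprise'' of Section \ref{subsec: contribution}: proving that the infinite-dimensional plug-in stays negligible in the many-weak regime, where orthogonality alone is generically insufficient and affinity in $\bbeta$ must be exploited, and controlling the degenerate U-statistic $\tfrac1{n^2}\sum_{i\neq j}U_i^T\Omega^{-1}\bmg_j$ with $m$ diverging, which is precisely where $m^3/n\to0$ enters.
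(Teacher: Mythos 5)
Your asymptotic-normality argument is essentially the paper's own route: the adjusted Jacobian $\hat D(\bbeta)=\hat G-\big[\tfrac1n\sum_i G_i\bmg_i(\bbeta)^T\big]\hat\Omega(\bbeta)^{-1}\hat{\bmg}(\bbeta)$, the split of the first-order condition into a score term $G^T\Omega^{-1}\hat{\bmg}(\beta_0,\bfeta_0)$ plus the degenerate U-statistic $n^{-2}\sum_{i\neq j}U_i^T\Omega^{-1}\bmg_j$, the orthogonality $E(\bmg_iU_i^T)=0$, the two variance contributions reproducing the studentization in \eqref{eq: CUE normality}, and the use of linearity in $\bbeta$ so that $\hat G(\hat\bfeta)-\hat G(\bfeta_0)$ (hence the plug-in error inside the U-statistic) is $o_p(n^{-1/2})$ -- all of this matches the paper's Lemmas \ref{lemma: g beta, nuisance}--\ref{lemma: Q second deriv} and its final appeal to the Newey--Windmeijer CLT.

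The genuine gap is in your consistency step. You freeze the CUE weight matrix at $\Omega^{-1}=\Omega(\beta_0,\bfeta_0)^{-1}$ and minimize the resulting exact quadratic in $t=\bbeta-\beta_0$; but it is precisely the $\bbeta$-dependence of $\hat\Omega(\bbeta,\bfeta)^{-1}$ that removes the many-weak-moment bias, so this replacement silently turns the CUE into fixed-weight GMM. Concretely, your claim $\hat G^T\Omega^{-1}\hat{\bmg}(\beta_0)=O_p(\mu_n/n)$ fails in the weak regime: writing $\hat G^T\Omega^{-1}\hat{\bmg}(\beta_0)=G^T\Omega^{-1}\hat{\bmg}(\beta_0)+(\hat G-G)^T\Omega^{-1}\hat{\bmg}(\beta_0)$, the first term is indeed $O_p(\mu_n/n)$, but the second has mean $n^{-1}E\{(G_i-G)^T\Omega^{-1}\bmg_i\}=n^{-1}\mathrm{tr}\big(\Omega^{-1}E(\bmg_iG_i^T)\big)$, which is generically of order $m/n$ because $E(\bmg_iG_i^T)\neq 0$ (this nonvanishing correlation is exactly why $U_i$ appears in your own normality step). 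Since Assumption \ref{assump: many weak moments} only bounds $m/\mu_n^2$, the ratio of linear coefficient to curvature is $O_p(m/\mu_n^2)=O_p(1)$, not $o_p(1)$: your argument yields only $|\hat\beta-\beta_0|=O_p(1)$, i.e., it reproduces the two-step-GMM bias visible in the GENIUS-GMM rows of Table \ref{tb: sim1} rather than proving consistency. The paper avoids this by never fixing the weight: the population criterion in Lemma \ref{lemma:unif Q} is $Q(\beta,\bfeta_0)=\bar g(\beta,\bfeta_0)^T\Omega(\beta,\bfeta_0)^{-1}\bar g(\beta,\bfeta_0)/2+m/(2n)$, with the weight evaluated at the same $\beta$; by Newey--Windmeijer's Lemma A1 the sampling-noise contribution to $\hat Q$ is then $m/(2n)$ uniformly -- constant in $\beta$, hence unable to shift the argmin -- and consistency follows from uniform convergence, stochastic equicontinuity over the compact $B$, and the global identification bound of Lemma \ref{lemma: cond}. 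A secondary weakness of the same step: the CUE objective is a ratio-type, non-convex function of $\beta$ that flattens for large $|t|$, so a purely local quadratic expansion cannot certify where the global minimizer over $B$ lies; some uniform-in-$\beta$ argument of the paper's kind is unavoidable, and since your normality step evaluates the second derivative at $\bar\beta$ between $\beta_0$ and $\hat\beta$, this gap propagates into that part as well.
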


The proof is given in the supplementary materials. Here, we outline the key steps. Taylor expansion of the first-order condition $ \partial  \hat{Q}(\bbeta, \hat \bfeta) /\partial \beta |_{\beta= \hat  \beta} = 0$ gives 
\[
0=n \mu_n^{-1} \frac{\partial \hat Q(\beta, \hat\bfeta)}{\partial \beta}\bigg|_{\beta=\beta_0} + n \mu_n^{-2}  \frac{\partial^2 \hat Q (\beta, \hat\bfeta)}{\partial \beta^2} \bigg|_{\beta=\bar\beta} \mu_n (\hat\beta- \beta_0),
\]  
where $\bar\beta$ is some value between $\beta_0$ and $\hat\beta$. We  prove that  $n\mu_n^{-1} \partial  \hat{Q}(\bbeta, \hat \bfeta) /\partial \beta |_{\beta= \beta_0}$ is asymptotically equivalent to the sum of  a usual GMM term $ n\mu_n^{-1} G^T \Omega^{-1} \hat{g} (\beta_0, \bfeta_0) $ and a U-statistic term $ (n\mu_n)^{-1} \sum_{i\neq j} {U}_i^T \Omega^{-1} g_j $ that is no longer negligible due to the many weak moment asymptotics, and both terms are mean zero. The asymptotic normality of  $n\mu_n^{-1} \partial  \hat{Q}(\bbeta, \hat \bfeta) /\partial \beta |_{\beta= \beta_0}$ then follows from 
 the U-statistic term being uncorrelated with the usual GMM term, the asymptotic variance of the usual GMM term being $ n \mu_n^{-2} G^T \Omega^{-1} G $, the asymptotic variance of the U-statistic term being $ \mu_n^{-2}  E[  U_i^T \Omega^{-1} U_i] $, and the central limit theorem. The result in \eqref{eq: CUE normality} follows from  showing
$ n \mu_n^{-2}   \partial^2 \hat Q (\beta, \hat\bfeta)/ \partial \beta^2 |_{\beta=\bar\beta}   =  n \mu_n^{-2}  G^T \Omega^{-1} G  + o_p(1) . $
 
Interestingly, the  outline above implies that $ \hat\beta $ defined in \eqref{eq: cue} is asymptotically equivalent to $ \arg\min_{\beta\in B} \hat{Q}(\bbeta, \bfeta_0) $, which means that estimation of $ \bfeta_0 $ does not affect the asymptotic distribution of the CUE, even with multiple complications arising from the many weak moment asymptotics, the number of nuisance parameters growing to infinity, and nonparametric kernel estimators that themselves involve estimated parameters. {This result relies on the estimated nuisance parameters converging to the truth.} But still, this is an unusual property that does not hold in general, but holds in the current setting  due to three factors: (i) the number of moment conditions $ m $ grows to infinity at a rate slower than $ n^{1/3} $; (ii) the use of influence functions as the moment conditions and the fast convergence rate of the estimated nuisance parameters which imply  that $ \sqrt{n} \|\hat g (\beta_0, \hat\bfeta) - \hat g (\beta_0, \bfeta_0)  \| = o_p(1)$; and (iii) the moment conditions $ g_i(\beta, \bfeta) $ being linear in $ \beta $ which implies that  $ \sqrt{n} \|\hat G ( \hat\bfeta) - \hat G ( \bfeta_0)  \| = o_p(1)$, where $ \| \cdot\| $ is the $ \ell_2 $ vector norm. Crucially, we need (iii) to make sure that the impact of estimating $ \bfeta_0 $ is negligible for the U-statistic term; in contrast, (iii) is not needed under the classical asymptotics because the U-statistic term is a higher order term that is negligible. 

% in the sense that $ \sum_{j=1}^m \|\hat{\bm\pi}_j -\bm\pi_{j0} \|^2 =  \|\hat{\bm\mu} - \bm \mu_0 \|^2 =  \|\hat{\bm\lambda} - \bm \lambda_0 \|^2 = O_p(m\log m/n) $, and 

%From \cite{Newey:2009aa}, under  the consistency result for CUE follows: $ S_n^T (\hat{\bbeta}- \bbeta_0)/\mu_n \xrightarrow{p}0 $, where $ \xrightarrow{p} $ denotes convergence in probability. As a special case, when the outcome is continuous, (i.e., $ p=1, \bbeta=\beta_0 $), we have  $ \hat{\beta}_a- \beta_{a0}\xrightarrow{p}0 $. If additionally  $m^3/n\rightarrow 0$ holds, then $\hat{\bbeta}$ is asymptotically normal, i.e., as $ n\rightarrow \infty $,
%\begin{align}
%\frac{S_n^T(\hat{\bbeta}-\bbeta_0)}{ \sqrt{H_n^{-1} +H_n^{-1} \Lambda_n H_n^{-1} } }\xrightarrow{d} N\left(0, 1\right),\label{eq: CUE normality}
%\end{align}
%where $ H_n=  nS_n^{{-1}} G^T \Omega^{-1}  G (S_n^{-1})^T $, $ \Lambda_n= S_n^{-1} E[U_i^T \Omega^{-1} U_i ] (S_n^{-1})^T  $, $ e_j $ is the $ j $th unit vector, $ U_i= [ U_i^1,\dots, U_i^p] $,  with $ U_i^j =(G_i-G)e_j-\{\Omega^{-1} E[\bmg_i e_j^TG_i^T]\} ^T \bmg_i $ representing the population  residual from least squares regression of each column of $ G_i- G $ on $ g_i $.  

In Theorem \ref{theo: GMM}, the number of SNPs $ m $ is required to grow slower than the sample size $ n $, which is more restrictive than the limited information maximum likelihood (LIML) estimator \citep{Chao:2005aa} where $ m $ can grow at the same rate as $n$ or even faster. The reason behind this difference is that LIML assumes homoscedasticity, while CUE makes no such assumption. Consequently, for consistency of the CUE, $ m^2/n\rightarrow 0$ as $n\to\infty$ seems necessary given the need to consistently estimate  the heteroscedastic weight matrix $ \Omega $ which has $ m^2 $ elements.

According to Theorem \ref{theo: GMM}, under the many weak moment asymptotics, as long as $ m^3/n$  is small, the CUE $ \hat\beta $ is consistent and asymptotically normal, and  the convergence rate is $ \mu_n $.  The asymptotic variance of $ \hat{\bbeta}$ consists of the limit of two terms:
\begin{align}
	\frac{1}{ n G^T\Omega^{-1} G} ~\mbox{ and }  ~ 
	\frac{ E(U_i^T\Omega^{-1} U_i)}{(n G^T\Omega^{-1} G)^2} \label{eq: a.var}. 
\end{align} 
The first term is the classical GMM variance, while the second term is the variance contribution due to the  variability of the moment derivative $ G_i $ which does not vanish under many weak moment asymptotics. Specifically, when $ m/\mu_n^2\rightarrow 0$ as $n\to\infty$ or $G_i $ is a constant,  the second term  is negligible compared to the first term, so that  (\ref{eq: CUE normality})  agrees with classical GMM theory; otherwise, the additional variance is not negligible and results in larger variance of the CUE. Interestingly, in this many weak moment asymptotic regime where identification is weak, the impact of  estimation of the weight matrix $ \Omega $ is small compared to estimation of $ G $ and does not appear in the variance formula.

In practice, we can estimate the asymptotic variance of $ \hat \beta $ using $ \hat{V}/n$, where
\begin{equation}
\begin{aligned}
 &\hat{V} =\hat{H}^{-1} \hat{D}^T \hat{\Omega}^{-1} \hat{D} \hat{H}^{-1}, \qquad \hat{H}= \partial^2 \hat{Q}({\bbeta},  \hat{\bfeta})/  \partial \bbeta^2|_{\bbeta= \hat{\bbeta}}, \qquad \hat{\Omega}=\hat \Omega (\hat{\bbeta},  \hat{\bfeta}),\\
 &\hat{D}=  \hat G( \hat{\bfeta})-  \left\{  \frac1n   \sum_{i=1}^n  G_i( \hat{\bfeta}) g_i(\hat \bbeta, \hat{\bfeta})^T \right\} \hat{\Omega}^{-1} \hat{g}(\hat \bbeta,  \hat{\bfeta}	).  \label{eq: var est}
\end{aligned}
\end{equation}
   Here, $ \hat{H} $ is an estimator of $ G^T \Omega^{-1} G $, the middle term $ \hat{D}^T \hat{\Omega}^{-1} \hat{D} $ is an estimator  of  the asymptotic variance of $\sqrt{n} \partial \hat{Q} (\bbeta, \hat \bfeta)/\partial \bbeta|_{\bbeta=\bbeta_0} $.  Notice that $ \hat{G}^T\hat{\Omega}^{-1}\hat{G} $ cannot be used in place of $ \hat{H} $ because $ \hat{G}^T\hat{\Omega}^{-1}\hat{G} $ is biased under many weak moment asymptotics \citep{Newey:2009aa}.  Based on the variance estimator, we can test the hypothesis $H_0: \beta=\beta^* $ using the Wald statistic $ T= \sqrt{n}(\hat{\beta}- \beta^*)/ \hat{V}^{1/2}$ or we can construct a $ 1-\alpha $ confidence interval based on normal approximation. Other identification robust statistics, e.g., Lagrange multiplier statistic, conditional likelihood ratio statistics can also be applied here, and they are in fact asymptotically equivalent to the Wald statistic under many weak moment asymptotics; see \cite{Newey:2009aa} for details.  It is worth noting that  under strong identification, Theorem \ref{theo: GMM} and the variance formula in \eqref{eq: var est}  are asymptotically equivalent to classical GMM counterparts,  and thus remain applicable. Hence, the results in this section are not only well-suited for MR analysis but also applicable to a wider range of regimes compared to classical GMM results.

			The exposure and outcome models in Assumption \ref{assump: linear} can be extended to include SNP-SNP and SNP-covariate interactions, by simply fitting linear models with those interaction terms included. Then, consistency in Theorem \ref{theo: GMM} continues to apply when the numbers of regressors in the exposure and outcome models go to infinity slower than $n^{1/2}$, and the asymptotic normality result in Theorem \ref{theo: GMM} continues to apply when the numbers of regressors in the exposure and outcome models go to infinity slower than $n^{1/3}$.

Finally, as discussed after Theorem \ref{theoX}, estimators based on influence functions enjoy a multiple robustness property in classical settings with a finite number of ``strong'' moment conditions, and    $\hat\beta$ achieves the semiparametric efficiency bound \citep{Ackerberg2014aa}. However, multiple robustness and semiparametric efficiency under many weak moment asymptotics are more delicate and will be interesting to pursue in future work.

%For Proposition \ref{prop: both continuous, X}, because the moment restrictions are conditional,  $ E[C(X)\bmg(\beta_0) ]=0$ for any function $ C(\cdot) $.  In this regard,  one can construct the efficient unconditional  moment equations from the conditional moment restrictions $ E[ \bmg (\beta_0)|X] =0$ (Newey, 1993). Specifically, the optimal $ C(X) $ is a $ 1\times m $ matrix, 
%\begin{align}
%	C^{{opt}} (X) = G(X)^{T} \Omega(X)^{-1} \nonumber
%\end{align}
%where $ G(X) = E[ \partial \bmg(\beta)/\partial \betaX] $, $ \Omega(X)=E[ \bmg(\beta_0) \bmg(\beta_0)^T| X] $. The estimator is 
%\begin{align}
%	\hat{\beta}= \arg\min_{\beta} \left(\sum_{i=1}^{n}\hat{C}^{opt} (X_i)   \bmg_i(\beta) \right)^2	 \left( \sum_{i=1}^{n}  \hat{C}^{opt} (X_i)  \hat{C}^{opt} (X_i)^T \right)^{-1}  \nonumber
%\end{align}
%{\Ting For example, from Proposition \ref{prop: both continuous, X}, $- G(X)_{m\times 1}= E [ Z A^2- ZA E(A|Z, X)|X] - E(Z|X) E[ A^2- AE(A|Z, X)|X]$}.

\vspace{-5mm}

\section{Measure of weak identification and diagnosis} 
\label{sec: measure}
We have developed a new MR method, GENIUS-MAWII, which simultaneously addresses the two salient challenges in MR:  many weak IVs and widespread horizontal pleiotropy. We account for many weak IVs by establishing consistency and asymptotic normality of the CUE obtained from using the derived influence functions as moment conditions under many weak moment asymptotics, and we deal with widespread horizontal pleiotropy by leveraging heteroscedasticity of the exposure based on the GENIUS identification strategy. However, GENIUS-MAWII will break down if identification is too weak for many weak moment asymptotics to provide good approximation, resulting in  \emph{estimation bias}; or if the untestable assumptions \eqref{eq: out model X}-\eqref{eq: exp model X} and $ \bmZ\perp U\mid \bmX $ do not hold, resulting in \emph{identification bias}. Therefore, to enhance the reliability of GENIUS-MAWII in MR analysis, besides using domain knowledge, it is useful to have tools to gauge whether identification is strong enough for the promised  asymptotic results to kick in and whether  there is any evidence to falsify the assumptions.  For these two purposes, we present a measure of weak identification, an overidentification test,  and a graphical diagnostic tool in this section. 

\vspace{-3mm}

\subsection{Measure of weak identification}
\label{subsec: measure of weak identification}
Detection of weak identification is an important part in IV analysis, because weak identification can result in unreliable estimation and inference \citep{Stock:2002aa}. Until now, several formal procedures are readily available for weak IV detection based on linear IV models. {For example, \cite{Stock2001:ivtest} propose to use the first-stage F-statistic to assert whether IVs are weak under the homoscedastic error assumption, which is later extended by \cite{olea2013robust} to handle heteroscedastic errors.}  \cite{Hahn:2002aa} develop a specification test for strong IVs.   In this section, we provide a measure of weak identification for GENIUS-MAWII, which measures the extent of heteroscedasticity and can serve as a helpful diagnosis for reliable inference.

{Write the moment equations in \eqref{eq: IF} as 
$$
  g^{IF}(\O; \beta, \bfeta_0) = \{ \bmZ - E( \bm Z \mid \bmX) \}  \left[ R_A R_Y - E(R_A R_Y \mid \bmX) - \beta \{R_A^2 -E(R_A^2 \mid \bmX)   \}  \right],
$$ 
which can be viewed as using $ \bmZ - E( \bm Z \mid \bmX)  $  as the standard IV, $R_A R_Y - E(R_A R_Y \mid \bmX)$ as the derived outcome and  $R_A^2 - E(R_A^2 \mid \bmX)$ as the exposure. {As suggested by an anonymous
reviewer,} we use the heteroscedasticity-robust F-statistic in the regression of $R_A^2 - E(R_A^2 \mid \bmX)$ on $ \bmZ - E( \bm Z \mid \bmX)  $ as a measure of weak identification. This F-statistic can also be thought of as a Koenker test for heteroscedasticity \citep{KOENKER1981}, specifically for whether $\var (R_A \mid \bmZ , \bmX )$ depends on $\bmZ$. {In practice, we replace the unknown quantities by their estimators and denote the heteroscedasticity-robust F-statistic in the regression of $ R_A (\hat{\bm \mu})^2  - \hat \theta (\bm X; \hat{\bm\mu})$ on $ \bmZ - (\bmX^T \hat{\bm\pi}_{1}, \dots, \bmX^T \hat{\bm\pi}_{m})^T  $ as $F_{\rm GENIUS}$, where $R_A (\hat{\bm \mu})= A - (\bmX^T, \bmZ^T)\hat{\bm \mu}, $ and $\hat{\bm\pi}_j's$ and $\hat \theta (\bm X; \hat{\bm\mu})$ are defined after Assumption \ref{assump: linear}.  {From simulation studies in Section \ref{subsec: simu2}, we recommend check to make sure $F_{\rm GENIUS}$ is larger than 2. However, a rigorous theoretical evaluation of the weak identification test for CUE will be conducted in future work. 
}}

\vspace{-3mm}

\subsection{Overidentification test and graphical diagnosis}
\label{subsec: overidentification}

{In the GMM literature, it is common to perform overidentification tests to test whether $E[g_i(\beta_0, \bfeta_0)]=0$ holds \citep{Hansen:1982aa}. A popular statistic is simply a scaled minimized CUE objective function	$ 2n  \hat{Q} (\hat \bbeta, \hat{\bfeta}) $, which is often called the $J$-statistic.

	\begin{theorem} \label{theo: overidentification}
		Under the same conditions in Theorem \ref{theo: GMM} and $m^3/n\to 0$  as $n\to\infty$, when the null hypothesis $H_0: E[g_i(\beta_0, \bfeta_0)]=0$ holds, 	 
		$$ 
		P\left( 2n  \hat{Q} (\hat \bbeta, \hat{\bfeta}) \geq \chi_{1-\alpha}^2 (m-1) \right)  \rightarrow \alpha
		$$ 
		 as $n\to\infty$, where $\chi_{1-\alpha}^2 (m-1)$ is the $(1-\alpha)$-quantile of the $\chi^2 (m-1) $ distribution.  
	\end{theorem}
	Theorem \ref{theo: overidentification} shows that we can reject $H_0$ if $ 2n  \hat{Q} (\hat \bbeta, \hat{\bfeta}) \geq  \chi_{1-\alpha}^2 (m-1)$, which is the same as the overidentification test under the classical setting with a fixed number of ``strong'' moment equations.}  In Section 1.2 of the supplementary materials, we analytically show that the overidentification test has power to detect assumption violations in typical MR applications.

In addition to the overidentification test, another diagnosis approach is as follows. Note that  our method relies on untestable assumptions \eqref{eq: out model X}-\eqref{eq: exp model X} and $\bmZ \perp U\mid \bmX$. These assumptions imply the conditional moment restriction 
\begin{align*}
	E\big\{R_A (Y- \beta_0 A) - E(R_A (Y- \beta_0 A) \mid \bmX) \mid \bmZ, \bmX\big\} = 0  .
\end{align*}
We have used part of its implications to identify $ \beta_0 $  in \eqref{eq: IF}. But this conditional moment restriction has many other implications that we can use for falsification. This motivates  a graphical diagnostic tool which plots  the ``residual'' 
\[
\hat t_i = \hat R_{Ai}(Y_i - \hat\beta A_i)- \hat E\{R_{Ai} (Y_i - \hat\beta A_i)\mid \bmX_i\} 
\]
against $f(\bmZ_i, \bmX_i)$, where $\hat R_{Ai}= A_i- \hat E(A_i \mid \bmZ_i, \bmX_i )$, $R_{Ai}= A_i- E(A_i \mid \bmZ_i, \bmX_i )$,  and $f(\bmz, \bmx)$ is a pre-specified function that is non-linear in $ \bmz $ to avoid using duplicated information as that used for identification in  \eqref{eq: IF}. If modeling assumptions hold and all estimators have negligible bias, then $\hat t_i$ should be centered around zero across different values of  $f(\bmZ_i, \bmX_i)$; evidence that $\hat t_i$ is not centered around zero indicates violation of the assumptions. Note that unlike the typical residual plots for diagnostic in regression models, variance difference of $ \hat t_i $ across different values of $f(\bmZ_i, \bmX_i)$ does not violate our assumptions. As a final remark, the graphical diagnostic tool is also applicable to the situation where there is no observed covariates by setting $\bmX$ to be empty. In Section \ref{sec: real}, we set $f(\bmZ_i)=\{ \hat E(A_i \mid \bmZ_i)\}^2$.

%(1) As an illustration, when $\bmZ$ are \emph{valid} IVs satisfying $\bmZ\perp \epsilon_Y$, under linear models $A=\gamma' \bmZ+\epsilon_X$, $Y=\beta_0A+\epsilon_Y$, $Z_1, \dots, Z_m$ are mutually independent, $\bmg(\beta)=(\bmZ-E(\bmZ)) (Y-\beta A)$, $G=-E[(\bmZ-E(\bmZ)) A]=-E[ (\bmZ-E(\bmZ))\gamma'\bmZ] = [\gamma_1 \var(Z_1), \dots, \gamma_m\var(Z_m)]'$, $\Omega=E[ (\bmZ-E(\bmZ)) (\bmZ-E(\bmZ))' (Y-\beta_0 A)^2]= \var(\bmZ) E(\epsilon_Y^2)$. Hence, $nG'\Omega^{-1} G=(E \epsilon_Y^2)^{-1}n \sum_{j=1}^{m} \gamma^2_j \var(Z_j)$. \\

\vspace{-5mm}

\section{Simulations}
\label{sec: simu}

\vspace{-3mm}
\subsection{A simulation when assumptions for GENIUS-MAWII hold} \label{subsec: simu1}

We conduct a simulation study to evaluate the finite-sample performance of GENIUS-MAWII when its assumptions hold, i.e., under models \eqref{eq: out model X}-\eqref{eq: exp model X} and $Z\perp U\mid \bmX$. Its performance is compared to the GENIUS estimators obtained from using ordinary CUE and two-step GMM. All three GENIUS estimators are based on the influence functions in  \eqref{eq: IF}, {with nuisance parameters estimated using the models in Assumption \ref{assump: linear}.}  GENIUS-GMM is the two-step GMM from the \textsf{gmm} package in \textsf{R}. {GENIUS-MAWII and GENIUS-CUE have the same point estimators defined by  \eqref{eq: cue} and are computed in the same way using the  \textsf{optimize} and  \textsf{uniroot} functions in \textsf{R}. Specifically, we first minimize the objective function  in  (9) using \textsf{optimize} function with the specified boundary (-10, 10), and if the returned value is very close to the specified boundary, we recompute by applying \textsf{uniroot} to the derivative of the objective function in  (9). The main purpose of this extra step is to stabilize the numerical optimization. The difference between GENIUS-MAWII and GENIUS-CUE is in the variance estimators: 
GENIUS-CUE uses the classical textbook variance estimator, while GENIUS-MAWII uses \eqref{eq: var est}.}

GENIUS-MAWII is also compared to four IV estimators: the two-stage least squares (2SLS), limited information maximum likelihood (LIML), {2SLS with the confidence interval selection method (CIIV-2SLS) \citep{windmeijer2021confidence}}, and five MR estimators: the inverse variance-weighted (IVW) estimator \citep{Burgess:2013aa}, robust adjusted profile score estimator (MR-raps) \citep{zhao2018statistical}, MR-Egger regression \citep{Bowden:2015aa}, {weighted median estimator (MR-median)} \citep{Bowden:2016aa}, and MR-mode \citep{Hartwig:2017aa}. {Additionally, we implemented GMM with CIIV and  2SLS with the two-stage hard thresholding (TSHT) method \citep{Guo:2018aa}. The results from these approaches are similar to those obtained with CIIV-2SLS, and hence are not reported.} We also note that the five MR methods are developed as two-sample MR methods, but we apply them to our one-sample setting regardless to see their performance; {see \cite{minelli2021use} for a comprehensive investigation of this practice.}  2SLS is implemented using the \textsf{AER} package, LIML using the \textsf{ivmodel} package, CIIV using the   \textsf{CIIV} package (using the heteroskedasticity-robust variance option and with first-stage thresholding for weak IVs), IVW  using the \textsf{mr.divw} package, MR-raps using the \textsf{mr.raps} package (with huber loss), and MR-Egger, MR-median, and MR-mode using the \textsf{MendelianRandomization} package. 

{We generate $m=100$ independent SNPs $Z_1, \dots, Z_m$ with $P(Z_j=0)=0.25, P(Z_j=1)=0.5$, $P(Z_j=2)=0.25$, $j=1,\dots, m$. We generate the exposure and outcome from 
\begin{align*}
A&= \sum_{j=1}^m  \gamma_j  Z_j +\eta_A  U+ \left( 1 + \sum_{j=1}^m  \delta_j  Z_j \right) \epsilon_A, \\ 
Y&= \beta_0 A+ \sum_{j=1}^m  \alpha_j Z_j + \eta_Y U+ \epsilon_Y,
\end{align*}
where $\eta_A=\eta_Y=1$, $\beta_0=0.4$, $U\sim N(0, 0.6(1- h^2))$, and  $\epsilon_A, \epsilon_Y \sim N(0, 0.4(1- h^2))$. 	Note that $\gamma_j = \varphi_{\gamma j} \sqrt{ h^2 /(1.5m)}$ and $\delta_j  =  \varphi_{\delta j} \kappa \sqrt{ h^2 /(1.5m)} $, where $m$ is the total number of SNPs, $\varphi_{\gamma j}, \varphi_{\delta j}  $'s are constants that are generated once from a standard normal distribution. Here,  $h^2$ can be interpreted as the proportion of variance in $A$ that is attributed to $E(A\mid \bmZ)$, and $\kappa$ controls the level of heteroscedasticity. We set $h^2=0.2$ and $\kappa=1$.

As illustrated in Figure \ref{fig:snps},  we consider three types of SNPs: $\mathcal{S}_1, \mathcal{S}_2, \mathcal{S}_3$ with proportion $p_1, p_2, p_3$, where $p_1+p_2+p_3=1$. Specifically, $\mathcal{S}_1$ consists of valid IVs with $\alpha_j=0$ (i.e., no direct effect on the outcome); $\mathcal{S}_2$ consists of invalid IVs with uncorrelated pleiotropic effects and $\alpha_j \sim_{i.i.d.}  N(\sqrt{\tau_0^2}, \tau_0^2) $ (i.e., InSIDE is satisfied), where $\tau_0^2 =   h^2 /(1.5m) $; $\mathcal{S}_3$ consists of invalid IVs that affect $A$ and $Y$ through a common factor $C$, which leads to correlated pleiotropic effects and $ \alpha_j = \gamma_j/2$. We consider four settings:
\begin{enumerate}
	\item (No invalid IVs) $p_1 = 1, p_2=p_3=0$;
	\item (40\% invalid IVs) $p_1 =0.6, p_2= 0.2, p_3=0.2$;
	\item  (90\% invalid IVs with InSIDE) $p_1=0.1, p_2=0.9, p_3=0$;
	\item (90\% invalid IVs without InSIDE) $p_1=0.1, p_2=0, p_3=0.9$.
\end{enumerate}
 We consider two values of the sample size: $n=10,000$ and $n=100,000$.} {\color{black}
The results with 1,000 Monte Carlo repetitions are in Table \ref{tb: sim1}, which  summarizes (i) the Monte Carlo  mean and Monte Carlo standard deviation (SD) of each estimator, (ii) average of standard errors (SEs), and (iii) coverage probability (CP) of 95\% confidence intervals from normal approximation.
}

\begin{figure}
	\centering
	\resizebox{!}{!}{
		\begin{tikzpicture}
			\node[color=dblue] (1) {\footnotesize $\{Z_j\}_{j \in \mathcal{S}_2}$};
			\node[color=dblue, below=of 1,yshift=1.5mm] (6) {\footnotesize $\{Z_j\}_{j \in \mathcal{S}_3}$};
			\node[color=dblue, above=of 1,yshift=-2mm] (7) {\footnotesize $\{Z_j\}_{j \in \mathcal{S}_1}$};
			\node[] (4) [right= of 1 ] {\footnotesize $A$};
			\node[below= of 4] (2) {\footnotesize $C$};
			\node[] (3) [right =of 4,xshift=1cm] {\footnotesize $Y$};
			\path (2) edge node[above,xshift=-3mm, yshift=-2mm] {\footnotesize 0.2} (4);
			\node[gray, align=left] (5) [right =of 2, xshift=.6cm] {\footnotesize $U$}; 
			\path[gray] (5) edge node[el,above] {} (2);
			\path[gray] (5) edge node[el,above] {} (3);
			\path[gray] (5) edge node[el,above] {} (4);
			\path[color=dblue] (1) edge[bend left=30]  node[el,above] {} (3);
			\path[color=dblue] (6) edge node[el,above] {} (2);
			\path[color=dblue] (1) edge node[el,above] {} (4);
			\path[color=dblue] (7) edge node[el,above] {} (4);
			\path (2) edge node[above] {\footnotesize 0.1 } (3);
			\path[color=red] (4) edge node[el,above] {{\color{red}$ \beta_0 $}} (3);
	\end{tikzpicture}}	
	\caption{Illustration of three types of SNPs: $\mathcal{S}_1$ consists of valid IVs, $\mathcal{S}_2$ consists of invalid IVs with uncorrelated pleiotropic effects (i.e., InSIDE is satisfied), and $\mathcal{S}_3$ consists of invalid IVs with correlated pleiotropic effects.}	 \label{fig:snps}
\end{figure}
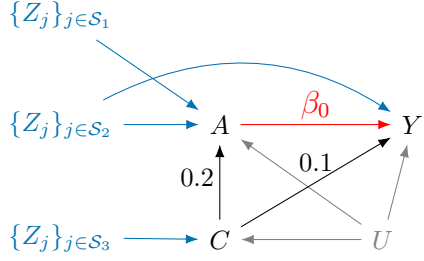

From Table \ref{tb: sim1}, the performance of GENIUS-MAWII is similar across Setting 1-4.  When $n=10,000$, the average F-statistic 
$F_{\rm GENIUS} =1.83$,   {GENIUS-MAWII shows nominal coverage probability, but has some attenuation bias and its SD is slightly more than $\sqrt{10}=3.16$  times larger than the SD with $n=100,000$. 
	This is because when the sample size is small, there are some outliers in some simulation runs due to instability of numerical optimization. Numerical optimization becomes more stable as the sample size becomes larger, essentially resolving this issue.}  When $n=100,000$, the average F-statistic $F_{\rm GENIUS} =9.48$ and  GENIUS-MAWII  shows negligible bias and nominal coverage, which  agrees with our theoretical assessment that GENIUS-MAWII  performs well when the identification is not too weak. Across all scenarios, the SEs calculated using \eqref{eq: var est} are close to Monte Carlo  SDs of GENIUS-MAWII.  Notice that with $m=100$ and $ n=10,000$, $m^2/n=1$ and $m^3/n=100$ are not small, but the GENIUS-MAWII estimator still performs quite well, indicating that our method is able to work well in typical MR studies with around 100 SNPs and 10,000-500,000 sample size. 

 Under all scenarios, the SEs underestimate the  Monte Carlo  SDs of the GENIUS-CUE estimator, which is also reflected by the fact that the CPs are below the nominal level 95\%. This is expected because according to our Theorem \ref{theo: GMM}, a higher order variance term is no longer negligible under many weak moment asymptotics. Comparing the GENIUS-CUE estimator and the GENIUS-MAWII estimator, we see that many weak moment asymptotics indeed provides a better finite sample approximation. 

Across all simulation scenarios, the GENIUS-GMM estimator has a larger bias than GENIUS-MAWII and GENIUS-CUE estimators, especially when $n=10,000$. Moreover,  it is not difficult to derive that the ordinary least squares (OLS) estimates obtained from regressing $Y$ on  intercept, $A$, and $Z_1,\dots, Z_m$ is approximately $\beta_0 + 1/(1+h^2) = 1.23$. Hence, we see that the GENIUS-GMM estimator is in fact biased towards the OLS. This is a GMM version of the well-known phenomenon that 2SLS is biased towards the OLS when IVs are weak \citep{Stock:2002aa}.

The 2SLS, LIML, IVW and MR-raps estimators are valid (i.e., their identification assumptions hold) under Setting 1. In Setting 1, when $n=10,000$, all estimators except LIML has some  weak IV bias; when $n=100,000$, all estimators are unbiased. Note that for IVW and MR-raps, their SEs over-estimate their Monte Carlo SDs because the SEs are developed for two-sample MR. Under Setting 2-4, the 2SLS, LIML, IVW and MR-raps estimators have large biases. 

{The CIIV-2SLS applies 2SLS after the confidence interval selection method and is developed under the plurality rule, which holds under Setting 1-3. In Setting 1, where all IVs are valid, CIIV-2SLS behaves similarly to 2SLS, showing some weak IV bias at a sample size of $10,000$, even when employing default first-stage thresholding. This bias is reduced at a larger sample size of $100,000$. In Setting 2, CIIV-2SLS demonstrates less bias than 2SLS, particularly at $n=100,000$, although it still exhibits a small weak IV bias, resulting in slight undercoverage. However, in Setting 3 where 90\% IVs are invalid, CIIV-2SLS shows considerable bias and large variance. We also observe that in Settings 2-3, SDs are much larger than SEs due to outliers in some simulation runs. In Setting 4, where the plurality rule does not hold, 
CIIV-2SLS shows a bias level comparable to that of 2SLS. 
}

The MR-median and MR-mode estimators are valid under Setting 2. In Setting 2, MR-median and MR-mode have some weak identification bias when $n=10,000$, and the bias becomes smaller when $n=100,000$. However, the SD and SE of MR-mode are very large compared to the other methods. Under Setting 3-4, the MR-median and MR-mode estimators have large biases. 

{The INSIDE assumption required by the MR-Egger estimator holds under Setting 3; however, MR-Egger exhibits bias in this setting, and its SD and SE are larger compared to GENIUS-MAWII. Notably, under Setting 3, the correlation between the error components in exposure and outcome is around 0.73, and the variability in instrument strength (measured by $I^2$ in \cite{minelli2021use}) is about 0.94  when $n=10,000$ and is about 0.99 when $n=100,000$. Hence, our observation that MR-Egger is biased when $n=10,000$ but the bias is reduced when $n=100,000$ aligns with the conclusions in \cite{minelli2021use}.  Under Settings 2 and 4, the MR-Egger estimator is also biased.}

In Section 2 of the supplementary materials, we conduct similar simulation studies where there is an observed covariate or when the SNPs are dependent (i.e., SNPs are in linkage disequilibrium).  Similar to the results presented in Table \ref{tb: sim1}, we generally find that the GENIUS-MAWII estimator has desirable performance with negligible bias and nominal coverage probability. 

\vspace{-3mm}
\subsection{A simulation under assumption violation and weak identification} \label{subsec: simu2}

We conduct more simulations for GENIUS-MAWII under  assumption violation and weak identification, and demonstrate the use of F-statistic and diagnostics tools to help identify situations where GENIUS-MAWII  can be reliably applied.  The setting is identical to that in Section \ref{sec: simu} with $m=100$  except that we generate the outcome from $	Y= \beta_0 A+ \sum_{j=1}^m  \alpha_j Z_j + (1+ \sum_{j=1}^{20}  \eta_{Yj}  Z_j ) U+ \epsilon_Y$ when \eqref{eq: out model X}  is violated, and we generate the exposure from  $ 	A= \sum_{j=1}^m  \gamma_j  Z_j + \left( 1 + \sum_{j=1}^{20}  \eta_{Aj}  Z_j \right)  U+ \left( 1 + \sum_{j=1}^m  \delta_j  Z_j \right) \epsilon_A$ when \eqref{eq: exp model X}  is violated, where
 $\eta_{Yj}  =  \varphi_{Y \eta j} \kappa \sqrt{ h^2 /(1.5m)}, \eta_{Aj}= \varphi_{A \eta j} \kappa \sqrt{ h^2 /(1.5m)} $, and $ \varphi_{Y \eta j}, \varphi_{A \eta j} $'s are constants that are generated once from a standard normal distribution. In other words, 20\% SNPs can have interactions with the unmeasured confounder, and the magnitude of which is similar to their interactions with $\epsilon_A$. 
 
 We consider four situations:  no model assumption is violated, only the exposure or the outcome model assumption is violated, and both model assumptions are violated. In each situation, we consider $\kappa=0, 0.1, 0.5, 1$ for increasing level of heteroscedasticity, and $n=10,000$,  $50,000$ and $ 100,000$. Note $\kappa=0$ means there is no model violation. The results are in Table \ref{tb: sim2}. 
 
From Table \ref{tb: sim2}, when $F_{\rm GENIUS}>2$, GENIUS-MAWII has negligible bias and nominal coverage probability when there is no model violation. When there is model misspecification and $F_{\rm GENIUS}>2$,  GENIUS-MAWII has nontrivial power (more than 50\% power) to detect model misspecification. Therefore, although assumption violations and/or weak heteroscedasticity can severely bias the GENIUS-MAWII estimator, the combined use of $F_{\rm GENIUS}$ and overidentification test can effectively identify those situations and provide guidance about when   GENIUS-MAWII can be reliably applied.

\begin{table}[ht] 	\spacingset{1.2}
\caption{Simulation results based on 1,000 Monte Carlo repetitions with $\beta_0=0.4$ and $m=100$;  $F_{\rm GENIUS}$ represents the average F-statistics for GENIUS-MAWII discussed in Section \ref{subsec: measure of weak identification}, SD is the Monte Carlo standard deviation,  SE is the average standard error, CP is the coverage probability of 95\% asymptotic confidence interval. For each setting, the methods of which the identification assumptions hold are in boldface. 
	\label{tb: sim1} } \vspace{2mm}
\centering 	\spacingset{1.2}
\resizebox{0.86\textwidth}{!}{\begin{tabular}{llcrrrrlrrrr} \hline 
           &              & \multicolumn{4}{c}{$ n=10,000, F_{\rm GENIUS}= 1.83$}         &  &             & \multicolumn{4}{c}{$n=100,000, F_{\rm GENIUS}= 9.48$} \\ \cline{3-6} \cline{9-12}
Setting    & Method       & Mean      & SD       & SE       & CP     &  &      & Mean   & SD     & SE     & CP    \\\hline
Setting 1      & \textbf{GENIUS-MAWII} & 0.367  & 0.124 & 0.126  & 97.5 &  &       & 0.398  & 0.029 & 0.031 & 96.4 \\
{\small(No invalid IVs)} & GENIUS-CUE   & 0.367   & 0.124 & 0.067  & 72.7 &  &   & 0.398  & 0.029 & 0.028 & 94.3 \\
           & GENIUS-GMM   & 0.779  & 0.054 & 0.045  & 0.0  &  &             & 0.470  & 0.025 & 0.026 & 24.6 \\
           & \textbf{2SLS}         & 0.439  & 0.023 & 0.023  & 60.3 &  &             & 0.404  & 0.008 & 0.008 & 91.6 \\
           & \textbf{LIML}         & 0.401  & 0.024 & 0.024  & 95.0 &  &             & 0.400  & 0.008 & 0.008 & 95.7 \\
           & \textbf{CIIV-2SLS}     & 0.442	& 0.024	&	0.024 & 	56.0 &&& 0.404	& 0.008 &	0.008 &	91.3  \\
           %& \textbf{TSHT}     & 0.439	& 0.028	&	0.024	& 61.7		 &&& 	0.403	& 0.008	& 0.008	& 92.4      \\
           &\textbf{IVW}          & 0.439  & 0.023 & 0.032  & 84.9 &  &             & 0.404  & 0.008 & 0.010 & 99.3 \\
       %    & dIVW         & 0.466  & 0.024 & 0.035  & 54.80 &  &             & 0.406  & 0.008 & 0.010 & 97.3 \\
           & \textbf{MR-raps}      & 0.452  & 0.023 & 0.034  & 74.2 &  &             & 0.405  & 0.008 & 0.011 & 98.5 \\
           & \textbf{MR-Egger}     & 0.451  & 0.036 & 0.051  & 90.4 &  &             & 0.406  & 0.013 & 0.017 & 98.4 \\
           & \textbf{MR-median}    & 0.439  & 0.031 & 0.047  & 95.6 &  &             & 0.404  & 0.010 & 0.015 & 99.3 \\
           &\textbf{MR-mode}      & 0.438  & 0.302 & 5.004  & 99.6 &  &             & 0.402  & 0.060 & 0.443 & 99.8 \\ \hline 
       %   &              & \multicolumn{4}{l}{}            &  &             & \multicolumn{4}{l}{}          \\
Setting 2      & \textbf{GENIUS-MAWII} & 0.365  & 0.125 & 0.127  & 96.3  &  &       & 0.397  & 0.031 & 0.031 & 95.1 \\
{\small(40\% invalid IVs)} & GENIUS-CUE   & 0.365  & 0.125 & 0.067  & 73.1  &  &  & 0.397  & 0.031 & 0.028 & 93.7 \\
           & GENIUS-GMM   & 0.780  & 0.054 & 0.045  & 0.0   &  &             & 0.469  & 0.026 & 0.026 & 27.6 \\
           & 2SLS         & 0.568  & 0.055 & 0.021  & 2.0   &  &             & 0.540  & 0.054 & 0.007 & 0.6  \\
           & LIML         & 0.330  & 0.075 & 0.030  & 40.3  &  &             & 0.331  & 0.068 & 0.009 & 11.7 \\
           & \textbf{CIIV-2SLS}     & 0.498	&0.086	&	0.029	&31.4		&&&	0.408&	0.010	&0.010	&85.4\\
           %& \textbf{TSHT}     & 0.500	&0.104		&0.029	&41.0		&&&	0.405	&0.011	&0.010	&89.3  \\
           & IVW          & 0.568  & 0.055 & 0.034  & 4.2   &  &             & 0.540  & 0.054 & 0.011 & 0.7  \\
         %  & dIVW         & 0.603  & 0.058 & 0.037  & 1.3   &  &             & 0.544  & 0.054 & 0.011 & 0.6  \\
           & MR-raps      & 0.614  & 0.061 & 0.054  & 4.6   &  &             & 0.549  & 0.056 & 0.046 & 12.0 \\
           & MR-Egger     & 0.582  & 0.095 & 0.090  & 46.8  &  &             & 0.540  & 0.089 & 0.087 & 62.9 \\
           & \textbf{MR-median}    & 0.535  & 0.081 & 0.060  & 46.2  &  &             & 0.445  & 0.061 & 0.023 & 69.8 \\
           & \textbf{MR-mode}      & 0.455  & 0.892 & 8.165  & 97.9  &  &             & 0.411  & 0.212 & 1.292 & 99.7 \\ \hline 
        %   &              & \multicolumn{4}{l}{}            &  &             & \multicolumn{4}{l}{}          \\
Setting 3      & \textbf{GENIUS-MAWII} & 0.369  & 0.122 & 0.126  & 96.9  &  &       & 0.397  & 0.031 & 0.031 & 94.4 \\
{\small(90\% invalid IVs with INSIDE)} & GENIUS-CUE   & 0.369  & 0.122  & 0.067  & 72.2  &  &  & 0.397  & 0.031 & 0.028 & 91.9 \\
           & {GENIUS-GMM}   & 0.780  & 0.052 & 0.045  & 0.0   &  &             & 0.469  & 0.027 & 0.026 & 25.9 \\
           & 2SLS         & 0.589  & 0.043 & 0.023  & 0.1   &  &             & 0.560  & 0.034 & 0.007 & 0.0  \\
           & LIML         & -0.555 & 0.110 & 0.076  & 0.0   &  &             & -0.553 & 0.057 & 0.024 & 0.0  \\
           & \textbf{CIIV-2SLS}     &0.031	&0.643	&	0.059&	0.0 &&&-1.559	&0.344	&0.090	&0.7\\
           %& \textbf{TSHT}     &  0.331	&0.776	&	0.053&	0.0 &&&			-1.437	&0.362&	0.082&	1.0\\
           & IVW          & 0.590  & 0.047 & 0.035  & 0.5   &  &             & 0.561  & 0.034 & 0.011 & 0.0  \\
       %    & dIVW         & 0.626  & 0.049 & 0.038  & 0.1   &  &             & 0.564  & 0.035 & 0.011 & 0.0  \\
           & MR-raps      & 0.628  & 0.053 & 0.119  & 55.9  &  &             & 0.564  & 0.036 & 0.114 & 94.0 \\
           & \textbf{MR-Egger}     & 0.604  & 0.105 & 0.169  & 87.5  &  &             & 0.539  & 0.077 & 0.172 & 99.3 \\
           & MR-median    & 0.789  & 0.097 & 0.079  & 4.0   &  &             & 0.789  & 0.104 & 0.037 & 5.4  \\
           & MR-mode      & 0.503  & 1.174 & 28.187 & 98.1  &  &             & 0.508  & 0.804 & 7.962 & 89.4 \\ \hline
        %   &              &        &       &        &       &  &             &        &       &       &      \\
Setting 4      & \textbf{GENIUS-MAWII} & 0.369  & 0.122 & 0.126  & 96.9  &  &      & 0.397  & 0.031 & 0.031 & 94.4 \\
{\small (90\% invalid IVs without INSIDE)} & GENIUS-CUE   & 0.369  &  0.122 & 0.067  & 72.2  &  & & 0.397  & 0.031 & 0.028 & 91.9 \\
           & {GENIUS-GMM}   & 0.780   & 0.052 & 0.045  & 0.0     &  &             & 0.469  & 0.027 & 0.026 & 25.9 \\
           & 2SLS         & 0.866  & 0.030  & 0.017  & 0.0     &  &             & 0.851  & 0.026 & 0.006 & 0.0    \\
           & LIML         & 0.839  & 0.038 & 0.018  & 0.0     &  &             & 0.837  & 0.032 & 0.006 & 0.0    \\
            & {CIIV-2SLS}     &0.900&	0.023	&	0.018	&0.0	&&&		0.899	&0.006	&0.006	&0.0 \\
           %& {TSHT}     &0.910	 & 0.022	&	0.019	&0.0 &&&			0.901	&0.006	&0.006	&0.0  \\
           & IVW          & 0.866  & 0.030  & 0.038  & 0.0     &  &             & 0.851  & 0.026 & 0.012 & 0.0    \\
          % & dIVW         & 0.919  & 0.032 & 0.041  & 0     &  &             & 0.856  & 0.026 & 0.012 & 0    \\
           & MR-raps      & 0.896  & 0.023 & 0.040   & 0.0     &  &             & 0.890   & 0.009 & 0.013 & 0.0    \\
           & MR-Egger     & 0.870   & 0.050  & 0.053  & 0.0     &  &             & 0.851  & 0.045 & 0.026 & 0.0    \\
           & MR-median    & 0.898  & 0.026 & 0.055  & 0.0     &  &             & 0.896  & 0.008 & 0.018 & 0.0    \\
           & MR-mode      & 0.965  & 1.888 & 4.592  & 43.2  &  &             & 0.898  & 0.040  & 0.370  & 11.1\\\hline 
\end{tabular}}
\end{table}

\begin{table}[ht]	\spacingset{1.2}
	\caption{Simulation results based on 1,000 Monte Carlo repetitions for GENIUS-MAWII with $\beta_0=0.4$ and $m=100$;  $F_{\rm GENIUS}$ represents the average F-statistics for GENIUS-MAWII discussed in Section \ref{subsec: measure of weak identification}, Power is the empirical power of the overidentification test in Section \ref{subsec: overidentification},		SD is the Monte Carlo standard deviation,  SE is the average standard error, CP is the coverage probability  of 95\% asymptotic confidence interval.
		\label{tb: sim2} } \vspace{2mm}
	\centering 	\spacingset{1.2}
	\resizebox{0.74\textwidth}{!}{
	\begin{tabular}{lrrrrrrrrr}\hline 
		Assumption   violation           & $\kappa$ & $n$     & $ F_{\rm GENIUS}$  & Power &  & Mean  & SD    & SE    & CP   \\\hline 
		No    model violation                            & 0     & 10,000 & 1.028  & 0.006    &  & 0.981 & 1.483 & 2.910 & 58.0 \\
		&       &  50,000  & 1.003  & 0.016    &  & 0.940 & 1.503 & 3.238 & 60.3 \\
		&       &  100,000  & 0.997  & 0.011    &  & 0.968 & 1.323 & 2.427 & 60.0 \\
		& 0.1   & 10,000 & 1.040  & 0.008    &  & 0.922 & 1.529 & 2.808 & 61.2 \\
		&       & 50,000 & 1.064  & 0.050    &  & 0.524 & 1.384 & 2.602 & 77.9 \\
		&       & 100,000 & 1.125  & 0.036    &  & 0.317 & 1.107 & 1.505 & 85.9 \\
		& 0.5   & 10,000 & 1.289  & 0.032    &  & 0.292 & 0.481 & 0.472 & 93.7 \\
		&       & 50,000 & 2.377  & 0.049    &  & 0.383 & 0.087 & 0.087 & 95.4 \\
		&       & 100,000 & 3.786  & 0.041    &  & 0.396 & 0.058 & 0.057 & 94.8 \\
		& 1     & 10,000 & 1.828  & 0.040    &  & 0.368 & 0.125 & 0.127 & 97.1 \\
		&       & 50,000 & 5.206  & 0.050    &  & 0.390 & 0.044 & 0.045 & 94.9 \\
		&       & 100,000 & 9.483  & 0.039    &  & 0.397 & 0.031 & 0.031 & 94.3 \\ \hline 
		Outcome model   violation & 0.1   & 10,000 & 1.040  & 0.011    &  & 0.925 & 1.497 & 2.663 & 61.5 \\
		&       & 50,000 & 1.064  & 0.056    &  & 0.497 & 1.514 & 2.799 & 79.7 \\
		&       & 100,000 & 1.125  & 0.069    &  & 0.121 & 1.272 & 2.086 & 88.3 \\
		& 0.5   & 10,000 & 1.289  & 0.075    &  & 0.045 & 0.758 & 0.707 & 97.9 \\
		&       & 50,000 & 2.377  & 0.558    &  & 0.188 & 0.126 & 0.125 & 69.0 \\
		&       & 100,000 & 3.786  & 0.940    &  & 0.203 & 0.086 & 0.082 & 28.5 \\
		& 1     & 10,000 & 1.828  & 0.284    &  & 0.130 & 0.203 & 0.202 & 91.6 \\
		&       & 50,000 & 5.206  & 0.999    &  & 0.154 & 0.071 & 0.073 & 3.2  \\
		&       & 100,000 & 9.483  & 1.000    &  & 0.162 & 0.051 & 0.050 & 0.0  \\ \hline 
		Exposure model violation  & 0.1   & 10,000 & 1.051  & 0.008    &  & 0.905 & 1.395 & 2.527 & 60.7 \\
		&       & 50,000 & 1.121  & 0.038    &  & 0.540 & 1.241 & 2.123 & 75.5 \\
		&       & 100,000 & 1.241  & 0.042    &  & 0.469 & 0.705 & 0.818 & 79.1 \\
		& 0.5   & 10,000 & 1.632  & 0.094    &  & 0.532 & 0.177 & 0.182 & 76.9 \\
		&       & 50,000 & 4.192  & 0.674    &  & 0.563 & 0.053 & 0.053 & 17.5 \\
		&       & 100,000 & 7.451  & 0.984    &  & 0.571 & 0.036 & 0.035 & 1.1  \\
		& 1     & 10,000 & 3.387  & 0.576    &  & 0.578 & 0.065 & 0.070 & 28.5 \\
		&       & 50,000 & 13.104 & 1.000    &  & 0.595 & 0.026 & 0.027 & 0.0  \\
		&       & 100,000 & 25.259 & 1.000    &  & 0.599 & 0.018 & 0.018 & 0.0  \\ \hline 
		Both model violation     & 0.1   & 10,000 & 1.051  & 0.012    &  & 0.845 & 1.635 & 3.337 & 61.8 \\
		&       & 50,000 & 1.121  & 0.055    &  & 0.454 & 1.280 & 2.195 & 79.5 \\
		&       & 100,000 & 1.241  & 0.065    &  & 0.274 & 0.927 & 1.134 & 87.3 \\
		& 0.5   & 10,000 & 1.632  & 0.160    &  & 0.436 & 0.215 & 0.217 & 90.6 \\
		&       & 50,000 & 4.192  & 0.955    &  & 0.472 & 0.065 & 0.065 & 74.4 \\
		&       & 100,000 & 7.451  & 1.000    &  & 0.481 & 0.045 & 0.043 & 50.5 \\
		& 1     & 10,000 & 3.387  & 0.838    &  & 0.507 & 0.081 & 0.087 & 70.8 \\
		&       & 50,000 & 13.104 & 1.000    &  & 0.523 & 0.033 & 0.034 & 4.9  \\
		&       & 100,000 & 25.259 & 1.000    &  & 0.529 & 0.023 & 0.023 & 0.2  \\\hline 
	\end{tabular}}
\end{table}

\vspace{-5mm}
\section{Application to the UK Biobank data}
\label{sec: real}

%{\red For continuous $ A, Y $, add a quadratic treatment effect term. $ \beta_0 A+ \beta_{10} A^2 $, For non linear treatment effect.(Decide to only include BMI-SBP example, try investigating the quadratic term.)}
UK Biobank is a large-scale ongoing prospective cohort study with around 500,000 participants aged 40-69 at recruitment from 2006 to 2010. Participants provided biological samples, completed questionnaires, underwent assessments, and had nurse led interviews. Follow up is chiefly through cohort-wide linkages to National Health Service data, including electronic, coded death certificate, hospital, and primary care data \citep{sudlow2015uk}. Prevalent disease was coded using ICD-9 and ICD-10, and cause of death was coded using ICD-10. Genotyping was performed using two arrays, the Affymetrix UK BiLEVE (UK Biobank Lung Exome Variant Evaluation) Axiom array (about 50,000 participants) and Affymetrix UK Biobank Axiom array (about 450,000 participants). The SNPs included for analysis were directly genotyped or imputed using the Haplotype Reference Consortium panel. To reduce confounding bias due to population stratification, we restrict our analysis to people of genetically verified white British descent, as in previous studies \citep{Tyrrell2016}. For quality control, we exclude participants with (1) excess relatedness (more than 10 putative third-degree relatives), or (2) mismatched information on sex between genotyping and self-report, or (3) sex-chromosomes not XX or XY, or (4) poor-quality genotyping based on heterozygosity and missing rates $> 2\%$.

We are interested in estimating the causal effect of body mass index (BMI) on systolic blood pressure (SBP). We also exclude participants who are taking blood pressure medication based on self report. In total, the sample size for the final analysis is 292,757. We use $m=93$ SNPs that are associated with BMI at genome-wide significance level  \citep{Locke:2015aa}. 

We apply our method to the UK Biobank data and the results are summarized in Table \ref{tb:data}. {The implementation details are the same as in Section \ref{sec: simu} (including that $B=(-10,10)$ and without adjusting for any covariates).} {For comparison, we also include the unadjusted results from the ordinary least squares (OLS) analysis of SBP on BMI, which produces the largest point estimate among all methods, likely due to confounding bias.} From Table \ref{tb:data}, the F-statistic for the standard IV is $F_{\rm IV} = 59.7$.  The 2SLS, LIML, IVW, and MR-raps all have point estimates larger than that from GENIUS-MAWII and MR-Egger, which is likely due to failing to account for horizontal pleiotropic effects with nonzero mean. Moreover, the Sargan test \citep{sargan1958estimation} rejects the null hypothesis that all SNPs are valid IVs with a p-value of $< 2.22\times 10^{-16}$. 
Compared to MR-Egger, GENIUS-MAWII produces a similar point estimate but with a much higher precision. In particular, using GENIUS-MAWII, we find a significant positive effect of BMI on SBP ($\hat{\beta}= 0.140$, 95\% CI: [0.005, 0.275]).
This means a  one  $ kg/m^2  $ unit increase in BMI increases SBP by 0.140 $ mmHg $. Our analysis results can also be compared to other MR studies of BMI on SBP based on the UK Biobank data. For example,  \cite{Lyall:2017aa} finds a significant positive effect of BMI on SBP ($\hat{\beta}= 0.342$, 95\% CI: [0.161, 0.522]) using 2SLS and finds no significant effect using MR-Egger, which is consistent with the results of our analysis.
   % \cite{Dale:2017aa} also found positive effect of BMI on SBP ($\hat{\beta}= 0.015$, 95\% CI [0.000, 0.030])) using the inverse-variance weighted estimator. These results all highly agree with our analysis results. 

Finally, we run the tools developed in Section \ref{sec: measure} to assess the strength of identification and plausibility of the assumptions. 
The heteroscedasticity robust F-statistic for GENIUS is  {$F_{\rm GENIUS}= 11.1$}, large enough for application of GENIUS-MAWII. The overidentification test statistic is $2n  \hat{Q} (\hat \beta, \hat{\bm \eta})  = 108.3$, smaller than the critical value $\chi^2_{0.95}(92)=115.4 $. In addition, the diagnostic plot in Figure \ref{fig:bmi-sbp} shows that the blue line, which is the estimated conditional mean using the smoothing splines, {is close to a straight horizontal line through zero.} Therefore, both diagnosis
 approaches  find no evidence of assumption violation in this application.

%{\red It is possible that Assumption \ref{assump: both continuous}(d) may not hold due to unmeasured obesogenic environment interact with the genetics  factors that can affect BMI \citep{Tyrrell:2017aa}, making the condition $ \cov (\beta_u(U), \alpha_{zj} (U)\mid\bmZ) =0$ in Assumption \ref{assump: both continuous}(d) susceptible to possible violations. But interactions may be of smaller order. and if effect modifiers are not confounder, then not a issue. Also form the bases of heteroscedasticity. References}

\begin{figure}
    \centering
    \includegraphics[scale=0.6]{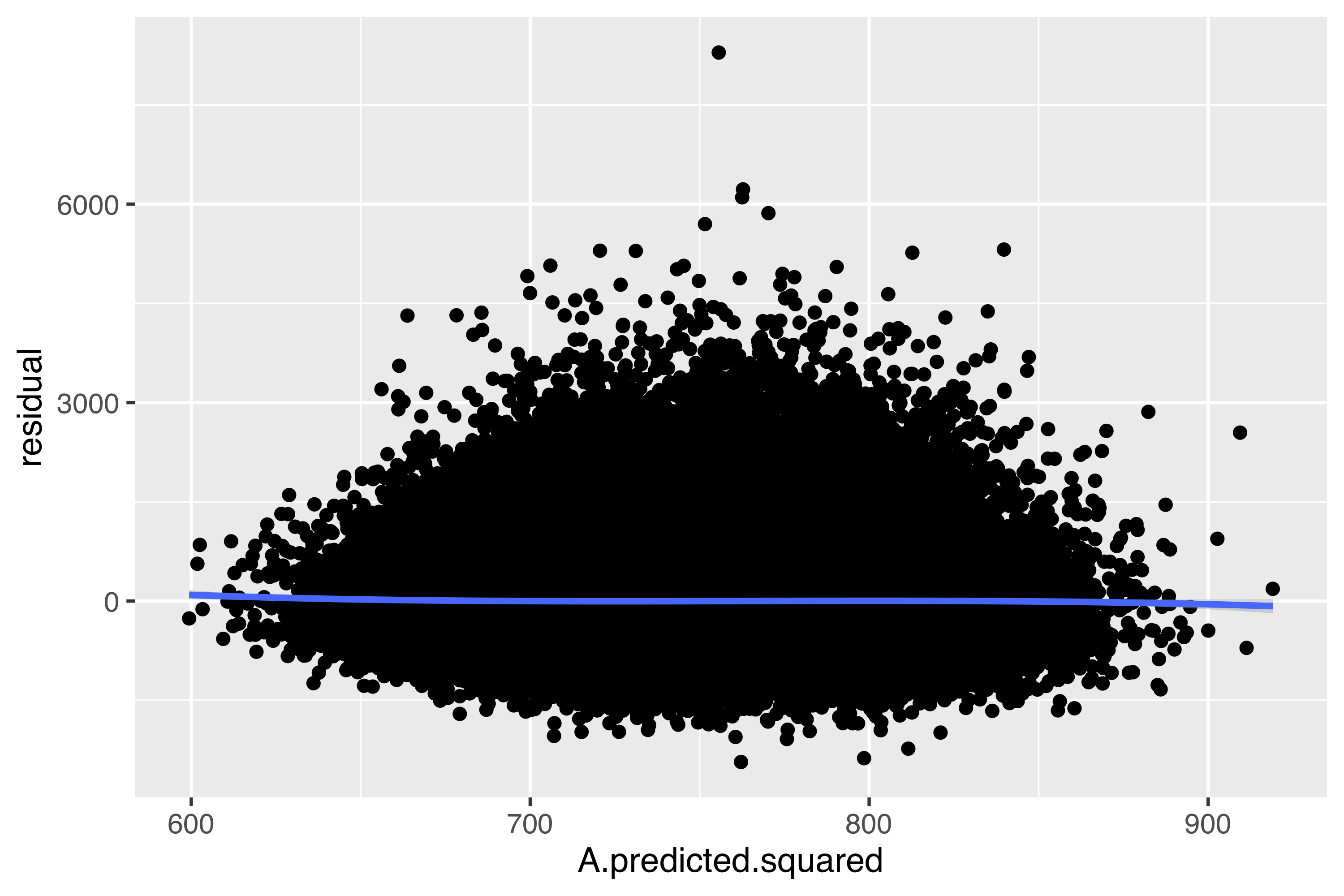}
    \caption{Residual plot for GENIUS-MAWII. The blue line is the estimated conditional mean using the smoothing splines, with gray point-wise confidence band (almost invisible in this plot). {We see that the blue line is close to a straight horizontal line through zero, indicating that the errors are centered at zero, so there is no evidence of assumption violation.}}
    \label{fig:bmi-sbp}
\end{figure}

\begin{table}[t]
\caption{Point estimates of exposure effect (Est) and their SEs from different MR methods and OLS in the BMI-SBP application using individual participant  data from UK Biobank (number of SNPs: 93, sample size: 292,757, $F_{\rm GENIUS}=11.1$, and $F_{\rm IV}=59.7$).} \label{tb:data} \vspace{2mm}
\centering
\resizebox{\textwidth}{!}{
\begin{tabular}{lccccccccccc} \hline
Method          & GENIUS-MAWII & GENIUS-CUE & GENIUS-GMM & 2SLS  & LIML  & IVW    & MR-raps & MR-Egger & OLS \\ \hline 
Est & 0.140     & 0.140& 0.175 & 0.321 & 0.277 & 0.338  & 0.482   & 0.175 &0.811   \\
SE  & 0.069      & 0.062& 0.062& 0.056 & 0.059 & 0.057  & 0.061   & 0.247 & 0.008  \\\hline
\end{tabular}}
\end{table}

\vspace{-5mm}

\section{Discussion}
\label{sec: disc}
In this paper, we have developed GENIUS-MAWII, a new method for Mendelian randomization (MR) which simultaneously addresses the two salient phenomena that adversely affect MR analyses: many weak IVs and widespread horizontal pleiotropy.  

We show via theory and simulations that GENIUS-MAWII can incorporate a large number of SNPs, allows for every SNP to be pleiotropic,  and is able to account for directional or correlated  horizontal pleiotropy.  These features make GENIUS-MAWII stand out with clear advantages over existing methods in the presence of directional or correlated horizontal pleiotropy.   In an application to the UK biobank data to study the effect of BMI on SBP,  GENIUS-MAWII produces a plausible effect size estimate ($ \hat\beta= 0.140 $, 95\% CI: [0.005, 0.275]), whereas 2SLS, LIML, IVW, and MR-raps produce larger effect size estimates ($ \hat\beta $ ranges from 0.277 to 0.482) that are likely due to failing to account for horizontal pleiotropic effects with nonzero mean. In addition, MR-Egger produces an effect size estimate ($ \hat\beta= 0.175 $) that is of similar magnitude to that from GENIUS-MAWII, but is much less precise and fails to yield statistical significance.

 {
GENIUS-MAWII leverages  heteroscedasticity to identify the causal effect, which can occur due to gene-environment interactions and is plausible for many situations \citep{pare2010use,wang2019genotype,sulc2020quantification}. However, if  the degree of heteroscedasticity is not very strong or the sample size is not large, estimation and inference may become challenging due to weak identification, and certain deviations away from the assumptions can generate large biases. Therefore, we recommend to perform the overidentification test and the graphical diagnosis to check for any evidence of assumption violation, and check to make sure the GENIUS F-statistic is larger than 2. These tests are very useful in determining whether GENIUS-MAWII can be applied reliably.
}

  Finally, we have developed novel semiparametric theory for handling unknown nuisance parameters under many weak moment conditions,  which to our knowledge has not been studied in the literature. Our theory addresses three main technical challenges: (i) the number of weak moment conditions grows to infinity with the sample size; (ii) the number of nuisance parameters grows to infinity with the sample size; and (iii) there exist infinite-dimensional nuisance parameters.  Our theoretical developments enable fast and stable estimation and inference about the causal effect of interest.

% 
% our work makes  important advances in the theory of generalized method of moments (GMM) involving unknown nuisance parameters under many weak moment conditions, We prove that when using the influence functions as the moment equations, the impact of estimating the nuisance parameters is negligible for GENIUS-MAWII because its moment conditions are linear in the parameter of interest. 

%We demonstrate in simulations that GENIUS-MAWII can provide reliable point estimators and standard errors in settings where the other existing methods are severely biased.

%{\red Therefore, based on the discussions so far and to make models (\ref{eq: out model X})-(\ref{eq: exp model X}) and $ \bmZ\perp U \mid\bmX $ plausible, we recommend (i)  including the  SNPs that are related to the variance of  $ A  $ and the SNPs that are related to both $ A $ and $ Y $ as $ \bmZ $;  (ii)  including the confounders that are correlated with $ \bmZ $,  and the confounders that may modify the effects of $ \bmZ $ on $ A $ or $ Y $ as $ \bmX $; and (iii)  transferring the part of $ \bmZ $ that may have interactive effects with unmeasured confounders to be part of $ \bmX $.  }

\vspace{-5mm}
\section*{Supplementary Materials}
The supplementary materials contain all technical proofs, identification results for binary exposure and/or binary outcome with the exponential link, and additional analytical and simulation results. 

\vspace{-5mm}
\section*{Acknowledgments}
The authors would like to thank Professor Dylan S. Small for constructive discussion and helpful feedback. We would also like to thank the anonymous referees, an Associate Editor and the Editor for their constructive comments that led to a much improved paper.

\spacingset{1.0}
\bibliographystyle{apalike}
\bibliography{reference_IV}

\clearpage

 \begin{center} 
	\spacingset{1.5} 	{\LARGE\bf  Supplement to ``GENIUS-MAWII: For Robust Mendelian Randomization with Many Weak Invalid Instruments''} \\ \bigskip \bigskip
	\spacingset{1} 
	{\large  Ting Ye$ ^1 $, Zhonghua Liu$ ^2 $, Baoluo Sun$ ^3 $, and  Eric Tchetgen Tchetgen$ ^4 $} \\ \bigskip
	{  $ ^1 $Department of Biostatistics, University of Washington, Seattle, Washington, U.S.A. \\
	$ ^2 $Department of Biostatistics, Columbia University, New York City, New York, U.S.A. \\
	$ ^3 $Department of Statistics and Data Science, National University of Singapore, Singapore\\
%	$ ^4 $CUNY Graduate School of Public Health and Health Policy\\
%	$ ^5 $School of Public Health, Li Ka Shing Faculty of Medicine,  University of Hong Kong \\
	 $ ^4 $Department of Statistics and Data Science, The Wharton School, University of Pennsylvania, Philadelphia, Pennsylvania, U.S.A. \\
}
\end{center}

%\newpage
\spacingset{1.5} % DON'T change the spacing!

\pagenumbering{arabic}
\setcounter{equation}{0}
\setcounter{table}{0}
\setcounter{section}{0}
\setcounter{lemma}{0}
\setcounter{assum}{2}
\setcounter{theorem}{2}
\renewcommand{\theequation}{S\arabic{equation}}
\renewcommand{\thetable}{S\arabic{table}}
\renewcommand{\thelemma}{S\arabic{lemma}}

\section{Additional analytical results}
\subsection{A diagram of how to choose $X$ for GENIUS-MAWII}
\label{subsec: discussion}

\begin{figure}[h]
	\centering
	\includegraphics[scale=.5]{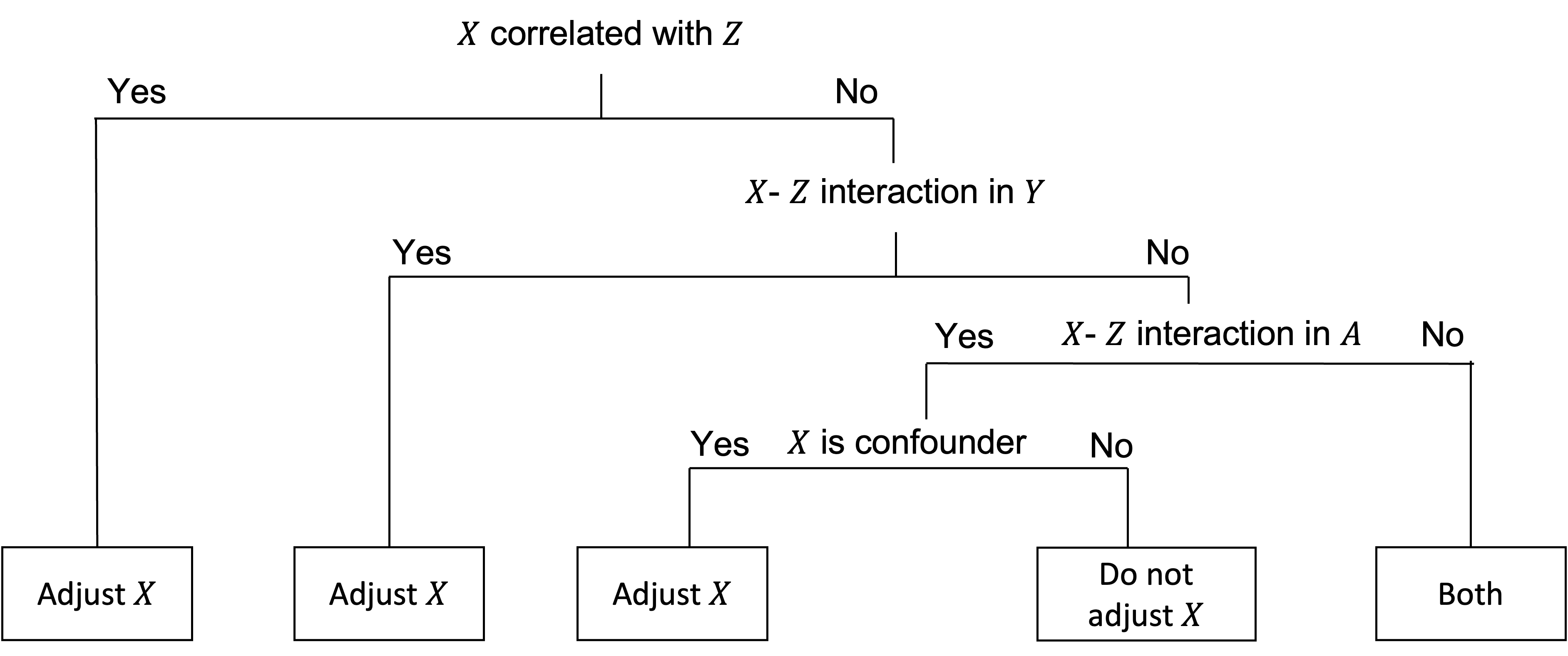}
	\caption{A diagram of how to choose $\bm X$ in order to satisfy our identifying assumptions.} \label{fig: adjust}
\end{figure}

\subsection{Identification under assumption violation}
 Consider the following exposure and outcome models:
 \begin{align*}
 	Y&= \beta_0 A + \alpha (\bm Z) + \xi_y (\bm Z) U + \epsilon_Y, \\ 
 	A&= \gamma (\bm Z) + \xi_a(\bm Z) U + \sigma(\bm Z) \epsilon_A.
 \end{align*}
 For any SNP $Z_j$, the identified parameter is 
 \begin{align}
 	\begin{split}
	\beta_j &= \frac{E[ (Z_j - E(Z_j) ) (A - E(A\mid \bm Z)) Y ]}{E[ (Z_j - E(Z_j) ) (A - E(A\mid \bm Z)) A ]} \\
& = \beta_0 +  \frac{E[ (Z_j - E(Z_j) ) (A - E(A\mid \bm Z)) \{ \alpha (\bm Z) + \xi_y (\bm Z) U \} ]}{E[ (Z_j - E(Z_j) ) (A - E(A\mid \bm Z)) A ]}   \\
& =  \beta_0 +   \frac{E[ (Z_j - E(Z_j) )  \xi_a(\bm Z) (U - E(U))  \{ \alpha (\bm Z) + \xi_y (\bm Z) U \} ]}{E[ (Z_j - E(Z_j) ) \{\xi_a(\bm Z) (U - E(U)) + \sigma(\bm Z) \epsilon_A \}^2 ]}   \\
& = \beta_0 +   \frac{E[ (Z_j - E(Z_j) )  \xi_a(\bm Z)  \xi_y (\bm Z)   ]  }{E[ (Z_j - E(Z_j) ) \xi_a^2(\bm Z) ]   +  E[ (Z_j - E(Z_j) ) \sigma^2(\bm Z)] {\rm Var} ( \epsilon_A)/{\rm Var} ( U)  }.
 	\end{split} \label{eq: assumption violation}
 \end{align}
 When there exist $\bm Z$-$U$ interactions in the specification of $Y$ and $A$, i.e., when $\xi_a(\bm Z)$ and/or $\xi_a(\bm Z)$ are not constants, $\beta_j$ would typically 
 depend on $j$ in MR applications, so there exists no $\beta$ that satisfies all the moment conditions, under which the overidentification test has power to detect assumption violations. 
 
 One situation when the overidentification test has no power to detect assumption violations is when the $\beta_j$ is equally biased for all $j$. This can happen under a peculiar situation when $\xi_a(\bm Z) = \xi_{a0} +  \xi_{az}   \sum_{j=1}^m Z_j $,  $\xi_y(\bm Z) = \xi_{y0} +  \xi_{yz}   \sum_{j=1}^m Z_j $, and  $\sigma(\bm Z) =\sigma_0 + \sigma_z  \sum_{j=1}^m Z_j $, and  $Z_j$'s are mutually independent and identically distributed, i.e., when the role of each $Z_j$ is homogeneous.

 \subsection{Interpretation of $\mu_n$}

 The key quantity that determines the asymptotic variance and convergence rate of $ \hat{\bbeta}- \bbeta_0 $ in \eqref{eq: a.var} is $  n G^T \Omega^{-1} G $. Moreover, from Assumption \ref{assump: many weak moments}, the many weak moment asymptotics requires $ \mu_n\rightarrow \infty $, which corresponds to $ n G^T \Omega^{-1} G $ going to infinity. 
 
 We give a simple example to provide an interpretation of $ n G^T \Omega^{-1} G $. Suppose that 
 \begin{eqnarray}
 	Y&=&\beta_0A+\sum_{j=1}^{m} \alpha_j Z_j +\xi_y(U)+\epsilon_Y, \nonumber\\
 	A&=&\sum_{j=1}^{m} \gamma_j Z_j +\xi_a(U)+\sigma(\bmZ) \epsilon_A, \nonumber
 \end{eqnarray}
 $\bmZ\perp ( U, \epsilon_Y, \epsilon_A)$,  $ \epsilon_A\perp (\epsilon_Y, U) $,   $E(\epsilon_Y)= E(\epsilon_A)= 0$,  and  $Z_1, \dots, Z_m$ are mutually independent. We again emphasize that the mutual independence of $ Z_1,\dots, Z_m $ is not needed  for our method but is assumed here for ease of illustration. The influence function in \eqref{eq: IF} suppressing $\bmX$ evaluated at $\beta=\beta_0$ is 
 \begin{align*}
 	g^{IF} (\O; \beta_0, \bfeta_0) =  \{ \bmZ- E(\bmZ)\} \{\Delta_0- E(\Delta_0)\}, \quad \text{with } \Delta_0= R_AR_{Y}-\beta_0 R_A^2. 
 \end{align*}
 By definition, expressions for $ G, \Omega$ and $ nG^T \Omega^{-1} G $ can be written as 
 \begin{align}
 	&G=-E\left\{ (\bmZ- E(\bmZ))\var (A\mid\bmZ)\right\}= -\big(\cov\{Z_1, \var(A\mid\bmZ)\}, \dots, \cov\{Z_m,\var(A\mid\bmZ)\}\big)^T,\nonumber\\
 	&\Omega= E\left\{ (\bmZ-E(\bmZ)) (\bmZ-E(\bmZ))^T \Delta_0^2\right\}\approx \var (\bmZ) E( \Delta_0^2),    \label{eq: approximate Omega}\\
 	&nG^{T} \Omega^{-1} G \approx \frac{n}{E( \Delta_0^2) } \sum_{j=1}^{m} \frac{\{\cov (Z_j, \var(A\mid\bmZ))\}^2}{\var(Z_j)}, \nonumber
 \end{align}
 where the approximation for $\Omega$ in \eqref{eq: approximate Omega} holds when $\sigma(\bmZ) - E\{\sigma(\bmZ)\} $ is small in magnitude compared to $E\{\sigma(\bmZ)\}$ (shown below). 
 Hence, for identification to be strong in the sense that $ nG^{T} \Omega^{-1} G $ is of order $ n $,  we need $ \sum_{j=1}^{m} \{\cov (Z_j, \var(A\mid\bmZ))\}^2/\var(Z_j)$ to be of a constant order. A closer look at the formula indicates that $\sum_{j=1}^{m}\{ \cov (Z_j, \var(A\mid\bmZ))\}^2/\var(Z_j)= \sum_{j=1}^{m} a_j^2 \var(Z_j)$, which  measures the total variance of $ \var(A\mid\bmZ) $ that can be explained by the set of $ m $ SNPs, where $ a_j = \cov(Z_j, \var(A\mid\bmZ))/\var(Z_j)$ is the population coefficient from the least squares regression of $ \var(A\mid\bmZ) $ on $ Z_j $. Evidently, the use of many SNPs that are predictive of $ \var(A\mid\bmZ) $ can strengthen identification and improve estimation accuracy.

Now we show the approximation  for $\Omega$ in \eqref{eq: approximate Omega}  when  $\sigma(\bmZ) - E\{\sigma(\bmZ)\} $ is small in magnitude compared to $E\{\sigma(\bmZ)\}$.  Let $\bar\xi_a(U)= \xi_a(U) - E\{ \xi_a(U)\}$ and $\bar\xi_y(U)= \xi_y(U) - E\{ \xi_y(U)\}$, and we have 
\begin{align*}
	&R_A= \bar\xi_a(U)+ \sigma(\bmZ) \epsilon_A,\\
	&R_{Y} - \beta_0 R_A= \bar\xi_y(U)+  \epsilon_Y.
\end{align*}
It is not difficult to show that 
\begin{align*}
	\Omega&= E\{(\bmZ- E(\bmZ))(\bmZ- E(\bmZ))^T\Delta_0^2\}\\
	&=  E\{(\bmZ- E(\bmZ))(\bmZ- E(\bmZ))^TR_A^2 (R_Y- \beta_0 R_A)^2\}   \\
	&=  E\{(\bmZ- E(\bmZ))(\bmZ- E(\bmZ))^T(\bar\xi_a(U)+ \sigma(\bmZ) \epsilon_A)^2 (\bar\xi_y(U)+  \epsilon_Y )^2\}   \\
	&= E\{(\bmZ- E(\bmZ))(\bmZ- E(\bmZ))^T(\bar\xi_a(U)^2+ 2\sigma(\bmZ) \epsilon_A + \sigma^2(\bmZ) \epsilon_A^2) (\bar\xi_y(U)+  \epsilon_Y )^2\}   \\
	&= E\{(\bmZ- E(\bmZ))(\bmZ- E(\bmZ))^T \bar\xi_a(U)^2 (\bar\xi_y(U)+  \epsilon_Y )^2\}  \\
	&\qquad +  E\{(\bmZ- E(\bmZ))(\bmZ- E(\bmZ))^T \sigma^2(\bmZ) \epsilon_A^2 (\bar\xi_y(U)+  \epsilon_Y )^2\}  \\
	&= \var(\bmZ) E\{\bar\xi_a(U)^2 (\bar\xi_y(U)+  \epsilon_Y )^2\}   + \var(\bmZ) E\{\sigma^2(\bmZ)\}  E\{\epsilon_A^2 (\bar\xi_y(U)+  \epsilon_Y )^2\} +o(1) \\
	&= \var(\bmZ) E\{\bar\xi_a(U)^2 (\bar\xi_y(U)+  \epsilon_Y )^2\}   + \var(\bmZ) E\{\sigma^2(\bmZ) \epsilon_A^2 (\bar\xi_y(U)+  \epsilon_Y )^2\} +o(1) \\
	&=  \var(\bmZ) E(\Delta_0^2) + o(1),
\end{align*}
where the fifth equality is from $ \epsilon_A \perp (\bmZ, U, \epsilon_Y) $, the sixth equality uses the condition that $ \sigma(\bmZ)- E\{ \sigma(\bmZ) \}  $ is small compared to $ E\{ \sigma(\bmZ) \}  $.

\subsection{Details of kernel estimators $ \hat \omega(\cdot), \hat \theta(\cdot)$}

The kernel estimator of $  \omega_0(\bm x; \bm\mu_0, {\bm\lambda_0}) = E(R_A R_Y\mid \bmX = \bmx)$, denoted as $ \hat \omega(\bm x; \hat{\bm\mu}, \hat{\bm\lambda})$, is defined as 
\begin{align*}
    \hat \omega(\bm x; \hat{\bm\mu}, \hat{\bm\lambda})= \frac{\sum_{i=1}^n R_{Ai} (\hat{\bm \mu}) R_{Yi}(\hat{\bm \lambda})  K_\sigma (\bm x - \bmX_i) }{\sum_{i=1}^n K_\sigma (\bm x - \bmX_i) }
\end{align*}
where $R_{Ai} (\hat{\bm \mu}) = A_i - (\bmX_i^T, \bmZ_i^T)\hat{\bm \mu}, 
R_{Yi}(\hat{\bm \lambda}) = Y_i - (\bmX_i^T, \bmZ_i^T)\hat{\bm \lambda},$   
$K_\sigma (\bmx - \bmX_i ) = \sigma^{-d_x} K (\frac{\bmx - \bmX_i }{\sigma})$, $d_x$ is the dimension of $\bmX$, $K(u)$ is a function such that $\int K(u) du = 1$, and $\sigma$ is a bandwidth term. The kernel estimator of $  \theta_0(\bm x; \bm\mu_0) = E(R_A^2\mid \bmX = \bmx)$, denoted as $ \hat \theta(\bm x; \hat{\bm \mu})$, is defined as 
\begin{align*}
    \hat \theta(\bm x; \hat{\bm \mu}) = \frac{\sum_{i=1}^n R_{Ai} (\hat{\bm \mu})^2  K_\sigma (\bm x - \bmX_i) }{\sum_{i=1}^n K_\sigma (\bm x - \bmX_i) } .
\end{align*}
In our simulations, since both the exposure and outcome are continuous variables, we use the Gaussian kernel, and the bandwidth is selected using least-squares cross-validation using the \textsf{npregbw} function in \textsf{np} package in \textsf{R}.

\section{Additional simulation results}

\subsection{When there are  observed covariates}
We evaluate the finite sample performance of the proposed GENIUS-MAWII estimator when there exists an observed covariate and we estimate $ E(R_AR_Y\mid \bmX=\bmx) $ and $ E(R_A^2\mid \bmX=\bmx) $ using the  nonparametric kernel regression or least squares estimation (LSE) with quadratic terms.  The setting is identical to that in Section \ref{subsec: simu1}  except that we generate the outcome from $ Y= \beta_0 A+  \sum_{j=1}^m \alpha_j Z_j + \eta_Y U (1+X) +\epsilon_Y$, the exposure from $ A=   \sum_{j=1}^m  \gamma_j Z_j +  \eta_A U (1+X) + \left(1+ \sum_{j=1}^m \delta_j  Z_j \right) \epsilon_A$, where $ X\sim N(0,0.6(1-h^2))$. Since both the exposure and outcome are continuous variables, we use the Gaussian kernel, and the bandwidth is selected using least-squares cross-validation using the \textsf{npregbw} function in \textsf{np} package in \textsf{R}.
Table \ref{supptb: sim1} presents the results of GENIUS-MAWII (kernel), GENIUS-MAWII (LSE), 2SLS, and LIML under Setting 4; the other two GENIUS estimators are not included because we have shown in Table \ref{tb: sim1} that GENIUS-MAWII has better performance, the  other four MR methods are not included because they cannot adjust for observed covariates. The conclusions from Table \ref{supptb: sim1} are similar to those in Table \ref{tb: sim1}. Specifically, {when $n=100,000$,
	 the two GENIUS-MAWII estimators are similar, showing negligible bias and nominal coverage, and the SEs calculated using \eqref{eq: var est} are close to the simulation SDs; when $n=10,000$, $F_{\rm GENIUS}$ is too small and the two GENIUS-MAWII estimators have some attenuation bias.} In contrast, the 2SLS and LIML are biased because of failing to address the horizontal pleiotropy. 

\begin{table}[ht]
	\caption{Simulation results based on 1,000 Monte Carlo samples under Setting 4 with $\beta_0=0.4$ and $m=100$ when there are observed covariates; $F_{\rm GENIUS}$ represents the average F-statistics for GENIUS-MAWII discussed in Section \ref{subsec: measure of weak identification}, SD is the Monte Carlo standard deviation,  SE is the average standard error, CP is the coverage probability  of 95\% asymptotic confidence interval. The empirical power for $n=10,000$ and $n=100,000$ is respectively 0.043 and 0.040. \label{supptb: sim1}} \vspace{2mm}
	\centering
	\resizebox{0.98\textwidth}{!}{\begin{tabular}{llccccclcccc} \hline 
			&              & \multicolumn{4}{c}{$n=10,000, F_{\rm GENIUS} =1.51 $}         &  &             & \multicolumn{4}{c}{$n=100,000, F_{\rm GENIUS} = 5.59$} \\ \cline{3-6} \cline{9-12}
			Setting     & Method       & Mean   & SD     & SE     & CP &&& Mean   & SD     & SE     & CP   \\ \hline
			Setting 4 & GENIUS-MAWII (Kernel)& 0.364&	0.167&	0.174 &	96.2 &&& 0.395&	0.044&	0.044&	95.7   \\
			 & GENIUS-MAWII (LSE)& 0.363&	0.168&	0.175 &	96.5 &&&0.395	&0.044	&0.044	&96.1  \\
			& 2SLS         &0.870	&0.031	&0.018&	0.0 &&   & 0.852	&0.026	&0.006	&0.0  \\
			& LIML         & 0.835	&0.039	&0.019	&0.0  && &  0.835	&0.033	&0.006	&0.0 \\\hline
	\end{tabular}}
\end{table}

\subsection{When SNPs are in Linkage Disequilibrium (LD)}
We evaluate the finite sample performance of the proposed GENIUS-MAWII estimator when SNPs are in LD. For each simulation, our SNPs data are sampled with replacement from 100 SNPs in a region of chromosome 1 on 219,762 individuals of European ancestry from the UK Biobank dataset that passed the following standard quality control process: removing variants and individuals with missing data, removing variants with low p-value from the Hardy-Weinberg Equilibrium Fisher's exact test and deleting variants with minor allele frequency less than 0.05. Other aspects of the setting are identical to Setting 4 in Section \ref{sec: simu}. The results are in Table \ref{supptb: sim2}, based on which the conclusions are similar to that in Table \ref{tb: sim1}.

\begin{table}[ht]
	\caption{Simulation results based on 1,000 Monte Carlo repetitions with $\beta_0=0.4$ and $m=100$ when SNPs are in LD;  $F_{\rm GENIUS}$ represents the average F-statistics for GENIUS-MAWII discussed in Section \ref{subsec: measure of weak identification}, SD is the Monte Carlo standard deviation,  SE is the average standard error, CP is the coverage probability  of 95\% asymptotic confidence interval. The empirical power for $n=10,000$ and $n=100,000$ is respectively 0.041 and 0.058. \label{supptb: sim2} } \vspace{2mm}
	\centering
	\resizebox{!}{!}{\begin{tabular}{llccccclcccc}  \hline 
			&              & \multicolumn{4}{c}{$n=10,000, F_{\rm GENIUS}= 1.69$}         &  &             & \multicolumn{4}{c}{$n=100,000, F_{\rm GENIUS}= 8.12$} \\ \cline{3-6} \cline{9-12} 
			     & Method       & Mean   & SD     & SE     & CP    &  &       & Mean   & SD     & SE     & CP    \\ \hline
			 & GENIUS-MAWII & 0.352	&0.146	&0.147	&97.2  &&& 0.395 &	0.036&	0.034&	93.8 \\
			 & 2SLS      &  0.872	&0.039	&0.021	&0.0  &&& 0.852	&0.033	&0.007	&0.0   \\
			& LIML    & 0.838	&0.047&	0.022	&0.0 &&& 0.839	&0.039&	0.007&	0.0 	\\\hline
			\end{tabular}}
\end{table}

\section{Proof of results in Sections \ref{sec: prelim}-\ref{sec: identification}}
\subsection{Relaxed assumptions for identification using \eqref{eq: continuous y}}
We present a weaker assumption under which $ \beta_0 $ is the unique solution to \eqref{eq: continuous y}.

\noindent{\bf Assumption. }
(a) $E(Y\mid A,  U, \bmZ)=\beta_0A+ \alpha ( U, \bmZ) +\xi_y(U)$.\\
(b) $E(A\mid U, \bmZ)=\gamma(U, \bmZ)+\xi_a(U)$. \\
(c) $\cov(\xi_{y} (U), \xi_{a}(U)\mid\bmZ)=c$ with probability 1, where $ c $ is a generic constant.\\
(d) The orthogonality conditions $\cov(\alpha(U, \bmZ), \gamma(U, \bmZ)\mid\bmZ)=0$, $\cov(\alpha (U, \bmZ), \xi_a(U)\mid\bmZ)=0$, and
$\cov(\xi_y(U), \gamma(U, \bmZ)\mid\bmZ)=0$ hold with probability 1. \\
Here,  $ \alpha, \gamma, \xi_y,  \xi_a $ are unspecified functions satisfying  $ \alpha(U, \zero)= \gamma(U, \zero)=0 $.

\subsection{Proof of \eqref{eq: continuous y, X}}
Under \eqref{eq: out model X}-\eqref{eq: exp model X} and $ \bmZ\perp U\mid\bmX $, then
\begin{align}
	& E(R_A( Y- \beta_0 A)\mid\bmZ, \bmX)\nonumber\\
	&=E\{R_AE( Y- \beta_0 A\mid A, U, \bmZ, \bmX)\mid\bmZ, \bmX\}\nonumber\\
	&=E\{R_A(\alpha(\bmZ, \bmX)+ \xi_y(U, \bmX))\mid\bmZ, \bmX\}\nonumber\\
	&=E\{E(R_A\mid U, \bmZ, \bmX)(\alpha(\bmZ, \bmX)+ \xi_y(U, \bmX))\mid\bmZ, \bmX\} \nonumber\\
	&=E\left[\left\{\gamma (\bmZ, \bmX)+ \xi_a(U, \bmX)- E(A\mid\bmZ, \bmX)\right\}\left\{\alpha(\bmZ, \bmX)+ \xi_y(U, \bmX)\right\}\mid\bmZ, \bmX\right] \nonumber\\
	&=\cov \left\{\gamma (\bmZ, \bmX)+ \xi_a(U, \bmX),\alpha(\bmZ, \bmX)+ \xi_y(U, \bmX)\mid\bmZ, \bmX\right\}\nonumber \\
	&=\cov \left\{ \xi_a(U, \bmX), \xi_y(U, \bmX)\mid\bmZ, \bmX\right\} \nonumber\\
	&=\cov \left\{ \xi_a(U, \bmX), \xi_y(U, \bmX)\mid \bmX\right\} \label{seq: conditional mean 1},
\end{align}
where the last equality is from $ \bmZ\perp U\mid\bmX $. Therefore, 
\begin{align*}
&	E[(\bmZ- E(\bmZ\mid\bmX)) R_A( Y- \beta_0 A)] = E[(\bmZ- E(\bmZ\mid\bmX)) E(R_A( Y- \beta_0 A) \mid\bmZ, \bmX)]  \\
	&= E[(\bmZ- E(\bmZ\mid\bmX))  \cov \left\{ \xi_a(U, \bmX), \xi_y(U, \bmX)\mid \bmX\right\}  ]= 0.
\end{align*}

To show $ \beta_0 $ is the unique solution to  \eqref{eq: continuous y, X}, note that 
\begin{align*}
	&E[(\bmZ- E(\bmZ\mid\bmX)) R_A( Y- \beta A)]\\ 
	&=E[(\bmZ- E(\bmZ\mid\bmX)) R_A( Y- \beta A) ] - E[(\bmZ- E(\bmZ\mid\bmX)) R_A ( Y- \beta_0 A) ] \\
	&= (\beta_0-\beta)  E\left\{ (\bmZ- E(\bmZ\mid\bmX)) R_A A    \right\} \nonumber,
\end{align*}
which is zero if and only if $ \beta_0=\beta $, provided that  $ E\left\{ (\bmZ- E(\bmZ\mid\bmX)) R_A A    \right\}\neq 0 $.

\subsection{Proof of Theorem \ref{theoX}} 
Let $P_{\theta}$ denote a parametric submodel with $P_0=P$, where $P$ is the true distribution of the observed data $\bm O=(Y,A,\bmZ,\bmX)$. The score corresponding to $P_{\theta}$ with density $dP_{\theta}$ is $S_\theta(O)=\partial \ln (dP_{\theta})/\partial \theta$, which can be decomposed as $S_{\theta}(Y\mid A,\bmZ,\bmX)+S_{\theta}(A\mid\bmZ,\bmX)+S_{\theta}(\bmZ\mid\bmX)+S_{\theta}(\bmX)$. Let $E_{\theta}[\cdot]$ denote the expectation at the distribution $P_{\theta}$. Let $R_Y(\theta)=Y-E_\theta [Y\mid \bmZ, \bmX]$, $R_A(\theta)=A-E_{\theta}[A\mid\bmZ,\bmX]$ and $R^h_Z(\theta)=h(\bmZ,\bmX)-E_{\theta}[h(\bmZ,\bmX)\mid\bmX]$. The conditional independence restriction $E_{\theta}[(Y- \beta(\theta)A) R_A(\theta)\mid\bmZ,\bmX]=E_{\theta}[(Y- \beta(\theta)A)  R_A(\theta)\mid\bmX]$ is equivalent to the class of unconditional restrictions 
\begin{flalign}
	E_{\theta}[(Y- \beta(\theta)A) R_{A}(\theta)R^h_Z(\theta)]=0, \label{eq:restriction}
\end{flalign}
for any scalar-valued function $ h(\bmZ,\bmX) $. 
Following \cite{Newey1994},  we obtain the influence functions for estimation of $\beta_0$ by deriving the pathwise derivatives $\partial \beta({\theta})/\partial \theta \rvert_{\theta=0}$ based on (\ref{eq:restriction}). For each $h$, differentiating under the integral yields 
\begin{flalign*}
	\frac{\partial \beta(\theta)}{\partial \theta}&=-\mathcal{L}_h(\theta)^{-1} (E_{\theta}[ (Y- \beta(\theta)A) R_{A}(\theta)R^h_Z(\theta)S_{\theta}(\bm O)]\\
	&\quad \quad \phantom{=} -E_{\theta}[(Y- \beta(\theta)A) E_{\theta}\{A S_{\theta}(A\mid\bmZ,\bmX)\rvert\bmZ,\bmX\}R^h_Z(\theta)]\\
	&\quad\quad\phantom{=}-E_{\theta}[ (Y- \beta(\theta)A) R_{A}(\theta)E_{\theta}\{hS_{\theta}(\bmZ\mid\bmX)\rvert \bmX\} ])\\
	&\equiv -\mathcal{L}_h(\theta)^{-1} (A_1 - A_2 - A_3),
\end{flalign*}
where $\mathcal{L}_h(\theta)=- E_{\theta}[AR_{A}(\theta)R^h_Z(\theta)]$. Then, we use the following identities repeatedly:
\begin{flalign}
	\label{identity_1}	&  E_{\theta}[b(A,\bmZ,\bmX)S_{\theta}(Y\mid A,\bmZ,\bmX)] = 0, \quad \text{for all } b;\\
	\label{identity_2}	& E_{\theta}[c(\bmZ,\bmX)S_{\theta}(A\mid\bmZ,\bmX)]  =E_{\theta}[c(\bmZ,\bmX) R_{A}(\theta)]= 0, \quad \text{for all } c; \\
	\label{identity_3}	& E_{\theta}[d(\bmX)S_{\theta}(\bmZ\mid\bmX)] = E_{\theta}[d(\bmX)R^h_Z(\theta)]= 0, \quad \text{for all } d,h.
\end{flalign}
Consider the terms $A_1, A_2, A_3$ separately:
\begin{flalign*}
	A_1&=E_{\theta}[ (Y- \beta(\theta)A) R_A(\theta)R^h_Z(\theta)S_{\theta}(\bm O)],\\
	A_2&=E_{\theta}[ (Y- \beta(\theta)A) E_{\theta}\{A S_{\theta}(A\mid\bmZ,\bmX)\rvert\bmZ,\bmX\}R^h_Z(\theta)]\\
	&=E_{\theta}[ E_{\theta}\{ (Y- \beta(\theta)A) \mid\bmZ,\bmX\}E_{\theta}\{A S_{\theta}(A\mid\bmZ,\bmX)\rvert\bmZ,\bmX\}R^h_Z(\theta)]\\
	&=E_{\theta}[ E_{\theta}\{ (Y- \beta(\theta)A) \mid\bmZ,\bmX\}A S_{\theta}(A\mid\bmZ,\bmX)R^h_Z(\theta)]\\
	&=E_{\theta}[E_{\theta}\{(Y- \beta(\theta)A) \mid\bmZ,\bmX\}\{A-E_{\theta}(A\mid\bmZ,\bmX)\} S_{\theta}(A\mid\bmZ,\bmX)R^h_Z(\theta)]\quad \text{(\ref{identity_2})}\\
	&=E_{\theta}[  E_{\theta}\{(Y- \beta(\theta)A) \mid\bmZ,\bmX\}R_A(\theta) R^h_Z(\theta)S_{\theta}(A\mid\bmZ,\bmX)]\\
	&\phantom{=}+ E_{\theta}[  E_{\theta}\{ (Y- \beta(\theta)A)\mid\bmZ,\bmX\}R_A(\theta) R^h_Z(\theta)S_{\theta}(Y\mid A,\bmZ,\bmX) ] \quad \text{(\ref{identity_1})}\\
	&\phantom{=}+E_{\theta}[  E_{\theta}\{ (Y- \beta(\theta)A)\mid\bmZ,\bmX\}R_A(\theta) R^h_Z(\theta)\{S_{\theta}(\bmZ\mid\bmX)+S_{\theta}(\bmX)\}] \quad \text{(\ref{identity_2})}\\
	&=E_{\theta}[  E_{\theta}\{ (Y- \beta(\theta)A)\mid\bmZ,\bmX\}R_A(\theta) R^h_Z(\theta) S_{\theta}(\bm O)],\\
	A_3&=E_{\theta}[  (Y- \beta(\theta)A) R_A(\theta)E_{\theta}\{hS_{\theta}(\bmZ\mid\bmX)\rvert \bmX\} ]\\
	&=E_{\theta}[  E_{\theta}\{ (Y- \beta(\theta)A) R_A(\theta) - E( (Y- \beta(\theta)A) \mid \bmZ, \bmX  ) R_A(\theta) \rvert \bmX\}E_{\theta}\{hS_{\theta}(\bmZ\mid\bmX)\rvert \bmX\} ]\\
	&=E_{\theta}[  E_{\theta}\{ (Y- \beta(\theta)A) R_A(\theta) - E( (Y- \beta(\theta)A) \mid \bmZ, \bmX  ) R_A(\theta) \rvert \bmX\}hS_{\theta}(\bmZ\mid\bmX) ]\\
	&=E_{\theta}[  E_{\theta}\{ (Y- \beta(\theta)A) R_A(\theta) - E((Y- \beta(\theta)A) \mid \bmZ, \bmX  ) R_A(\theta) \rvert \bmX\}R^h_Z(\theta)S_{\theta}(\bmZ\mid\bmX) ]\quad \text{(\ref{identity_3})}\\
	&=E_{\theta}[  E_{\theta}\{(Y- \beta(\theta)A)R_A(\theta) - E((Y- \beta(\theta)A)\mid \bmZ, \bmX  ) R_A(\theta) \rvert \bmX\}R^h_Z(\theta)S_{\theta}(\bmZ\mid\bmX) ]\\
	&\phantom{=}+ E_{\theta}[  E_{\theta}\{ (Y- \beta(\theta)A)R_A(\theta)- E( (Y- \beta(\theta)A)\mid \bmZ, \bmX  ) R_A(\theta)\rvert \bmX\}R^h_Z(\theta)S_{\theta}(\bmX) ]   \quad \text{(\ref{identity_3})}\\
	&\phantom{=}+E_{\theta}[  E_{\theta}\{ (Y- \beta(\theta)A)R_A(\theta)- E((Y- \beta(\theta)A)\mid \bmZ, \bmX  ) R_A(\theta) \rvert \bmX\}R^h_Z(\theta)S_{\theta}(A\mid\bmZ,\bmX) ]  \quad \text{(\ref{identity_2})}\\
	&\phantom{=}+E_{\theta}[  E_{\theta}\{(Y- \beta(\theta)A)R_A(\theta)- E( (Y- \beta(\theta)A) \mid \bmZ, \bmX  ) R_A(\theta) \rvert \bmX\}R^h_Z(\theta)S_{\theta}(Y\mid A,\bmZ,\bmX) ]   \quad \text{(\ref{identity_1})}\\
	&=E_{\theta}[E_{\theta}\{ (Y- \beta(\theta)A) R_A(\theta)- E( (Y- \beta(\theta)A)\mid \bmZ, \bmX  ) R_A(\theta) \rvert \bmX \} R^h_Z(\theta) S_{\theta}(\bm O)].
\end{flalign*}

Therefore $\partial \beta({\theta})/\partial \theta \rvert_{\theta=0}=E\{g^{IF}(\bm O;h)S(\bm O)\}$, where
$$g^{IF}(\bm O;h)=\mathcal{L}_h^{-1}R^h_Z[R_A R_{Y} -\beta R_A^2 -E(  R_A R_{Y} -\beta R_A^2  \mid  \bmX )],$$
and $\mathcal{L}_h=E[AR_{A}R^h_Z]$. Since this derivation holds for each $h$, the set of influence functions of $\beta_0$  is given by $\{g^{IF}(\bm O;h)\}$. In addition, we note that $\mathcal{L}_h=E[\text{var}(A\mid\bmZ,\bmX) R^h_Z]$; a necessary condition for non-singularity of $\mathcal{L}_h$ is the dependence of $\text{var}(A\mid\bmZ,\bmX)$ on $\bmZ$, an assumption made in the main manuscript.

(b) First, notice that because every $ Z_j $ is discrete and takes values 0,1,2, any function $ h(\bmZ, \bm X) $ can be expressed as $ C(\bm X)^T \bar{\bmZ} $, where $\bar  \bmZ $ contains all the dummy variables of the strata defined by $ \bmZ $.  Define $ d(\bmZ, \bmX) =\bar\bmZ- E(\bar{\bmZ}\mid\bmX) $,  all the influence functions of $ \beta_0 $ can be characterized  by 
\begin{align}
	U(C) = C(\bmX)^T d(\bmZ, \bmX)\{ \Delta- E(\Delta \mid \bmX ) \}\nonumber.
\end{align}
From the property of the efficient influence function \citep{tsiatis2007semiparametric}, for any $ C(\bmX) $, 
\begin{align}
	E[U(C) U(C^{opt})] = - E\left[ \frac{\partial U(C)}{\partial \beta}\right] \nonumber,
\end{align}
where $ U(C^{opt}) $ corresponds to the efficient influence function. Thus,
\begin{align}
	&E\left[  C (\bmX)^T d(\bmZ, \bmX) d(\bmZ, \bmX)^T C^{opt}(\bmX)\{\Delta - E(\Delta \mid \bmX )\} ^2 \right] \nonumber\\
	&\qquad + E\left[  C(\bmX)^T d(\bmZ, \bmX)\frac{\partial \{ \Delta- E(\Delta\mid \bmX) \} }{\partial \beta}  \right]  \nonumber \\
	&=E\left[  C(\bmX)^T d(\bmZ, \bmX) \left\{d(\bmZ, \bmX)^T C^{opt}(\bmX)\{\Delta - E(\Delta \mid \bmX )\} ^2+\frac{\partial \{ \Delta- E(\Delta\mid \bmX) \} }{\partial \beta} \right\} \right] \nonumber \\
	&= E\bigg[  C(\bmX)^T\bigg[ E\left\{d(\bmZ, \bmX) d(\bmZ, \bmX)^T\{\Delta - E(\Delta \mid \bmX )\} ^2 \mid\bmX\right\}C^{opt}(\bmX) \nonumber  \\
	&\qquad 	+E\bigg\{ d(\bmZ, \bmX)\frac{\partial\{ \Delta- E(\Delta\mid \bmX)\}}{\partial \beta} \mid\bmX\bigg\}\bigg] \bigg] =0 \nonumber.
\end{align}
Since the above equation holds for any $ C(\bmX) $, it holds when 
\begin{align}
C(\bmX)&=  E\left\{d(\bmZ, \bmX) d(\bmZ, \bmX)^T(\Delta- E(\Delta\mid\bmX) \} ^2 \mid\bmX\right\}C^{opt}(\bmX) \nonumber\\
&\qquad +E\left\{ d(\bmZ, \bmX)\frac{\partial\{ \Delta- E(\Delta\mid \bmX)\}}{\partial \beta} \mid\bmX\right\}, \label{seq: c opt}
\end{align}
which implies that \eqref{seq: c opt} equals zero almost surely. Therefore,  
\begin{align}
	&C^{opt} (\bmX) \\
	&= -\left[ E\left\{d(\bmZ, \bmX) d(\bmZ, \bmX)^T(\Delta- E(\Delta\mid \bmX)) ^2 \mid\bmX\right\}\right]^{-1} E\left\{ d(\bmZ, \bmX)\frac{\partial(\Delta-E(\Delta\mid\bmX))}{\partial \beta} \mid\bmX\right\}\nonumber,
\end{align}
where $ \frac{\partial (\Delta- E(\Delta\mid \bmX))}{\partial \beta} = -R_A^2+ E(R_A^2\mid\bmX)$.

\subsection{Proof of multiple robustness}
We will show that the influence function in \eqref{eq: general IF} evaluated at $ \beta=\beta_0 $ has expectation zero  when either one of the following three sets of the models is correctly specified: $ \{E(A\mid\bmZ, \bmX),  E(h(\bmZ, \bmX)\mid\bmX)\}$, $ \{E(Y\mid\bmZ, \bmX),  E(h(\bmZ, \bmX)\mid\bmX)\}$,  or $ \{E(A\mid\bmZ, \bmX),  E( R_A ( R_Y- \beta R_A) \mid \bmX)\}$. In the following, we use a bar notation to denote a general specification of the model that is not necessarily correct.

 In the first scenario,   $ \bar E(A\mid \bmZ, \bmX) =  E(A\mid \bmZ, \bmX) $ and $ \bar E(h(\bmZ, \bmX)\mid \bmX)  =   E(h(\bmZ, \bmX)\mid \bmX)   $. Then 
 \begin{align*}
 &E\left[	\big\{h(\bmZ, \bmX)- E(h(\bmZ, \bmX)\mid\bmX)\big\} \{R_A \bar R_Y - \beta_0 R_A^2- \bar E(R_A \bar R_Y  - \beta_0 R_A ^2 \mid \bmX )\} \right] \\
 &= E\left[	\big\{h(\bmZ, \bmX)- E(h(\bmZ, \bmX)\mid\bmX)\big\} \{R_A \bar R_Y - \beta_0 R_A^2 \}  \right] \\ 
 &= E\left[	\big\{h(\bmZ, \bmX)- E(h(\bmZ, \bmX)\mid\bmX)\big\} \{R_A (Y-  \bar E(Y\mid \bm Z, \bmX ) -\beta_0 A  + \beta_0 E(A\mid \bmZ, \bmX) ) \}  \right] \\ 
 &= E\left[	\big\{h(\bmZ, \bmX)- E(h(\bmZ, \bmX)\mid\bmX)\big\} \{R_A (Y-\beta_0 A  ) \}  \right] \\
 &= 0 .
 \end{align*}

In the second scenario,  $ \bar E(Y| \bmZ, \bmX) =  E(Y\mid \bmZ, \bmX) $ and $ \bar E(h(\bmZ, \bmX)\mid \bmX)  =   E(h(\bmZ, \bmX)\mid \bmX)   $. Then 
 \begin{align*}
	&E\left[	\big\{h(\bmZ, \bmX)- E(h(\bmZ, \bmX)\mid\bmX)\big\} \{\bar R_A R_Y - \beta_0 \bar R_A^2- \bar E(\bar R_A  R_Y  - \beta_0\bar  R_A ^2 \mid \bmX )\} \right] \\
	&= E\left[	\big\{h(\bmZ, \bmX)- E(h(\bmZ, \bmX)\mid\bmX)\big\} \{\bar R_A R_Y - \beta_0 \bar R_A^2  \}  \right] \\ 
	&= E\left[	\big\{h(\bmZ, \bmX)- E(h(\bmZ, \bmX)\mid\bmX)\big\} \{\bar R_A (Y-   E(Y\mid \bm Z, \bmX ) -\beta_0 A  + \beta_0 \bar E(A\mid \bmZ, \bmX) ) \}  \right] \\ 
	&= E\left[	\big\{h(\bmZ, \bmX)- E(h(\bmZ, \bmX)\mid\bmX)\big\} \{(\bar R_A  - R_A)(Y-   E(Y\mid \bm Z, \bmX ) -\beta_0 A  + \beta_0 \bar E(A\mid \bmZ, \bmX) ) \}  \right] \\ 
	&\qquad + E\left[	\big\{h(\bmZ, \bmX)- E(h(\bmZ, \bmX)\mid\bmX)\big\} \{ R_A(Y-   E(Y\mid \bm Z, \bmX ) -\beta_0 A  + \beta_0 \bar E(A\mid \bmZ, \bmX) ) \}  \right] \\ 
	&= E\left[	\big\{h(\bmZ, \bmX)- E(h(\bmZ, \bmX)\mid\bmX)\big\} \{(E(A\mid \bmZ, \bmX) - \bar E(A\mid \bmZ, \bmX) )( -\beta_0 A ) \}  \right] \\
	&= E\left[	\big\{h(\bmZ, \bmX)- E(h(\bmZ, \bmX)\mid\bmX)\big\} \{(E(A\mid \bmZ, \bmX) - \bar E(A\mid \bmZ, \bmX) )( -\beta_0 R_A ) \}  \right] \\ 
	&=0 .
\end{align*}

In the third scenario, $ \bar E(A \mid \bmZ, \bmX) =  E(A\mid \bmZ, \bmX) $ and $ \bar E( \bar R_A ( \bar R_Y- \beta \bar R_A) \mid \bmX)  =    E( R_A ( \bar R_Y- \beta R_A) \mid \bmX)   $. Then 
 \begin{align*}
	&E\left[	\big\{h(\bmZ, \bmX)- \bar E(h(\bmZ, \bmX)\mid\bmX)\big\} \{R_A \bar R_Y - \beta_0 R_A^2-  E(R_A  \bar R_Y  - \beta_0 R_A ^2 \mid \bmX )\} \right] \\
	&= E\left[	\big\{h(\bmZ, \bmX)-  E(h(\bmZ, \bmX)\mid\bmX)\big\} \{R_A \bar R_Y - \beta_0 R_A^2-  E(R_A  \bar R_Y  - \beta_0 R_A ^2 \mid \bmX )\} \right] \\
	&\qquad + E\left[	\big\{E(h(\bmZ, \bmX)\mid\bmX) -  \bar E(h(\bmZ, \bmX)\mid\bmX)\big\} \{R_A \bar R_Y - \beta_0 R_A^2-  E(R_A \bar R_Y  - \beta_0 R_A ^2 \mid \bmX )\} \right] \\
		&= E\left[	\big\{h(\bmZ, \bmX)-  E(h(\bmZ, \bmX)\mid\bmX)\big\} \{R_A \bar R_Y - \beta_0 R_A^2-  E(R_A  \bar R_Y  - \beta_0 R_A ^2 \mid \bmX )\} \right] \\
		&= E\left[	\big\{h(\bmZ, \bmX)-  E(h(\bmZ, \bmX)\mid\bmX)\big\} \{R_A \bar R_Y - \beta_0 R_A^2\} \right] \\
		&= 0,
\end{align*}
where the last line is from the derivations in the first scenario.

\section{Proof of results in Section \ref{sec: semi}}
For any vector $ x $, we denote its $ \ell_1 $, $ \ell_2 $ and $ \ell_\infty $ norms  by $ \| x\|_1 $,  $ \|x\|$, and  $ \|x\|_\infty$. For any symmetric matrix $ A $, let  $tr (A)$ denote the trace, $\xi_{\min} (A)$ and $\xi_{\max} (A)$ respectively denote the smallest and largest eigenvalues.  For any matrix $ A $, let  $ \sigma_{\min} (A)$ and $ \sigma_{\max} (A)$ respectively be the smallest and largest singular values of $ A $,    $\|A\|= \sigma_{\max} (A) $ be the spectral norm, and $\|A\|_F= \{tr(A^T A)\}^{1/2}$ be the Frobenius norm.  Let $C$ be a generic positive constant that may be different in different uses.

 Let $\bfeta = (\bfeta_{p}^T , \bfeta_{np}^T)^T \in \mathbb{R}^q$ collect all the nuisance parameters, with $ \bfeta_{p}= (\bm \pi_1^T,\dots, \bm \pi_m^T, \bm\mu^T,  \bm \lambda^T)^T $ collecting the finite-dimensional parameters  and $ \bfeta_{np}= (\omega(\bmx), \theta(\bmx))^T $ collecting the infinite-dimensional functional parameters. Denote the true value of $ \bfeta $ as $ \bfeta_0=  (\bfeta_{p0}^T , \bfeta_{np0}^T)^T  $.  Let $\nabla_{\bfeta} f_i(\beta, \bfeta_0) = \partial f_i(\beta, \bfeta)/\partial \bfeta|_{\bfeta=\bfeta_0} \in \mathbb{R}^{m\times q} $ for $f\in \{g, G\}$.  Let $ \bmV= (\bmX^T, \bmZ^T)^T $. {As mentioned in the main article, if one would like to include interactions terms in the exposure and outcome models, we can simply include those interaction terms in $\bmV$ and then the following proof still goes through as long as ${\rm dim}(\bmV)=O(m)$. 
 }

For two sequences of  real numbers $a_n$ and $b_n$, we write
$a_n = O(b_n)$ if $|a_n|\leq C b_n$ for all $n$  and some $C>0$, $a_n = o(b_n)$ if $a_n/b_n\rightarrow 0$ as $n\rightarrow\infty$, $a_n= \Theta(b_n) $ if $C b_n \leq |a_n|\leq C' b_n$ for all $n$ and some $C, C'>0$. We use $\xrightarrow{p}$ to denote convergence in probability, $\xrightarrow{d}$ to denote convergence in distribution. For random variables $X$ and $Y$, we denote $X= o_p(Y)$ if $X/Y\xrightarrow{p}0$, $X= O_p(Y)$ if $X/Y$ is bounded in probability. We will use  w.p.a.1. as abbreviation for with probability approaching 1. 

Moreover, we define
\begin{align*}
   &\bar g(\beta, \bfeta) = E\{g_i(\beta, \bfeta)\},\\
   & \hat G(\bfeta) = \partial \hat g(\beta, \bfeta) /\partial \beta,\\
   & Q(\beta, \bfeta) = \bar g(\beta, \bfeta)^T \Omega (\beta, \bfeta)^{-1}\bar g(\beta, \bfeta) /2+ m/(2n), \\
   &\tilde{Q}(\beta, \hat \bfeta) = \hat g (\beta, \hat\bfeta)^T \Omega (\beta, \bfeta_0)^{-1} \hat g (\beta, \hat\bfeta)/2.   
\end{align*}

%It is easy to verify that $ \nabla_{\bfeta} g_i(\beta_0, \bfeta_0)$ is a $m\times (m+3)$ matrix formulated as 
%\begin{align}
%    \nabla_{\bfeta} g_i(\beta_0, \bfeta_0)= 
%    \left[- \Delta_i I_m,   \{\bmZ_i - E(\bmZ_i\mid \bmX_i )\} \frac{\partial\Delta_i}{ \partial \mu },  \{\bmZ_i - E(\bmZ_i\mid \bmX_i )\} \frac{\partial\Delta_i}{ \partial \lambda },  \{\bmZ_i - E(\bmZ_i\mid \bmX_i )\} \frac{\partial\Delta_i}{ \partial \omega }
%    \right], \nonumber
%\end{align}
%where $I_m$ is a $m$-dimensional identity matrix. 

Importantly, 
with $g_i(\beta, \bfeta)$ being the  influence function, we have  $E\{\nabla_{\bfeta} g_i(\beta_0, \bfeta_0) \} =0$.   Interestingly, this property also holds for $ G_i(\bfeta) $  because $ g_i(\beta, \bfeta) $ is linear in $ \beta $, i.e., 
	\begin{align*}
		E\left\{   \frac{\partial G_i ( \bfeta) }{\partial \bfeta} \big|_{\bfeta= \bfeta_{0}} \right\}=0.
	\end{align*} 
We will show later that this property is crucial for the estimation of $ \bfeta $ to have negligible impact on the distribution of  $ \hat\beta $.

%Assumption \ref{assump: both continuous} generalizes the assumption in \cite{Tchetgen2019_GENIUS} to the scenario where there are multiple IVs. Assumptions \ref{assump: both continuous}(a)-(b) impose additive outcome and exposure models, where $ \beta_0 $ encodes the treatment effect on  the outcome mean upon increasing the exposure by one unit, $ \beta_{zj}(U) $ encodes the direct effect of $ Z_j $ on $ Y $, with potential effect modification by unmeasured confounder $ U $,  indicating possible violation of the exclusion restriction assumption. Assumption \ref{assump: both continuous}(c)  can be implied by the commonly imposed independence assumption that $\bmZ\perp U $, so the forms of  $ \beta_u(U) , \alpha_u(U)$ are unrestricted.  Assumption \ref{assump: both continuous}(c) is also weaker than $ \bmZ\perp U $, for example, if $ \beta_u(U)= \beta_{u0}+\beta_{u1} U $ and  $ \alpha_u(U) =  \alpha_{u0} + \alpha_{u1} U$, then  Assumption \ref{assump: both continuous}(c)  holds if $ \var(U|G) $ is almost surely a constant, i.e., the unmeasured confounder $ U $ has homoscedastic variance, leaving its conditional mean $ E(U|G) $  unrestricted. Unlike Assumption \ref{assump: both continuous}(c), the degree of common effect modifiers in the outcome and exposure models are effectively restricted by Assumption \ref{assump: both continuous}(d). As a special case, Assumption \ref{assump: both continuous}(d) is true if $ U $ does not modify the effect of $ Z_j $ on the exposure $ A $ and outcome $ Y $, i.e., 

\subsection{Regularity conditions}

\begin{assum}[Kernel] \label{assump: kernel}
		(i) $ K(u) $ is bounded, $ K(u) $ is zero outside a bounded set, $ \int K(u) du=1 $ and $ \int uK(u)du=0 $;  \\
		(ii) 	 $  E(Y^8\mid \bmX) f_0(\bmX)$  and  $  E(A^8\mid \bmX)f_0(\bmX) $ are bounded, and 
		the density $ f_0(\bmx) $ is bounded away from zero in the support of $ \bmX $; \\
		(iii) The bandwidth of kernel estimator $ \sigma $ satisfies $ \sigma^4 \sqrt{nm}\rightarrow0 $ and $ \sigma^{d_x} \sqrt{n}/(\sqrt{m} \log n)\rightarrow\infty $ as $ n\rightarrow\infty $. 
\end{assum}
Assumptions \ref{assump: kernel}(i)-(ii) correspond to Assumptions 8.1 and 8.3 in \cite{Newey1994_handbook}. Assumption \ref{assump: kernel}(iii) corresponds  to the bandwidth condition imposed in Lemma 8.10 of \cite{Newey1994_handbook}.

	\begin{assum}\label{assump: moment} 
	$ \xi_{\min} (n^{-1} \sum_{i=1}^{n} \bmX_i \bmX_i^T ) \geq C$, and $ E(Y^8)< \infty, E(A^8) <\infty$. 
\end{assum}

\begin{assum}\label{assump:omega}
    There is $C>0$ such that for all $ \beta\in B $, 
    $1/C\leq \xi_{\min} (\Omega(\beta, \bfeta_0))$, $ \xi_{\max} (\Omega(\beta, \bfeta_0)) \leq C$ ,  $ \xi_{\max} (E(G_iG_i^T))\leq C$, and 
    \begin{align*}
    &\xi_{\max} \left(E\left\{\frac{\partial g_i (\beta, \bfeta_0 ) }{\partial (\bmV_i^T\bm \lambda_0)}  \frac{\partial g_i (\beta, \bfeta_0 )^T }{\partial (\bmV_i^T\bm \lambda_0)}  \right \}\right)\leq C, \quad     \xi_{\max} \left(E\left\{ \frac{\partial g_i (\beta, \bfeta_0 ) }{\partial (\bmV_i^T\bm \mu_0)}    \frac{\partial g_i (\beta, \bfeta_0 )^T }{\partial (\bmV_i^T\bm \mu_0)}  \right \}\right)\leq C.
    \end{align*}

\end{assum}
Assumption \ref{assump:omega} corresponds to Assumption 3 of \cite{Newey:2009aa}.  From Assumption \ref{assump:omega} and  \cite{tripathi1999matrix}, we immediately have $ 	E(G_i g_i(\beta, \bfeta_0)^T  ) \Omega(\beta, \bfeta_0)^{-1} 	E( g_i(\beta, \bfeta_0) G_i^T )  \leq E(G_i G_i^T ) $ and thus, $ \| E(G_i g_i(\beta, \bfeta_0)^T  )\|\leq C $. Similarly, we have that 
    \[
 \left\| E\left\{  g_i (\beta, \bfeta_0 ) \frac{\partial g_i (\beta, \bfeta_0 )^T }{\partial (\bmV_i^T\bm \lambda_0)}  \right \}\right\| \leq C, \quad     \left\| E\left\{ g_i (\beta, \bfeta_0 )   \frac{\partial g_i (\beta, \bfeta_0 )^T }{\partial (\bmV_i^T\bm \mu_0)}  \right \} \right\| \leq C. 
\]

\begin{assum} \label{assump: 4th moment}
 $ \{ E(\|g_i\|^4)+E(\|G_i\|^4)  \}m/n \rightarrow0$. 
\end{assum}
 Assumption \ref{assump: 4th moment} is from Assumption 6 of \cite{Newey:2009aa}. This imposes a stronger restriction on the growth rate of the number of moment conditions than that was imposed for consistency. If each component in $g_i$ were uniformly bounded, a sufficient condition would be $m^3/n\rightarrow 0$.  
 
%For the next assumption, let 
%\begin{align*}
%    &\hat \Omega^{(1)}(\beta, \bfeta) = \frac{1}{n} \sum_{i=1}^n g_i(\beta, \bfeta) G_i(\bfeta)^T,  \qquad   \Omega^{(1)}(\beta,  \bfeta) = E\{  \hat \Omega^{(1)}(\beta, \bfeta)\}, \\
%    &  \hat \Omega^{(2)}(\bfeta) = \frac{1}{n} \sum_{i=1}^n G_i( \bfeta) G_i(\bfeta)^T,  \qquad \qquad    \Omega^{(2)}( \bfeta) = E\{  \hat \Omega^{(2)}(\bfeta)\}.
%\end{align*}
% \begin{assum}\label{assump: Omega deriv}
%  For all $\beta$ on a neighborhood $N$ of $\beta_0$: (i) $\sup_{\beta\in N}\| \hat \Omega^{(1)}(\beta, \bfeta_0)- \Omega^{(1)}(\beta,  \bfeta_0) \| =o_p(1)$ and $\big\| a^T \{ \Omega^{(1)}(\beta', \bfeta_0)- \Omega^{(1)}(\beta,  \bfeta_0)\}b \big\| \leq C\|a\|\|b\|\|\beta'-\beta\|$;  (ii) $\sup_{\beta\in N}\| \hat \Omega^{(2)}(\bfeta_0)- \Omega^{(2)}(\bfeta_0) \| =o_p(1)$.
% \end{assum}
% Assumption \ref{assump: Omega deriv} corresponds to Assumption 8 of \cite{Newey:2009aa}. 

%\begin{assum} \label{assump: nuisance 4th moment}
%$E(\|\nabla_{\bfeta} g_i(\beta_0, \bfeta_0) \|^4)   m/n \rightarrow0$ and $E(\|\nabla_{\bfeta} G_i( \bfeta_0) \|^4)   m/n \rightarrow0$.
%\end{assum}
%Similar to Assumption \ref{assump: 4th moment}, if $\Delta $, $\partial \Delta/\partial \mu, \partial \Delta/\partial \lambda, \partial \Delta/\partial \omega$, $\bmZ$, and $E(\bmZ\mid \bmX)$ are all bounded, then a sufficient condition would be $m^3/n\rightarrow 0$.  

\subsection{Lemmas}
We will first prove some  lemmas, which will be used in the proof of Theorem \ref{theo: GMM} in Section \ref{sec: proof of theo gmm}. The organization of the proof is illustrated in Figure \ref{fig:organization}. 

\begin{figure}[t]
	\centering
	\begin{tikzpicture} \small 
		\node[name=lemS2] {Lemma \ref{lemma: cond}};
		\node[name=lemS3,below= of lemS2]  {Lemma \ref{lemma: g bdd by beta}};
		\node[name=lemS4,below= of lemS3]  {Lemma \ref{lemma: Omega multiply}};
		\node[name=lemS5,below= of lemS4]  {Lemma \ref{lemma: sup g square}};	
		\node[name=lemS6,below= of lemS5]  {Lemma \ref{lemma: g beta, nuisance}};	
		\node[name=lemS7,below= of lemS6]  {Lemma \ref{lemma: Omega eta hat}};	
		\node[name=lemS8,right= of lemS3,xshift=1cm]  {Lemma \ref{lemma:unif Q}};	
		\node[name=lemS9,right= of lemS4,xshift=1cm, yshift=-1cm]  {Lemma \ref{lemma: Q tilde}};	
		\node[name=lemS10,right= of lemS6,xshift=1cm, yshift=-1cm]  {Lemma \ref{lemma: Q second deriv}};	
		\node[name=consistency,right= of lemS8, xshift=1cm]  {Section \ref{subsec: consistency} (Consistency)};	
		\node[name=normal,right= of lemS10, xshift=1cm]  {Section \ref{subsec: normal} (Asymptotic Normality)};	
		\draw[->] (lemS2) to (consistency);
		\draw[->] (lemS3) to (lemS8);
		\draw[->] (lemS4) to (lemS8);
		\draw[->] (lemS5) to (lemS8);
		\draw[->] (lemS6) to (lemS8);
		\draw[->] (lemS7) to (lemS8);
		\draw[->] (lemS6) to (lemS9);
		\draw[->] (lemS7) to (lemS9);
		\draw[->] (lemS6) to (lemS10);
		\draw[->] (lemS7) to (lemS10);
		\draw[->] (lemS8) to (consistency);
		\draw[->] (lemS9) to (normal);
		\draw[->] (lemS10) to (normal);
		\draw[->] (lemS6) to (normal);
		\draw[->] (consistency) to (normal);
%		\draw[->] (sec2) to (sec22);
%		\draw[->] (sec2) to (sec23);
%		\draw[->] (sec2) to (sec24);
%		\draw[->] (sec2) to (sec25);
%		\draw[->] (sec23) to (sec3);
%		\draw[->] (sec24) to (sec3);
%		\draw[->] (sec24) to (sec4);
%		\draw[->] (sec25) to (sec5);
%		\draw[->] (sec21) to (sec6);
%		\draw[->] (sec3) to (sec6);
%		\draw[->] (sec4) to (sec6);
%		\draw[->] (sec5) to (sec6);
	\end{tikzpicture}
	\caption{Organization of the proof.}
	\label{fig:organization}
\end{figure}
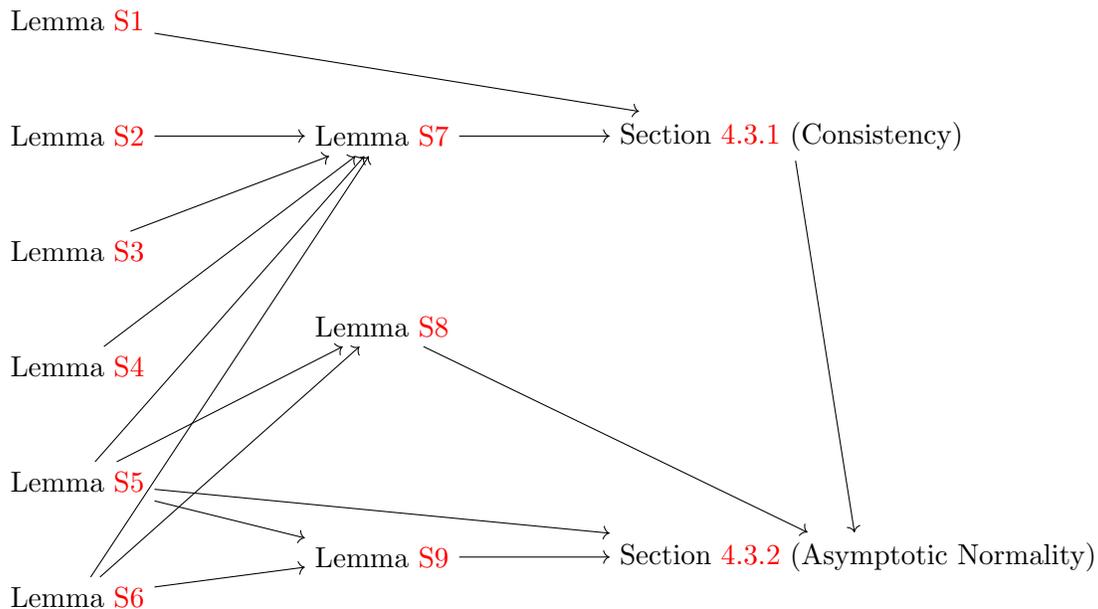
%%% Local Variables:
%%% mode: latex
%%% TeX-master: t
%%% End:

%\begin{lemma}  \label{lemma A0}
%If $A, B$ are symmetric, positive semidefinite matrices, then 
%\begin{align}
%    |\xi_{\min}(A)- \xi_{\min}(B) |\leq \|A-B\|, \qquad |\xi_{\max}(A)- \xi_{\max}(B) |\leq \|A-B\|. \nonumber
%\end{align}
%\end{lemma}
The first lemma is important for global identification of $\beta_0$. 
\begin{lemma}\label{lemma: cond} Under Assumptions \ref{assump: many weak moments} and \ref{assump:omega}, \\
    (i) There is $C>0$ with $|\beta- \beta_0|\leq C\sqrt{n} \|\bar g(\beta, \bfeta_0)\| /\mu_n$ for all $\beta\in B$;\\
    (ii) There is $C>0$ and $\hat M=O_p(1)$ such that $|\beta-\beta_0|\leq C\sqrt{n}\|\hat g(\beta, \bfeta_0) \| /\mu_n+ \hat M$ for all $\beta\in B$.
\end{lemma}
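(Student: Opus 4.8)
The plan is to exploit two structural facts that the paper repeatedly emphasizes: that $g_i(\beta,\bfeta_0)$ is \emph{linear} in $\beta$, and that $\bar g(\beta_0,\bfeta_0)=E\{g_i(\beta_0,\bfeta_0)\}=0$ because the moment function is the influence function evaluated at the truth. Linearity gives the exact identity $\bar g(\beta,\bfeta_0)=G(\beta-\beta_0)$, so that controlling $|\beta-\beta_0|$ reduces to a lower bound on $\|G\|$. The first step common to both parts is therefore to convert the many–weak–moment condition into such a bound. Since $\beta$ is scalar, $G^T\Omega^{-1}G$ is a scalar, and the eigenvalue bounds $1/C\le \xi_{\min}(\Omega)\le \xi_{\max}(\Omega)\le C$ from Assumption \ref{assump:omega} give $\|G\|^2/C\le G^T\Omega^{-1}G\le C\|G\|^2$. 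Combining this with $\mu_n^2 c\le nG^T\Omega^{-1}G\le \mu_n^2 c'$ (Assumption \ref{assump: many weak moments}) yields $\|G\|=\Theta(\mu_n/\sqrt n)$; in particular $\|G\|\ge c_1\mu_n/\sqrt n$ for some $c_1>0$.

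For part (i) this is essentially the whole argument: from $\bar g(\beta,\bfeta_0)=G(\beta-\beta_0)$ one has $\|\bar g(\beta,\bfeta_0)\|=|\beta-\beta_0|\,\|G\|$, so $|\beta-\beta_0|=\|\bar g(\beta,\bfeta_0)\|/\|G\|\le (\sqrt n/(c_1\mu_n))\,\|\bar g(\beta,\bfeta_0)\|$, which is the claim with $C=1/c_1$, valid for every $\beta\in B$.

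For part (ii) I would again use linearity, now for the empirical average: $\hat g(\beta,\bfeta_0)=\hat g(\beta_0,\bfeta_0)+\hat G(\beta-\beta_0)$. Adding and subtracting $G(\beta-\beta_0)$ and applying the triangle inequality gives $\|G\|\,|\beta-\beta_0|\le \|\hat g(\beta,\bfeta_0)\|+\|\hat g(\beta_0,\bfeta_0)\|+\|\hat G-G\|\,|\beta-\beta_0|$. The two sampling errors are controlled by elementary second-moment computations using only Assumption \ref{assump:omega}: since $E\{g_i\}=0$ and $\mathrm{tr}(\Omega)\le m\,\xi_{\max}(\Omega)\le Cm$, one gets $E\|\hat g(\beta_0,\bfeta_0)\|^2=\mathrm{tr}(\Omega)/n\le Cm/n$; and since $\xi_{\max}(E[G_iG_i^T])\le C$ implies $E\|G_i\|^2=\mathrm{tr}(E[G_iG_i^T])\le Cm$, one gets $E\|\hat G-G\|^2\le E\|G_i\|^2/n\le Cm/n$. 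Hence both $\|\hat g(\beta_0,\bfeta_0)\|$ and $\|\hat G-G\|$ are $O_p(\sqrt{m/n})$. Dividing through by $\|G\|\ge c_1\mu_n/\sqrt n$ and using $\sqrt m/\mu_n=\sqrt{m/\mu_n^2}=O(1)$ (Assumption \ref{assump: many weak moments}) turns $\|\hat g(\beta_0,\bfeta_0)\|/\|G\|$ and $\|\hat G-G\|/\|G\|$ into $O_p(1)$ quantities that do not depend on $\beta$.

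The main obstacle, and the reason part (ii) carries an additive slack $\hat M=O_p(1)$ rather than a clean multiplicative bound as in (i), is precisely that $\|\hat G-G\|$ is of the \emph{same} order $\mu_n/\sqrt n$ as $\|G\|$ itself, so the term $(\|\hat G-G\|/\|G\|)\,|\beta-\beta_0|$ cannot be absorbed into the left-hand side. I would resolve this by invoking compactness of $B$: since $|\beta-\beta_0|\le \mathrm{diam}(B)=O(1)$, this term is bounded by $(\|\hat G-G\|/\|G\|)\,\mathrm{diam}(B)=O_p(1)$ uniformly in $\beta$. Collecting it together with $\|\hat g(\beta_0,\bfeta_0)\|/\|G\|$ into a single $\beta$-free random variable $\hat M=O_p(1)$ yields $|\beta-\beta_0|\le (\sqrt n/(c_1\mu_n))\,\|\hat g(\beta,\bfeta_0)\|+\hat M$ for all $\beta\in B$, which is (ii) with $C=1/c_1$.
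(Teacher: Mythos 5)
Your proposal is correct and takes essentially the same route as the paper's own proof: part (i) via the exact identity $\bar g(\beta,\bfeta_0)=(\beta-\beta_0)G$ together with $\|G\|=\Theta(\mu_n/\sqrt{n})$ deduced from Assumptions \ref{assump: many weak moments} and \ref{assump:omega}, and part (ii) via the same linear decomposition of $\hat g(\beta,\bfeta_0)$, Markov-inequality bounds giving $\|\hat g(\beta_0,\bfeta_0)\|=O_p(\sqrt{m/n})$ and $\|\hat G(\bfeta_0)-G\|=O_p(\sqrt{m/n})$, and compactness of $B$ to absorb the $\beta$-dependent error into a $\beta$-free $O_p(1)$ slack $\hat M$, which is exactly how the paper constructs its $\hat M$.
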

\begin{proof}
(i) Note that with $g_i(\beta, \bfeta_0)$ defined as in \eqref{eq: IF}, it is true that 
\[
G=E\left[(\bmZ - E(\bmZ\mid \bmX)  ) \{-R_A^2 +  E(R_A^2\mid \bmX)  \}\right].
\]
Moreover, as $\bar g(\beta_0, \bfeta_0)= 0 $, 
\begin{align}
\bar g(\beta, \bfeta_0)=\bar g(\beta, \bfeta_0)- \bar g(\beta_0, \bfeta_0)= (\beta- \beta_0) G.\label{eq: g bar}    
\end{align}
Also, under Assumptions \ref{assump: many weak moments} and  \ref{assump:omega}, we have that $G^T G = \Theta(\mu_n^2/n)$, and thus
\begin{align*}
    \sqrt{n} \|\bar g(\beta, \bfeta_0) \| /\mu_n= \sqrt{n} |\beta- \beta_0| (G^T G)^{1/2} /\mu_n = \Theta(|\beta-\beta_0|).
\end{align*}
This concludes the proof. \\
(ii) Note that 
\begin{align}
  &\mu_n^{-1} \sqrt{n}  \hat g(\beta, \bfeta_0) \label{eq: g hat}\\
  &=    \mu_n^{-1} \sqrt{n} \hat g(\beta_0, \bfeta_0) + \mu_n^{-1} \sqrt{n}(\beta- \beta_0) \frac{1}{n} \sum_{i=1}^n G_i\nonumber\\
  & =    \mu_n^{-1} \sqrt{n} \hat g(\beta_0, \bfeta_0) + \mu_n^{-1} \sqrt{n}(\beta- \beta_0) \frac{1}{n} \sum_{i=1}^n (G_i-G) + \mu_n^{-1}\sqrt{n} (\beta-\beta_0) G \nonumber.
\end{align}
Next, we have that $\sqrt{n/m} \|\hat g (\beta_0, \bfeta_0)\|=O_p(1)$ from the Markov inequality and 
\begin{align*}
&m^{-1} n E\{\|\hat g (\beta_0, \bfeta_0) \|^2\} = m^{-1} n E\big[ tr\{ \hat g (\beta_0, \bfeta_0)^T  g (\beta_0, \bfeta_0)\}\big] = m^{-1} n E\left[ \frac{1}{n^2}\sum_{i, j=1}^n g_i^T g_j  \right] \\\
&= m^{-1} n E\left[ \frac{1}{n^2}\sum_{i=1}^n g_i^T g_i \right] =m^{-1}  E\left[  g_i^T g_i \right]    = m^{-1}  E\left[ tr( g_i g_i^T) \right]   = m^{-1} tr( \Omega) \leq C,
\end{align*}
where the last inequality is from Assumption \ref{assump:omega}. We thus have $ \|\hat g (\beta_0, \bfeta_0)\|= O_p(\sqrt{m/n})$ and also $\|\hat g (\beta_0, \bfeta_0)\|= O_p(\mu_n/\sqrt{n})$ as $m/\mu_n^2\leq C$ from Assumption \ref{assump: many weak moments}. 

We can similarly show that $ \| n^{-1} \sum_{i=1}^n (G_i-G)\|=O_p(\mu_n/\sqrt{n})$ because $ \mu_n^{-2} E\{(G_i-G)^T (G_i-G)\} = \mu_n^{-2} E(G_i^TG_i) - \mu_n^{-2}  G^T G \leq C$ from Assumption \ref{assump: many weak moments} and Assumption \ref{assump:omega}. 

Moreover, from Assumptions \ref{assump: many weak moments}, $\|\mu_n^{-1} \sqrt{n} (\beta-\beta_0) G\|=\mu_n^{-1} \sqrt{n} |\beta-\beta_0| \|G\| \geq C |\beta-\beta_0| $. 

As $B$ is compact, we can define 
\begin{align}
    \hat M=  \mu_n^{-1} \sqrt{n} \sup_{\beta\in B} \bigg\|\hat g(\beta_0, \bfeta_0) + (\beta- \beta_0) \frac{1}{n} \sum_{i=1}^n (G_i-G) \bigg\| = O_p(1).
\end{align}
Then, by triangle inequality, it follows that for all $\beta\in B$,
\begin{align*}
  C  |\beta-\beta_0| \leq C\|\mu_n^{-1} \sqrt{n} (\beta-\beta_0) G\|\leq \mu_n^{-1}\sqrt{n} \| \hat g(\beta, \bfeta_0) \|+\hat M.
\end{align*}
This concludes the proof. 
\end{proof}

\begin{lemma} \label{lemma: g bdd by beta}
Under Assumptions \ref{assump: many weak moments} and \ref{assump:omega}, there is $C$ and $\hat M=O_p(1)$ such that for all $\beta', \beta\in B$,\\
(i)   $\sqrt{n} \|\bar g(\beta', \bfeta_0) - \bar g(\beta, \bfeta_0)\|/\mu_n\leq C|\beta'-\beta|$;\\
(ii)  $\sqrt{n} \|\hat g(\beta', \bfeta_0) - \hat g(\beta, \bfeta_0)\|/\mu_n\leq \hat M |\beta'-\beta|$.
\end{lemma}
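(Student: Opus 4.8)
The plan is to exploit the linearity of the moment function in $\beta$, which is the structural feature the paper repeatedly emphasizes. Since $g_i(\beta, \bfeta_0)$ is affine in $\beta$ with derivative $G_i = \{\bmZ_i - E(\bmZ\mid\bmX_i)\}\{-\delta_A^2 + E(\delta_A^2\mid\bmX_i)\}$ that does not depend on $\beta$, I immediately obtain $g_i(\beta', \bfeta_0) - g_i(\beta, \bfeta_0) = (\beta' - \beta) G_i$. Averaging over $i$, respectively taking expectations, converts both claims into bounds on $\|G\|$ and $\|\hat G(\bfeta_0)\|$, so the entire problem reduces to controlling these two norms.

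First I would handle part (i). Taking expectations in the identity above gives $\bar g(\beta', \bfeta_0) - \bar g(\beta, \bfeta_0) = (\beta' - \beta) G$, so that $\sqrt{n}\|\bar g(\beta', \bfeta_0) - \bar g(\beta, \bfeta_0)\|/\mu_n = |\beta' - \beta|\cdot\sqrt{n}\|G\|/\mu_n$. It therefore suffices to show $\sqrt{n}\|G\|/\mu_n \leq C$. This is exactly the bound $G^T G = \Theta(\mu_n^2/n)$ already used in the proof of Lemma \ref{lemma: cond}(i): sandwiching $G^T\Omega^{-1}G$ between $\xi_{\min}(\Omega^{-1})\|G\|^2$ and $\xi_{\max}(\Omega^{-1})\|G\|^2$ via the eigenvalue bounds on $\Omega$ in Assumption \ref{assump:omega}, and then invoking the two-sided bound on $n G^T\Omega^{-1}G$ in Assumption \ref{assump: many weak moments}, yields $\|G\| = \Theta(\mu_n/\sqrt{n})$. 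Hence $\sqrt{n}\|G\|/\mu_n$ is bounded and (i) follows with that constant.

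Next, for part (ii), averaging the pointwise identity gives $\hat g(\beta', \bfeta_0) - \hat g(\beta, \bfeta_0) = (\beta' - \beta)\hat G(\bfeta_0)$, where $\hat G(\bfeta_0) = n^{-1}\sum_i G_i$ is free of $\beta$. I would therefore set $\hat M = \sqrt{n}\|\hat G(\bfeta_0)\|/\mu_n$, which is manifestly uniform in $\beta,\beta'$, so that (ii) reduces to showing $\hat M = O_p(1)$. Splitting $\hat G(\bfeta_0) = G + n^{-1}\sum_i(G_i - G)$, the first piece is $O(\mu_n/\sqrt{n})$ by part (i). For the centered average, independence and mean-centering kill the cross terms, giving $E\|n^{-1}\sum_i(G_i - G)\|^2 = n^{-1}E\|G_i - G\|^2 \leq n^{-1}E(G_i^T G_i)$; moreover $E(G_i^T G_i) = \mathrm{tr}(E(G_i G_i^T)) \leq m\,\xi_{\max}(E(G_i G_i^T)) \leq Cm$ by Assumption \ref{assump:omega}. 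Since $m/\mu_n^2$ is bounded by Assumption \ref{assump: many weak moments}, this is $O(\mu_n^2/n)$, so Markov's inequality gives $\|n^{-1}\sum_i(G_i - G)\| = O_p(\mu_n/\sqrt{n})$. Combining the two pieces yields $\|\hat G(\bfeta_0)\| = O_p(\mu_n/\sqrt{n})$, hence $\hat M = O_p(1)$.

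There is no serious obstacle here; the lemma is essentially a corollary of linearity in $\beta$ together with the order bounds on $G$ and $\hat G(\bfeta_0)$. The only points requiring care are that $\hat M$ must be defined uniformly over $\beta,\beta'$ — which is automatic because $\hat G(\bfeta_0)$ is constant in $\beta$ — and that the stochastic order of the centered average be pinned down correctly, where the essential inputs are the uniform eigenvalue bound on $E(G_i G_i^T)$ from Assumption \ref{assump:omega} and the boundedness of $m/\mu_n^2$ from Assumption \ref{assump: many weak moments}.
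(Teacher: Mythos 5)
Your proposal is correct and follows essentially the same route as the paper: both exploit the linearity of $g_i(\beta,\bfeta_0)$ in $\beta$ to write the differences as $(\beta'-\beta)G$ and $(\beta'-\beta)\hat G(\bfeta_0)$, bound $\|G\|=O(\mu_n/\sqrt{n})$ via Assumptions \ref{assump: many weak moments} and \ref{assump:omega}, and control the centered average $n^{-1}\sum_i(G_i-G)$ by the same second-moment/Markov argument (which the paper delegates to its proof of Lemma \ref{lemma: cond}). The only difference is presentational: you spell out the trace and eigenvalue computations that the paper cites by reference.
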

\begin{proof}
(i) From Assumptions \ref{assump: many weak moments} and  \ref{assump:omega}, we have 
\begin{align}
    \mu_n^{-1} \sqrt{n}\|\bar g(\beta', \bfeta_0) - \bar g(\beta, \bfeta_0) \| = \mu_n^{-1} \sqrt{n} \| (\beta'-\beta) G\|= \mu_n^{-1}\sqrt{n} |\beta'-\beta| (G^T G)^{1/2} \leq C|\beta'-\beta|. \nonumber
\end{align}
(ii) From \eqref{eq: g hat}, we have that 
\begin{align*}
    \mu_n^{-1} \sqrt{n} \|\hat g(\beta', \bfeta_0) - \hat g(\beta, \bfeta_0)\| = \mu_n^{-1} \sqrt{n} |\beta'-\beta| \bigg\|\frac{1}{n}\sum_{i=1}^n G_i\bigg\|.
\end{align*}
We  define $\hat M= \mu_n^{-1}\sqrt{n} \|\frac{1}{n}\sum_{i=1}^n G_i\|$, which is
\begin{align*}
 &\mu_n^{-1}\sqrt{n} \bigg\|\frac{1}{n}\sum_{i=1}^n G_i\bigg\| =  \mu_n^{-1}\sqrt{n} \bigg\|\frac{1}{n}\sum_{i=1}^n (G_i- G)+ G\bigg\| \\
 &\leq \mu_n^{-1} \sqrt{n}\bigg\|\frac{1}{n}\sum_{i=1}^n (G_i- G)\bigg\| + 
 \mu_n^{-1} \sqrt{n}\| G\| = O_p(1) + O(1)= O_p(1),
\end{align*}
where $\mu_n^{-1}\sqrt{n}\| \frac{1}{n}\sum_{i=1}^n (G_i- G)\|=O_p(1)$ is established in the proof of Lemma \ref{lemma: cond}. This concludes the proof. 
\end{proof}

\begin{lemma} \label{lemma: Omega multiply}Under Assumption \ref{assump:omega}, 
	$|a^T \{\Omega(\beta', \bfeta_0)-\Omega(\beta, \bfeta_0)\}b| \leq C\|a\|\|b\||\beta'-\beta|  $ for all $a, b \in \mathbb{R}^m$, $\beta', \beta\in B$. 
\end{lemma}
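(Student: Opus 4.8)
The plan is to exploit the fact that $g_i(\beta,\bfeta_0)$ factors as a $\beta$-free vector times a scalar that is affine in $\beta$, so that $\Omega(\beta,\bfeta_0)$ is a matrix-valued quadratic polynomial in $\beta$ whose relevant coefficient matrices are exactly the quantities controlled by Assumption~\ref{assump:omega}. Write $W_i=\bmZ_i-E(\bmZ_i\mid\bmX_i)$ and, with $\Delta_i(\beta)=\delta_{Ai}\delta_{Yi}-\beta\delta_{Ai}^2$, set $r_i(\beta)=\Delta_i(\beta)-E(\Delta_i(\beta)\mid\bmX_i)$, so that $g_i(\beta,\bfeta_0)=W_i\,r_i(\beta)$ and hence
\[
\Omega(\beta,\bfeta_0)=E\{W_iW_i^T r_i(\beta)^2\}.
\]
The key observation is that $r_i(\beta)$ is affine in $\beta$ with slope $-q_i$, where $q_i=\delta_{Ai}^2-E(\delta_{Ai}^2\mid\bmX_i)$; consequently $G_i=-W_iq_i$ and $E(G_iG_i^T)=E\{W_iW_i^Tq_i^2\}$, a matrix whose largest eigenvalue is bounded under Assumption~\ref{assump:omega}.

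Next I would form the bilinear expression directly, using the factorization $r_i(\beta')^2-r_i(\beta)^2=\{r_i(\beta')-r_i(\beta)\}\{r_i(\beta')+r_i(\beta)\}$ together with $r_i(\beta')-r_i(\beta)=-(\beta'-\beta)q_i$, which gives
\[
a^T\{\Omega(\beta',\bfeta_0)-\Omega(\beta,\bfeta_0)\}b
=-(\beta'-\beta)\,E\big[(a^TW_i)\,q_i\,(b^TW_i)\{r_i(\beta')+r_i(\beta)\}\big].
\]
This isolates the factor $|\beta'-\beta|$ cleanly and reduces the problem to bounding the remaining expectation by $C\|a\|\|b\|$ uniformly over $\beta',\beta\in B$.

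For the remaining expectation I would apply Cauchy--Schwarz with the grouping of $(a^TW_i)q_i$ against $(b^TW_i)r_i(\cdot)$, so that for each of the two terms
\[
\big|E[(a^TW_i)q_i(b^TW_i)r_i(\beta^\ast)]\big|
\le \sqrt{a^T E(G_iG_i^T)\,a}\,\sqrt{b^T\Omega(\beta^\ast,\bfeta_0)\,b},
\quad \beta^\ast\in\{\beta',\beta\},
\]
where I have used $E\{(a^TW_i)^2q_i^2\}=a^TE(G_iG_i^T)a$ and $E\{(b^TW_i)^2r_i(\beta^\ast)^2\}=b^T\Omega(\beta^\ast,\bfeta_0)b$. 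Assumption~\ref{assump:omega} bounds both $\xi_{\max}(E(G_iG_i^T))$ and $\xi_{\max}(\Omega(\beta^\ast,\bfeta_0))$ by a constant, yielding $\sqrt{a^TE(G_iG_i^T)a}\le C\|a\|$ and $\sqrt{b^T\Omega(\beta^\ast,\bfeta_0)b}\le C\|b\|$. Summing the two terms and absorbing constants gives the claim with a relabeled $C$.

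I do not anticipate a genuinely hard step: the content lies entirely in arranging the algebra so that the $\beta$-difference factors out and the residual expectation collapses onto the two matrices appearing in Assumption~\ref{assump:omega}. The only point requiring care is the \emph{correct} Cauchy--Schwarz pairing, namely keeping $q_i$ with one copy of $W_i$ and $r_i(\beta^\ast)$ with the other, since this is precisely what lets the quadratic forms be recognized as $a^TE(G_iG_i^T)a$ and $b^T\Omega(\beta^\ast,\bfeta_0)b$. Any grouping that instead kept the two $W_i$ factors together would produce an uncontrolled fourth moment of $W_i$, such as $E\{(a^TW_i)^2(b^TW_i)^2\}$, which is not directly bounded by the stated assumptions.
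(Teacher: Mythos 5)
Your proof is correct, and it rests on the same two pillars as the paper's argument: linearity of $g_i(\beta,\bfeta_0)$ in $\beta$ and the eigenvalue bounds in Assumption \ref{assump:omega}. The organization differs modestly. The paper substitutes $g_i(\beta',\bfeta_0)=g_i(\beta,\bfeta_0)+(\beta'-\beta)G_i$ and expands the quadratic, producing a term $(\beta'-\beta)^2 a^T E(G_iG_i^T) b$ (absorbed into $C|\beta'-\beta|$ via compactness of $B$) plus two cross terms, which it bounds by $\|E(G_i g_i(\beta,\bfeta_0)^T)\|\leq C$ --- a bound the paper obtained just after stating Assumption \ref{assump:omega} by citing the matrix Cauchy--Schwarz inequality of \cite{tripathi1999matrix}. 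Your difference-of-squares factorization lands directly on the symmetrized cross terms: indeed $E[(a^TW_i)q_i(b^TW_i)r_i(\beta^*)]=-a^T E\{G_i g_i(\beta^*,\bfeta_0)^T\}b$, so your two expectations are exactly the paper's cross moments evaluated at $\beta^*\in\{\beta,\beta'\}$, and your scalar Cauchy--Schwarz step is an inline, self-contained derivation of the same norm control. What your route buys: no $(\beta'-\beta)^2$ term (hence no appeal to boundedness of $B$ at this step) and no reliance on the cited matrix inequality. One further simplification you could make: the rank-one structure $g_i=W_i r_i(\beta)$ is not actually needed for your key inequality, since $|E\{(a^TG_i)(g_i(\beta^*,\bfeta_0)^T b)\}|\leq \{a^T E(G_iG_i^T)a\}^{1/2}\{b^T\Omega(\beta^*,\bfeta_0)b\}^{1/2}$ holds for any square-integrable $G_i$ and $g_i$; your closing warning about the wrong pairing (which would create fourth moments of $W_i$ that Assumption \ref{assump:omega} does not control) is precisely why the pairing you chose is the right one.
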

\begin{proof}
	Using $ g_i(\beta', \bfeta_0)= g_i(\beta,\bfeta_0) + (\beta'- \beta)  G_i $, we have that 
	\begin{align*}
		&	|a^T \{\Omega(\beta', \bfeta_0)-\Omega(\beta, \bfeta_0)\}b|  \\
		&= \left|  (\beta'-\beta)^2 a^T E(G_iG_i^T) b + (\beta'-\beta) a^TE(G_ig_i(\beta, \bfeta_0)^T) b +(\beta'-\beta) a^TE(g_i(\beta, \bfeta_0) G_i^T) b  \right| \\
		&\leq C|\beta'-\beta| a^T E(G_iG_i^T) b  + |\beta'-\beta| | a^T E(G_ig_i(\beta, \bfeta_0)^T) b | +  |\beta'-\beta| | a^TE(g_i(\beta, \bfeta_0) G_i^T)  b | \\
		&\leq C|\beta'-\beta|  \|a\| \| E(G_iG_i^T) \| \|b\| +2 |\beta'-\beta| \| a\| \| E(G_ig_i(\beta, \bfeta_0)^T) \| \| b \| \\
		&\leq C |\beta'-\beta|  \|a\| \|b\|,
	\end{align*}
	where 	the last line is because  $ \| E(G_iG_i^T) \| \leq C $ and $\| E(G_ig_i(\beta, \bfeta_0)^T) \|\leq C $ from	  Assumption \ref{assump:omega}. 
\end{proof}

\begin{lemma} \label{lemma: sup g square}
	Under Assumption \ref{assump: moment} and the boundedness of $ \bmZ $ and $ \bmX $, there is $ C>0 $ such that 
	\[
	 	\sup_{\beta\in B} E[\{g_i(\beta, \bfeta_0)^T g_i(\beta, \bfeta_0)\}^2]\leq Cm^2 .
	\]
\end{lemma}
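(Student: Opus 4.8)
The plan is to exploit the product structure of the influence function. Writing $g_i(\beta,\bfeta_0)=\{\bmZ_i-E(\bmZ_i\mid\bmX_i)\}\{\Delta_i-E(\Delta_i\mid\bmX_i)\}$ with $\Delta_i=\delta_{Ai}\delta_{Yi}-\beta\,\delta_{Ai}^2$, the scalar $g_i(\beta,\bfeta_0)^T g_i(\beta,\bfeta_0)$ factors as $\|\bmZ_i-E(\bmZ_i\mid\bmX_i)\|^2\,\{\Delta_i-E(\Delta_i\mid\bmX_i)\}^2$. Since each coordinate $Z_{ij}$ of $\bmZ_i$ is bounded (indeed $Z_{ij}\in\{0,1,2\}$), so is each $Z_{ij}-E(Z_{ij}\mid\bmX_i)$, and hence the first factor is a sum of $m$ uniformly bounded terms, giving $\|\bmZ_i-E(\bmZ_i\mid\bmX_i)\|^2\leq Cm$ deterministically. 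Squaring then yields $(g_i^T g_i)^2\leq C^2m^2\{\Delta_i-E(\Delta_i\mid\bmX_i)\}^4$, so the claim reduces to the uniform moment bound $\sup_{\beta\in B}E[\{\Delta_i-E(\Delta_i\mid\bmX_i)\}^4]\leq C$.

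To establish this, I would split $\Delta_i-E(\Delta_i\mid\bmX_i)=\{\delta_{Ai}\delta_{Yi}-E(\delta_{Ai}\delta_{Yi}\mid\bmX_i)\}-\beta\{\delta_{Ai}^2-E(\delta_{Ai}^2\mid\bmX_i)\}$ into two conditionally centered pieces. Compactness of $B$ bounds $|\beta|$, so by the $c_r$-inequality it suffices to bound the fourth moments of each piece uniformly in $\beta$. For a generic piece I would invoke the elementary fact that centering by a conditional mean inflates moments only by a constant: combining the $c_r$-inequality with conditional Jensen gives $E[\{W-E(W\mid\bmX)\}^4]\leq C\,E[W^4]$. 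This reduces matters to bounding the residual moments $E[(\delta_{Ai}\delta_{Yi})^4]$ and $E[\delta_{Ai}^8]$.

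These residual moments follow from the eighth-moment conditions in Assumption \ref{assump: moment}. Applying the $c_r$-inequality and conditional Jensen to $\delta_{Ai}=A_i-E(A_i\mid\bmZ_i,\bmX_i)$ gives $E[\delta_{Ai}^8]\leq C\,E[A_i^8]<\infty$, and likewise $E[\delta_{Yi}^8]\leq C\,E[Y_i^8]<\infty$. Cauchy--Schwarz then yields $E[(\delta_{Ai}\delta_{Yi})^4]=E[\delta_{Ai}^4\delta_{Yi}^4]\leq\{E[\delta_{Ai}^8]E[\delta_{Yi}^8]\}^{1/2}<\infty$. Collecting the bounds gives $\sup_{\beta\in B}E[\{\Delta_i-E(\Delta_i\mid\bmX_i)\}^4]\leq C$, whence $\sup_{\beta\in B}E[(g_i^T g_i)^2]\leq Cm^2$.

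There is no genuinely hard step here; the lemma is essentially a bookkeeping exercise in moment inequalities. The only points requiring care are (i) ensuring the $O(m)$ bound on $\|\bmZ_i-E(\bmZ_i\mid\bmX_i)\|^2$ is deterministic and uniform, which rests entirely on the boundedness of $\bmZ$, and (ii) making the fourth-moment bound uniform over the compact set $B$, which is precisely what produces the $\sup_{\beta\in B}$ in the statement. I note that this bound relies only on the eighth-moment conditions and the boundedness of $\bmZ$; the linear specification of the nuisance conditional means in Assumption \ref{assump: linear} is not needed here, since those conditional means are controlled directly through conditional Jensen.
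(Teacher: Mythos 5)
Your proof is correct and follows essentially the same route as the paper's: factor $g_i^T g_i$ into $\|\bmZ_i-E(\bmZ_i\mid\bmX_i)\|^2\{\Delta_i-E(\Delta_i\mid\bmX_i)\}^2$, bound the first factor deterministically by $Cm$ via boundedness of $\bmZ$ and $\bmX$, and reduce to a uniform fourth-moment bound on the centered $\Delta_i$ handled by the $c_r$-inequality, conditional Jensen, Cauchy--Schwarz, the eighth-moment conditions of Assumption \ref{assump: moment}, and compactness of $B$. The only cosmetic differences are the order in which you decenter and split $\Delta_i$, and your (correct) side remark that only boundedness of $E(\bmZ\mid\bmX)$, not its linear form, is needed.
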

\begin{proof}
	From straightforward calculation, and recall that $ \Delta_i = (A_i -E(A_i\mid \bmV_i)) (Y_i -E(Y_i\mid \bmV_i) ) - \beta (A_i -E(A_i\mid \bmV_i))^2  $, 	we have  
	\begin{align*}
		&E[\{g_i(\beta, \bfeta_0)^T g_i(\beta, \bfeta_0)\}^2] \\
		&= E\left[ \left\{\sum_{j=1}^m (Z_{ij}- \bmX_i^T \bm\pi_{0j} )^2 ( \Delta_i - E(\Delta_i \mid \bmX_i))^2\right\}^2 \right] \\
		&\leq C m^2 E\left[  \{ \Delta_i - E(\Delta_i \mid \bmX_i)\}^4 \right] \\
		&\leq C m^2 E\left[   \Delta_i^4  + \{E(\Delta_i \mid \bmX_i)\}^4 \right]. 
	\end{align*}
Note that $ E( \Delta_i^4)<\infty  $ because 
\begin{align*}
E( \Delta_i^4) &\leq CE\left\{  (A_i -E(A_i\mid \bmV_i))^4  (Y_i -E(Y_i\mid \bmV_i) )^4  \right\} + C \beta^4 E\{(A_i -E(A_i\mid \bmV_i))^8\}\\
&\leq C \sqrt{ E\{(A_i -E(A_i\mid \bmV_i))^8 \}  E\{ (Y_i -E(Y_i\mid \bmV_i) )^8\} }  + C \beta^4 E\{(A_i -E(A_i\mid \bmV_i))^8\} \\
&\leq C \sqrt{ E\{ A_i^8 + (E(A_i\mid \bmV_i))^8 \} E\{ Y_i^8 + (E(Y_i\mid \bmV_i))^8 \} }  + C \beta^4 E\{ A_i^8 + (E(A_i\mid \bmV_i))^8 \}\\
&\leq \infty, 
\end{align*}
where the last line is because  $ E (A_i^8)<\infty $ and $ E (Y_i^8)<\infty $ from Assumption \ref{assump: moment}, and $ E \{ (E(A_i\mid \bmV_i))^8 \}<\infty $ and $ E \{  (E(Y_i\mid \bmV_i))^8 \}<\infty$ from Jensen's inequality for conditional expectation. Another use of Jensen's inequality gives us $ E\left[  \{E(\Delta_i \mid \bmX_i)\}^4 \right]<\infty  $.  Finally, using the  compactness of $ B $ concludes the proof.  
\end{proof}

\begin{lemma} \label{lemma: g beta, nuisance}
Under Assumptions \ref{assump: linear}-\ref{assump:omega}, when $ m^2/n\rightarrow 0 $, \\
(i)  $\|\hat g(\beta_0, \hat\bfeta) -\hat g(\beta_0, \bfeta_0) \|= o_p(n^{-1/2}) + O_p(m\log m/n)$;\\
(ii) $\| \hat G(\hat \bfeta) -  \hat G( \bfeta_0)\|= o_p(n^{-1/2}) + O_p(m\log m/n)$;\\
(iii) $\sup_{\beta\in B} \|\hat g(\beta, \hat\bfeta) -\hat g(\beta, \bfeta_0) \|= o_p(n^{-1/2}) + O_p(m\log m/n)$.
\end{lemma}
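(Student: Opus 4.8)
The plan is to reduce all three parts to a single quantity and then exploit two structural features of $g_i$: it is a low-degree polynomial in the nuisance parameters, and it satisfies the orthogonality relations $E\{\nabla_{\bfeta} g_i(\beta_0,\bfeta_0)\}=0$ and $E\{\partial G_i(\bfeta_0)/\partial\bfeta\}=0$ recorded above. First I would dispose of (iii) by linearity of $g_i$ in $\beta$: writing $g_i(\beta,\bfeta)=g_i(\beta_0,\bfeta)+(\beta-\beta_0)G_i(\bfeta)$ gives
\[
\hat g(\beta,\hat\bfeta)-\hat g(\beta,\bfeta_0)=\{\hat g(\beta_0,\hat\bfeta)-\hat g(\beta_0,\bfeta_0)\}+(\beta-\beta_0)\{\hat G(\hat\bfeta)-\hat G(\bfeta_0)\},
\]
so taking $\sup_{\beta\in B}$ over the compact set $B$ and invoking (i) and (ii) yields (iii) immediately. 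It therefore suffices to establish (i) and (ii). These are structurally identical, since $G_i(\bfeta)=(\bmZ-E(\bmZ\mid\bmX))\{-\delta_A^2+\theta(\bmX)\}$ is, like $g_i$, polynomial in the Euclidean nuisance components and affine in the functional components $(\omega,\theta)$, and obeys the same orthogonality. I would prove (i) in full and note that (ii) follows verbatim with $g_i$ replaced by $G_i$.

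For (i), I would Taylor expand $\hat g(\beta_0,\cdot)$ in $\bfeta$ about $\bfeta_0$. Because $g_i$ is cubic in $\bfeta_p=(\bm\pi_1^T,\dots,\bm\pi_m^T,\bm\mu^T,\bm\lambda^T)^T$ (degree one in each $\bm\pi_j$ and degree two in $(\bm\mu,\bm\lambda)$) and affine in $\bfeta_{np}=(\omega,\theta)$, this expansion is a finite polynomial with no remainder tail to control. The first-order term is handled by recentering with orthogonality,
\[
\frac1n\sum_i\nabla_{\bfeta}g_i(\beta_0,\bfeta_0)(\hat\bfeta-\bfeta_0)=\Big[\frac1n\sum_i\nabla_{\bfeta}g_i(\beta_0,\bfeta_0)-E\{\nabla_{\bfeta}g_i(\beta_0,\bfeta_0)\}\Big](\hat\bfeta-\bfeta_0),
\]
so that the deterministic first-order bias vanishes and only a centered empirical process paired with the estimation error survives. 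Using boundedness of $\bmZ,\bmX$ (Assumption \ref{assump: linear}), the fixed-dimension rate $\|\hat{\bm\pi}_j-\bm\pi_{j0}\|=O_p(n^{-1/2})$, and the growing-dimension least squares rates $\|\hat{\bm\mu}-\bm\mu_0\|,\|\hat{\bm\lambda}-\bm\lambda_0\|=O_p(\sqrt{m/n})$ (from the well-conditioned design in Assumption \ref{assump: moment}), I would bound this cross term at $o_p(n^{-1/2})$ under $m^2/n\to0$. The one subtlety I would stress is that the aggregation over the $m$ coordinates must use the joint least squares representation of $\hat{\bm\mu}-\bm\mu_0$ rather than a crude Cauchy--Schwarz split, since only the averaged behavior $\frac1n\sum_i\{\bmV_i^T(\hat{\bm\mu}-\bm\mu_0)\}^2=O_p(m/n)$, not its worst case, delivers the required rate.

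The remaining higher-order Euclidean terms produce the $O_p(m\log m/n)$ contribution. For the $(\bm\mu,\bm\lambda)$-quadratic block the coordinatewise term has the form $\frac1n\sum_i(Z_{ij}-\bmX_i^T\bm\pi_{j0})[\{\bmV_i^T(\hat{\bm\mu}-\bm\mu_0)\}^2-E(\cdot\mid\bmX_i)]$; I would expand the squared $\ell_2$ norm into a V-statistic, split off its diagonal, bound the quadratic form $\frac1n\sum_i\{\bmV_i^T(\hat{\bm\mu}-\bm\mu_0)\}^2=O_p(m/n)$ by Assumption \ref{assump: moment}, and control the coordinate maximum by a maximal inequality over the $m$ coordinates, which is where the $\log m$ factor enters; the cubic cross terms are of strictly smaller order and are absorbed. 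The genuine obstacle, however, is the functional part: since $g_i$ is affine in $(\omega,\theta)$, replacing them contributes $\frac1n\sum_i(\bmZ_i-\bmX_i^T\bm\pi_0)\{-(\hat\omega(\bmX_i)-\omega_0(\bmX_i))+\beta_0(\hat\theta(\bmX_i)-\theta_0(\bmX_i))\}$, and here $\hat\omega,\hat\theta$ are kernel estimators built on the plug-in estimators $\hat{\bm\mu},\hat{\bm\lambda}$. For this term I would invoke the kernel machinery of \cite{Newey1994_handbook} (in particular the projection/U-statistic decomposition underlying Lemma 8.10) both to obtain the uniform kernel rates and, crucially, to show that the plug-in parameter error propagates through the smoothing only at negligible order; the bandwidth conditions $\sigma^4\sqrt{nm}\to0$ and $\sigma^{d_x}\sqrt n/(\sqrt m\log n)\to\infty$ in Assumption \ref{assump: kernel}(iii) are exactly calibrated so that, after the orthogonality $E(\bmZ-E(\bmZ\mid\bmX)\mid\bmX)=0$ annihilates the leading stochastic part, the squared-bias and variance contributions are $o_p(n^{-1/2})$ even with the extra $\sqrt m$ and $\log n$ factors from the $m$ moment coordinates. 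Combining the $o_p(n^{-1/2})$ first-order and functional pieces with the $O_p(m\log m/n)$ quadratic remainder gives (i); the identical argument for $G_i$ gives (ii); and the reduction above delivers (iii).
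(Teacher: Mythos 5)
Your skeleton matches the paper's proof: (iii) is reduced to (i)--(ii) via linearity of $g_i$ in $\beta$ and compactness of $B$, (ii) is claimed to follow verbatim from (i) because $G_i(\bfeta)$ inherits the orthogonality property, and (i) is split into a Euclidean block (finite polynomial expansion in $\bfeta_p$) plus a kernel block handled with the \cite{Newey1994_handbook} machinery under Assumption~\ref{assump: kernel}(iii); your kernel-block plan (linearization, U/V-statistic projection, and propagation of the plug-in $\hat{\bm\mu},\hat{\bm\lambda}$ through the smoothing) corresponds to the paper's terms $B_1$ and $B_2$. The genuine gap is in the Euclidean block: your claim that the centered first-order term is $o_p(n^{-1/2})$ is false under the lemma's only rate condition $m^2/n\rightarrow0$. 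After centering, $n^{-1}\sum_i\nabla_{\bfeta_p}g_i(\beta_0,\bfeta_0)$ is an $m\times q$ random matrix with $q\asymp m$; centering makes each entry $O_p(n^{-1/2})$, but its \emph{spectral} norm only concentrates at rate $O_p(\sqrt{m\log m/n})$ (this is exactly what the paper gets from Tropp's matrix concentration for its term $A_{21}$), and pairing it with $\|\hat{\bm\mu}-\bm\mu_0\|=O_p(\sqrt{m\log m/n})$ gives $O_p(m\log m/n)$ --- which is \emph{not} $o_p(n^{-1/2})$: e.g.\ $m\asymp\sqrt n/\log n$ satisfies $m^2/n\rightarrow0$ yet makes $m\log m/n\asymp n^{-1/2}$. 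Your proposed rescue via the in-sample prediction error $n^{-1}\sum_i\{\bmV_i^T(\hat{\bm\mu}-\bm\mu_0)\}^2=O_p(m/n)$ cannot deliver the claim: coordinatewise Cauchy--Schwarz gives $O_p(\sqrt{m/n})$ per coordinate, hence $O_p(m/\sqrt n)$ after aggregating the $m$ coordinates (worse, not better), and even the exact least-squares representation $\hat{\bm\mu}-\bm\mu_0=(\sum_k\bmV_k\bmV_k^T)^{-1}\sum_k\bmV_k(A_k-\bmV_k^T\bm\mu_0)$ only yields order $m\sqrt{\log m}/n$; moreover these first-order pieces generically carry a nonvanishing ``diagonal'' bias of order $m/n$, so no further cancellation is available. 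In the paper it is precisely these first-order terms ($A_2$ and $A_3$) that produce the $O_p(m\log m/n)$ in the lemma's statement.

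Symmetrically, your attribution of the $O_p(m\log m/n)$ to the quadratic terms is not substantiated by your sketch. Those terms are not mean-centered, so maximal inequalities over the $m$ coordinates do not apply directly, and the per-coordinate bound $O_p(m/n)$ obtained from the prediction error aggregates to $O_p(m^{3/2}/n)$ over $m$ coordinates, which exceeds $m\log m/n$ and is not $o_p(n^{-1/2})$ under $m^2/n\rightarrow0$. The device that actually closes both steps in the paper is the pairing (``Strong Schwartz'') trick: each term is written as an i.i.d.\ centered random matrix times a \emph{whole} nuisance-error vector, the matrix is bounded in spectral norm by matrix concentration at $O_p(\sqrt{m\log m/n})$, and this is multiplied by $\|\hat{\bm\pi}-\bm\pi_0\|$ or $\|\hat{\bm\mu}-\bm\mu_0\|=O_p(\sqrt{m\log m/n})$; with this pairing the genuinely quadratic remainders ($\mathrm{Rem}_a$) come out of higher order. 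The lemma's conclusion itself is not endangered --- the correct $O_p(m\log m/n)$ bound for the first-order term is covered by the stated rate --- but as written your proof asserts a false intermediate bound, misplaces where the dominant error comes from, and lacks the matrix-norm concentration argument needed to establish the true one.
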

\begin{proof}
(i) Let 
\begin{align*}
	\bm D_i=  \left( 
	\begin{array}{ccc}
		\bmX_i^T &&  \\
		&	\ddots&\\
		&&	 \bmX_i^T  \\
	\end{array}
	\right)\in \mathbb{R}^{m\times (md_x)}, ~\bm \pi_0 = 
	\left( 
	\begin{array}{c}
		\bm\pi_{10}  \\
		\dots\\
		\bm\pi_{m0} \\
	\end{array}
	\right)\in \mathbb{R}^{ md_x},  ~ \text{and} ~   
	\hat{\bm \pi} =  \left( 
	\begin{array}{c}
		\hat{\bm\pi}_1  \\
		\dots\\
		\hat{\bm\pi}_m \\
	\end{array}
	\right)\in \mathbb{R}^{ md_x}.
\end{align*}

Under Assumption \ref{assump: linear},  we can write
\begin{align*}
	g^{IF} (\O; \beta, \bfeta_0)& =\big( \bmZ-\bm D \bm\pi_0 \big)  \bigg[ \underbrace{( A- \bmV^T \bm\mu_0 )\{ Y- \bmV^T \bm{\lambda}_{0} - \beta( A- \bmV^T \bm \mu_0 )  \}}_{\Delta} - \omega_0(\bmX ) +\beta \theta_0(\bmX)\bigg],\\
		g^{IF} (\O; \beta, \hat \bfeta)& =\big( \bmZ-\bm D \hat{\bm\pi}\big) 
	\left[ ( A- \bmV^T \hat{\bm\mu} )\{ Y- \bmV^T \hat{\bm{\lambda}} - \beta( A- \bmV^T \hat{\bm \mu} )  \} - \hat \omega(\bmX; \hat{\bm \mu}, \hat{\bm \lambda} ) +\beta\hat \theta(\bmX; \hat{\bm \mu})\right],
\end{align*}
and an intermediate term
\begin{align*}
	g^{IF} (\O; \beta, \hat \bfeta_{p},  \bfeta_{np	0})& =\big( \bmZ-\bm D \hat{\bm\pi}\big)  \left[ ( A- \bmV^T \hat{\bm\mu} )\{ Y- \bmV^T \hat{\bm{\lambda}} - \beta( A- \bmV^T \hat{\bm \mu} )  \} -  \omega_0(\bmX ) +\beta \theta_0(\bmX)\right].	
\end{align*}
Hence, 
\begin{align}
&	g^{IF} (\O; \beta, \hat\bfeta)  - 	g^{IF} (\O; \beta, \bfeta_0) \nonumber\\
	&\qquad = 	g^{IF} (\O; \beta, \hat\bfeta)  - 		g^{IF} (\O; \beta, \hat \bfeta_{p},  \bfeta_{np0}) + 		g^{IF} (\O; \beta, \hat \bfeta_{p},  \bfeta_{np0})- g^{IF} (\O; \beta, \bfeta_0). \nonumber
\end{align}
Let $  \hat 	g( \beta, \hat \bfeta_{p},  \bfeta_{np0}) = n^{-1}\sum_{i=1}^{n} 	g^{IF} (\O_i; \beta, \hat \bfeta_{p},  \bfeta_{np0}) $.  We will show that when $ m^2/n\rightarrow 0 $, 
\begin{itemize}
	\item[(a)] $ \| \hat 	g( \beta_0, \hat \bfeta_{p},  \bfeta_{np0})  - \hat g(\beta_0, \bfeta_0) \| =  O_p(m\log m/n ),  $
	\item[(b)]  $ \|   \hat g(\beta_0,  \hat \bfeta) - \hat 	g( \beta_0, \hat \bfeta_{p},  \bfeta_{np0})  \|= o_p(n^{-1/2})  + o_p( m\sqrt{\log m}/n)   $. 
\end{itemize}

For part (a), write $ 	 \hat 	g( \beta_0, \hat \bfeta_{p},  \bfeta_{np0})  -\hat g(\beta_0, \bfeta_0) = \nabla_{\bfeta_{p}} \hat g(\beta_0, \bfeta_0) (\hat \bfeta_{p}- \bfeta_0) + \text{Rem}_a   $, where 
\begin{align*}
	\text{Rem}_a&= \frac{1}{n} \sum_{i=1}^{n}  \big( \bmZ_i-\bm D_i \hat{ \bm\pi} \big)  \bmV_i^T(\hat{\bm\mu} - \bm \mu_0 )  \{  \bmV_i^T(\hat{\bm\lambda} - \bm\lambda_0 )  -  \beta_0 \bmV_i^T(\hat{\bm\mu} - \bm \mu_0 ) \}
	\\
	&+ 	\frac{1}{n} \sum_{i=1}^{n} \bm D_i (\hat{ \bm\pi} - \bm\pi_0) (A_i - \bmV_i^T \bm\mu_0 ) \bmV_i^T (\hat{\bm \lambda}- \bm\lambda_0 ) \\
	&-	\frac{1}{n} \sum_{i=1}^{n}\bm D_i (\hat{ \bm\pi} - \bm\pi_0)   (- Y_i + 2\beta A_i + \bmV_i^T \bm\lambda_0 - 2\beta\bmV_i^T \bm\mu_0  ) \bmV_i^T (\hat{\bm\mu}- \bm\mu_0 ),
\end{align*}
and
\begin{align*}
	&  \nabla_{\bfeta_{p}} \hat g(\beta_0, \bfeta_0)(\hat \bfeta_{p}- \bfeta_{p0})  \\
	& = - \frac{1}{n}  \sum_{i=1}^{n}	
	\bm D_i (\hat{\bm \pi} - \bm \pi_0) \Delta_i   \\
	  &\qquad +  \frac{1}{n} \sum_{i=1}^{n}\big( \bmZ_i-\bm D_i \bm\pi_0 \big)   ( - Y_i + 2\beta A_i +\bmV_i^T \bm{\lambda}_{0} - 2\beta \bmV_i^T \bm{\mu}_{0}  )    \bmV_i^T ( \hat{\bm{\mu}} - \bm{\mu}_{0})   \\
	  &\qquad- \frac{1}{n} \sum_{i=1}^{n}\big( \bmZ_i-\bm D_i \bm\pi_0 \big) (A_i - \bmV_i^T \bm{\mu}_{0} ) \bmV_i^T ( \hat{\bm{\lambda}}- \bm{\lambda}_{0}  ) 
	 \\
	  &= - A_1+ A_2-  A_3.
\end{align*}

For $ A_1$, from the Strong Schwartz Matrix Inequality\footnote{Strong Schwartz Matrix Inequality: For any conformable matrices $ A $ and $ B $, $ \|AB\|_F\leq \|A\| \|B\|_F $, where $ \|C\|_F = \sqrt{tr(C^TC)}$ is the Frobenius norm of matrix $ C $. Here, with $ A= \frac{1}{n} \sum_{i=1}^{n}  
	\Delta_i   \bm D_i , B= \hat{\bm \pi} - \bm \pi_0	  $, both $ AB $ and $ B $ are column vectors so their Frobenius norms equal the spectral norms. },
\begin{align*}
	\|A_1 \| &= \left\| \frac{1}{n} \sum_{i=1}^{n}  
	 \Delta_i   \bm D_i (\hat{\bm \pi} - \bm \pi_0)
	 \right\| \\
	 & \leq 
	  \left\| \frac{1}{n} \sum_{i=1}^{n}  
	 \Delta_i   \bm D_i 	 \right\| 
 	\left\|
(\hat{\bm \pi} - \bm \pi_0)
	 \right\|\\
	 &:=  \|A_{11} \| 	\left\|
	 (\hat{\bm \pi} - \bm \pi_0)
	 \right\|.
\end{align*}
Since $ A_{11} A_{11}^T  $ is a diagonal matrix, with diagonal elements all equal to 
$n^{-2} \left( \sum_{i=1}^{n} \Delta_i \bmX_i \right)^T  \left( \sum_{i=1}^{n} \Delta_i \bmX_i \right) $, the spectral norm of $ A_{11} $ is 
 $ \|A_{11}\| =  \sqrt{ n^{-2} \left( \sum_{i=1}^{n} \Delta_i \bmX_i \right)^T  \left( \sum_{i=1}^{n} \Delta_i \bmX_i \right)  }= \|n^{-1}  \sum_{i=1}^{n} \Delta_i \bmX_i   \| = O_p(n^{-1/2})$ because $ E(\Delta_i \bmX_i)= 0 $ and $ d_x<\infty $. 
 
 Next, we analyze $ 		\hat{\bm\pi}_j-\bm \pi_{j0}  $
 for $ j=1,\dots, m $. As $ 	\hat{\bm\pi}_j = \arg\min_{\bm \pi_j } \| \tilde \bmZ_j - \tilde{\bm X} \bm\pi_j \|^2 $, where $  \tilde \bmZ_j = (Z_{1j}, \dots, Z_{nj})^T \in \mathbb{R}^n, \tilde{\bmX}^T = (\bmX_1,\dots,\bmX_n )  \in \mathbb{R}^{ d_x\times n} $, thus 
\begin{align*}
 \|\tilde \bmZ_j - \tilde{\bm X} \hat{\bm\pi}_j \|^2  \leq  \| \tilde \bmZ_j - \tilde{\bm X} {\bm\pi}_{j0}\|^2.
\end{align*} 
Using H\"{o}lder's inequality and Cauchy-Schwartz inequality, we have 
\begin{align}
&\|\tilde\bmX (\hat{\bm \pi}_j -{\bm \pi}_{j0} )\|^2 \nonumber\\
&\leq  2|(\tilde \bmZ_j - \tilde\bmX \bm\pi_{j0})^T \tilde\bmX(\hat{\bm\pi}_j - \bm\pi_{j0}) | \nonumber \\
&\leq  2\| \tilde\bmX^T (\tilde\bmZ_j -\tilde \bmX \bm\pi_{j0})\|_{\infty} \|\hat{\bm\pi}_j - \bm\pi_{j0} \|_1 \nonumber \\
&\leq 2\sqrt{d_x}\| \tilde \bmX^T (\tilde\bmZ_j -\tilde \bmX \bm\pi_{j0})\|_{\infty} \|\hat{\bm\pi}_j - \bm\pi_{j0} \|.  \label{eq: pi}
\end{align} 
On the other hand, with $ \sigma_{\min}(\tilde \bmX) $ being the minimum singular value of $\tilde\bmX $, we have 
 $ \|\tilde \bmX (\hat{\bm \pi}_j -{\bm \pi}_{j0} )\|^2  \geq \sigma^2_{\min} (\tilde\bmX) \| \hat{\bm \pi}_j -{\bm \pi}_{j0} \|^2\geq Cn  \| \hat{\bm \pi}_j -{\bm \pi}_{j0} \|^2$, where the last inequality is because $ \xi_{\min} (n^{-1}\tilde \bmX^T \tilde \bmX) \geq C$  from Assumption \ref{assump: moment}. In addition, from applying Lemma 8 in \cite{chernozhukov2015comparison} and  the boundedness of $ \bmZ $ and $ \bmX $, we have 
 \begin{align*}
E\left[ \max_{j=1,\dots, m}  \| \tilde \bmX^T (\tilde \bmZ_j - \tilde \bmX \bm \pi_{j0} ) \|_{\infty}  \right] \leq C  \sqrt{n \log m}.
 \end{align*}
 Then from Markov inequality, we know that  $ \max_{j=1,\dots, m} \| \tilde \bmX^T (\tilde \bmZ_j - \tilde \bmX \bm \pi_{j0} ) \|_{\infty}= O_p( \sqrt{n \log m} )$.  Combining the above derivations, 
we have for $ j=1,\dots, m $, 
\begin{align*}
	\|\hat{\bm \pi}_j -{\bm \pi}_{j0} \|  \leq \frac{C }{n } \| (\bm Z_j - \bmX \bm \pi_{j0} )^T \bm X \|_{\infty}   \leq  \frac{C }{n } \max_{j=1,\dots, m}  \| (\bm Z_j - \bmX \bm \pi_{j0} )^T \bm X \|_{\infty},
\end{align*}
and thus, 
\begin{align*}
&\|	 \hat{\bm \pi} - \bm \pi_0\|^2= \sum_{j=1}^{m}	\|\hat{\bm \pi}_j -{\bm \pi}_{j0} \|^2  \leq  \frac{Cm}{n^2} \{\max_{j=1,\dots,m } \| (\bm Z_j - \bmX \bm \pi_{j0} )^T \bm X \|_{\infty} \}^2 = O_p\left(  \frac{m\log m}{n}  \right). 
\end{align*}
This concludes the proof of $ \|A_1\| = O_p (\sqrt{m\log m}/n ) $.

For $ A_2 $, again using the Strong Schwartz Matrix Inequality, 
\begin{align*}
	\|A_2\| & \leq \left\|  \frac{1}{n} \sum_{i=1}^{n}\big( \bmZ_i-\bm D_i \bm\pi_0 \big)  \bmV_i^T M_i   \right\|\| \hat{\bm{\mu}} - \bm{\mu}_{0}\| \\
	&:= \|A_{21}\| \| \hat{\bm{\mu}} - \bm{\mu}_{0}\|, 
\end{align*}
where $  M_i =   - Y_i + 2\beta A_i +\bmV_i^T \bm{\lambda}_{0} - 2\beta \bmV_i^T \bm{\mu}_{0}    $. We use \citet[Theorem 1]{tropp2015expected} and Markov inequality to construct a bound for $ \|A_{21}\| $. From Theorem 1 of \cite{tropp2015expected}, we need to calculate the matrix variance parameter $ v= \| E ( A_{21} A_{21}^T )\| $ and the large deviation parameter $ L= \{ E \max_i \|S_i\|^2 \}^{1/2} $, where $  S_i = n^{-1}\big( \bmZ_i-\bm D_i \bm\pi_0 \big)  \bmV_i^T M_i $.

 Note that 
 \begin{align*}
 	v& = n \| E( S_i S_i^T) \|\\
 	&=  n^{-1} \left\|  E\left\{ \big( \bmZ_i-\bm D_i \bm\pi_0 \big)  
 	\big( \bmZ_i-\bm D_i \bm\pi_0 \big)^T M_i^2  \bmV_i^T\bmV_i   \right\}   \right\| \\
 	&\leq C n^{-1} m  \left\|  E\left\{ 
 \big( \bmZ_i-\bm D_i \bm\pi_0 \big)  \big( \bmZ_i-\bm D_i \bm\pi_0 \big)^T M_i^2   \right\}   \right\|   \\
 	& \leq C n^{-1} m
 \end{align*}
 where the third line is from $ \bmV_i $ being bounded, the last line is from Assumption \ref{assump:omega}.  Also note that
\begin{align*}
& \left\| \big( \bmZ_i-\bm D_i \bm\pi_0 \big)  \bmV_i^T \right\|^2  = \xi_{\max} \left\{ \big( \bmZ_i-\bm D_i \bm\pi_0 \big)  \bmV_i^T  \bmV_i   \big( \bmZ_i-\bm D_i \bm\pi_0 \big) ^T \right\}\\
&\leq tr \left\{ \big( \bmZ_i-\bm D_i \bm\pi_0 \big)  \bmV_i^T  \bmV_i    \big( \bmZ_i-\bm D_i \bm\pi_0 \big) ^T \right\}  \leq Cm \big( \bmZ_i-\bm D_i \bm\pi_0 \big) ^T \big( \bmZ_i-\bm D_i \bm\pi_0 \big)  \\
&= Cm \sum_{j=1}^m (Z_{ij } - \bmX_i^T \bm\pi_{j0} )^2\leq Cm^2,
\end{align*}
and the last two inequalities are from $\bmV_i$ being bounded. This implies that 
\begin{align*}
	L^2&= E\max_i \|S_i\|^2 =n^{-2}   E\left \{ \max_i  \left\| \big( \bmZ_i-\bm D_i \bm\pi_0 \big)   \bmV_i^T \right\|^2 M_i^2 \right \} \\
	& \leq C  n^{-2} m^2 E(\max_i M_i^2) \leq C  n^{-2} m^2 E \{\max_i (Y_i^2+A_i^2) \}\leq Cn^{-2} m^2 (\log n)^2,
\end{align*}
where the last inequality is from $E \{\max_i (Y_i^2+A_i^2) \} \leq E \{\max_i (Y_i^2)\} + E\{ \max_i(A_i^2) \}   \leq C (\log n)^2$, which uses \citet[Lemma 14.12]{buhlmann2011statistics} with $n=1$ and $m=2$.

From the above analysis and Theorem 1 of \cite{tropp2015expected}, we know the matrix variance parameter $ v $ is driving the upper bound and 
\begin{align*}
	E \|A_{21} \|^2 \leq C\log (m)  v \leq C \frac{m \log m}{n}. 
\end{align*}
Then, from Markov inequality, we know that  $  \|A_{21}\|^2= O_p( m\log m /n )$.  Using the same argument as the proof of $ \|\hat{\bm\pi}_j -   {\bm\pi}_{j0} \| $  in \eqref{eq: pi}, we can show that $ \|\hat{\bm \mu}- \bm\mu_0 \| = O_p( \sqrt{m\log m/n} )$. Thus,  $\| A_{2} \| =  O_p( m \log m/n ) $.  The last term $ A_3 $ is bounded using the same argument. 
 
 Finally, as the remainder term $ \text{Rem}_a $ consists of higher order terms, we can use the above arguments to show that  $ \text{Rem}_a $  is negligible. This concludes the proof of part (a). 

For part (b), we want to show that   $ \| \hat g(\beta_0,  \hat \bfeta) -   \hat 	g( \beta_0, \hat \bfeta_{p},  \bfeta_{np0})  \|=o_p(n^{-1/2}) + o_p(m\sqrt{\log m} /n)$. Note that 
\begin{align*}
& \hat g(\beta_0,  \hat \bfeta) - \hat 	g( \beta_0, \hat \bfeta_{p},  \bfeta_{np0})  \\
 &= \frac{1}{n} \sum_{i=1}^{n}    \big( \bmZ_i-\bm D_i \bm\pi_0 \big)  \left[ - \{\hat\omega(\bmX_i; \hat{\bm \mu}, \hat{\bm \lambda}) - \omega_0(\bmX_i) \} + \beta_0 \{ \hat\theta(\bmX_i; \hat{\bm\mu})- \theta_0(\bmX_i)\} \right] +\text{Rem}_b\\
 &= \frac{1}{n} \sum_{i=1}^{n}    \big( \bmZ_i-\bm D_i \bm\pi_0 \big)  \left[ - \{\hat \omega(\bmX_i; {\bm \mu}_0, {\bm \lambda}_0 ) - \omega_0(\bmX_i; \bm\mu_0,{\bm \lambda}_0) \} + \beta_0 \{ \hat\theta(\bmX_i; {\bm \mu}_0) -  \theta_0(\bmX_i; {\bm \mu}_0)\} \right]  \\
&\qquad + \frac{1}{n} \sum_{i=1}^{n}   \big( \bmZ_i-\bm D_i \bm\pi_0 \big)  \left[ - \{\hat\omega(\bmX_i; \hat{\bm \mu}, \hat{\bm \lambda}) - \hat\omega(\bmX_i; {\bm \mu}_0, {\bm \lambda}_0) \} + \beta_0 \{ \hat\theta(\bmX_i; \hat{\bm\mu})- \hat\theta(\bmX_i; {\bm \mu}_0)\} \right] \\
 &\qquad+\text{Rem}_b\\
 &:= B_1+B_2+\text{Rem}_b 
\end{align*}

For $ B_1 $, we  closely follow the steps in \citet[Chapter 8]{Newey1994_handbook} to  show that 
\begin{align}
\left\|	\frac1n \sum_{i=1}^n  \big( \bmZ_i-\bm D_i \bm\pi_0 \big)  \{\hat\theta(\bmX_i; \bm \mu_0 )   - \theta_0(\bmX_i; \bm \mu_0 ) \} \right\|= o_p(n^{-1/2} )+ O_p(\sqrt{m}/n). \label{eq: kernel}
\end{align}
The other term in $ B_1 $ can be shown in the same way.  First, we  rewrite the nuisance parameter as $ \hat\theta(\bm x; \bm \mu_0 ) =\hat \gamma_2 (\bmx; \bm \mu_0 ) /\hat\gamma_1(\bmx)  $  and $ \theta_0(\bm x; \bm \mu_0 ) = \gamma_{20} (\bmx; \bm \mu_0 ) /\gamma_{10}(\bmx)  $, where 
 \begin{align*}
 	&\hat\gamma_2(\bmx; \bm \mu_0) =\frac1n \sum_{k=1}^{n}  (A_k - \bmV_k^T \bm\mu_0)^2K_{\sigma} (\bmx - \bmX_k), 	\hat\gamma_1(\bmx ) =\frac1n \sum_{k=1}^{n} K_{\sigma} (\bmx - \bmX_k), \\
 	&\gamma_{20} (\bmx; \bm \mu_0 ) = f_0(\bm x) E\{ (A - \bmV^T \bm\mu_0)^2\mid \bm X= \bmx \}, \gamma_{10}(\bm x) = f_0(\bm x),
  \end{align*}
where $ K_\sigma (\bmx ) = \sigma^{-d_x} K(\bmx/\sigma)$, $ d_x $ is the dimension of $ \bmX $, $ K(u) $ is a function satisfying Assumption \ref{assump: kernel},  $ \sigma $ is a bandwidth term, and $ f_0(\bmx) $ is the density of $ \bmx $. Then, we obtain the linearization (an functional analogue of  Taylor expansion) of $  \big( \bmZ-\bm D \bm\pi_0 \big)  \{\hat\theta(\bmX; \bm \mu_0 )   - \theta_0(\bmX,\bm \mu_0 ) \}$  in \eqref{eq: kernel} as  $ L(\bm O;\hat  \gamma- \gamma_0, \bm \mu_0) = L(\bm O;\hat  \gamma, \bm \mu_0) - L(\bm O; \gamma_0, \bm \mu_0) $,  where
\begin{align*}
 L(\bm O; \gamma, \bm \mu_0)  = (\bm Z - \bm D\bm\pi_0 ) f_0(\bmX)^{-1} \{-  \theta_0(\bmX; \bm \mu_0) \gamma_1(\bmX) + \gamma_2(\bmX; \bm \mu_0) \},
\end{align*}
and  $ \gamma =( \gamma_1(\bmx), \gamma_2(\bmx; \bm\mu_0) )$, $ \gamma_0 =( \gamma_{10}(\bmx), \gamma_{20}(\bmx; \bm\mu_0) )$.  With $ f_0(\bmx)$ and $ \gamma_1(\bmx) $ being bounded away from zero, $ \gamma_{20} (\bmx; \bm \mu_0) $ being bounded,  the remainder term from the linearization  satisfies
\begin{align}
	&\left\| (\bmZ - \bm D \bm \pi_0) \{\hat \gamma_2(\bmX; \bm \mu_0)/ \hat \gamma_1(\bmX) - \theta_0(\bmX; \bm \mu_0)\}-   L(\bm O; \hat \gamma- \gamma_0 , \bm \mu_0)  \right\| \nonumber\\
	&\leq \| \bmZ - \bm D \bm\pi_0 \| \sup_{\bmx\in \text{supp} (\bmX)} \|\hat \gamma(\bmx; \bm \mu_0)- \gamma_0 (\bm x)\|^2 \leq C \sqrt{m} \sup_{\bmx\in \text{supp} (\bmX)} \|\hat \gamma(\bmx; \bm \mu_0)- \gamma_0 (\bm x)\|^2 , \nonumber
\end{align}
where $ \text{supp} (\bmX) $ is the support of $ \bmX $. The above term is  $ o_p(n^{-1/2}) $ from  Lemma 8.10 of \cite{Newey1994_handbook} and  choosing the bandwidth $ \sigma $ to satisfy $ \sigma^4 \sqrt{nm}\rightarrow0 $ and $ \sigma^{d_x} \sqrt{n}/(\sqrt{m} \log n)\rightarrow\infty $ as $ n\rightarrow\infty $  in Assumption \ref{assump: kernel}. This means that 
\begin{align*}
	\frac1n \left\| \sum_{i=1}^{n} (\bmZ_i - \bm D_i \bm \pi_0) \{\hat \gamma_2(\bmX_i; \bm \mu_0)/ \hat \gamma_1(\bmX_i) - \theta_0(\bmX_i; \bm \mu_0)\}-   L(\bm O_i; \hat \gamma- \gamma_0 , \bm \mu_0)  \right\| = o_p(n^{-1/2}). 
\end{align*}

Write $ \frac1n  \sum_{i=1}^{n} L(\bm O_i; \hat\gamma- \gamma_0; \bm\mu_0 )=  \frac1n  \sum_{i=1}^{n} L(\bm O_i; \hat\gamma- \bar \gamma; \bm\mu _0 ) + \frac1n  \sum_{i=1}^{n} L(\bm O_i; \bar\gamma- \gamma_0; \bm\mu_0 )$, where $ \bar \gamma= E(\hat\gamma) $. Next we show
\begin{align*}
	&\frac1n \| \sum_{i=1}^{n} L(\bm O_i; \hat\gamma- \gamma_0; \bm\mu_0 ) \| \leq \frac1n  \|   \sum_{i=1}^{n} L(\bm O_i; \hat\gamma- \bar \gamma; \bm\mu _0 ) \| + \frac1n  \| \sum_{i=1}^{n} L(\bm O_i; \bar\gamma- \gamma_0; \bm\mu_0 ) \|  \\
	&=O_p(\sqrt{m}/n ) .
\end{align*}
 The definition of $ L(\bm O; \gamma, \bm\mu_0) $, boundedness of $ \bmZ, \bmX $, and Assumption \ref{assump: kernel}(ii) 
  give that \\  $ \|L(\bm O; \gamma, \bm\mu_0)  \|\leq C\sqrt{m} \|\gamma \| $,  which implies that $ \|L(\bm O; \bar \gamma- \gamma_0, \bm\mu_0)  \|^2 \leq C m \|\bar \gamma- \gamma \|^2 $. From $ E( L(\bm O; \gamma, \bm\mu_0))  = 0$,  we  have that 
 \begin{align*}
 	E\left( \|  \frac1n  \sum_{i=1}^{n} L(\bm O_i; \bar\gamma- \gamma_0; \bm\mu_0 ) \|^2  \right) =\frac{1}{n^2}  \sum_{i=1}^{n}  E\left\{ \|   L(\bm O_i; \bar\gamma- \gamma_0; \bm\mu_0 ) \|^2  \right\} \leq \frac{Cm \|\bar\gamma - \gamma_0\|^2}{n}. 
 \end{align*}
Hence, $\|  \frac1n  \sum_{i=1}^{n} L(\bm O_i; \bar\gamma- \gamma_0; \bm\mu_0 ) \|   = o_p(n^{-1/2})$  from Markov inequality and  $m \|\bar\gamma- \gamma_0 \|^2= O(\sigma^4m )\rightarrow 0 $, a result from  \citet[Lemma 8.9]{Newey1994_handbook} and the choice of bandwidth $ \sigma $ in Assumption \ref{assump: kernel}(iii).
Then, we deal with $   \frac1n  \sum_{i=1}^{n} L(\bm O_i; \hat\gamma- \bar \gamma; \bm\mu_0 ) $. Let 
\begin{align*}
  &m(\bm O_i , \bm O_k) = (\bmZ_i -\bm D_i \bm\pi_0 ) f_0(\bmX_i)^{-1} \left\{ - \theta_0 (\bmX_i;\bm\mu_0 ) K_\sigma(\bmX_i - \bmX_k)  + (A_k - \bmV_k^T \bm \mu_0  )^2  K_\sigma(\bmX_i - \bmX_k )\right\}, \\
  &m_2(\bm O_k ) = \int m (\bm O_i, \bm O_k) dP_0(\bm O_i)  = 0, \\
  & m_1(\bm O_i ) = \int m (\bm O_i, \bm O_k) dP_0(\bm O_k) = L(\bm O_i; \bar\gamma, \bm \mu_0 ).
\end{align*}
We can write $   \frac1n  \sum_{i=1}^{n} L(\bm O_i; \hat\gamma- \bar \gamma; \bm\mu ) $ in the form of a  V-statistic 
\begin{align*}
	&\left\| \frac1n  \sum_{i=1}^{n} L(\bm O_i; \hat\gamma- \bar \gamma; \bm\mu_0  ) \right\|\\
	&=  \left\|  \frac1n  \sum_{i=1}^{n} L(\bm O_i; \hat\gamma; \bm\mu_0  ) -  \frac1n  \sum_{i=1}^{n} L(\bm O_i;\bar \gamma; \bm\mu_0  ) \right\| \\
		&=\left\|  n^{-2} \sum_{i=1}^{n} \sum_{k=1}^{n} m(\bm O_i, \bm O_j) - n^{-1} \sum_{i=1}^{n} m_1(\bm O_i)  \right\| \\
		&= \left\| n^{-2} \sum_{i=1}^{n} \sum_{k=1}^{n} m(\bm O_i, \bm O_j) - n^{-1} \sum_{i=1}^{n} m_1(\bm O_i)  - n^{-1} \sum_{i=1}^{n} m_2(\bm O_i) + E\{ m_1(\bm O)\}  \right\|\\
		&= O_p(\sqrt{m}/n ) 
\end{align*}
where the last line is from Lemma 8.4 of \cite{Newey1994_handbook}. This concludes the proof of \eqref{eq: kernel} from the triangle inequality. 

For $ B_2 $, we show that 
\begin{align}
	\left\|	\frac1n \sum_{i=1}^n  \big( \bmZ_i-\bm D_i \bm\pi_0 \big)  \{\hat\theta(\bmX_i; \hat{\bm \mu} )   -\hat \theta(\bmX_i; \bm \mu_0 ) \} \right\|= o_p(m \sqrt{\log m}/n)+ O_p(n^{-1} m \sqrt{m\log m/n}). \nonumber
\end{align}
The other term in $ B_2 $ can be shown in the same way.  Let $ \hat \gamma_3(\bmx) = \frac1n\sum_{k=1}^{n} (A_k - \bmV_k^T \bm\mu_0) \bmV_k^T K_\sigma(\bmx- \bmX_k) $ and  $ \hat \gamma_{3j}(\bmx) = \frac1n\sum_{k=1}^{n} (A_k - \bmV_{k}^T \bm\mu_0) \bmV_{kj} K_\sigma(\bmx- \bmX_k) $, where $  \hat \gamma_{3j}(\bmx) $ is the $ j $th component of $  \hat \gamma_3(\bmx) $. Using  triangle inequality and Strong Schwartz Matrix Inequality, we write
\begin{align}
&\left\| \frac1n \sum_{i=1}^n  \big( \bmZ_i-\bm D_i \bm\pi_0 \big)  \{\hat\theta(\bmX_i; \hat{\bm \mu} )   -\hat \theta(\bmX_i; \bm \mu_0 ) \}    \right\|\nonumber\\
&= \left\| - \frac2n \sum_{i=1}^n  \big( \bmZ_i-\bm D_i \bm\pi_0 \big)  \frac{\hat{\gamma}_3(\bmX_i)}{\hat\gamma_1(\bmX_i)} (\hat{\bm\mu}- \bm \mu_0)  + \text{Rem}_c\right\|\nonumber \\ 
&\leq  \left\|  \frac2n \sum_{i=1}^n  \big( \bmZ_i-\bm D_i \bm\pi_0 \big)  \frac{\hat{\gamma}_3(\bmX_i)}{\hat\gamma_1(\bmX_i)} (\hat{\bm\mu}- \bm \mu_0) \right\| +\left\| \text{Rem}_c\right\|\nonumber \\ 
&\leq  \left\|  \frac2n \sum_{i=1}^n  \big( \bmZ_i-\bm D_i \bm\pi_0 \big)  \frac{\hat{\gamma}_3(\bmX_i)}{\hat\gamma_1(\bmX_i)} \right\|\left\|(\hat{\bm\mu}- \bm \mu_0) \right\| +\left\| \text{Rem}_c\right\| \label{eq: B2}
\end{align}
where \begin{align*}
	\text{Rem}_c= (\hat{\bm\mu}- \bm\mu_0 )^T  \left\{\frac1n \sum_{i=1}^{n} (\bmZ_i - \bm D_i \bm\pi_0 )   \frac{\frac1n \sum_{k=1}^{n} \bmV_k \bmV_k^T K_\sigma(\bmX_i - \bmX_k)}{\frac{1}{n} \sum_{k=1}^{n} K_\sigma (\bmX_i - \bmX_k)} \right\} (\hat{\bm\mu}- \bm\mu_0 ). 
\end{align*}
Note that we have shown above that  $ \left\|(\hat{\bm\mu}- \bm \mu_0) \right\| = O_p(\sqrt{m\log m/n} )$. Because $ E\{(A - \bmV^T \bm\mu_0 )\bmV^T \mid \bmX\} = 0$, the true conditional expectation that $\hat\gamma_3(\bmX_i)/\hat\gamma_1(\bmX_i) $ is estimating is zero. We
 can use the same argument as in the proof of \eqref{eq: kernel} to show that 
\begin{align*}
	  &\left\|  \frac2n \sum_{i=1}^n  \big( \bmZ_i-\bm D_i \bm\pi_0 \big)  \frac{\hat{\gamma}_3(\bmX_i)}{\hat\gamma_1(\bmX_i)} \right\|^2\leq   \left\|  \frac2n \sum_{i=1}^n  \big( \bmZ_i-\bm D_i \bm\pi_0 \big)  \frac{\hat{\gamma}_3(\bmX_i)}{\hat\gamma_1(\bmX_i)} \right\|_F^2  \\
	  &= \sum_{k=1}^{m+d_x}    \left\|  \frac2n \sum_{i=1}^n  \big( \bmZ_i-\bm D_i \bm\pi_0 \big)  \frac{\hat{\gamma}_{3k}(\bmX_i)}{\hat\gamma_1(\bmX_i)} \right\|^2  = o_p(m/n)+ O_p(m^2/n^2). 
\end{align*}
Since the remainder term $ \text{Rem}_c $ contains the higher order terms, we conclude that \eqref{eq: B2} is $ o_p(m\sqrt{\log m}/n) + O_p( n^{-1} m\cdot \sqrt{m\log m/n})$.

(ii) Because $ g_i(\beta, \bfeta) $ is linear in $ \beta $, $ G_i(\bfeta) $ inherits its nice property in terms of the nuisance parameters. In other words, we still have 
\[
E\left\{ \nabla_{\bfeta} G(\bfeta_0)\right\} = 0.
\]
This key property ensures that the claim in (ii) is true and can be proved in the same way as (i). The details are omitted.

(iii) Since $\hat g(\beta, \bfeta)$ is linear in $\beta$, we have $ \hat g(\beta, \bfeta)- \hat g(\beta_0, \bfeta)= (\beta- \beta_0) \hat G(\bfeta).$ Hence, 
\begin{align*}
    &\sup_{\beta\in B} \|\hat g(\beta, \hat\bfeta) -\hat g(\beta, \bfeta_0) \| \\
    &\leq   \sup_{\beta\in B} \|\hat g(\beta, \hat\bfeta) - \hat g(\beta_0, \hat\bfeta) -\hat g(\beta, \bfeta_0) +\hat g(\beta_0, \bfeta_0) \| + \|\hat g(\beta_0, \hat\bfeta) -\hat g(\beta_0, \bfeta_0) \| \\
    &= \sup_{\beta\in B} \|(\beta-\beta_0) \{\hat G(\hat \bfeta) -  \hat G( \bfeta_0)\}\| + \|\hat g(\beta_0, \hat\bfeta) -\hat g(\beta_0, \bfeta_0) \| \\
    &= \sup_{\beta\in B} |\beta-\beta_0| \| \hat G(\hat \bfeta) -  \hat G( \bfeta_0)\| + \|\hat g(\beta_0, \hat\bfeta) -\hat g(\beta_0, \bfeta_0) \|. 
\end{align*}
The result follows from (i)-(ii) and the compactness of $B$.

\end{proof}

A direct implication of the following result (i) is that when $ m^2/n\rightarrow 0 $, $  \|\hat\Omega(\beta_0,\hat \bfeta)- \Omega \| = o_p(1)$; when $ m^3/n\rightarrow 0 $, $  \sqrt{m}\|\hat\Omega(\beta_0,\hat \bfeta)- \Omega \| = o_p(1)$. The implication is similar for (ii)-(v).

{Next, under an extra assumption that $E(R_A R_Y\mid \bmX = \bmx ) = \omega_0^T \bm Q $ and $E(R_A^2 \mid \bmX = \bmx ) = \theta_0^T \bm Q $, where $\bm Q$ is a vector that includes all quadratic terms of $\bmX$,  i.e., they  follow linear models that include a full set of quadratic terms, we can then estimate these nuisance parameters using least squares (denoted as $\hat \omega$ and $\hat\theta$) instead of nonparametric kernel. We prove the following result:

{\bf Lemma S5':}
Suppose that Assumptions \ref{assump: linear}-\ref{assump:omega} hold. Also suppose that  $E(R_A R_Y\mid \bmX = \bmx ) = \omega_0^T \bm Q $ and $E(R_A^2 \mid \bmX = \bmx ) = \theta_0^T \bm Q $, where $\bm Q$ is a vector that includes all quadratic terms of $\bmX$. When $ m^2/n\rightarrow 0 $, \\
(i)  $\|\hat g(\beta_0, \hat\bfeta) -\hat g(\beta_0, \bfeta_0) \|=O_p(m\log m/n)$;\\
(ii) $\| \hat G(\hat \bfeta) -  \hat G( \bfeta_0)\|=  O_p(m\log m/n)$;\\
(iii) $\sup_{\beta\in B} \|\hat g(\beta, \hat\bfeta) -\hat g(\beta, \bfeta_0) \|=  O_p(m\log m/n)$.

\begin{proof}
(i)	It suffices to prove part (b) in the proof of Lemma \ref{lemma: g beta, nuisance} (i). The rest are the same as the proof of Lemma \ref{lemma: g beta, nuisance}. Write
	\begin{align*}
		& \hat g(\beta_0,  \hat \bfeta) - \hat 	g( \beta_0, \hat \bfeta_{p},  \bfeta_{np0})  \\
		&= \frac{1}{n} \sum_{i=1}^{n}    \big( \bmZ_i-\bm D_i \bm\pi_0 \big)  \left[ - (\hat\omega - \omega_0)^T \bm Q_i + \beta_0 (\hat\theta -\theta_0 )^T  \bm Q_i  \right] +\text{Rem}_b
	\end{align*}  
Note that $\bmQ$ has a finite dimension, and thus 
\begin{align*}
	\hat \omega - \omega_0 & = \var (\bmQ)^{-1} \frac1n \left\{\sum_{i=1}^n (\bm Q_i - E \bmQ) (A_i - \bmV_i^T \hat{\bm\mu}) (Y_i - \bmV_i^T \hat{\bm\lambda}) -  \cov (\bmQ, R_AR_Y ) \right\} + \text{Rem}_c \\
	&= \var (\bmQ)^{-1} \frac1n \left\{\sum_{i=1}^n (\bm Q_i - E \bmQ) R_{Ai} R_{Yi}-  \cov (\bmQ, R_AR_Y ) \right\}  \\
	&\quad - \var (\bmQ)^{-1} \frac1n \left\{\sum_{i=1}^n R_{Y_i} (\bm Q_i - E \bmQ) \bmV_i^T (\hat{\bm\mu}- \bm\mu_0) \right\}  \\
	&\quad - \var (\bmQ)^{-1} \frac1n \left\{\sum_{i=1}^n R_{A_i} (\bm Q_i - E \bmQ) \bmV_i^T (\hat{\bm\lambda}- \bm\lambda_0) \right\}+    \text{Rem}_c \\
	&:= B_1 - B_2 - B_3 + \text{Rem}_c  .
\end{align*}
Here, $B_1 = O_p(n^{-1/2})$ by finite-dimensional linear model theory. For the second term, using the Strong Schwartz Matrix Inequality, 
\begin{align*}
	B_2 & = \var (\bmQ)^{-1} \frac1n \left\{\sum_{i=1}^n R_{Y_i} (\bm Q_i - E \bmQ) \bmV_i^T  \right\} (\hat{\bm\mu}- \bm\mu_0) \\
	& \leq C \| \frac1n \sum_{i=1}^n R_{Y_i} (\bm Q_i - E \bmQ) \bmV_i^T \| \|\hat{\bm\mu}- \bm\mu_0 \| .
\end{align*} 
Similar to the proof of Lemma \ref{lemma: g beta, nuisance}, we use  \citet[Theorem 1]{tropp2015expected}. We calculate the matrix variance parameter 
\begin{align*}
	v= n^{-1} \| E (  R_{Y_i}^2 (\bm Q_i - E \bmQ) \bmV_i^T  \bmV_i  (\bm Q_i - E \bmQ) ^T ) \| \leq Cn^{-1} m .
\end{align*}
Also note that 
\begin{align*}
	& \left\|  (\bm Q_i - E \bmQ) \bmV_i^T  \right\|^2  = \xi_{\max} \left\{  (\bm Q_i - E \bmQ) \bmV_i^T   \bmV_i (\bm Q_i - E \bmQ)^T \right\}\\
	&\leq tr \left\{ (\bm Q_i - E \bmQ) \bmV_i^T   \bmV_i (\bm Q_i - E \bmQ)^T \right\}  \leq Cm (\bm Q_i - E \bmQ)  (\bm Q_i - E \bmQ)^T \leq  Cm .
\end{align*}
This implies that 
\begin{align*}
	L^2 & = n^{-2} E \{ \max_i \|   (\bm Q_i - E \bmQ) \bmV_i^T \|^2 R_{Y_i}^2\} \leq C n^{-2}  m E (\max_i R_{Yi}^2 ) \\
	&\leq C  n^{-2}  m  E (\max_i Y_i^2 ) \leq C  n^{-2}  m (\log n )^2
\end{align*}

From the above analysis and Theorem 1 of \cite{tropp2015expected}, we know the matrix variance parameter $\nu$ is driving the upper bound and 
\begin{align*}
	E \| \frac1n \sum_{i=1}^n R_{Y_i} (\bm Q_i - E \bmQ) \bmV_i^T \|^2 \leq  C \log (m) \nu \leq C  \frac{m \log m}{n}. 
\end{align*}
From Markov inequality, we know that $ \| \frac1n \sum_{i=1}^n R_{Y_i} (\bm Q_i - E \bmQ) \bmV_i^T \|^2 = O_p( m\log m/n)$. We showed in the proof of Lemma \ref{lemma: g beta, nuisance}, $ \|\hat{\bm\mu}- \bm\mu_0 \|= O_p(\sqrt{m\log m /n})$. Thus, $\|B_2\| = O_p( m\log m/n)$.  The last term $B_3$ is bounded using the same argument. As the remainder term consists of higher order terms and can be shown to be negligible.  Therefore, we have  
 \[
 	\hat \omega - \omega_0 = O_p(n^{-1/2}) + O_p( m\log m/n)
 \]

Then, using a similar argument as that for $A_2$ in  Lemma \ref{lemma: g beta, nuisance}, we have 
\begin{align*}
 & \|	\frac{1}{n} \sum_{i=1}^{n}    \big( \bmZ_i-\bm D_i \bm\pi_0 \big)    \bm Q_i^T (\hat\omega - \omega_0)  \|\\
 &\leq  \| 	\frac{1}{n} \sum_{i=1}^{n}    \big( \bmZ_i-\bm D_i \bm\pi_0 \big)    \bm Q_i^T \| \|\hat\omega - \omega_0 \| \\
 &= O_p(\sqrt{m \log m /n})  \{ O_p(n^{-1/2}) + O_p( m\log m/n) \} \\
 &= O_p (\sqrt{m\log m }/n ) + O_p ((m\log m/n)^{3/2}).  
\end{align*}                                                                                                    

Thus, we see that part (b) is negligible compared to part (a), so  $\|\hat g(\beta_0, \hat\bfeta) -\hat g(\beta_0, \bfeta_0) \|$  is asymptotically equivalent with part (a).                         

The proof of (ii) and (iii) are the same as that of Lemma \ref{lemma: g beta, nuisance}.

\end{proof}
}

\begin{lemma} \label{lemma: Omega eta hat}
	  Under Assumptions    \ref{assump: linear}-\ref{assump:omega},   \\
	  (i) $ \|\hat\Omega(\beta_0,\hat \bfeta)- \Omega \| =O_p(\sqrt{m\log m/n}) + O_p(\sqrt{m^3(\log m)^2/n^2})$; \\
	  (ii) $   \|n^{-1} \sum_{i=1}^{n}g_i(\beta_0,\hat \bfeta) G_i(\hat\bfeta)^T - E\{g_i G_i^T\} \|=O_p(\sqrt{m\log m/n}) + O_p(\sqrt{m^3(\log m)^2/n^2})$; \\
	  	  (iii) $    \|n^{-1} \sum_{i=1}^{n}G_i(\hat \bfeta) G_i(\hat\bfeta)^T - E\{G_i G_i^T\} \| =O_p(\sqrt{m\log m/n}) + O_p(\sqrt{m^3(\log m)^2/n^2})$; \\ 
	  	  (iv) $ 	\sup_{\beta\in B} \|\hat\Omega(\beta,\hat \bfeta)- \Omega(\beta, \bfeta_0)\| =O_p(\sqrt{m\log m/n}) + O_p(\sqrt{m^3(\log m)^2/n^2})$; \\
	  	 (v) $ 	\sup_{\beta\in B}  \|n^{-1} \sum_{i=1}^{n}g_i(\beta,\hat \bfeta) G_i(\hat\bfeta)^T - E\{g_i(\beta,\bfeta_0) G_i^T\} \| =O_p(\sqrt{m\log m/n}) + O_p(\sqrt{m^3(\log m)^2/n^2})$; \\
\end{lemma}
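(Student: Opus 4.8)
The plan is to establish all five claims through a single two-step decomposition and to reduce (ii)--(v) to the argument for (i). For (i), I would write
\[
\hat\Omega(\beta_0,\hat\bfeta) - \Omega = \big\{\hat\Omega(\beta_0,\hat\bfeta) - \hat\Omega(\beta_0,\bfeta_0)\big\} + \big\{\hat\Omega(\beta_0,\bfeta_0) - \Omega\big\},
\]
and bound the two bracketed terms separately in spectral norm. The second bracket is a pure sample-to-population fluctuation at the true nuisance value, while the first bracket isolates the effect of estimating the nuisance parameters. The advertised rate $O_p(\sqrt{m\log m/n}) + O_p(\sqrt{m^3(\log m)^2/n^2})$ emerges with the first summand controlling the second bracket and the second summand controlling the first bracket.

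For the sample-to-population bracket $\hat\Omega(\beta_0,\bfeta_0)-\Omega = n^{-1}\sum_i\{g_ig_i^T - E(g_ig_i^T)\}$, I would apply a matrix concentration inequality for sums of independent mean-zero symmetric matrices, namely the expected-spectral-norm bound of \cite{tropp2015expected} already used in the proof of Lemma \ref{lemma: g beta, nuisance}. The matrix variance parameter and the large-deviation parameter are both governed by $E[(g_i^Tg_i)^2]\le Cm^2$ from Lemma \ref{lemma: sup g square} together with the boundedness of $\bmZ,\bmX$; these feed the $\log m$ factor in the inequality and, via Markov's inequality, yield the $O_p(\sqrt{m\log m/n})$ contribution.

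For the nuisance bracket I would expand $g_i(\beta_0,\hat\bfeta)g_i(\beta_0,\hat\bfeta)^T - g_ig_i^T$ as $\{g_i(\hat\bfeta)-g_i\}g_i^T + g_i\{g_i(\hat\bfeta)-g_i\}^T + \{g_i(\hat\bfeta)-g_i\}\{g_i(\hat\bfeta)-g_i\}^T$ and reuse the linearization of $g_i(\hat\bfeta)-g_i$ in the nuisance errors developed in the proof of Lemma \ref{lemma: g beta, nuisance}. Each leading term is a sum over $i$ of a data-dependent coefficient matrix contracted against the finite-dimensional errors $\hat{\bm\pi}-\bm\pi_0,\hat{\bm\mu}-\bm\mu_0,\hat{\bm\lambda}-\bm\lambda_0$ (all $O_p(\sqrt{m\log m/n})$) and the kernel-estimator errors controlled under Assumption \ref{assump: kernel}. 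Rather than a blunt Cauchy--Schwarz pairing, I would again invoke \cite{tropp2015expected} and the maximal inequality of \cite{chernozhukov2015comparison} to bound the spectral norms of these coefficient matrices, so that multiplying by the estimation rates produces the $O_p(\sqrt{m^3(\log m)^2/n^2}) = O_p(m^{3/2}\log m/n)$ term; the quadratic remainder is of strictly higher order under $m^2/n\to0$.

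Statements (ii)--(iii) are structurally identical, with one or both copies of $g_i$ replaced by $G_i$; they use the companion linearization $G_i(\hat\bfeta)-G_i$ from Lemma \ref{lemma: g beta, nuisance}(ii), the finiteness of $E[\|G_i\|^4]$ from Assumption \ref{assump: 4th moment}, and the property $E\{\nabla_{\bfeta} G_i(\bfeta_0)\}=0$ noted earlier, which keeps the leading nuisance term mean-zero. For the uniform claims (iv)--(v), linearity of $g_i$ in $\beta$ gives $g_i(\beta,\bfeta)=g_i(\beta_0,\bfeta)+(\beta-\beta_0)G_i(\bfeta)$, so the $\beta$-dependence is affine and the supremum over the compact set $B$ reduces to the pointwise bounds (i)--(iii) through Lemma \ref{lemma: Omega multiply}. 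The main obstacle is the nuisance bracket: one must control the \emph{spectral} norm, not merely the Frobenius norm, of the $m\times m$ coefficient matrices sharply enough to obtain $m^{3/2}\log m/n$ rather than the looser $\sqrt{m^3\log m/n}$ that a direct Cauchy--Schwarz argument would yield, which is precisely what forces the conditional matrix-concentration step and careful bookkeeping of the $\log m$ factors coming from both the kernel rates and the maxima over the $m$ coordinates.
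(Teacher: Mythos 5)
Your proposal is correct and matches the paper's proof in all essentials: the same two-bracket decomposition, the expected-spectral-norm matrix concentration inequality of \cite{tropp2015expected} for the sample-to-population term $\hat\Omega(\beta_0,\bfeta_0)-\Omega$, reuse of the linearizations from Lemma \ref{lemma: g beta, nuisance} for the nuisance bracket (including the property $E\{\nabla_{\bfeta}G_i(\bfeta_0)\}=0$ that drives parts (ii)--(iii)), and reduction of (iv)--(v) to the pointwise bounds via linearity of $g_i(\beta,\bfeta)$ in $\beta$ and compactness of $B$. The one inaccuracy is your closing claim that one \emph{must} control the spectral rather than the Frobenius norm of the nuisance-bracket coefficient matrices: the paper's proof proceeds exactly through the Frobenius norm, via the vectorization step $\|I_1\|\le\|I_1\|_F=\|\mathrm{vec}(I_1)\|$, and still attains $O_p(m^{3/2}\log m/n)$, because the Frobenius norm is computed sharply blockwise --- Tropp's inequality is applied to each of the $m$ coordinate blocks (e.g., the $S_{ij}$) and the resulting squared-norm bounds summed over $j$, together with the Strong Schwartz Matrix Inequality and the rates $\|\hat{\bm\pi}-\bm\pi_0\|^2=O_p(m\log m/n)$ --- rather than through the lossy relaxation $\|\cdot\|_F\le\sqrt{m}\|\cdot\|$ you were guarding against; your proposed spectral-norm route would also suffice, but it is not forced by the target rate.
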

\begin{proof}
	(i)
Define $ \Delta_{i0}= ( A_i- \bmV_i^T \bm\mu_0 )\{ Y_i- \bmV_i^T \bm{\lambda}_{0} - \beta_0( A_i- \bmV_i^T \bm \mu_0 ) \}$. We will also write $ \hat \Delta_{i0}= ( A_i- \bmV_i^T \hat{\bm\mu} )\{ Y_i- \bmV_i^T \hat{\bm{\lambda}} - \beta_0( A_i- \bmV_i^T \hat{\bm \mu} ) \}$. Then 
\begin{align*}
&\hat	\Omega (\beta_0,\hat\bfeta) = \frac{1}{n} \sum_{i=1}^{n} g_i(\beta_0,\hat\bfeta) g_i(\beta_0,\hat\bfeta)^T  =   \frac{1}{n} \sum_{i=1}^{n} (\bmZ_i - \bm D_i \hat{\bm\pi}) (\bmZ_i - \bm D_i \hat{\bm\pi})^T \{\hat\Delta_{i0} - \hat w(\bmX_i) + \beta_0 \hat \theta(\bmX_i)\}^2.
\end{align*} 
By triangle inequality, 
	\begin{align*}
		 &\|\hat\Omega(\beta_0,\hat \bfeta)- \Omega \|\\
		 &= \|\hat\Omega(\beta_0,\hat \bfeta)  - \hat \Omega(\beta_0, \bfeta_0) + \hat \Omega(\beta_0, \bfeta_0) - \Omega \|\\
		 &\leq  \|\hat\Omega(\beta_0,\hat \bfeta)  - \hat \Omega(\beta_0, \bfeta_0) \| +\|	 \hat \Omega(\beta_0, \bfeta_0) - \Omega \| \\
		 & \leq \left\|  \frac{1}{n} \sum_{i=1}^{n} (\bmZ_i - \bm D_i \hat{\bm\pi}) (\bmZ_i - \bm D_i \hat{\bm\pi})^T \hat\Delta_{i0}^2-(\bmZ_i - \bm D_i {\bm\pi}_0) (\bmZ_i - \bm D_i {\bm\pi}_0)^T \Delta_{i0}^2  \right\|\\
		 & \qquad+  \left\|  \frac{1}{n} \sum_{i=1}^{n} (\bmZ_i - \bm D_i \hat{\bm\pi}) (\bmZ_i - \bm D_i \hat{\bm\pi})^T \hat\Delta_{i0}(\hat\omega(\bmX_i)- \beta_0\hat\theta(\bmX_i)) \right.\\
		 & \qquad\qquad\qquad\qquad \left.-(\bmZ_i - \bm D_i {\bm\pi}_0) (\bmZ_i - \bm D_i {\bm\pi}_0)^T \Delta_{i0}(\omega_0(\bmX_i)- \beta_0\theta_0(\bmX_i)) \right\|\\
		 &\qquad + \left\|  \frac{1}{n} \sum_{i=1}^{n} (\bmZ_i - \bm D_i \hat{\bm\pi}) (\bmZ_i - \bm D_i \hat{\bm\pi})^T(\hat\omega(\bmX_i)- \beta_0\hat\theta(\bmX_i))^2  \right.\\
		 &\qquad\qquad\qquad\qquad \left.-(\bmZ_i - \bm D_i {\bm\pi}_0) (\bmZ_i - \bm D_i {\bm\pi}_0)^T (\omega_0(\bmX_i)- \beta_0\theta_0(\bmX_i))^2  \right\|\\
		 &\qquad +\|	 \hat \Omega(\beta_0, \bfeta_0) - \Omega \| \\
		 &:= \|I_1\|+\|I_2\|+\|I_3\|+\|I_4\|.
	\end{align*}

A key technique we use here is to vectorize the matrix, then a proof very similar to the proof of Lemma \ref{lemma: g beta, nuisance} will establish the result. Specifically, since $ \|I_1\|\leq \|I_1\|_F= \|\text{vec} (I_1)\| $, where $ \text{vec}(A)= (a_1^T, \dots, a_m^T)^T\in \mathbb{R}^{mn}  $ for a matrix $ A= (a_1, \dots, a_m) \in\mathbb{R}^{n\times m}$.  

Let $ \otimes $ be the Kronecker product. For $  \text{vec} (I_1) $, 
	\begin{align}
	 & \text{vec} (I_1)  \nonumber \\
		& = \nabla_{\bfeta_p}  \left[ \frac{1}{n} \sum_{i=1}^{n}  (\bmZ_i - \bm D_i {\bm\pi}_0) \otimes  (\bmZ_i - \bm D_i {\bm\pi}_0) \Delta_i^2  \right]  (\hat{\bfeta}_p- \bfeta_{p0}) + \text{Rem}_d \nonumber\\
		&= - \frac1n \sum_{i=1}^{n} \Delta_i^2 (\bmZ_i - \bm D_i \bm\pi_0 ) \otimes  \bm D_i (\hat{\bm\pi}- \bm\pi_0 )  - \frac1n \sum_{i=1}^{n} \Delta_i^2  \{ \bm D_i (\hat{\bm\pi}- \bm\pi_0 )\}\otimes (\bmZ_i - \bm D_i \bm\pi_0 ) \nonumber\\
		&\qquad + \frac{1}{n} \sum_{i=1}^{n}  (\bmZ_i - \bm D_i {\bm\pi}_0) \otimes  (\bmZ_i - \bm D_i {\bm\pi}_0) 2\Delta_i (A_i - \bmV_i^T \bm\mu_0) \bmV_i^T (\hat{\bm \lambda}- \bm\lambda_0) \nonumber \\
		&\qquad + \frac{1}{n} \sum_{i=1}^{n}  (\bmZ_i - \bm D_i {\bm\pi}_0) \otimes  (\bmZ_i - \bm D_i {\bm\pi}_0) 2\Delta_i (- Y_i + 2\beta_0 A_i + \bmV_i^T \bm\lambda_0 - 2\beta_0 \bmV_i^T\bm\mu_0) \bmV_i^T (\hat{\bm \mu}- \bm\mu_0)\nonumber\\
		&\qquad + \text{Rem}_d.  \label{eq: vec I1}
	\end{align}
For the above first term, from the Strong Schwartz Matrix Inequality, 
\begin{align*}
	& \left\| \frac1n \sum_{i=1}^{n} \Delta_i^2 (\bmZ_i - \bm D_i \bm\pi_0 ) \otimes  \bm D_i (\hat{\bm\pi}- \bm\pi_0 )  \right\|^2 \nonumber\\
	&= \sum_{j=1}^{m}   	\left\| \frac1n \sum_{i=1}^{n} \Delta_i^2 (\bmZ_i - \bm D_i \bm\pi_0 )   \bm X_i^T (\hat{\bm\pi}_j- \bm\pi_{j0} )  \right\|^2  \\
	&\leq 	\left\| \frac1n \sum_{i=1}^{n} \Delta_i^2 (\bmZ_i - \bm D_i \bm\pi_0 )   \bm X_i^T   \right\|^2  \sum_{j=1}^{m}     \left\| \hat{\bm\pi}_j- \bm\pi_{j0}  \right\|^2 \\
	&= 	\left\| \frac1n \sum_{i=1}^{n} \Delta_i^2 (\bmZ_i - \bm D_i \bm\pi_0 )   \bm X_i^T   \right\|^2    \left\| \hat{\bm\pi}- \bm\pi_{0}  \right\|^2 \\
	&= O_p(m^2(\log m)^2/n^2),
\end{align*} 
where the last expression is obtained using the same argument as in the derivation  of $ A_{21} $ in the proof of Lemma \ref{lemma: g beta, nuisance}, and $ \| \hat{\bm\pi} - \bm\pi_0 \|^2  = O_p (m\log m/n ) $ derived after \eqref{eq: pi}. The second term in \eqref{eq: vec I1} has the same norm as the first term in \eqref{eq: vec I1}.  For the third term in \eqref{eq: vec I1}, we have 
\begin{align*}
	&\left\| \frac{1}{n} \sum_{i=1}^{n}  (\bmZ_i - \bm D_i {\bm\pi}_0) \otimes  (\bmZ_i - \bm D_i {\bm\pi}_0) 2\Delta_i (A_i - \bmV_i^T \bm\mu_0) \bmV_i^T (\hat{\bm \lambda}- \bm\lambda_0) \right\|^2 \\
	&= \sum_{j=1}^{m} \left\| \frac{1}{n} \sum_{i=1}^{n}  (\bmZ_i - \bm D_i {\bm\pi}_0)   (Z_{ij} - \bm X_i^T  {\bm\pi}_{j0}) 2\Delta_i (A_i - \bmV_i^T \bm\mu_0) \bmV_i^T (\hat{\bm \lambda}- \bm\lambda_0) \right\|^2 \\
	&\leq  \sum_{j=1}^{m} \left\| \frac{1}{n} \sum_{i=1}^{n}  (\bmZ_i - \bm D_i {\bm\pi}_0)   (Z_{ij} - \bm X_i^T  {\bm\pi}_{j0}) 2\Delta_i (A_i - \bmV_i^T \bm\mu_0) \bmV_i^T \right\|^2 \| \hat{\bm \lambda}- \bm\lambda_0\|^2 \\
	& =  \sum_{j=1}^{m} \left\| \sum_{i=1}^{n}  S_{ij} \right\|^2 \| \hat{\bm \lambda}- \bm\lambda_0\|^2 ,
\end{align*}
where $ S_{ij} = \frac1n (\bmZ_i - \bm D_i {\bm\pi}_0)   (Z_{ij} - \bm X_i^T  {\bm\pi}_{j0}) 2\Delta_i (A_i - \bmV_i^T \bm\mu_0) \bmV_i^T  $. Then, we calculate the matrix variance parameter 
\begin{align*}
v_j &= n \|E(S_{ij} S_{ij}^T )\| \\
&= n^{-1}  \left\|E[(\bmZ_i - \bm D_i {\bm\pi}_0)  (\bmZ_i - \bm D_i {\bm\pi}_0)^T  \{ (Z_{ij} - \bm X_i^T  {\bm\pi}_{j0}) 2\Delta_i (A_i - \bmV_i^T \bm\mu_0)\}^2  \bmV_i^T \bmV_i  ] \right\|\\
&\leq C n^{-1} m   \left\|E[(\bmZ_i - \bm D_i {\bm\pi}_0)  (\bmZ_i - \bm D_i {\bm\pi}_0)^T  \{  \Delta_i (A_i - \bmV_i^T \bm\mu_0)\}^2   ] \right\|\\
& \leq Cn^{-1} m  
\end{align*}
and  the large deviation parameter 
\begin{align*}
	L_j^2  &= E\left( \max_i \|S_{ij }\|^2\right) \leq Cn^{-2} m^2 E\left\{ \max_i (\Delta_i^2 (A_i - \bmV_i^T \bm\mu_0)^2)\right\} \leq Cn^{-2} m^2 (\log n )^8,
\end{align*}
where the last inequality is from Assumption \ref{assump: moment} and \cite{buhlmann2011statistics}.  Hence, we have from \citet[Theorem 1]{tropp2015expected} that 
\begin{align*}
	E\left( \sum_{j=1}^{m} \left\| \sum_{i=1}^{n}  S_{ij} \right\|^2\right) \leq \frac{m^2\log m}{n}.
\end{align*}
This combined with Markov inequality, we have that $ \sum_{j=1}^{m} \left\| \sum_{i=1}^{n}  S_{ij} \right\|^2= O_p (m^2\log m/n) $. This implies the third term in \eqref{eq: vec I1} is
\begin{align*}
	&\left\| \frac{1}{n} \sum_{i=1}^{n}  (\bmZ_i - \bm D_i {\bm\pi}_0) \otimes  (\bmZ_i - \bm D_i {\bm\pi}_0) 2\Delta_i (A_i - \bmV_i^T \bm\mu_0) \bmV_i^T (\hat{\bm \lambda}- \bm\lambda_0) \right\|^2  = O_p\left(\frac{m^3(\log m)^2}{n^2}\right)
\end{align*}
The other proofs are very similar to the derivations in the proof of Lemma \ref{lemma: g beta, nuisance} and are omitted. Hence, we have shown that $ \|I_1\| = O_p(m^{3/2}\log m/n) $.

We similarly vectorize  $ I_3 $ and analyze $ {\rm vec}(I_3) $, following the steps in the proof of Lemma \ref{lemma: g beta, nuisance}. First, we show that 
	\begin{align}
		\left\|	\frac1n \sum_{i=1}^n  \big( \bmZ_i-\bm D_i \bm\pi_0 \big) \otimes\big( \bmZ_i-\bm D_i \bm\pi_0 \big)  \{\hat\theta^2(\bmX_i; \bm \mu )   - \theta_0^2(\bmX_i; \bm \mu ) \} \right\|  =  o_p(\sqrt{m/n}) \label{eq: Omega, kernel}.
	\end{align}
Following the notations in \eqref{eq: kernel}. We obtain the linearization of $		\frac1n \sum_{i=1}^{n}  \big( \bmZ_i-\bm D_i \bm\pi_0 \big) \otimes\big( \bmZ_i-\bm D_i \bm\pi_0 \big)  \{\hat\theta^2(\bmX_i; \bm \mu )   - \theta_0^2(\bmX_i; \bm \mu ) \} $ as $ \frac1n \sum_{i=1}^{n} L(\bm O_i; \hat\gamma- \gamma_0, \bm \mu) = \frac1n \sum_{i=1}^{n} \{ L(\bm O_i; \hat\gamma, \bm \mu) -  L(\bm O_i; \gamma_0, \bm \mu) \} $, where 
\begin{align*}
	 L(\bm O; \gamma, \bm \mu) =   \big( \bmZ-\bm D \bm\pi_0 \big) \otimes\big( \bmZ-\bm D \bm\pi_0 \big) 2\theta_0 (\bmX; \bm\mu) f_0(\bmX)^{-1} \{-\theta_0(\bmX; \bm\mu) \gamma_1(\bmX) + \gamma_2(\bmX; \bm\mu)\}, 
\end{align*}
and the remainder term from the linearlization is $ o_p(\sqrt{m/n}) $. Then, we have $ \|\frac{1}{n}\sum_{i=1}^{n}L(\bm O_i;\hat\gamma- \gamma_0; \bm\mu ) \|=  O_p(m/n)$. Thus, we have that \eqref{eq: Omega, kernel} is of order $ o_p(\sqrt{m/n}) $. We can also show that $ 	\left\|	\frac1n \sum_{i=1}^n  \big( \bmZ_i-\bm D_i \bm\pi_0 \big) \otimes\big( \bmZ_i-\bm D_i \bm\pi_0 \big)  \{\hat\theta^2(\bmX_i; \hat {\bm \mu} )   -\hat \theta^2(\bmX_i; \bm \mu_0 ) \} \right\|  =  o_p(\sqrt{m^3\log m}/n) $. The other terms in $ I_3 $ can be shown similarly. Therefore, $\|I_3\|\leq  \|I_3\|_F =  \|{\rm vec} (I_3)\| = o_p(\sqrt{m/n})  +  o_p(\sqrt{m^3\log m}/n)$.

Next, notice that $\| I_2\| $ can be bounded by 
\begin{align*}
 &  \bigg\|  \frac{1}{n} \sum_{i=1}^{n}\left\{ (\bmZ_i - \bm D_i \hat{\bm\pi}) (\bmZ_i - \bm D_i \hat{\bm\pi})^T \hat\Delta_i -(\bmZ_i - \bm D_i {\bm\pi}_0 ) (\bmZ_i - \bm D_i {\bm\pi_0})^T \Delta_i \right\} (\omega_0(\bmX_i)- \beta_0\theta_0(\bmX_i))\bigg\|\\
	  + &\bigg\|  \frac{1}{n} \sum_{i=1}^{n}(\bmZ_i - \bm D_i {\bm\pi}_0 ) (\bmZ_i - \bm D_i {\bm\pi_0})^T \Delta_i (\hat\omega(\bmX_i)- \beta_0\hat\theta(\bmX_i) - \omega_0(\bmX_i) + \beta_0 \theta_0(\bmX_i) )\bigg\|\\
	  +& \|\text{Rem}_I\|
\end{align*}  
where $ \text{Rem}_I $ is a higher order term, the above first term can be bounded in the same way as $ I_1 $, the above second term can be bounded in the same way as $ I_3 $.

	Finally, we again use the matrix concentration inequality \citet[Theorem 1]{tropp2015expected}  to show that $ \|I_4\|= 	\|\hat \Omega(\beta_0, \bfeta_0)-  \Omega(\beta_0, \bfeta_0)\| =O_p(\sqrt{m\log m/n}) $. We can show that the matrix variance parameter $ v \leq Cm/n$ and the large deviation parameter $ L^2 \leq Cm\log n/n^2$.  Hence, the matrix variance parameter term $ v $ drives the order, and  $ E	\|\hat \Omega(\beta_0, \bfeta_0)-  \Omega(\beta_0, \bfeta_0)\|^2\leq   m\log m/n $. The result follows  from Markov inequality.

Combining the above arguments, we have that $ \|\hat{\Omega}(\beta_0,\bfeta_0) - \Omega\| = O_p(\sqrt{m\log m/n}) + O_p(\sqrt{m^3(\log m)^2/n^2})$. 
	
	(ii)-(iii) The proof follows the same steps as in part (i) and is omitted. 
	
	(iv) Note that 
	\begin{align*}
		&	\sup_{\beta\in B} \|\hat\Omega(\beta,\hat \bfeta)- \Omega(\beta, \bfeta_0)\| \\
		&\leq \frac{1}{n}\| \sum_{i=1}^{n} g_i(\beta_0, \hat \bfeta)g_i(\beta_0, \hat \bfeta)^T - g_i g_i^T   \| +   	\sup_{\beta\in B}  |\beta- \beta_0| \frac{2}{n}\| \sum_{i=1}^{n} G_i(\hat \bfeta)g_i(\beta_0, \hat \bfeta)^T - G_i g_i^T   \| \\
		&\qquad + 		  	\sup_{\beta\in B}  |\beta- \beta_0|^2 \frac{1}{n}\| \sum_{i=1}^{n} G_i(\hat \bfeta)G_i( \hat \bfeta)^T - G_i G_i^T   \|. 
	\end{align*}
The result follows from (i)-(iii) and the compactness of $ B $. 

(v) The proof follows the same steps as in part (iv). 
\end{proof}

{Similar to Lemma S5', Lemma S6' is a parallel result to Lemma S6 assuming that $E(R_A R_Y\mid \bmX = \bmx ) = \omega_0^T \bm Q $ and $E(R_A^2 \mid \bmX = \bmx ) = \theta_0^T \bm Q $ are estimated using parametric methods.  

{\bf Lemma S6':}
Suppose that Assumptions \ref{assump: linear}-\ref{assump:omega} hold. Also suppose that  $E(R_A R_Y\mid \bmX = \bmx ) = \omega_0^T \bm Q $ and $E(R_A^2 \mid \bmX = \bmx ) = \theta_0^T \bm Q $, where $\bm Q$ is a vector that includes all quadratic terms of $\bmX$.  When $m^2/n\to 0$,   Lemma \ref{lemma: Omega eta hat} (i)-(v)  hold.

\begin{proof}
	(i) The proof is similar to the proof of (i) in Lemma \ref{lemma: Omega eta hat}. It suffices to prove the term $I_3$.
Note that 
\begin{align*}
		\left\|	\frac1n \sum_{i=1}^n  \big( \bmZ_i-\bm D_i \bm\pi_0 \big) \otimes\big( \bmZ_i-\bm D_i \bm\pi_0 \big)  \{\hat\omega^2(\bmX_i; \bm \mu )   - \omega_0^2(\bmX_i; \bm \mu ) \} \right\|^2 \leq  \sum_{j=1}^{m} \left\| \sum_{i=1}^{n}  S_{ij} \right\|^2 \| \hat{\bm \omega}- \bm\omega_0\|^2 ,
\end{align*}
where $S_{ij}= \frac1n (\bmZ_i - \bm D_i {\bm\pi}_0)   (Z_{ij} - \bm X_i^T  {\bm\pi}_{j0}) 2\omega_0^T \bmQ_i \bmQ_i^T  $. Then similar to the proof of Lemma \ref{lemma: Omega eta hat}, we have $ \sum_{j=1}^{m} \left\| \sum_{i=1}^{n}  S_{ij} \right\|^2= O_p (m\log m/n) $. Then with $\hat{\bm \omega}- \bm\omega_0= O_p(n^{-1/2}) + O_p( m\log m/n) $ proved in Lemma S5', we have 
\begin{align*}
 &	\left\|	\frac1n \sum_{i=1}^n  \big( \bmZ_i-\bm D_i \bm\pi_0 \big) \otimes\big( \bmZ_i-\bm D_i \bm\pi_0 \big)  \{\hat\omega^2(\bmX_i; \bm \mu )   - \omega_0^2(\bmX_i; \bm \mu ) \} \right\|  \\
	& =  O_p ( \frac{ \sqrt{m\log m} }{n }  ( 1+  \frac{m \log m  }{\sqrt{n}})) =  o_p(\sqrt{m/n}).  
\end{align*}
Hence, $I_3$ is negligible compared to the other terms, and Lemma \ref{lemma: Omega eta hat} (i) still holds. 

(ii)-(v) The proof is the same as those in Lemma \ref{lemma: Omega eta hat}.

\end{proof}

}

\begin{lemma} \label{lemma:unif Q}
Under Assumptions \ref{assump: many weak moments}-\ref{assump:omega}, and  $ m^2/n\rightarrow 0 $,    $$ \sup_{\beta\in B} \mu_n^{-2} n  |\hat Q(\beta, \hat\bfeta) - Q(\beta, \bfeta_0) |= o_p(1).$$      
\end{lemma}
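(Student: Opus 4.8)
The plan is to telescope through the intermediate objective $\tilde Q(\beta,\hat\bfeta)=\hat g(\beta,\hat\bfeta)^{T}\Omega(\beta,\bfeta_0)^{-1}\hat g(\beta,\hat\bfeta)/2$, writing
\[
\hat Q(\beta,\hat\bfeta)-Q(\beta,\bfeta_0)=\big[\hat Q(\beta,\hat\bfeta)-\tilde Q(\beta,\hat\bfeta)\big]+\big[\tilde Q(\beta,\hat\bfeta)-\tilde Q(\beta,\bfeta_0)\big]+\big[\tilde Q(\beta,\bfeta_0)-Q(\beta,\bfeta_0)\big],
\]
where $\tilde Q(\beta,\bfeta_0)=\hat g(\beta,\bfeta_0)^{T}\Omega(\beta,\bfeta_0)^{-1}\hat g(\beta,\bfeta_0)/2$. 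The workhorse estimate is a uniform bound $\sup_{\beta\in B}\mu_n^{-2}n\,\|\hat g(\beta,\hat\bfeta)\|^{2}=O_p(1)$. To get it I would decompose $\hat g(\beta,\hat\bfeta)=\bar g(\beta,\bfeta_0)+e_0(\beta)+r(\beta)$ with $e_0(\beta)=\hat g(\beta,\bfeta_0)-\bar g(\beta,\bfeta_0)$ and $r(\beta)=\hat g(\beta,\hat\bfeta)-\hat g(\beta,\bfeta_0)$, then use $\|\bar g(\beta,\bfeta_0)\|=|\beta-\beta_0|(G^{T}G)^{1/2}=\Theta(\mu_n/\sqrt n)$ from \eqref{eq: g bar}; the linearity $e_0(\beta)=\hat g(\beta_0,\bfeta_0)+(\beta-\beta_0)\{\hat G(\bfeta_0)-G\}$ together with $\|\hat g(\beta_0,\bfeta_0)\|=O_p(\sqrt{m/n})$ and $\|\hat G(\bfeta_0)-G\|=O_p(\mu_n/\sqrt n)$ from the proof of Lemma \ref{lemma: cond}; and $\sup_{\beta\in B}\|r(\beta)\|=o_p(n^{-1/2})+O_p(m\log m/n)$ from Lemma \ref{lemma: g beta, nuisance}(iii). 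Since $m/\mu_n^{2}\le C$ yields $\sqrt m\le C\mu_n$ and $m\log m/(\mu_n\sqrt n)\le C\sqrt{m/n}\log m\to0$, all three pieces are $O_p(\mu_n/\sqrt n)$ uniformly, giving the claim.

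Given this bound the first two brackets are routine. For the first, $\hat Q-\tilde Q=\tfrac12\hat g^{T}\{\hat\Omega(\beta,\hat\bfeta)^{-1}-\Omega(\beta,\bfeta_0)^{-1}\}\hat g$; by Assumption \ref{assump:omega} (eigenvalues of $\Omega(\beta,\bfeta_0)$ bounded away from $0$ and $\infty$) and $\sup_{\beta\in B}\|\hat\Omega(\beta,\hat\bfeta)-\Omega(\beta,\bfeta_0)\|=o_p(1)$ from Lemma \ref{lemma: Omega eta hat}(iv), one obtains $\sup_{\beta}\|\hat\Omega(\beta,\hat\bfeta)^{-1}-\Omega(\beta,\bfeta_0)^{-1}\|=o_p(1)$, so $\mu_n^{-2}n|\hat Q-\tilde Q|\le(\mu_n^{-2}n\|\hat g\|^{2})\cdot o_p(1)=o_p(1)$ uniformly. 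For the second, writing $\tilde Q(\beta,\hat\bfeta)-\tilde Q(\beta,\bfeta_0)=\tfrac12 r^{T}\Omega^{-1}(2\hat g(\beta,\bfeta_0)+r)$ and using $\|\hat g(\beta,\bfeta_0)\|=O_p(\mu_n/\sqrt n)$ together with $\|r\|=o_p(\mu_n/\sqrt n)$ gives $\mu_n^{-2}n|\tilde Q(\beta,\hat\bfeta)-\tilde Q(\beta,\bfeta_0)|=o_p(1)$ uniformly.

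The heart of the argument, and the main obstacle, is the third bracket: the quadratic-form concentration at the true nuisance value. Expanding,
\[
\tilde Q(\beta,\bfeta_0)-Q(\beta,\bfeta_0)=\bar g^{T}\Omega^{-1}e_0(\beta)+\tfrac12\big[e_0(\beta)^{T}\Omega^{-1}e_0(\beta)-m/n\big],\qquad \Omega=\Omega(\beta,\bfeta_0).
\]
Crucially $\E[e_0(\beta)^{T}\Omega^{-1}e_0(\beta)]=\mathrm{tr}(\Omega^{-1}\tfrac1n\Omega)=m/n$, so the $m/(2n)$ in $Q$ is precisely the centering that removes the mean of the quadratic fluctuation. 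The cross term $(\beta-\beta_0)G^{T}\Omega^{-1}e_0(\beta)$ is mean zero with variance $\tfrac1n G^{T}\Omega^{-1}G=\Theta(\mu_n^{2}/n^{2})$, hence $O_p(\mu_n/n)$, and $\mu_n^{-2}n$ times it is $O_p(\mu_n^{-1})=o_p(1)$. For the centered quadratic term I would split $e_0^{T}\Omega^{-1}e_0-m/n$ into a diagonal part and an off-diagonal $U$-statistic part: the off-diagonal part has variance $O(m/n^{2})$ via $\E[(u_1^{T}\Omega^{-1}u_2)^{2}]=\E[u_1^{T}\Omega^{-1}u_1]=m$, and the diagonal part has variance $O(m^{2}/n^{3})$ via $\sup_{\beta}\E[\|g_i(\beta,\bfeta_0)\|^{4}]\le Cm^{2}$ from Lemma \ref{lemma: sup g square} (with $\|\bar g\|^{4}=O(\mu_n^{4}/n^{2})=O(m^{2})$, using $\|G\|^{2}\le\mathrm{tr}(\E(G_iG_i^{T}))\le Cm$ and $\mu_n^{2}\le Cnm$). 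Thus the centered quadratic term is $O_p(\sqrt m/n)$ pointwise, and $\mu_n^{-2}n$ times it is $O_p(\sqrt m/\mu_n^{2})\le O_p(1/\sqrt m)=o_p(1)$ by $\mu_n^{2}\ge m/C$.

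Finally, these pointwise-in-$\beta$ bounds must be upgraded to uniformity over the compact set $B$. Since $e_0(\beta)$ is affine in $\beta$ and $\Omega(\beta,\bfeta_0)$ satisfies the Lipschitz estimate $|a^{T}\{\Omega(\beta',\bfeta_0)-\Omega(\beta,\bfeta_0)\}b|\le C\|a\|\|b\||\beta'-\beta|$ of Lemma \ref{lemma: Omega multiply} (with the growth control of Lemma \ref{lemma: g bdd by beta}), the scaled functional $\beta\mapsto\mu_n^{-2}n\{\tilde Q(\beta,\bfeta_0)-Q(\beta,\bfeta_0)\}$ is Lipschitz in $\beta$ with an $O_p(1)$ modulus. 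A finite $\epsilon$-net of $B$ then converts the finitely many pointwise $o_p(1)$ statements, plus an $O_p(1)\cdot\epsilon$ oscillation controlled by letting $\epsilon\to0$, into $\sup_{\beta\in B}=o_p(1)$. Collecting the three brackets yields the lemma; the delicate step is the uniform quadratic-form concentration in the third bracket, which is exactly where the fourth-moment bound of Lemma \ref{lemma: sup g square} and the relation $m/\mu_n^{2}\le C$ are indispensable.
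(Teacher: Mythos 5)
Your proof is correct in substance and shares the paper's overall skeleton: the same telescoping through $\tilde Q(\beta,\hat\bfeta)$ and $\tilde Q(\beta,\bfeta_0)$, the same treatment of the first two brackets via the uniform bound $\sup_{\beta\in B}\|\hat g(\beta,\hat\bfeta)\|=O_p(\mu_n/\sqrt n)$ together with Lemmas \ref{lemma: g beta, nuisance} and \ref{lemma: Omega eta hat}, and the same equicontinuity-plus-pointwise-convergence logic for uniformity. Where you genuinely depart from the paper is the third bracket. The paper disposes of the pointwise concentration of $\tilde Q(\beta,\bfeta_0)-Q(\beta,\bfeta_0)$ by invoking Lemma A1 of \cite{Newey:2009aa} (applied with $Y_i=Z_i=g_i(\beta,\bfeta_0)$, $A=\Omega(\beta,\bfeta_0)^{-1}$, $a_n=\mu_n^2$), verifying its hypotheses from Assumption \ref{assump:omega}, Lemma \ref{lemma: sup g square}, and the equicontinuity of $\mu_n^{-2}nQ(\beta,\bfeta_0)$; you instead prove the same fact from scratch by splitting the quadratic form into a mean-zero cross term, a centered diagonal term, and an off-diagonal degenerate U-statistic, and bounding each variance directly. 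Your route is self-contained and makes transparent exactly where the $m/(2n)$ centering, the fourth-moment bound of Lemma \ref{lemma: sup g square}, and the relation $m/\mu_n^2\le C$ enter; the paper's route is shorter but outsources that combinatorial work. For uniformity, the paper cites Theorem 2.1 of \cite{Newey1991}, and your Lipschitz-modulus-plus-$\epsilon$-net argument is precisely the proof of that theorem, so the two coincide there.

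Two small points should be patched. First, $\E[e_0(\beta)^T\Omega(\beta,\bfeta_0)^{-1}e_0(\beta)]=m/n-n^{-1}\bar g(\beta,\bfeta_0)^T\Omega(\beta,\bfeta_0)^{-1}\bar g(\beta,\bfeta_0)$, not exactly $m/n$: $\Omega(\beta,\bfeta_0)$ is a second-moment matrix rather than a variance matrix, so your claimed identity holds only at $\beta=\beta_0$. The discrepancy equals $(\beta-\beta_0)^2n^{-1}G^T\Omega^{-1}G=O(\mu_n^2/n^2)$, hence $O(1/n)$ after the $\mu_n^{-2}n$ scaling, so the conclusion survives, but the term must be displayed and absorbed rather than asserted to vanish. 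Second, your bound $O_p(\sqrt m/\mu_n^2)\le O_p(1/\sqrt m)$ for the off-diagonal U-statistic is vacuous when $m$ stays bounded (Assumption \ref{assump: many weak moments} does not force $m\rightarrow\infty$); the correct one-line fix is $\sqrt m/\mu_n^2\le \sqrt C/\mu_n\rightarrow 0$, using $\sqrt m\le C\mu_n$ and $\mu_n\rightarrow\infty$, which covers both the fixed-$m$ and growing-$m$ regimes.
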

\begin{proof}
Note that by Assumption \ref{assump:omega}, $ n E\{\|\hat g (\beta_0, \bfeta_0)\|^2 \}/m =  tr(\Omega(\beta_0, \bfeta_0)) /m \leq C$, so by Markov inequality, $\|\hat g (\beta_0, \bfeta_0)\|= O_p(\sqrt{m/n})$. Also by Lemma \ref{lemma: g bdd by beta}, Lemma \ref{lemma: g beta, nuisance},  triangle inequality, the compactness of $B$, and $ m^2/n\rightarrow 0 $, 
\begin{align}
   & \sup_{\beta\in B} \|\hat g(\beta, \hat\bfeta) \|=  \sup_{\beta\in B} \|\hat g(\beta, \hat\bfeta) -\hat g(\beta, \bfeta_0) + \hat g(\beta, \bfeta_0) - \hat g(\beta_0, \bfeta_0) +\hat g(\beta_0, \bfeta_0)\|\nonumber\\
    &\leq \sup_{\beta\in B} \|\hat g(\beta, \hat\bfeta) -\hat g(\beta, \bfeta_0) \|+  \sup_{\beta\in B} \| \hat g(\beta, \bfeta_0) - \hat g(\beta_0, \bfeta_0) \|+ \|\hat g(\beta_0, \bfeta_0)\|\nonumber\\ 
    &= o_p(\sqrt{m}/\sqrt{n}) +  O_p(\mu_n/\sqrt{n})  +  O_p(\sqrt{m}/\sqrt{n})  \nonumber\\ 
  &=  O_p(\mu_n/\sqrt{n}).    \label{eq: g hat bdd in prob}
\end{align}
A useful implication of the above derivation is that $ \|\hat g(\beta_0, \hat\bfeta) \| = O_p(\sqrt{m}/\sqrt{n}) $. 

Let $\hat a(\beta, \hat\bfeta)= \mu_n^{-1} \sqrt{n} \Omega(\beta, \bfeta_0)^{-1}\hat g(\beta, \hat\bfeta) $. By Assumption \ref{assump:omega} and \eqref{eq: g hat bdd in prob}, 
\begin{align*}
    \|\hat a(\beta, \hat\bfeta)\|^2= \mu_n^{-2} n \hat g(\beta, \hat\bfeta)^T \Omega(\beta, \bfeta_0)^{-1}    \Omega(\beta, \bfeta_0)^{-1}  \hat g(\beta, \hat\bfeta) \leq C \mu_n^{-2} n \| \hat g(\beta, \hat\bfeta)\|^2,
\end{align*}
so that $\sup_{\beta\in B}  \|\hat a(\beta, \hat\bfeta)\|^2 = O_p(1) $. Also, by Assumption \ref{assump:omega} and Lemma \ref{lemma: Omega eta hat}, we have
\begin{align}
    |\xi_{\min}(\hat \Omega(\beta,\hat \bfeta))-\xi_{\min}( \Omega(\beta, \bfeta_0)) |\leq \sup_{\beta\in B } \|\hat \Omega(\beta, \hat \bfeta)- \Omega(\beta, \bfeta_0) \| + o_p(1)= o_p(1), \label{eq: Omega hat bdd}
\end{align}
so that $\xi_{\min}(\hat \Omega(\beta, \hat\bfeta))\geq C $ and hence $\xi_{\max}(\hat \Omega(\beta, \hat\bfeta)^{-1})\leq C $ for all $\beta\in B$, w.p.a.1.

Therefore, 
\begin{align*}
  &\mu_n^{-2} n 2 \hat Q(\beta, \hat\bfeta)= \hat a(\beta, \hat\bfeta)^T \Omega(\beta, \bfeta_0) \hat\Omega(\beta, \hat\bfeta)^{-1}\Omega(\beta, \bfeta_0) \hat a(\beta, \hat\bfeta),\\
  &\mu_n^{-2} n 2 \tilde Q(\beta, \hat\bfeta)= \hat a(\beta, \hat\bfeta)^T \Omega(\beta, \bfeta_0) \hat a(\beta, \hat\bfeta),
\end{align*}
and 
\begin{align*}
    &2\mu_n^{-2} n |\hat Q (\beta, \hat\bfeta)- \tilde{Q} (\beta, \hat\bfeta) | \\
    &\leq |\hat a(\beta, \hat\bfeta)^T \{ \hat\Omega(\beta, \hat\bfeta) - \Omega(\beta, \bfeta_0)\}\hat a(\beta, \hat\bfeta)| \\
    &\qquad + |\hat a(\beta, \hat\bfeta)^T \{ \hat\Omega(\beta, \hat\bfeta) - \Omega(\beta, \bfeta_0)\} \hat\Omega(\beta, \hat\bfeta)^{-1} \{ \hat\Omega(\beta, \hat\bfeta) - \Omega(\beta, \bfeta_0)\} \hat a(\beta, \hat\bfeta)  | \\
    &\leq \|\hat a(\beta, \hat\bfeta)\|^2 \left\{\| \hat\Omega(\beta, \hat\bfeta) - \Omega(\beta, \bfeta_0)\| + C\| \hat\Omega(\beta, \hat\bfeta) - \Omega(\beta, \bfeta_0)\|^2 \right\} = o_p(1).
\end{align*}
Consequently, we have shown that 
\begin{align}
    \sup_{\beta\in B} \mu_n^{-2} n |\hat Q (\beta, \hat\bfeta)- \tilde{Q} (\beta, \hat\bfeta) | =o_p(1).
\end{align}
Next, we show that 
\[
\sup_{\beta\in B} \mu_n^{-2} n |\tilde Q (\beta, \hat\bfeta)- \tilde{Q} (\beta, \bfeta_0) | =o_p(1).
\]
This can be easily seen as  
\begin{align*}
  & 2 \mu_n^{-2} n |\tilde Q (\beta, \hat\bfeta)- \tilde{Q} (\beta, \bfeta_0) | \\
  &= \mu_n^{-2}n|\hat g(\beta, \hat\bfeta)^T \Omega(\beta, \bfeta_0)^{-1} \hat g(\beta, \hat\bfeta) - \hat g(\beta, \bfeta_0)^T \Omega(\beta, \bfeta_0) ^{-1}\hat g(\beta, \bfeta_0)| \\
   &=   \mu_n^{-2}n|\{\hat g(\beta, \hat\bfeta)-\hat g(\beta, \bfeta_0) \}^T \Omega(\beta, \bfeta_0) ^{-1} \hat g(\beta, \hat\bfeta) +  \hat g(\beta, \bfeta_0)^T \Omega(\beta, \bfeta_0) ^{-1} \{\hat g(\beta, \hat\bfeta) - \hat g(\beta, \bfeta_0)\} | \\
   &\leq  \mu_n^{-2}n|\{\hat g(\beta, \hat\bfeta)-\hat g(\beta, \bfeta_0) \}^T \Omega(\beta, \bfeta_0)^{-1} \hat g(\beta, \hat\bfeta) |+\mu_n^{-2}n | \hat g(\beta, \bfeta_0)^T \Omega(\beta, \bfeta_0) ^{-1} \{\hat g(\beta, \hat\bfeta) - \hat g(\beta, \bfeta_0)\} | \\
   &\leq  \mu_n^{-2}n C \|\hat g(\beta, \hat\bfeta)-\hat g(\beta, \bfeta_0) \| \|\hat g(\beta, \hat\bfeta) \|+\mu_n^{-2}n C \| \hat g(\beta, \bfeta_0)\| \| \hat g(\beta, \hat\bfeta) - \hat g(\beta, \bfeta_0) \|\\
   &\leq \mu_n^{-2} n  C\sup_{\beta\in B} \|\hat g(\beta, \hat\bfeta)-\hat g(\beta, \bfeta_0) \| \sup_{\beta\in B} \|\hat g(\beta, \hat\bfeta) \| \\
   &\qquad \qquad + \mu_n^{-2} n C\sup_{\beta\in B}\| \hat g(\beta, \bfeta_0)\| \sup_{\beta\in B} \| \hat g(\beta, \hat\bfeta) -\hat  g(\beta, \bfeta_0) \| \\
   & = o_p(1)
\end{align*}
where the last line is from Lemma \ref{lemma: g beta, nuisance},  $\sup_{\beta\in B}\| \hat g(\beta, \bfeta_0)\|= O_p(\mu_n/\sqrt{n})$ and $\sup_{\beta\in B}\| \hat g(\beta, \hat\bfeta)\|= O_p(\mu_n/\sqrt{n})$ from \eqref{eq: g hat bdd in prob}, and $m^2/n\rightarrow 0$. 

Finally, it remains to show that 
\[
\sup_{\beta\in B} \mu_n^{-2} n |\tilde Q (\beta, \bfeta_0)- {Q} (\beta, \bfeta_0) | =o_p(1).
\]
For $\beta, \beta'\in B$, let  $Q(\beta', \beta, \bfeta_0)= E\{g_i(\beta', \bfeta_0)^T\} \Omega(\beta, \bfeta_0)^{-1} E\{g_i(\beta', \bfeta_0)\}/2+ m/(2n) $ and  $a(\beta', \beta, \bfeta_0)= \mu_n^{-1} \sqrt{n} \Omega(\beta, \bfeta_0)^{-1} E\{g_i(\beta', \bfeta_0)\} $. By Assumption \ref{assump:omega} and Lemma \ref{lemma: g bdd by beta}, \\ $\sup_{\beta\in B, \beta'\in B }\|a(\beta', \beta, \bfeta_0) \|\leq C $. Then, by Lemma \ref{lemma: Omega multiply},  it follows that 
\begin{align*}
   & \mu_n^{-2}n |Q(\beta', \beta', \bfeta_0) -  Q(\beta', \beta, \bfeta_0)  |= |a(\beta', \beta', \bfeta_0)^T\{\Omega(\beta', \bfeta_0) - \Omega(\beta, \bfeta_0)\}  a(\beta', \beta, \bfeta_0) | \\
    & \leq C |\beta'-\beta|.
\end{align*}
Also, by triangle inequality, Assumption \ref{assump:omega} and Lemma \ref{lemma: g bdd by beta}, 
\begin{align*}
    &\mu_n^{-2} n |Q(\beta', \beta, \bfeta_0) -  Q(\beta, \beta, \bfeta_0)  | \\
    &\leq C\mu_n^{-2} n \left[\|E\{g_i(\beta',\bfeta_0) \} -E\{g_i(\beta,\bfeta_0) \} \| + \|E\{g_i(\beta',\bfeta_0)\| \|E\{g_i(\beta',\bfeta_0) \} -E\{g_i(\beta,\bfeta_0) \} \| \right] \\
    &\leq C|\beta'-\beta|.
\end{align*}
Then by triangle inequality, it follows that $\mu_n^{-2}n |Q(\beta', \bfeta_0)- Q(\beta, \bfeta_0)|= \mu_n^{-2}n |Q(\beta', \beta',\bfeta_0)- Q(\beta, \beta, \bfeta_0)|\leq C|\beta'-\beta|$. Therefore, $\mu_n^{-2}n Q(\beta,\bfeta_0) $ is equicontinuous for $\beta,\beta'\in B$. An analogous argument with $\tilde Q(\beta', \beta, \bfeta_0)= \hat g(\beta', \bfeta_0)^T \Omega(\beta, \bfeta_0)^{-1} \hat g(\beta', \bfeta_0)/2 $ and $\hat a(\beta',\beta, \bfeta_0) =\mu_n^{-1}\sqrt{n}\Omega(\beta, \bfeta_0) \hat g(\beta', \bfeta_0) $ replacing $Q(\beta', \beta, \bfeta_0)$ and $a(\beta',\beta)$, respectively, implies that \\ $\mu_n^{-2} n |\tilde Q(\beta', \bfeta_0)- \tilde Q(\beta, \bfeta_0) | \leq \hat M \|\tilde \beta - \beta\|$ for $\beta,\beta'\in B$, with $\hat M= O_p(1)$, giving stochastic equicontinuity of $\mu_n^{-2} n \tilde Q(\beta, \bfeta_0)$.

Since $\mu_n^{-2}n Q(\beta,\bfeta_0) $ and $\mu_n^{-2} n \tilde Q(\beta, \bfeta_0)$ are stochastically equicontinuous, it suffices by Theorem 2.1 of \cite{Newey1991} to show that 
\[
\mu_n^{-2}n \tilde Q(\beta, \bfeta_0)= \mu_n^{-2}n  Q(\beta, \bfeta_0) +o_p(1)
\]
for each $\beta$. Applying Lemma A1 of \cite{Newey:2009aa} with $Y_i=Z_i=g_i(\beta, \bfeta_0)$, $A= \Omega(\beta, \bfeta_0)^{-1}$, and $a_n=\mu_n^2$. By Assumption \ref{assump:omega}, $\xi_{\max}(A^T A)=\xi_{\max}(A A^T) = \xi_{\max} \{\Omega(\beta, \bfeta_0)^{-2} \}\leq C$, $\xi_{\max}(\Sigma_{YY}) = \xi_{\max} \{\Omega(\beta, \bfeta_0)\}\leq C$, $E\{(Y_i^TY_i)^2\}/(na_n^2)=$\\ $ E[\{g_i(\beta, \bfeta_0)^Tg_i(\beta, \bfeta_0) \}^2 ]/(n\mu_n^4)\rightarrow 0$ from Lemma \ref{lemma: sup g square},   and $n\mu_Y^T\mu_Y/a_n^2\leq C \{ n Q(\beta, \bfeta_0)/\mu_n^2- m/\mu_n^2\}/\mu_n^2\rightarrow 0 $ from the equicontinuity of $\mu_n^{-2}n Q(\beta, \bfeta_0) $. Thus, the conditions of Lemma A1 of \cite{Newey:2009aa} are satisfied. Note that $A\Sigma_{YZ}^T = A\Sigma_{ZZ}=  A\Sigma_{YY} = mI_m\mu_n^2$, so by the Lemma A1, 
\begin{align*}
    &\mu_n^{-2} n \tilde Q(\beta, \bfeta_0)= tr(I_m)/\mu_n^2 +\mu_n^{-2} n E\{g_i(\beta, \bfeta_0)^T\} \Omega(\beta, \bfeta_0)^{-1} E\{g_i(\beta, \bfeta_0)\} + o_p(1) \\
    &= \mu_n^{-2} n Q(\beta, \bfeta_0)+ o_p(1).
\end{align*}
This completes the proof.
\end{proof}

\begin{lemma} \label{lemma: Q tilde} Under Assumptions \ref{assump: many weak moments}-\ref{assump: 4th moment},  and $ m^3/n\rightarrow 0  $, 
	   $$n\mu_n^{-1}\partial \hat Q(\beta, \hat\bfeta)/\partial \beta |_{\beta=\beta_0}=n\mu_n^{-1}\partial \tilde  Q(\beta, \hat\bfeta)/\partial \beta |_{\beta=\beta_0} +o_p(1).$$  
\end{lemma}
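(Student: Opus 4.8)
The plan is to differentiate $\hat Q$ and $\tilde Q$ in $\beta$ and compare the resulting expressions term by term, with everything evaluated at $\beta_0$. Since $g_i(\beta,\bfeta)$ is linear in $\beta$, the derivative $\partial\hat g(\beta,\hat\bfeta)/\partial\beta=\hat G(\hat\bfeta)$ is free of $\beta$, while $\partial\hat\Omega(\beta,\hat\bfeta)/\partial\beta=n^{-1}\sum_i\{G_i(\hat\bfeta)g_i(\beta,\hat\bfeta)^T+g_i(\beta,\hat\bfeta)G_i(\hat\bfeta)^T\}$. Abbreviating $\hat\Omega=\hat\Omega(\beta_0,\hat\bfeta)$, $\Omega=\Omega(\beta_0,\bfeta_0)$, $\hat g=\hat g(\beta_0,\hat\bfeta)$, $\hat G=\hat G(\hat\bfeta)$, $\hat S=\partial\hat\Omega(\beta,\hat\bfeta)/\partial\beta|_{\beta_0}$ and $S=E[G_ig_i^T]+E[g_iG_i^T]$, I would use $\partial\Omega^{-1}/\partial\beta=-\Omega^{-1}(\partial\Omega/\partial\beta)\Omega^{-1}$ to write
\[
\frac{\partial\hat Q}{\partial\beta}\Big|_{\beta_0}=\hat G^T\hat\Omega^{-1}\hat g-\tfrac12\hat g^T\hat\Omega^{-1}\hat S\hat\Omega^{-1}\hat g,
\]
with the analogue for $\tilde Q$ obtained by replacing $\hat\Omega,\hat S$ by $\Omega,S$. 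Then $n\mu_n^{-1}(\partial\hat Q/\partial\beta-\partial\tilde Q/\partial\beta)|_{\beta_0}$ splits into a gradient part $\mathrm A=n\mu_n^{-1}\hat G^T(\hat\Omega^{-1}-\Omega^{-1})\hat g$ and a weight-derivative part $\mathrm B$.

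The gradient part is the easy one. Writing $\hat\Omega^{-1}-\Omega^{-1}=-\hat\Omega^{-1}(\hat\Omega-\Omega)\Omega^{-1}$ and using $\|\hat\Omega^{-1}\|,\|\Omega^{-1}\|\le C$ w.p.a.1 (from $\|\hat\Omega-\Omega\|=o_p(1)$ in Lemma \ref{lemma: Omega eta hat}(i) and $\xi_{\min}(\Omega)\ge 1/C$ in Assumption \ref{assump:omega}), together with $\|\hat g\|=O_p(\sqrt{m/n})$ (Markov's inequality and Assumption \ref{assump:omega}), $\|\hat G\|=O_p(\mu_n/\sqrt n)$ (from $\|n^{-1}\sum_iG_i\|=O_p(\mu_n/\sqrt n)$ in the proof of Lemma \ref{lemma: g bdd by beta} and Lemma \ref{lemma: g beta, nuisance}(ii)), and the refined rate $\sqrt m\,\|\hat\Omega-\Omega\|=o_p(1)$ valid under $m^3/n\to0$ (the remark preceding Lemma \ref{lemma: Omega eta hat}), I obtain $|\mathrm A|\le n\mu_n^{-1}\|\hat G\|\,\|\hat\Omega^{-1}-\Omega^{-1}\|\,\|\hat g\|=o_p(\sqrt m/\mu_n)=o_p(1)$, since $\sqrt m/\mu_n\le C$ by Assumption \ref{assump: many weak moments}.

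The main obstacle is the weight-derivative part $\mathrm B=-\tfrac{n}{2\mu_n}(\hat g^T\hat\Omega^{-1}\hat S\hat\Omega^{-1}\hat g-\hat g^T\Omega^{-1}S\Omega^{-1}\hat g)$. The difficulty is that each scalar alone is of order $n\mu_n^{-1}\|\hat g\|^2\|\hat S\|=O_p(\mu_n^{-1}m)=O_p(\sqrt m)$, which diverges, so the argument cannot bound the two pieces separately and must exploit cancellation between them. The plan is to telescope, replacing $\hat\Omega^{-1}$ by $\Omega^{-1}$ in each outer slot and $\hat S$ by $S$ in the middle slot one factor at a time. The middle factors obey $\|\hat S\|=O_p(1)$ and $\|\hat S-S\|=O_p(\sqrt{m\log m/n})$, from Lemma \ref{lemma: Omega eta hat}(ii) and $\|E[G_ig_i^T]\|\le C$ (Assumption \ref{assump:omega}). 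Each outer-slot residual is then at most $n\mu_n^{-1}\|\hat g\|^2\|\hat\Omega^{-1}-\Omega^{-1}\|\,\|\hat S\|\,\|\hat\Omega^{-1}\|=n\mu_n^{-1}O_p(m/n)\,o_p(1/\sqrt m)\,O_p(1)=o_p(\sqrt m/\mu_n)=o_p(1)$, and the middle residual is $n\mu_n^{-1}\|\hat g\|^2\|\Omega^{-1}\|\,\|\hat S-S\|\,\|\hat\Omega^{-1}\|=O_p(m^{3/2}\sqrt{\log m}/(\mu_n\sqrt n))=O_p(m\sqrt{\log m}/\sqrt n)$, which vanishes because $m^2\log m/n\to0$ under $m^3/n\to0$. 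Hence $\mathrm B=o_p(1)$.

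Combining, $n\mu_n^{-1}(\partial\hat Q/\partial\beta-\partial\tilde Q/\partial\beta)|_{\beta_0}=\mathrm A+\mathrm B=o_p(1)$, which is the claim. Conceptually, the entire content of the lemma is that replacing the estimated weighting $\hat\Omega(\beta,\hat\bfeta)$ by the true $\Omega(\beta,\bfeta_0)$ leaves the score unchanged to leading order, even though the individual weight-derivative terms blow up like $\sqrt m$; the cancellation is precisely what the telescoping delivers. This is also where the stronger growth condition $m^3/n\to0$ is genuinely used, entering through both $\sqrt m\,\|\hat\Omega-\Omega\|=o_p(1)$ and $m^2\log m/n\to0$.
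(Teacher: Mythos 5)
Your proposal is correct, and its analytic core coincides with the paper's: both arguments reduce the claim to spectral-norm bounds on $\hat\Omega(\beta_0,\hat\bfeta)-\Omega$ and on $n^{-1}\sum_i G_i(\hat\bfeta)g_i(\beta_0,\hat\bfeta)^T-E(G_ig_i^T)$ from Lemma \ref{lemma: Omega eta hat}(i)--(ii) --- scaled by $\sqrt{m}$, which is exactly where $m^3/n\rightarrow 0$ enters --- combined with $n\|\hat g(\beta_0,\hat\bfeta)\|^2=O_p(m)$, the eigenvalue bounds of Assumption \ref{assump:omega}, and $\sqrt{m}/\mu_n\leq C$ from Assumption \ref{assump: many weak moments}. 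The difference is organizational. The paper first rewrites each score in the form ``$G^TW\hat g$ plus a residual-projection term,'' introducing $\tilde U_i$ (population projection $E(G_ig_i^T)\Omega^{-1}$, weight $\Omega^{-1}$) and $\check U_i$ (sample projection, weight $\hat\Omega(\beta_0,\hat\bfeta)^{-1}$), and then proves three statements (a)--(c); you instead compare the raw derivative formulas directly, splitting off the gradient part $n\mu_n^{-1}\hat G^T(\hat\Omega^{-1}-\Omega^{-1})\hat g$ and telescoping the difference of the two weight-derivative quadratic forms one slot at a time. Your route is self-contained and arguably cleaner for this lemma --- in particular, you correctly identify that each quadratic form alone is only $O_p(m/\mu_n)=O_p(\sqrt m)$, hence possibly divergent, so that only the difference can be controlled --- whereas the paper's extra bookkeeping buys objects ($\tilde U_i$, and the representation $G^T\Omega^{-1}\hat g+\{n^{-1}\sum_i\tilde U_i\}^T\Omega^{-1}\hat g$) that are reused verbatim in the asymptotic-normality argument that follows this lemma.

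Two minor imprecisions, neither fatal. First, Lemma \ref{lemma: Omega eta hat}(ii) gives $\|\hat S-S\|=O_p(\sqrt{m\log m/n})+O_p(\sqrt{m^3(\log m)^2/n^2})$ (in your notation), not only the first term; the omitted term contributes $n\mu_n^{-1}\|\hat g\|^2\,O_p(m^{3/2}\log m/n)\leq C\,O_p(m^2\log m/n)=o_p(1)$ under $m^3/n\rightarrow 0$, so nothing breaks. Second, in the gradient part, with your stated rate $\|\hat G\|=O_p(\mu_n/\sqrt n)$ the bound is $o_p(1)$ directly; the intermediate order $o_p(\sqrt m/\mu_n)$ you wrote corresponds to using $\|\hat G\|=O_p(\sqrt{m/n})$ --- either way the term vanishes.
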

\begin{proof}
Notice that 
\begin{align*}
	\frac{\partial \Omega(\beta,  \bfeta_0)^{-1}}{\partial \beta} \big|_{\beta= \beta_0}= - \Omega^{-1}\left[ \frac{\partial \Omega(\beta, \bfeta_0)}{\partial \beta}\right] \big|_{\beta=\beta_0}  \Omega^{-1}= - \Omega^{-1} E\big\{ g_i G_i^T + G_i g_i^T\big\} \Omega^{-1}. 
\end{align*}
Recall that $\tilde{Q}(\beta, \hat \bfeta) = \hat g (\beta, \hat\bfeta)^T \Omega (\beta, \bfeta_0)^{-1} \hat g (\beta, \hat\bfeta)/2   $, which is the same with $\hat{Q}(\beta, \hat\bfeta)   $ but with $ \hat \Omega({\beta}, \hat \bfeta) $ replaced by  $ \Omega(\beta, \bfeta_0) $. Differentiating $\tilde{Q}(\beta, \hat\bfeta) $ with respect to $\beta$, we have
\begin{align*}
	\frac{\partial \tilde{Q} (\beta, \hat \bfeta)}{\partial \beta}  \big|_{\beta= \beta_0} &=\hat{g} (\beta_0, \hat\bfeta)^T \Omega^{-1} \frac{1}{n} \sum_{i=1}^n G_i( \hat\bfeta)  - \frac{1}{2} \hat{g} (\beta_0, \hat\bfeta)^T \Omega^{-1} E\big\{ g_i G_i^T + G_i g_i^T\big\} \Omega^{-1}  \hat{g} (\beta_0, \hat\bfeta)\\
	&= \hat{g} (\beta_0, \hat\bfeta)^T \Omega^{-1} \frac{1}{n} \sum_{i=1}^n G_i(\hat\bfeta) -  \hat{g} (\beta_0, \hat\bfeta)^T \Omega^{-1} E\big\{  G_i g_i^T\big\} \Omega^{-1}  \hat{g} (\beta_0, \hat\bfeta)\\
	&= \hat{g} (\beta_0, \hat\bfeta)^T \Omega^{-1} \frac{1}{n} \sum_{i=1}^n \bigg[ G_i(\hat \bfeta)- E\big\{  G_i g_i^T\big\} \Omega^{-1}  g_i (\beta_0, \hat\bfeta)\bigg]\\
	&= \frac{1}{n^2} \sum_{i, j=1}^n \left\{ G+ \underbrace{G_i(\hat\bfeta) - G - E(G_i g_i^T) \Omega^{-1} g_i(\beta_0, \hat\bfeta)}_{\tilde U_i} \right\}^T \Omega^{-1} g_j (\beta_0, \hat{\bfeta})\\
	&= G^T \Omega^{-1} \hat g(\beta_0, \hat\bfeta) + \left\{n^{-1} \sum_{i=1}^n \tilde U_i \right\}^T \Omega^{-1} \hat g(\beta_0, \hat \bfeta).
\end{align*}
Similarly, we can derive that 
\begin{align*}
&	\frac{\partial \hat{Q} (\beta, \hat \bfeta)}{\partial \beta}  \big|_{\beta= \beta_0} =  G^T \hat \Omega(\beta_0, \hat\bfeta)^{-1} \hat g(\beta_0, \hat\bfeta) + \left\{n^{-1} \sum_{i=1}^n \check U_i \right\}^T \hat\Omega(\beta_0, \hat\bfeta)^{-1} \hat g(\beta_0, \hat \bfeta),
\end{align*}
where 
\[
\check U_i=  G_i(\hat\bfeta)- G- \left( n^{-1} \sum_{i=1}^n G_i(\hat\bfeta)g_i(\beta_0, \hat\bfeta)^T \right) \hat\Omega(\beta_0, \hat\bfeta)^{-1} g_i(\beta_0, \hat\bfeta).
\]
Hence, it suffices to show that\\
(a) $n\mu_n^{-1}( n^{-1} \sum_{i=1}^n \check U_i- \tilde U_i)^T \hat \Omega(\beta_0,\hat\bfeta)^{-1} \hat g(\beta_0,\hat\bfeta)=o_p(1)$;\\
(b) $n\mu_n^{-1}( n^{-1} \sum_{i=1}^n\tilde U_i)^T \{\hat \Omega(\beta_0,\hat\bfeta)^{-1} - \Omega^{-1}\}\hat g(\beta_0,\hat\bfeta)=o_p(1)$; \\
(c) $n\mu_n^{-1} G^T \big\{\hat \Omega(\beta_0,\hat\bfeta)^{-1} - \Omega^{-1} \big\} \hat g(\beta_0,\hat\bfeta)=o_p(1) $.

For part (a), we have 
\begin{align*}
     &n\mu_n^{-1}( n^{-1} \sum_{i=1}^n \check U_i- \tilde U_i)^T \hat \Omega(\beta_0,\hat\bfeta)^{-1} \hat g(\beta_0,\hat\bfeta)\\
     &= n\mu_n^{-1} \hat g(\beta_0,\hat\bfeta) \left\{E(G_ig_i^T) \Omega^{-1}- \left(n^{-1} \sum_{i=1}^n G_i(\hat\bfeta ) g_i(\beta_0, \hat\bfeta)^T \right) \hat\Omega(\beta_0,\hat\bfeta)^{-1}  \right\}^T \hat \Omega(\beta_0,\hat\bfeta)^{-1} \hat g(\beta_0,\hat\bfeta) \\
     &\leq  C \mu_n^{-1} n \|\hat g(\beta_0,\hat\bfeta)\|^2 \left\|E(G_ig_i^T) \Omega^{-1}- \left(n^{-1} \sum_{i=1}^n G_i(\hat\bfeta ) g_i(\beta_0, \hat\bfeta)^T \right) \hat\Omega(\beta_0,\hat\bfeta)^{-1}  \right \|\\
     &=    C \mu_n^{-1} n \|\hat g(\beta_0,\hat\bfeta)\|^2 \\
     &\qquad \left\|E(G_ig_i^T) \{\Omega^{-1} - \hat\Omega(\beta_0,\hat \bfeta)^{-1} \}- \left(n^{-1} \sum_{i=1}^n G_i(\hat\bfeta ) g_i(\beta_0, \hat\bfeta)^T -E(G_ig_i^T)   \right) \hat\Omega(\beta_0,\hat\bfeta)^{-1}  \right \| \\
     &\leq  C \mu_n^{-1} n \|\hat g(\beta_0,\hat\bfeta)\|^2  \left\|E(G_ig_i^T) \{\Omega^{-1} - \hat\Omega(\beta_0,\hat \bfeta)^{-1} \}\right\| \\
     &\qquad+C \mu_n^{-1} n \|\hat g(\beta_0,\hat\bfeta)\|^2  \left\| \left(n^{-1} \sum_{i=1}^n G_i(\hat\bfeta ) g_i(\beta_0, \hat\bfeta)^T -E(G_ig_i^T)   \right) \hat\Omega(\beta_0,\hat\bfeta)^{-1}  \right \| \\
       &\leq  C \mu_n^{-1} n \|\hat g(\beta_0,\hat\bfeta)\|^2  \left\|  \Omega^{-1}  \{\hat\Omega(\beta_0,\hat \bfeta) - \Omega \}\hat\Omega(\beta_0,\hat \bfeta)^{-1}  \right\| \\
     &\qquad+C \mu_n^{-1} n \|\hat g(\beta_0,\hat\bfeta)\|^2 \left\| \left(n^{-1} \sum_{i=1}^n G_i(\hat\bfeta ) g_i(\beta_0, \hat\bfeta)^T -E(G_ig_i^T)   \right) \hat\Omega(\beta_0,\hat\bfeta)^{-1}  \right \| \\
        &\leq  C \mu_n^{-1} \underbrace{n \|\hat g(\beta_0,\hat\bfeta)\|^2}_{O_p(m)} \left\{ \left\|   \hat\Omega(\beta_0,\hat \bfeta) - \Omega  \right\| + \left\| n^{-1} \sum_{i=1}^n G_i(\hat\bfeta ) g_i(\beta_0, \hat\bfeta)^T -E(G_ig_i^T)    \right \| \right\}\\
     &=o_p(1)
\end{align*}
 using $ n \|\hat g(\beta_0,\hat\bfeta)\|^2= O_p(m) $ shown in   \eqref{eq: g hat bdd in prob},  $ \xi_{\max} \{ \hat\Omega(\beta_0, \hat \bfeta)^{-1} \}\leq C  $ shown below \eqref{eq: Omega hat bdd}, $ \xi_{\max} ( \Omega^{-1} )\leq C  $, 
 $\xi_{\max}\{E(G_ig_i^T)\} \leq C  $,  $ \xi_{\max}  \{  E(G_iG_i^T) \}< C,$ and $ \xi_{\max} \{ E(g_ig_i^T)\}\leq C$ from Assumption \ref{assump:omega},   Lemma \ref{lemma: Omega eta hat}(i)-(ii), and  $m/\mu_n^2\leq C$ from Assumption  \ref{assump: many weak moments}. 

For part (b), notice first that from $U_i$ being the residual and Assumption \ref{assump: 4th moment}, we have $E(U_i)=0$ and $E(\|U_i\|^2) \leq E(\|G_i\|^2) \leq Cm  $. From Markov inequality, $\frac{1}{\sqrt{nm}} \| \sum_{i=1}^n U_i \|=  O_p(1)$. Moreover, 
\begin{align}
    &\left\|\frac{1}{\sqrt{n}} \sum_{i=1}^n ( \tilde U_i- U_i) \right\| \label{eq: U_i}\\
    &=  \left\| \frac{1}{\sqrt{n}} \sum_{i=1}^n \big\{G_i(\hat\bfeta) - G_i -  E(G_ig_i^T)\Omega^{-1} (g_i(\beta_0, \hat\bfeta) - g_i) \big\} \right\| \nonumber \\
    &=  \left\| \frac{1}{\sqrt{n}} \sum_{i=1}^n \big\{G_i( \hat\bfeta) - G_i \big\} \right\| +\left\| \frac{1}{\sqrt{n}} \sum_{i=1}^n  E(G_ig_i^T)\Omega^{-1} (g_i(\beta_0, \hat\bfeta) - g_i)  \right\| \nonumber\\
     &\leq  \left\|  \frac{1}{\sqrt{n}} \sum_{i=1}^n \big\{G_i(\hat\bfeta) - G_i \big\} \right\| + C\sqrt{n} \left\| \hat g_i(\beta_0, \hat\bfeta) - \hat g(\beta_0,\bfeta_0)  \right\| \nonumber\\
     &= o_p(1) \nonumber
\end{align}
where the fourth line is because $\xi_{\max}(\Omega^{-1})\leq C$  and $\xi_{\max}\{E(G_ig_i^T)\} \leq C  $, the last line  is from Lemma \ref{lemma: g beta, nuisance}(i)-(ii).   Hence, $\frac{1}{\sqrt{nm} } \|\sum_{i=1}^n \tilde U_i \| =  O_p(1)$. In consequence, by Lemma \ref{lemma: Omega eta hat}(i), and \eqref{eq: g hat bdd in prob},  
\begin{align*}
	&n\mu_n^{-1}( n^{-1} \sum_{i=1}^n\tilde U_i)^T \{\hat \Omega(\beta_0,\hat\bfeta)^{-1} - \Omega^{-1}\}\hat g(\beta_0,\hat\bfeta)\\
     &=  \big( \frac{1}{\sqrt{nm}}\sum_{i=1}^n\tilde U_i\big)^T \sqrt{m}\{\hat \Omega(\beta_0,\hat\bfeta)^{-1} - \Omega^{-1}\} \mu_n^{-1}\sqrt{n}\hat g(\beta_0,\hat\bfeta) \\
     &\leq \underbrace{\left\| \frac{1}{\sqrt{nm}}\sum_{i=1}^n\tilde U_i\right\|}_{O_p(1)} \underbrace{\left\| \sqrt{m}\{\hat \Omega(\beta_0,\hat\bfeta)^{-1} - \Omega^{-1}\} \right\|}_{o_p(1)} \underbrace{\left\| \mu_n^{-1}\sqrt{n}\hat g(\beta_0,\hat\bfeta) \right\|}_{O_p(1)}\\
     &= o_p(1).
\end{align*}

Part (c) can be shown in a similar fashion as the Part (b). Specifically,
\begin{align*}
    &n\mu_n^{-1} G^T \big\{\hat \Omega(\beta_0,\hat\bfeta)^{-1} - \Omega^{-1} \big\} \hat g(\beta_0,\hat\bfeta) \\
    &\leq \underbrace{\|\sqrt{n} \mu_n^{-1} G \|}_{O(1)} \underbrace{\| \sqrt{m}\big\{\hat \Omega(\beta_0,\hat\bfeta)^{-1} - \Omega^{-1} \big\} \|}_{o_p(1) } \underbrace{\| \sqrt{n}m^{-1/2}\hat g(\beta_0,\hat\bfeta) \|}_{O_p(1)}= o_p(1).
\end{align*}

\end{proof}

\begin{lemma} \label{lemma: Q second deriv} Under Assumptions \ref{assump: many weak moments}-\ref{assump: 4th moment},  and $ m^3/n\rightarrow 0  $, 
\[
 n \mu_n^{-2}  \sup_{\beta\in B}  \left| \frac{\partial^2 \hat Q (\beta, \hat\bfeta)}{\partial \beta^2}-   \frac{\partial^2 \hat Q (\beta, \bfeta_0)}{\partial \beta^2}  \right|  = o_p(1).
\]
\end{lemma}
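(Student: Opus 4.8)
The plan is to exploit that $g_i(\beta,\bfeta)$ is linear in $\beta$, so that $\hat Q(\beta,\bfeta)$ has an elementary dependence on $\beta$ whose second derivative can be written in closed form, and then to compare the $\hat\bfeta$-version and the $\bfeta_0$-version factor by factor. Linearity gives $\hat g(\beta,\bfeta)=\hat g(\beta_0,\bfeta)+(\beta-\beta_0)\hat G(\bfeta)$ with $\hat G(\bfeta)$ free of $\beta$, and the exact quadratic expansion
\[
\hat\Omega(\beta,\bfeta)=\hat\Omega(\beta_0,\bfeta)+(\beta-\beta_0)\hat\Omega_1(\bfeta)+(\beta-\beta_0)^2\hat\Omega_2(\bfeta),
\]
where $\hat\Omega_1(\bfeta)=n^{-1}\sum_i\{g_i(\beta_0,\bfeta)G_i(\bfeta)^T+G_i(\bfeta)g_i(\beta_0,\bfeta)^T\}$ and $\hat\Omega_2(\bfeta)=n^{-1}\sum_i G_i(\bfeta)G_i(\bfeta)^T$. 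Writing $\hat W=\hat\Omega(\beta,\bfeta)^{-1}$, $\dot\Omega=\hat\Omega_1(\bfeta)+2(\beta-\beta_0)\hat\Omega_2(\bfeta)$, and using $\partial\hat W/\partial\beta=-\hat W\dot\Omega\hat W$, differentiating $\hat Q=\hat g^T\hat W\hat g/2$ twice yields
\[
\frac{\partial^2\hat Q(\beta,\bfeta)}{\partial\beta^2}=\hat G^T\hat W\hat G-2\,\hat G^T\hat W\dot\Omega\hat W\hat g+\hat g^T\hat W\dot\Omega\hat W\dot\Omega\hat W\hat g-\hat g^T\hat W\hat\Omega_2(\bfeta)\hat W\hat g .
\]

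Next I would record the orders and the $\hat\bfeta$-versus-$\bfeta_0$ difference rates of the constituent factors, all uniform over the compact set $B$. From \eqref{eq: g hat bdd in prob} and the proof of Lemma \ref{lemma: g bdd by beta}, $\sup_{\beta\in B}\|\hat g(\beta,\bfeta)\|=O_p(\mu_n/\sqrt n)$ and $\|\hat G(\bfeta)\|=O_p(\mu_n/\sqrt n)$ for $\bfeta\in\{\bfeta_0,\hat\bfeta\}$; by \eqref{eq: Omega hat bdd} and Assumption \ref{assump:omega}, $\sup_{\beta}\|\hat W\|\le C$ w.p.a.1 and $\sup_\beta\|\dot\Omega\|,\|\hat\Omega_2(\bfeta)\|=O_p(1)$. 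The difference rates come directly from the earlier lemmas: Lemma \ref{lemma: g beta, nuisance}(ii)--(iii) gives $\|\hat G(\hat\bfeta)-\hat G(\bfeta_0)\|$ and $\sup_\beta\|\hat g(\beta,\hat\bfeta)-\hat g(\beta,\bfeta_0)\|$ of order $o_p(n^{-1/2})+O_p(m\log m/n)$, while Lemma \ref{lemma: Omega eta hat}(i)--(iii), applied to each of the three coefficient matrices through the quadratic decomposition above, gives $\sup_\beta\|\hat\Omega(\beta,\hat\bfeta)-\hat\Omega(\beta,\bfeta_0)\|$, $\sup_\beta\|\dot\Omega(\beta,\hat\bfeta)-\dot\Omega(\beta,\bfeta_0)\|$, and $\|\hat\Omega_2(\hat\bfeta)-\hat\Omega_2(\bfeta_0)\|$ all of order $r_\Omega:=O_p(\sqrt{m\log m/n})+O_p(\sqrt{m^3(\log m)^2/n^2})$; the inverse difference $\sup_\beta\|\hat W(\hat\bfeta)-\hat W(\bfeta_0)\|$ inherits the rate $O_p(r_\Omega)$ via $A^{-1}-B^{-1}=A^{-1}(B-A)B^{-1}$ and the uniform lower bound on $\xi_{\min}(\hat\Omega)$.

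I would then subtract the $\bfeta_0$-version of the displayed second derivative from the $\hat\bfeta$-version and telescope each of the four summands into a sum of products in which exactly one factor is replaced by its difference and all remaining factors are held at their $O_p$-orders. Every such summand is a product whose retained factors contribute total order $O_p(\mu_n^2/n)$ and whose one difference factor carries either the rate $o_p(n^{-1/2})+O_p(m\log m/n)$ (when a $\hat g$- or $\hat G$-factor of order $O_p(\mu_n/\sqrt n)$ is replaced) or the rate $r_\Omega$ (when an $\hat\Omega$-, $\dot\Omega$-, $\hat W$-, or $\hat\Omega_2$-factor of order $O_p(1)$ is replaced). Multiplying by the prefactor $n\mu_n^{-2}$, a difference of the first type contributes $O_p(\sqrt n/\mu_n)\{o_p(n^{-1/2})+O_p(m\log m/n)\}=o_p(1)$, using $\mu_n\to\infty$, $\mu_n^{2}\ge c\,m$ for some $c>0$ from Assumption \ref{assump: many weak moments}, and $m^2/n\to0$; a difference of the second type contributes $O_p(1)\cdot r_\Omega=o_p(1)$ under $m^3/n\to0$. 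Taking the supremum over $\beta\in B$ throughout, which is legitimate because every bound above is already uniform, yields the claim.

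The main obstacle is the bookkeeping in the two middle steps rather than any single hard estimate. First, I need a genuinely uniform-in-$\beta$ bound on $\sup_\beta\|\hat\Omega(\beta,\hat\bfeta)-\hat\Omega(\beta,\bfeta_0)\|$ (and likewise for $\dot\Omega$), which is not supplied verbatim by Lemma \ref{lemma: Omega eta hat}, since that lemma controls deviations from the population $\Omega(\beta,\bfeta_0)$ rather than the empirical-to-empirical difference; I therefore reduce the latter to the three coefficient-matrix differences through the exact quadratic-in-$\beta$ expansion and combine parts (i)--(iii) by the triangle inequality. Second, the rates must be matched carefully against the $n\mu_n^{-2}$ scaling, ensuring each difference factor is small enough relative to the retained factors of order $\mu_n/\sqrt n$; here the $O_p(\sqrt{m^3(\log m)^2/n^2})$ piece of $r_\Omega$ is precisely what forces the use of the assumption $m^3/n\to0$.
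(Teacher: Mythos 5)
Your proposal is correct and follows essentially the same route as the paper's proof: the same closed-form four-term expression for $\partial^2 \hat Q/\partial\beta^2$ (the paper's $J_1,\dots,J_4$), the same factor-by-factor telescoping using Lemma \ref{lemma: g beta, nuisance} for the $\hat g$/$\hat G$ differences and Lemma \ref{lemma: Omega eta hat} (combined with the triangle inequality through the population quantities) for the weighting-matrix differences, and the same final rate check against the $n\mu_n^{-2}$ scaling. The only cosmetic difference is bookkeeping: you carry $O_p(\mu_n/\sqrt{n})$ bounds for both vector factors so that the retained product is $O_p(\mu_n^2/n)$ and each matrix-difference need only be $o_p(1)$, whereas the paper uses the sharper $\|\hat G\|=O_p(\sqrt{m/n})$ and correspondingly demands $o_p(m^{-1/2})$ rates — both ledgers balance under $m^3/n\to 0$.
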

\begin{proof}
	Recall that we have shown $ \sup_{\beta\in B} \|\hat g(\beta, \hat\bfeta)\| = O_p(\mu_n/\sqrt{n})$  and $  \sup_{\beta\in B} \|\hat g(\beta, \bfeta_0)\| = O_p(\mu_n/\sqrt{n})  $ in \eqref{eq: g hat bdd in prob}. Similar to the proof of \eqref{eq: g hat bdd in prob}, we can show that $ n E\{ \|\hat G(\bfeta_0) \|^2 
	\}/m \leq n E\{ \|\hat G(\bfeta_0) - G \|^2 
	\}/m + n \| G \|^2 
/m  = E\left\{G_i^T G_i\right\}/m + \Theta (\mu_n^2/m)\leq tr\{E(G_iG_i^T )\}/m+ \Theta (\mu_n^2/m) \leq C+ \Theta (\mu_n^2/m)$ from Assumption \ref{assump:omega}(i) and  
	\begin{align*}
		\|\hat G(\hat \bfeta)\| = \|\hat G(\hat \bfeta) - \hat G(\bfeta_0) \| + \| \hat G(\bfeta_0) \| = o_p(n^{-1/2}) + O_p(\sqrt{\mu_n^2/n}) = O_p(\sqrt{\mu_n^2/n}). 
	\end{align*}
	Then we calculate 
	\begin{align*}
		\frac{\partial \hat\Omega(\beta, \bfeta)}{\partial \beta} &= \frac{\partial }{\partial \beta} \left\{ \frac1n \sum_{i=1}^n g_i(\beta, \bfeta) g_i(\beta, \bfeta)^T \right\} = \frac1n \sum_{i=1}^{n} \left\{  G_i(\bfeta) g_i(\beta, \bfeta)^T  +g_i(\beta, \bfeta) G_i(\bfeta)^T \right\}, \\
		\frac{\partial^2 \hat\Omega(\beta, \bfeta)}{\partial \beta^2} &=  \frac2n \sum_{i=1}^n G_i( \bfeta) G_i(\bfeta)^T,
	\end{align*}
and 
	\begin{align*}
		\frac{\partial^2 \hat Q(\beta, \bfeta)}{\partial \beta^2} &= -2 \hat G (\bfeta)^T  \hat \Omega(\beta, \bfeta)^{-1} \frac{\partial \hat\Omega(\beta, \bfeta)}{\partial \beta}  \hat \Omega(\beta, \bfeta)^{-1}\hat g(\beta, \bfeta) \\
		&\qquad + \hat G(\bfeta)^T  \hat \Omega(\beta, \bfeta)^{-1}  \hat G(\bfeta) \\
		&\qquad+ \hat g(\beta, \bfeta)^T \hat \Omega(\beta, \bfeta)^{-1}  \frac{\partial \hat\Omega(\beta, \bfeta)}{\partial \beta} \hat \Omega(\beta, \bfeta)^{-1}  \frac{\partial \hat\Omega(\beta, \bfeta)}{\partial \beta} \hat \Omega(\beta, \bfeta)^{-1}  \hat g(\beta, \bfeta) \\
		&\qquad- \frac{1}{2} \hat g(\beta, \bfeta)^T  \hat \Omega(\beta, \bfeta)^{-1} \frac{\partial^2 \hat\Omega(\beta, \bfeta)}{\partial \beta^2} \hat \Omega(\beta, \bfeta)^{-1}  \hat g(\beta, \bfeta)\\
		&:= -2 J_1(\beta, \bfeta) + J_2(\beta, \bfeta) + J_3(\beta, \bfeta) -\frac12 J_4(\beta, \bfeta).
	\end{align*}

Next, we show that $n\mu_n^{-2} \sup_{\beta\in B} \|J_1(\beta, \hat \bfeta)- J_1(\beta, \bfeta_0)\|= o_p(1)$. Note that 
\begin{align*}
	&J_1(\beta, \hat\bfeta)- 	J_1(\beta, \bfeta_0) \\
	&=  \hat G (\hat \bfeta)^T  \hat \Omega(\beta, \hat\bfeta)^{-1} \frac{\partial \hat\Omega(\beta,\hat \bfeta)}{\partial \beta}  \hat \Omega(\beta, \hat\bfeta)^{-1}\hat g(\beta, \hat\bfeta) - \hat G (\bfeta_0)^T  \hat \Omega(\beta, \bfeta_0)^{-1} \frac{\partial \hat\Omega(\beta, \bfeta_0)}{\partial \beta}  \hat \Omega(\beta, \bfeta_0)^{-1}\hat g(\beta, \bfeta_0) \\
	&= \hat G(\hat \bfeta)^T    \hat\Omega(\beta, \hat\bfeta)^{-1} \left\{ \frac{\partial \hat \Omega(\beta, \hat \bfeta)}{\partial \beta} -  \frac{\partial \hat\Omega(\beta, \bfeta_0)}{\partial \beta} \right\}    \hat\Omega(\beta, \hat\bfeta)^{-1} \hat g(\beta, \hat\bfeta)\\
	&\qquad + \left\{ \hat G(\hat \bfeta) - \hat G(\bfeta_0) \right\}^T   \hat\Omega(\beta, \hat\bfeta)^{-1}   \frac{\partial \hat\Omega(\beta, \bfeta_0)}{\partial \beta}  \hat\Omega(\beta, \hat\bfeta)^{-1} \hat g(\beta, \hat\bfeta)\\
	&\qquad+  \hat G(\bfeta_0) ^T   \hat\Omega(\beta, \hat\bfeta)^{-1}   \frac{\partial \hat\Omega(\beta, \bfeta_0)}{\partial \beta}  \hat\Omega(\beta, \hat\bfeta)^{-1}  \left\{\hat g(\beta, \hat\bfeta) - \hat g(\beta, \bfeta_0) \right\}\\
	& \qquad+  \hat G(\bfeta_0) ^T   \hat\Omega(\beta, \hat\bfeta)^{-1}   \frac{\partial \hat\Omega(\beta, \bfeta_0)}{\partial \beta}  \{\hat\Omega(\beta, \hat\bfeta)^{-1} - \hat\Omega(\beta, \bfeta_0)^{-1}\}  \hat g(\beta, \bfeta_0) \\
	&\qquad + \hat G(\bfeta_0) ^T   \{\hat\Omega(\beta, \hat\bfeta)^{-1} - \hat\Omega(\beta, \bfeta_0)^{-1}\}   \frac{\partial \hat\Omega(\beta, \bfeta_0)}{\partial \beta}  \hat\Omega(\beta, \bfeta_0)^{-1}   \hat g(\beta, \bfeta_0). \\
	&: = J_{11} + J_{12} + J_{13} + J_{14} + J_{15}. 
\end{align*}
For term $ J_{11} $, from Lemma \ref{lemma: Omega eta hat}(v), 
\begin{align*}
	&\sup_{\beta\in B}\left\|  \frac{\partial \hat \Omega(\beta, \hat \bfeta)}{\partial \beta} -  \frac{\partial \hat\Omega(\beta, \bfeta_0)}{\partial \beta}   \right\|\leq 2\sup_{\beta\in B} \left\| \frac1n \sum_{i=1}^{n} G_i(\hat\bfeta) g_i(\beta, \hat\bfeta)^T - G_i(\bfeta_0) g_i(\beta, \bfeta_0)^T   \right\| = o_p(m^{-1/2}).
\end{align*}
Hence, from $  \xi_{\max} (\hat\Omega(\beta, \hat\bfeta)^{-1}) \leq C$  for all $ \beta\in B $, w.p.a.1 in $ \eqref{eq: Omega hat bdd} $, we have that 
	\begin{align*}
&\sup_{\beta\in B}	\|J_{11}\| \leq C \|\hat G(\hat\bfeta)\| \sup_{\beta\in B}  \left\| \frac{\partial \hat \Omega(\beta, \hat \bfeta)}{\partial \beta} -  \frac{\partial \hat\Omega(\beta, \bfeta_0)}{\partial \beta}  \right\| \sup_{\beta\in B} \|\hat g(\beta, \hat\bfeta)\| \\
&= O_p(\sqrt{\mu_n^2/n})  o_p(m^{-1/2}) O_p(\mu_n/\sqrt{n})  = o_p(\mu_n^2/(n\sqrt{m})). 
\end{align*}

For term $ J_{12} $, from $ \xi_{\max} \{ E(G_ig_i(\beta, \bfeta_0)^T)\}\leq C $, and Lemma \ref{lemma: Omega eta hat},  
we have that w.p.a.1, $ \xi_{\max}\left\{  \partial \hat\Omega(\beta, \bfeta_0)/\partial \beta \right\} \leq C $ for all $ \beta\in B $. Then, from Lemma \ref{lemma: g beta, nuisance}, we have that 
	\begin{align*}
	\sup_{\beta\in B}	\|J_{12}\| \leq  C\| \hat G(\hat \bfeta) - \hat G(\bfeta_0) \| \sup_{\beta\in B}  \|\hat g(\beta, \hat\bfeta)\| =  o_p (n^{-1/2}) O_p(\mu_n/\sqrt{n})  = o_p(\mu_n/n). 
 \end{align*}

For term $ J_{13} $, 
\begin{align*}
\sup_{\beta\in B}	\|J_{13}\|  \leq  C \|	\hat G(\bfeta_0) \| \sup_{\beta\in B} \|\hat g(\beta, \hat\bfeta) - \hat g(\beta, \bfeta_0) \|= O_p(\sqrt{\mu_n^2/n}) o_p(n^{-1/2}) = o_p(\mu_n/n). 
\end{align*}

For term $ J_{14} $ 
\begin{align*}
	&\sup_{\beta\in B}	\|J_{14}\|  \leq C \| \hat G(\bfeta_0)\|   \sup_{\beta\in B}   \|\hat\Omega(\beta, \hat\bfeta) - \hat\Omega(\beta, \bfeta_0)\|  \sup_{\beta\in B}  \|\hat g(\beta, \bfeta_0) \| \\
	&=  O_p(\sqrt{\mu_n^2/n}) o_p(m^{-1/2})  O_p(\mu_n/\sqrt{n})  = o_p(\mu_n^2/(n\sqrt{m})).
\end{align*}
The term  $ J_{15} $ is  bounded by the same factor. Therefore, 
\begin{align*}
n \mu_n^{-2}\sup_{\beta\in B}\| J_1(\beta, \hat\bfeta)- 	J_1(\beta, \bfeta_0)  \| = o_p(1). 
\end{align*}
Then, it follows by arguments exactly analogous to those just given that 
\begin{align*}
	&n \mu_n^{-2}\sup_{\beta\in B}\| J_2(\beta, \hat\bfeta)- 	J_2(\beta, \bfeta_0)  \| = o_p(1), \qquad 	n \mu_n^{-2}\sup_{\beta\in B}\| J_3(\beta, \hat\bfeta)- 	J_3(\beta, \bfeta_0)  \| = o_p(1) \\
	&n \mu_n^{-2}\sup_{\beta\in B}\| J_4(\beta, \hat\bfeta)- 	J_4(\beta, \bfeta_0)  \| = o_p(1), 
\end{align*}
which completes the proof.

\end{proof}

\subsection{Proof of Theorem \ref{theo: GMM}} \label{sec: proof of theo gmm}
\subsubsection{Consistency} \label{subsec: consistency}
From Lemma \ref{lemma: cond}, it suffices to show that $
\mu_n^{-1}\sqrt{n} \| \bar g (\hat\beta, \bfeta_0)\| =o_p(1)$, where  $\bar g(\beta, \bfeta)= E\{ g_i (\beta, \bfeta)\}$.

First notice that from definition,  we have 
\begin{align*}
  \mu_n^{-2} n \hat Q(\hat\beta, \hat\bfeta) \leq \mu_n^{-2} n \hat Q(\beta_0, \hat\bfeta). 
\end{align*}
Consider any $\epsilon, \delta>0$.  By Lemma \ref{lemma:unif Q}, we have
\begin{align}
    \mu_n^{-2}n Q(\hat\beta, \bfeta_0)&\leq   \mu_n^{-2}n \hat Q(\hat \beta, \hat\bfeta) +o_p(1) \leq \mu_n^{-2} n \hat Q( \beta_0, \hat\bfeta) +o_p(1) \leq \mu_n^{-2} n  Q( \beta_0, \bfeta_0) +o_p(1). 
\end{align}
Hence, $\mu_n^{-2} n \{Q(\hat\beta, \bfeta_0) - Q(\beta_0, \bfeta_0)\}=o_p(1)$. By Assumption \ref{assump:omega}, we further have that
\begin{align*}
   &\mu_n^{-2} n \{Q(\hat\beta, \bfeta_0) - Q(\beta_0, \bfeta_0)\}  = \mu_n^{-2} n \{Q(\hat\beta, \bfeta_0) - m/(2n)\} \\
   &=\mu_n^{-2} n  \bar g (\hat\beta, \bfeta_0) \Omega (\hat\beta, \bfeta_0)^{-1} \bar g (\hat\beta, \bfeta_0) \geq C \mu_n^{-2}n \| \bar g (\hat\beta, \bfeta_0)\|^2.
\end{align*} 
Hence, $
\mu_n^{-1}\sqrt{n} \| \bar g (\hat\beta, \bfeta_0)\| =o_p(1)$. 

\subsubsection{Asymptotic Normality}\label{subsec: normal}
From Taylor expansion of the first order condition $\partial \hat Q(\beta, \hat\bfeta)/\partial \beta |_{\beta=\hat\beta}=0$, we have that 
\[
0= \frac{\partial \hat Q(\beta, \hat\bfeta)}{\partial \beta} \bigg|_{\beta=\hat\beta} = \frac{\partial \hat Q(\beta, \hat\bfeta)}{\partial \beta}\bigg|_{\beta=\beta_0} + \frac{\partial^2 \hat Q (\beta, \hat\bfeta)}{\partial \beta^2} \bigg|_{\beta=\bar\beta} (\hat\beta- \beta_0)
\]
where $\bar\beta$ is some value between $\beta_0$ and $\hat\beta$. We first analyze the term $\partial \hat Q(\beta, \hat\bfeta)/\partial \beta|_{\beta=\beta_0} $.

According to Lemma \ref{lemma: Q tilde}, we have 
\begin{align*}
		&n \mu_n^{-1}\frac{\partial \hat{Q} (\beta, \hat \bfeta)}{\partial \beta}  \big|_{\beta= \beta_0}= n \mu_n^{-1}\frac{\partial \tilde{Q} (\beta, \hat \bfeta)}{\partial \beta}  \big|_{\beta= \beta_0}+o_p(1)\\
	&= \underbrace{n\mu_n^{-1} G^T \Omega^{-1} \hat{g} (\beta_0, \hat\bfeta)}_{A1} + \underbrace{\frac{1}{n\mu_n} \sum_{i,j=1}^n \tilde U_i^T \Omega^{-1} (g_j(\beta_0, \hat \bfeta) - g_j)}_{A2} + \underbrace{\frac{1}{n\mu_n} \sum_{i,j=1}^n \tilde U_i^T \Omega^{-1} g_j}_{A3} +o_p(1)
\end{align*}
where $\tilde U_i= G_i( \hat\bfeta) - G - E(G_i g_i^T) \Omega^{-1} g_i(\beta_0, \hat\bfeta)$.

We analyze the three terms individually. For the first term,
\begin{align*}
	A1&= n\mu_n^{-1} G^T \Omega^{-1} \hat{g} (\beta_0, \hat\bfeta)\\
	&=n\mu_n^{-1} G^T \Omega^{-1} \hat{g} (\beta_0, \bfeta_0) + n \mu_n^{-1} G^T \Omega^{-1}\big\{\hat g(\beta_0, \hat \bfeta)- \hat g(\beta_0, \bfeta_0)\big\}. \\
	&= \sqrt{n}\mu_n^{-1} G^T \Omega^{-1} \sqrt{n}\ \hat{g} (\beta_0, \bfeta_0)+ o_p(1),
\end{align*}
where the last expression is from $ \| \sqrt{n} \big\{  \hat g(\beta_0, \hat \bfeta)- \hat g(\beta_0, \bfeta_0\big\}  \| =o_p(1)$ from Lemma \ref{lemma: g beta, nuisance}(i)  and $  \| \sqrt{n}\mu_n^{-1} G^T\Omega^{-1} \| \leq C$ from Assumption \ref{assump: many weak moments}. 

Next, from straightforward decomposition, we have
\begin{align*} 
A2& \leq \frac{C}{n\mu_n} \sum_{i,j=1} \tilde U_i^T  \{ g_j (\beta_0, \hat\bfeta)- g_j\}= \frac{C}{\sqrt{n}\mu_n} \left[\sum_{i=1}^n\tilde U_i \right]^T  \sqrt{n}\big\{\hat g(\beta_0, \hat\bfeta) - \hat g(\beta_0, \bfeta_0) \big\} \\
&\leq C  \left\| \frac{1}{\sqrt{n} \mu_n}  \sum_{i=1}^n\tilde U_i   \right\| \left\|  \sqrt{n}\big\{\hat g(\beta_0, \hat\bfeta) - \hat g(\beta_0, \bfeta_0) \big\}   \right\|
\end{align*}
The result follows from  $\frac{1}{\sqrt{n}\mu_n} \| \sum_{i=1}^n \tilde U_i \|=  O_p(1)$  shown in the proof of Lemma \ref{lemma: Q tilde} and Lemma \ref{lemma: g beta, nuisance}(i).

Moreover, we decompose $A3$ as 
\begin{align*}
	A3&= \frac{1}{n\mu_n} \sum_{i, j=1}^n \tilde{U}_i^T \Omega^{-1} g_j= \frac{1}{n\mu_n} \sum_{i, j=1}^n {U}_i^T \Omega^{-1} g_j + \frac{1}{n\mu_n} \sum_{i, j=1}^n (\tilde U_i - U_i)^T \Omega^{-1} g_j\\
	&= \frac{1}{n\mu_n} \sum_{i, j=1}^n {U}_i^T \Omega^{-1} g_j + o_p(1)
\end{align*}
where the last equality is from  
\begin{align*}
	&(n\mu_n)^{-1} \sum_{i,j=1}^n (\tilde U_i- U_i)^T \Omega^{-1} g_j \leq  C \left\|  n^{-1/2}\sum_{i=1}^n (\tilde U_i- U_i)\right\| \left\| \frac{1}{\mu_n\sqrt{n}} \sum_{j=1}^n g_j \right\| = o_p(1),
\end{align*}
 $\| n^{-1/2}\sum_{i=1}^n (\tilde U_i- U_i)\|=o_p(1)$ from  \eqref{eq: U_i}, and $\|\mu_n^{-1}n^{-1/2} \sum_{j=1}^n g_j \| =O_p(1)$ from \\ $E \|\mu_n^{-1}n^{-1/2} \sum_{j=1}^n g_j\|^2 =\mu_n^{-2} tr( \Omega )\leq C \mu_n^{-2} m \leq C$.

In conclusion, we have that $ n \mu_n^{-1} \partial \hat{Q} (\beta, \hat \bfeta)/\partial \beta \big|_{\beta= \beta_0}$ is asymptotically equivalent with $ n \mu_n^{-1} \partial \hat{Q} (\beta,  \bfeta_0)/\partial \beta \big|_{\beta= \beta_0}$. 
Finally, by Lemma \ref{lemma: Q second deriv}, we have
\[
n \mu_n^{-2}\frac{\partial^2 \hat Q (\beta, \hat\bfeta)}{\partial \beta^2} \bigg|_{\beta=\bar\beta} =n \mu_n^{-2} \frac{\partial^2 \hat Q (\beta, \bfeta_0)}{\partial \beta^2} \bigg|_{\beta=\bar\beta}  + o_p(1) .
\]
The asymptotic normality result follows from Theorem 3 in \cite{Newey:2009aa}.

\section{Proof of Theorem \ref{theo: overidentification}}

From the proof of Theorem 4 in \cite{Newey:2009aa}, we have
\begin{align*}
	\hat Q (\beta_0, \bfeta_0 ) &  =\hat{\bmg}(\bbeta_0, {\bfeta}_0 )^T \hat{\Omega} (\bbeta_0, {\bfeta}_0 )^{-1} \hat{\bmg}(\bbeta_0,{\bfeta}_0 )/2 \\
	& = \underbrace{\hat{\bmg}(\bbeta_0, {\bfeta}_0 )^T {\Omega}^{-1} \hat{\bmg}(\bbeta_0,{\bfeta}_0 )/2}_{	\tilde Q (\beta_0, \bfeta_0 )  } + o_p(\sqrt{m}/n)  ,
\end{align*}
and 
\begin{align*}
&	\frac{ n  \hat{\bmg}(\bbeta_0, {\bfeta}_0 )^T {\Omega}^{-1} \hat{\bmg}(\bbeta_0,{\bfeta}_0 ) - m  }{\sqrt{2m}}\\
& = \underbrace{	\frac{ \sum_{i=1}^n  {\bmg}_i (\bbeta_0, {\bfeta}_0 )^T {\Omega}^{-1} {\bmg}_i (\bbeta_0, {\bfeta}_0 )/n - m  }{\sqrt{2m}} }_{o_p(1)}+ 	\frac{ n^{-1} \sum_{i\neq j }  {\bmg}_i (\bbeta_0, {\bfeta}_0 )^T {\Omega}^{-1} {\bmg}_j (\bbeta_0, {\bfeta}_0 ) }{\sqrt{2m}}  \xrightarrow{d} N(0,1) .
\end{align*}
These results imply that 
\begin{align*}
	\frac{2n \hat Q (\beta_0, \bfeta_0 )  - m }{\sqrt{2m}} \xrightarrow{d} N(0,1) .
\end{align*}
By standard results that as $m\rightarrow\infty$, the $(1-\alpha)$th quantile $\chi_{1-\alpha}^2(m)$ of a $\chi^2(m)$ distribution has the property that $ \{ \chi_{1-\alpha}^2(m) - m \} / \sqrt{2m}$ converges to the $(1-\alpha)$th quantile of the standard normal distribution. Hence, 
\begin{align*}
	P\left( 2n \hat Q (\beta_0, \bfeta_0) \geq  \chi_{1-\alpha}^2(m) \right) = 	P\left( \frac{2n \hat Q (\beta_0, \bfeta_0) - m }{\sqrt{2m}} \geq \frac{ \chi_{1-\alpha}^2(m) - m}{\sqrt{2m}} \right) \to \alpha  .
\end{align*}

By a Taylor expansion and from $\partial \hat Q(\beta, \hat\bfeta)/\partial \beta |_{\beta=\hat\beta}=0$, for  $\bar\beta$ on the line joining $\hat\beta$ and $\beta_0$, we have 
\begin{align*}
	& 2n \{  \hat Q (\beta_0, \hat \bfeta )   -  \hat Q (\hat\beta, \hat \bfeta )   \}  \\
	& =  \mu_n^2 (\hat\beta - \beta_0 )^2 n \mu_n^{-2}\left\{  \frac{\partial^2  \hat Q ( \beta, \hat\bfeta)}{\partial \beta^2 } \mid_{\beta = \bar\beta}\right\}   \\ 
	&=  \mu_n^2 (\hat\beta - \beta_0 )^2 n \mu_n^{-2} \bigg\{  \frac{\partial^2  \hat Q ( \beta, \bfeta_0)}{\partial \beta^2 } \mid_{\beta = \bar\beta} +{\frac{\partial^2  \hat Q ( \beta, \hat \bfeta)}{\partial \beta^2 } \mid_{\beta = \bar\beta}  - \frac{\partial^2  \hat Q ( \beta,  \bfeta_0 )}{\partial \beta^2 } \mid_{\beta = \bar\beta} } \bigg\} \\
	&=   \mu_n^2 (\hat\beta - \beta_0 )^2  n \mu_n^{-2} \bigg\{  \frac{\partial^2  \hat Q ( \beta, \bfeta_0)}{\partial \beta^2 } \mid_{\beta = \bar\beta}  +o_p(1) \bigg\} \\
	&= \mu_n^2 (\hat\beta - \beta_0 )^2 \{  n \mu_n^{-2} G^T \Omega^{-1} G +o_p(1) \}  \\
	&= O_p(1) 
\end{align*}
where the fourth line is from Lemma \ref{lemma: Q second deriv}, and the fifth line is from Lemma A13 in \cite{Newey:2009aa}. Moreover, recall the definition that $\hat a(\beta, \hat\bfeta)= \mu_n^{-1} \sqrt{n} \Omega(\beta, \bfeta_0)^{-1}\hat g(\beta, \hat\bfeta) $. By Assumption \ref{assump:omega} and \eqref{eq: g hat bdd in prob}, 
\begin{align*}
	\|\hat a(\beta_0, \hat\bfeta)\|^2= \mu_n^{-2} n \hat g(\beta_0, \hat\bfeta)^T \Omega(\beta_0, \bfeta_0)^{-1}    \Omega(\beta_0, \bfeta_0)^{-1}  \hat g(\beta_0, \hat\bfeta) \leq C \mu_n^{-2} n \| \hat g(\beta_0, \hat\bfeta)\|^2. 
\end{align*}

Then recall the definition that 
\begin{align*}
	&  2n \hat Q(\beta_0, \hat\bfeta)=\mu_n^{2} \hat a(\beta_0, \hat\bfeta)^T \Omega \hat\Omega(\beta_0, \hat\bfeta)^{-1}\Omega \hat a(\beta_0, \hat\bfeta),\\
	& 2n \tilde Q(\beta_0, \hat\bfeta)=\mu_n^{2} \hat a(\beta_0, \hat\bfeta)^T \Omega \hat a(\beta_0, \hat\bfeta),
\end{align*}
we have 
\begin{align*}
	&2 n |\hat Q (\beta_0, \hat\bfeta)- \tilde{Q} (\beta_0, \hat\bfeta) | \\
	&\leq \mu_n^{2} |\hat a(\beta_0, \hat\bfeta)^T \{ \hat\Omega(\beta_0, \hat\bfeta) - \Omega\}\hat a(\beta_0, \hat\bfeta)| \\
	&\qquad +\mu_n^{2}  |\hat a(\beta_0, \hat\bfeta)^T \{ \hat\Omega(\beta_0, \hat\bfeta) - \Omega\} \hat\Omega(\beta_0, \hat\bfeta)^{-1} \{ \hat\Omega(\beta_0, \hat\bfeta) - \Omega\} \hat a(\beta_0, \hat\bfeta)  | \\
	&\leq \mu_n^{2}  \|\hat a(\beta_0, \hat\bfeta)\|^2 \left\{\| \hat\Omega(\beta_0, \hat\bfeta) - \Omega\| + C\| \hat\Omega(\beta_0, \hat\bfeta) - \Omega\|^2 \right\} \\
	&\leq  C n \| \hat g(\beta_0, \hat\bfeta)\|^2  \left\{\| \hat\Omega(\beta_0, \hat\bfeta) - \Omega\| + C\| \hat\Omega(\beta_0, \hat\bfeta) - \Omega\|^2 \right\}  \\
	&= o_p(m ) 
\end{align*}
where the last line is from $ \|\hat g(\beta_0, \hat\bfeta) \| = O_p(\sqrt{m}/\sqrt{n}) $ and Lemma \ref{lemma: Omega eta hat}(iv).

Next, we show that 
\[
2 n |\tilde Q (\beta_0, \hat\bfeta)- \tilde{Q} (\beta_0, \bfeta_0) | =O_p(1).
\]
This can be easily seen as  
\begin{align*}
	& 2 n |\tilde Q (\beta_0, \hat\bfeta)- \tilde{Q} (\beta_0, \bfeta_0) | \\
	&= n|\hat g(\beta_0, \hat\bfeta)^T \Omega^{-1} \hat g(\beta_0, \hat\bfeta) - \hat g(\beta_0, \bfeta_0)^T \Omega ^{-1}\hat g(\beta_0, \bfeta_0)| \\
	&=  n|\{\hat g(\beta_0, \hat\bfeta)-\hat g(\beta_0, \bfeta_0) \}^T \Omega ^{-1} \hat g(\beta_0, \hat\bfeta) +  \hat g(\beta_0, \bfeta_0)^T \Omega ^{-1} \{\hat g(\beta_0, \hat\bfeta) - \hat g(\beta_0, \bfeta_0)\} | \\
	&\leq  n|\{\hat g(\beta_0, \hat\bfeta)-\hat g(\beta_0, \bfeta_0) \}^T \Omega^{-1} \hat g(\beta_0, \hat\bfeta) |+ n | \hat g(\beta_0, \bfeta_0)^T \Omega ^{-1} \{\hat g(\beta_0, \hat\bfeta) - \hat g(\beta_0, \bfeta_0)\} | \\
	&\leq n C \|\hat g(\beta_0, \hat\bfeta)-\hat g(\beta_0, \bfeta_0) \| \|\hat g(\beta_0, \hat\bfeta) \|+ n C \| \hat g(\beta_0, \bfeta_0)\| \| \hat g(\beta_0, \hat\bfeta) - \hat g(\beta_0, \bfeta_0) \|\\
	& = o_p(\sqrt{m} ) 
\end{align*}
where the last line is from Lemma \ref{lemma: g beta, nuisance}(i),  $ \|\hat g(\beta_0, \hat\bfeta) \| = O_p(\sqrt{m}/\sqrt{n}) $ and $ \|\hat g (\beta_0, \bfeta_0)\|=O_p(\sqrt{m}/\sqrt{n})$, and $m^3/n\rightarrow 0$. 

Lastly, from $	\hat Q (\beta_0, \bfeta_0 ) =	\tilde Q (\beta_0, \bfeta_0 ) + o_p(\sqrt{m}/n) $ at the beginning of this section, we conclude that 
\begin{align}
&	\frac{2n  \{ \hat Q (\hat\beta,   \hat \bfeta )  -  \hat Q (\beta_0, \bfeta_0 ) \} }{\sqrt{2(m-1)}}  \nonumber\\
&=	\frac{2n  \{ \hat Q ( \hat \beta,   \hat \bfeta )  - \hat Q (\beta_0,   \hat \bfeta ) + \hat Q (\beta_0,   \hat \bfeta )- \tilde{Q} (\beta_0, \hat\bfeta)+ \tilde{Q} (\beta_0, \hat\bfeta) -  \tilde{Q} (\beta_0, \bfeta_0)  + \tilde{Q} (\beta_0, \bfeta_0)  - \hat Q (\beta_0, \bfeta_0 ) \} }{\sqrt{2(m-1)}} \nonumber\\
	&  = o_p(1) \label{eq: overidentification}
\end{align}

Therefore, 
\begin{align*}
	\frac{2n  \hat Q (\hat\beta, \hat  \bfeta)  - (m-1) }{\sqrt{2(m-1)}} &= \frac{2n  \hat Q (\beta_0, \bfeta_0 )  - (m-1) }{\sqrt{2(m-1)}} + o_p(1) \\
	&=  \frac{\sqrt{2m}}{\sqrt{2(m-1)}}\frac{2n  \hat Q (\beta_0, \bfeta_0 )  - m }{\sqrt{2m}}  + \frac{1}{\sqrt{2(m-1)}}+ o_p(1) \\
	&\xrightarrow{d} N(0,1) .
\end{align*}
Hence, it follows by the argument above that 
\begin{align*}
	P\left( 2n \hat Q ( \hat \beta, \hat  \bfeta) \geq  \chi_{1-\alpha}^2(m-1) \right) = 	P\left( \frac{2n \hat Q (\hat \beta, \hat  \bfeta) - (m-1) }{\sqrt{2(m-1)}} \geq \frac{ \chi_{1-\alpha}^2(m-1) - (m-1)}{\sqrt{2(m-1)}} \right) \to \alpha  .
\end{align*}

\section{Other Exposure and Outcome Types} \label{sec: binary}

Many MR applications consider binary outcomes \citep{Holmes:2014, Holmes:2017}. Binary exposure is not very common in MR studies \citep{Burgess:2018aa}, but are still of interest \citep{Nead:2015aa,Gage:2017aa, Larsson:2017, Vaucher:2018aa}. In this section, we extend the methods to consider binary exposure and/or binary outcome. We  focus on identification and leave formal treatment of inference under many weak invalid IVs to future work. 

In this section, when the exposure (or outcome) variable is continuous, we consider the linear model (i.e., the identity link function); when the exposure (or outcome) variable is binary, we consider the log-linear model (i.e., the exponential link function). Therefore, for continuous outcome with $ f_y(x)=x $, $ \beta_0 a = E[Y\mid 
A=a, U, \bmZ, \bmX]- E[Y\mid A=0, U, \bmZ, \bmX]$ encodes the treatment effect on the outcome mean upon increasing the exposure by $ a $ unit; for binary outcome with  $ f_y(x)=\exp(x) $, $ \beta_0 a = \log\{ P(Y=1\mid A=a, U, \bmZ, \bmX)\}- \log \{ P(Y=1\mid A=0, U, \bmZ, \bmX)\}$ encodes the log risk ratio. Other types of link function (e.g., logistic or probit) are not considered because of the noncollapsibility \citep{Baiocchi:2014aa} and the effect of $ U $ and $ A, \bmZ $ are not easily separable \citep{Clarke:2010aa}. 

Consider the following structural equations:
\begin{align}
	& E[Y\mid A,  U, \bmZ, \bmX]=f_y\left\{\beta_0A+ \bfalpha_0^T  \bmZ +\xi_y(U,\bmX)\right\}\label{eq: out model X general}\\
	& E[A\mid U, \bmZ, \bmX]=f_a \left\{ \bfgamma^T_0 \bmZ+ \xi_a(U, \bmX)\right\} \label{eq: exp model X general}
\end{align}
where $ \beta_0, \bfalpha_0, \bfgamma_0 $ are unknown true parameters, $  \xi_y, \xi_a$ are unspecified functions, $ f_y, f_a$ are pre-specified link functions, and $ \bmZ\perp U|\bmX $.

Next, we state our identification results. Let $ R_A= A- E(A\mid \bmZ, \bmX) $ when $ f_a(x)=x $; $ R_A= A\exp(- \bfgamma_0^T \bmZ) - E(  A\exp(- \bfgamma_0^T \bmZ)\mid \bmX )$ when $ f_a(x)=\exp(x) $.  Let $ r_Y= Y-\beta A $ when $ f_y(x)=x $; $ r_Y = Y\exp(-\beta A- \bfalpha^T \bmZ) $ when $ f_y(x)= \exp(x) $; $ r_{Y0} $  denote $ r_Y $ when $ \beta=\beta_0 $ and $ \bfalpha=\bfalpha_0 $. We use  $ \bfeta $ to denote the nuisance parameters which may be different for each scenario  and let $ \bfeta_0 $ be the true values.

\begin{prop} \label{prop1}
	Under \eqref{eq: out model X general}-\eqref{eq: exp model X general} and $ \bmZ\perp U\mid \bmX $, \\
	(a) When $ f_y(x)=x $, $ \beta_0 $ is the unique solution to $ E[ g(\O; \beta, \bfeta_0) ]=0 $, where 
	\begin{align}
		g(\O; \beta, \bfeta_0) = (\bmZ- E(\bmZ|\bmX)) R_A r_Y,  \label{eq: iden binary1}
	\end{align}
	provided that $ E[ (\bmZ- E(\bmZ|\bmX)) R_AA ] \neq \zero$.\\
	(b) When $ f_y(x)=\exp(x) $, $ \beta_0 $  and $\bfalpha_0 $ are identified from $ E[ g( \O; \beta, \bfalpha, \bfeta_0) ] =0 $, where 
	\begin{align}
		g(\O; \beta, \bfalpha, \bfeta_0)=\left[  \begin{array}{c}
			(\bmZ- E(\bmZ|\bmX)) R_A r_Y\\
			(\bmZ- E(\bmZ|\bmX)) r_Y
		\end{array} \right], \label{eq: iden binary2}
	\end{align}
	provided that $ E[\partial g(\O; \beta_0, \bfalpha_0, \bfeta_0) /\partial (\beta, \bfalpha)] $ is of rank $ m+1 $.
\end{prop}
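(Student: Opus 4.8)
The plan is to treat the two parts separately but around a shared core computation. For part (a) I would first verify that the population moment $\Ex{g(\O;\beta,\bfeta_0)}$ vanishes at $\beta=\beta_0$, and then exploit the fact that $g$ is affine in $\beta$ to upgrade this to global uniqueness. For part (b) the outcome link is exponential, so $r_Y=Y\exp(-\beta A-\bfalpha^T\bmZ)$ is nonlinear in $(\beta,\bfalpha)$; here I would again verify $\Ex{g(\O;\beta_0,\bfalpha_0,\bfeta_0)}=\zero$ and then appeal to the standard local identification argument, under which full column rank $m+1$ of the Jacobian $\Ex{\partial g/\partial(\beta,\bfalpha)}$ at the truth guarantees that $(\beta_0,\bfalpha_0)$ is the unique zero of the moment map in a neighborhood.

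The technical heart, common to both parts, is to show that the conditional mean $\Ex{\delta_A r_{Y0}\mid\bmZ,\bmX}$ (and, in part (b), also $\Ex{r_{Y0}\mid\bmZ,\bmX}$) depends on $\bmZ$ only through $\bmX$; once this holds, multiplying by $\bmZ-\Ex{\bmZ\mid\bmX}$ and taking expectations yields $\zero$ because $\Ex{\bmZ-\Ex{\bmZ\mid\bmX}\mid\bmX}=\zero$. I would organize this around two observations. First, the exponential outcome link produces a clean cancellation: under \eqref{eq: out model X general}, $\Ex{r_{Y0}\mid A,U,\bmZ,\bmX}=\exp(\xi_y(U,\bmX))$, so the $\beta_0A+\bfalpha_0^T\bmZ$ terms disappear and what remains is a function of $(U,\bmX)$ only (the analogue in part (a) being simply $\Ex{Y-\beta_0A\mid A,U,\bmZ,\bmX}=\bfalpha_0^T\bmZ+\xi_y(U,\bmX)$). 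Second, writing $\zeta_a=\xi_a$ when $f_a(x)=x$ and $\zeta_a=\exp(\xi_a)$ when $f_a(x)=\exp(x)$, a short calculation using $E(A\mid U,\bmZ,\bmX)=f_a(\bfgamma_0^T\bmZ+\xi_a)$ and $\bmZ\perp U\mid\bmX$ gives the key identity $\Ex{\delta_A\,h(U,\bmX)\mid\bmZ,\bmX}=\cov(\zeta_a(U,\bmX),h(U,\bmX)\mid\bmX)$ for any $h$, and in particular $\Ex{\delta_A\mid\bmZ,\bmX}=0$. Applying this with $h=\xi_y$ (part (a), $f_y(x)=x$) or $h=\exp(\xi_y)$ (part (b), $f_y(x)=\exp(x)$) produces a conditional covariance that is free of $\bmZ$, which is exactly what is needed, and the same template handles all four combinations of $(f_a,f_y)$ uniformly.

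For the uniqueness step in part (a) I would use that $\Ex{g(\O;\beta,\bfeta_0)}=\Ex{g(\O;\beta,\bfeta_0)}-\Ex{g(\O;\beta_0,\bfeta_0)}=(\beta_0-\beta)\,\Ex{(\bmZ-\Ex{\bmZ\mid\bmX})\delta_A A}$, which vanishes if and only if $\beta=\beta_0$ under the stated relevance condition $\Ex{(\bmZ-\Ex{\bmZ\mid\bmX})\delta_A A}\neq\zero$. For part (b) this shortcut is unavailable, so I would instead invoke the classical result that a continuously differentiable moment map with an isolated zero at $(\beta_0,\bfalpha_0)$ and full column rank Jacobian there is locally injective, yielding local identification; the rank-$(m+1)$ hypothesis is precisely the order-plus-rank condition that licenses this, given that $(\beta,\bfalpha)\in\mathbb{R}^{m+1}$ while $g$ is $2m$-dimensional.

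The main obstacle I anticipate is making the exponential-link cancellations fully rigorous rather than merely formal: I must keep careful track of which factors are measurable with respect to $(A,U,\bmZ,\bmX)$ versus $(\bmZ,\bmX)$, and verify that the $\exp(-\bfgamma_0^T\bmZ)$ factor inside $\delta_A$ and the $\exp(-\beta_0A-\bfalpha_0^T\bmZ)$ factor inside $r_{Y0}$ exactly offset the exponentials arising from the structural equations, so that every residual $\bmZ$-dependence is annihilated and only $\bmX$-conditional covariances survive. A secondary, more routine point is integrability --- finiteness of the relevant moments so that the interchanges of conditioning and the covariance identities are justified --- which I expect to be straightforward relative to the cancellation structure.
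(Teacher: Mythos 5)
Your proposal is correct and follows essentially the same route as the paper's proof: iterated expectations plus $\bmZ\perp U\mid \bmX$ to show the moment conditions vanish at the truth, linearity of $g$ in $\beta$ to get global uniqueness in part (a), and the rank-$(m+1)$ Jacobian condition invoked for (local) identification in part (b). Your single covariance identity $E(\delta_A\, h(U,\bmX)\mid \bmZ,\bmX)=\cov(\zeta_a(U,\bmX), h(U,\bmX)\mid \bmX)$ with $\zeta_a=\xi_a$ or $\exp(\xi_a)$ is just a unified packaging of the paper's case-by-case calculations (and it correctly carries the factor $\exp(\xi_a(U,\bmX))$ in the binary-exposure case, which the paper's displayed derivation abbreviates as $\xi_a(U,\bmX)$).
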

Proposition \ref{prop1} provides identification formulas for $ \beta_0 $ for binary exposure and/or binary outcome.  The proof will show that $ E(R_Ar_{Y0}|\bmZ, \bmX) =E(R_Ar_{Y0}|\bmX)  $ holds almost surely in all cases and $ E(r_{Y0}|\bmZ, \bmX)=E(r_{Y0}| \bmX)  $ holds almost surely when $ f_y(x)=\exp(x) $. In all the cases, $ R_A $ is the residual in $ A $ after netting out the effect of $ \bmZ $; $ r_{Y0} $ is the residual in $ Y $ after netting out the effect of $ A $ when $ f_y(x)=x $, and is the residual in $ Y $ after netting out the effect of $ A $ and $ \bmZ $ when $ f_y(x)=\exp(x) $.  Note that  when $ f_a(x)=x $, the exposure model \eqref{eq: exp model X general} can be relaxed,  and   identification in  Proposition \ref{prop1} remains true as long as $ E(A\mid U, \bmZ, \bmX)$ can be expressed as $ \gamma_0(\bmZ, \bmX)+\xi_a(U, \bmX) $, where $\gamma_0, \xi_a $ are unspecified functions.  Similarly, when  $ f_y(x)=x $, the outcome model \eqref{eq: out model X general}  can be relaxed, and identification in Proposition \ref{prop1}  remains true as long as $ E(Y|U, \bmZ, \bmX) $ can be expressed as  $\beta_0A+  \alpha_0(\bmZ, \bmX)+\xi_y(U, \bmX)$, where $ \alpha_0, \xi_y$ are unspecified functions.

For binary exposure and continuous outcome (i.e., $ f_a(x)= \exp(x), f_y(x)=x $),   $\bfgamma_0 $ can be identified a prior from a separate set of estimation equations $ E[ (\bmZ-E(\bmZ|\bmX))R_A ] =\zero$.

In Proposition \ref{prop1}(b) where the outcome is binary (i.e., $ f_y(x)=\exp(x) $), $\bfalpha_0$ cannot be identified a prior and needs to be identified simultaneously with $\beta_0$, so that  the estimation equations for $ \beta_0 $ and $ \bfalpha_0 $ are stacked. 

%We also derive the  set of influence functions for each exposure and outcome type.
%\begin{theorem} \label{theo: IF}
%	Under model $ \mathcal{M}_b $, the influence function for $ \beta $ (and $ \bfalpha $ if $ f_y(x) =\exp(x)$) defined in Proposition \ref{prop1} has the following form. \\
%	(a) When $ f_a(x)=\exp(x), f_y(x)= x$,
%	\begin{align}
%		g^{IF}(\O; \beta, \bfeta_0) = (\bmZ- E(\bmZ))\left[ R_AR_Y- R_AE(AR_Y/E(A|\bmZ)|\bmZ) - E( R_A R_Y) \right].\nonumber
%	\end{align}
%	(b) When $ h_1(x)= x, f_y(x)=\exp(x) $,
%	\begin{align}
%		g^{IF}(\O; \beta, \bfalpha, \bfeta_0)=\left[  \begin{array}{c}
%			(\bmZ- E(\bmZ))\left[ R_AR_Y- R_AE(R_Y|\bmZ) - E(R_A R_Y) \right]\\
%			(\bmZ- E(\bmZ))\{ R_Y- E(R_Y) \}
%		\end{array} \right].\nonumber
%	\end{align}
%	(c) When  $ h_1(x)=  f_y(x)=\exp(x) $,
%	\begin{align}
%		g^{IF}(\O; \beta, \bfalpha, \bfeta_0)=\left[  \begin{array}{c}
%			(\bmZ- E(\bmZ))\left[ R_AR_Y- R_AE(AR_Y/E(A|\bmZ)|\bmZ) - E( R_A R_Y) \right]\\
%			(\bmZ- E(\bmZ))\{ R_Y- E(R_Y) \}
%		\end{array} \right]. \nonumber
%	\end{align} 
%\end{theorem}
%The proof is in the supplementary materials. Hence, $\hat\beta$ for different scenarios can be obtained from using the corresponding influence function in \eqref{eq: cue}. 
\begin{proof}
(a) The case where $ f_a(x)= f_y(x)= x $ is already established in the main article.  Now, consider the case with binary exposure and continuous outcome, i.e., $ f_a(x)=\exp(x) $ and $ f_y(x)= x $. Then
\begin{align*}
	&E(R_Ar_{Y0} \mid \bmZ, \bmX) \\
	&= E\left[  \left\{ A\exp(-\bm\gamma_0^T\bmZ) - E( A\exp(-\bm \gamma_0^T \bmZ )\mid\bmX)  \right\} (Y-\beta_0 A)  \mid \bmZ, \bmX \right] \\
	&=E\left[  \left\{ A\exp(-\bm\gamma_0^T\bmZ) - E( A\exp(-\bm \gamma_0^T \bmZ )\mid\bmX)  \right\} (E(Y\mid A, U, \bmZ,\bmX)-\beta_0 A)  \mid \bmZ, \bmX \right] \\
	&= E\left[  \left\{ A\exp(-\bm\gamma_0^T\bmZ) - E( A\exp(-\bm \gamma_0^T \bmZ )\mid\bmX)  \right\} \left\{ \bm\alpha_0^T \bmZ+ \xi_y(U, \bmX)\right\}  \mid \bmZ, \bmX \right] \\
	&= E\left[  \left\{ \xi_a(U, \bmX)- E( \xi_a(U, \bmX) \mid\bmX)  \right\} \left\{ \bm\alpha_0^T \bmZ+ \xi_y(U, \bmX)\right\}  \mid \bmZ, \bmX \right] \\
	&= \text{cov} \left\{ \xi_a(U, \bmX), \bm\alpha_0^T \bmZ+ \xi_y(U, \bmX) \mid \bmZ, \bmX \right\} \\
	&=  \text{cov} \left\{ \xi_a(U, \bmX),  \xi_y(U, \bmX) \mid \bmZ, \bmX \right\} \\
	&=  \text{cov} \left\{ \xi_a(U, \bmX),  \xi_y(U, \bmX) \mid \bmX \right\} .
\end{align*}
Thus, 
\begin{align*}
&E\{ g(\bm O; \beta_0 ,\bfeta_0)\} =  E\left\{  (\bmZ- E(\bmZ\mid \bmX)) E(R_Ar_Y\mid \bmZ, \bmX) \right\} \\
&= E\left\{  (\bmZ- E(\bmZ\mid \bmX))  \text{cov} \left\{ \xi_a(U, \bmX),  \xi_y(U, \bmX) \mid \bmX \right\} \right\} = 0. 
\end{align*}

Next, note that
\begin{align*}
&E\{ g(\bm O; \beta ,\bfeta_0)\} - E\{ g(\bm O; \beta_0 ,\bfeta_0)\}  =(\beta_0-\beta) E\{ (\bmZ- E(\bmZ\mid \bmX) ) R_A A\}
\end{align*}
Therefore, $ \beta_0 $ is the unique solution to $E\{ g(\bm O; \beta ,\bfeta_0)\} =\zero $, provided that $ E\{ (\bmZ- E(\bmZ\mid \bmX) ) R_A A\}\neq \zero  $.

(b) Consider first the case with continuous exposure and binary outcome, i.e., $ f_a(x)=x $ and $ f_y(x)= \exp(x) $. Then 
\begin{align*}
	E(r_{Y0}\mid \bmZ, \bmX) = E \left\{\exp(\xi_y (U, \bmX)) \mid \bmZ, \bmX\right\} = E \left\{\exp(\xi_y (U, \bmX)) \mid \bmX\right\} ,
\end{align*}
and 
\begin{align*}
	&E\left\{  R_Ar_{Y0}\mid \bmZ, \bmX \right\}  = 	E\left\{  R_A E (r_{Y0} \mid A, U, \bmZ, \bmX) \mid \bmZ, \bmX \right\} \\
	& = E\left\{  R_A \exp(\xi_y(U, \bmX)) \mid \bmZ, \bmX \right\}   = \text{cov}\left\{ \bm\gamma_0^T \bmZ + \xi_a(U, \bmX),  \exp(\xi_y(U, \bmX)) \mid \bmZ, \bmX \right\}\\
	&=  \text{cov}\left\{ \xi_a(U, \bmX),  \exp(\xi_y(U, \bmX)) \mid \bmZ, \bmX \right\}  =  \text{cov}\left\{ \xi_a(U, \bmX),  \exp(\xi_y(U, \bmX)) \mid\bmX \right\} .
\end{align*}
These imply that $ E\{ g(\bm O; \beta_0, \bm\alpha_0, \bfeta_0)\}=0$. 

Identifiability also requires that $ E[\partial g(\O; \beta, \bfalpha, \bfeta_0) /\partial (\beta, \bfalpha)] $ is of rank $ m+1 $.

Finally, for binary exposure and binary outcome, we have that 
\begin{align*}
	&E\left\{  R_Ar_{Y0}\mid \bmZ, \bmX \right\}  = 	E\left\{  R_A E (r_{Y0} \mid A, U, \bmZ, \bmX) \mid \bmZ, \bmX \right\}  = E\left\{  R_A \exp(\xi_y(U, \bmX)) \mid \bmZ, \bmX \right\}   \\
	&= \text{cov}\left\{\xi_a(U, \bmX),  \exp(\xi_y(U, \bmX)) \mid \bmZ, \bmX \right\} =  \text{cov}\left\{ \xi_a(U, \bmX),  \exp(\xi_y(U, \bmX)) \mid\bmX \right\} .
\end{align*}
The rest of the proof follows the same step as the proof of continuous exposure and binary outcome.  
\end{proof}

\end{document}